\def\weitdacherl#1{\mathop{\vbox{\m@th\ialign{##\crcr\noalign{\kern3\p@}%
      $\hfil{\scriptscriptstyle\frown}\hfil$\crcr\noalign{\kern1\p@\nointerlineskip}%
      $\displaystyle{#1}$\crcr}}}}
\def\weithutzl#1{\mathop{\vbox{\m@th\ialign{##\crcr\noalign{\kern3\p@}%
      $\hfil{\scriptscriptstyle\smile}\hfil$\crcr\noalign{\kern1\p@\nointerlineskip}%
      $\displaystyle{#1}$\crcr}}}}
\def\part{\@startsection{part}{0}%
  \z@{2.5\linespacing\@plus\linespacing}{1.0\linespacing}%
  {\LARGE\bfseries\raggedright}}
\newtheorem{thm}{Theorem}[section]
\newtheorem{cor}[thm]{Corollary}
\newtheorem{lem}[thm]{Lemma}
\newtheorem{prop}[thm]{Proposition}
\theoremstyle{definition}
\newtheorem{defn}[thm]{Definition}
\theoremstyle{remark}
\numberwithin{equation}{section}
\newcommand\dirmin{\ensuremath{\weithutzl{\times}}\,}
\newcommand\dirmax{\ensuremath{\weitdacherl{\times}}\,}
\newcommand\dirnon{\ensuremath{\widetilde{\times}}\,}
\newcommand\strmin{\ensuremath{\weithutzl{\boxtimes}}\,}
\newcommand\strmax{\ensuremath{\weitdacherl{\boxtimes}}\,}
\newcommand*{\bigtimes}{\mathop{\hbox{\Large{$\times$}}}}
\newcommand{\la}{\langle}
\newcommand{\ra}{\rangle}
\newcommand{\of}{\overrightarrow{f}}
\renewcommand{\oe}{\overrightarrow{e}}
\newcommand{\oE}{\overrightarrow{E}}
\newcommand{\oH}{\overrightarrow{H}}
\newcommand{\oN}{\overrightarrow{N}}
\newcommand\disc{\text{disc}\,}
\newcommand\timesR{\ensuremath{\fbox{\text{\scriptsize r}}}\,}
\newcommand\timesH{\ensuremath{\circledcirc}\,}
\newcommand{\categ}{\mathop{\hbox{\scriptsize{$\blacksquare$}}}}
\crefname{thm}{}{}
\begin{document}
\sloppy
%
%
%
%
%
%
%
%
%
\title[A Survey on Hypergraph Products]
 {A Survey on Hypergraph Products}
\author{Marc Hellmuth}

\address{%
Center for Bioinformatics \\
Saarland University \\
Building E 2.1, Room 413 \\
P.O. Box 15 11 50 \\
D - 66041 Saarbr\"{u}cken \\
Germany }

\email{marc@bioinf.uni-leipzig.de}

\author{Lydia Ostermeier}
\address{Max Planck Institute for Mathematics in the Sciences\\
Inselstrasse 22, \\ D-04103 Leipzig, \\ Germany\\[0.2cm]
Bioinformatics Group, \\ Department of Computer Science and
Interdisciplinary Center for Bioinformatics \\
University of Leipzig,\\
H{\"a}rtelstrasse 16-18, D-04107 Leipzig, Germany }
\email{glydia@bioinf.uni-leipzig.de}

\author{Peter F. Stadler}
\address{
Bioinformatics Group, \\
Department of Computer Science; and
Interdisciplinary Center for Bioinformatics,\\
University of Leipzig,\\
H{\"a}rtelstrasse 16-18, D-04107 Leipzig,Germany\\[0.1cm]
Max Planck Institute for Mathematics in the Sciences\\
Inselstrasse 22, D-04103 Leipzig, Germany\\[0.1cm]
RNomics Group, Fraunhofer Institut f{\"u}r Zelltherapie und Immunologie,
Deutscher Platz 5e, D-04103 Leipzig, Germany\\[0.1cm]
Department of Theoretical Chemistry,  University of Vienna,
  W{\"a}hringerstra{\ss}e 17, A-1090 Wien, Austria\\[0.1cm]
Santa Fe Institute, 1399 Hyde Park Rd., Santa Fe, NM87501, USA}
\email{studla@bioinf.uni-leipzig.de}

\thanks{This work was supported in part by the \emph{Deutsche
    Forschungsgemeinschaft} (DFG) Project STA850/11-1 
    within the EUROCORES Programme EuroGIGA
  (project GReGAS) of the European Science Foundation.}

\subjclass{Primary 99Z99; Secondary 00A00}

\keywords{Hypergraph invariants, products, set systems}

\date{22 Dec 2011}

\dedicatory{\begin{flushleft}	{\bf Erratrum:} In the accepted version of this survey \cite{recursiveSurvey11}
				it is mistakenly stated that the direct products
				$\dirmax$ and $\dirnon$ and the strong product $\strmax$ are associative.
				In \cite{HHOS:16}, we gave  counterexamples for these cases and proved
				associativity of the hypergraph products $\dirmin,\strmin$.  \end{flushleft}
}

\begin{abstract}
	 A surprising diversity of different products of hypergraphs have been
  discussed in the literature.  Most of the hypergraph products can be
    viewed as generalizations of one of the four standard graph
    products. The most widely studied variant, the so-called square
    product, does not have this property, however.  Here we
  survey the literature on hypergraph products with an emphasis on
  comparing the alternative generalizations of graph products and the
  relationships among them.  In this context the so-called 2-sections and
  L2-sections are considered. These constructions are closely linked to
  related colored graph structures that seem to be a useful tool for the
  prime factor decompositions w.r.t.\ specific hypergraph products.  We
  summarize the current knowledge on the propagation of hypergraph
  invariants under the different hypergraph multiplications.  While the
  overwhelming majority of the material concerns finite (undirected)
  hypergraphs, the survey also covers a summary of the few results on
  products of infinite and directed hypergraphs.
\end{abstract}

\maketitle

\part{Introduction}

There are only four ``standard graph products'' that preserve the salient
structure of their factors and behave in an algebraically reasonable
way. Their structural features have been studied extensively over the last
decades. It is well known how many of the important graph invariants
propagate under product formation, and efficient algorithms have been
devised to decompose graph products into their prime factors. Several
monographs cover the topic in substantial detail and serve as standard
references \cite{IMKL-00,IMKLDO-08,Hammack:2011a}.

In contrast, very little is known about product structures of hypergraphs,
even though hypergraphs have become increasingly important models of
network structures. Here we survey the existing literature, focusing on the
basic properties of the various hypergraph products and their mutual
relationships. In this introductory part we will first investigate in which
sense the standard graph products have distinguished properties. After
introducing the necessary notation, and defining the most interesting
hypergraph invariants we proceed to discuss a set of desirable properties
of hypergraph products that generalize the situation in graphs.  Much
  of the published literature is concerned with the so-called square
  product, which does \emph{not} arise as a natural generalization of a
  graph product. Most of the other constructions, albeit less well
  investigated so far, can be described as generalizations of a
  corresponding graph product. The link between hypergraph products is also
  stressed by constructions such as 2-sections and L2-sections
  \cite{Berge:Hypergraphs, BrettoSilvestre10:Factorization}. We therefore
  choose to emphasize the generalizations of graph products in our
  survey.  The following sections are then concerned with a review of the
literature on the individual notions of hypergraph products. The
  literature is complemented by several new results that bridge some of the
  obvious gaps in particular for the rarely studied products. Our survey
also includes a complementary recursive exposition of several new
constructions and their basic properties \cite{recursiveSurvey11}.

\section{Graph Products}

Graph products are natural structures in discrete mathematics
\cite{Ham-09,OHK+09} that arise in a variety of different
contexts, from computer science \cite{AMA-07,HJH+10, Hel-11} and computational
engineering \cite{KK-08,KR-04} to theoretical biology
\cite{Fontana:98a,Fontana:98b,Cupal:00a,BMRStadler:01a,Wagner:03a}.  
In this section we briefly outline the commonly
investigated graph products and their most salient properties to provide a
frame of reference for our subsequent discussion of hypergraph products.

We consider only finite and undirected graphs $G=(V,E)$ with non-empty
vertex set $V$ and edge set $E$.  A graph is \emph{non-trivial} if it has at
least two vertices. Graph products can be constructed in many different
ways. For example, different constructions arise depending on whether loops
are considered or not. There are, however, three basic properties that are
required for any meaningful definition of a graph product:
\begin{itemize}
\item[(P1)] The vertex set of a product is the Cartesian product
  of the vertex sets of the factors.
\item[(P2)] Adjacency in the product depends on the adjacency properties
  of the projections of pairs of vertices into the factors.
\item[(P3)] The product of a simple graph is a simple graph.
\end{itemize}
As shown in \cite{IMIZ-75}, there are $256$ different possibilities to
define a graph product satisfying (P1), (P2), and (P3). Only six of them 
are commutative, associative and have a unit. Only four products 
satisfy the following additional condition:
\begin{itemize}
\item[(P4)] At least one of the projections of a product onto its factors 
  is a so-called weak homomorphism (edges are mapped to edges or to vertices).
\end{itemize}
These four products are known as the \emph{standard graph products}
\cite{IMKL-00,Hammack:2011a}: the \emph{Cartesian} product $\Box$,
the \emph{direct} product $\times$, the \emph{strong} product $\boxtimes$, and
the \emph{lexicographic} product $\circ$.

In all products the vertex set $V(G_1\circledast G_2)$ is defined as the
Cartesian product $V(G_1)\times V(G_2)$, 
$\circledast \in \{\Box,\ \times,\ \boxtimes,\ \circ \}$. 
Two vertices $(x_1,x_2)$, $(y_1,y_2)$ are adjacent
in $G_1\boxtimes G_2$ if one of the following conditions is
satisfied:
\begin{itemize} 
 \item[(i)] $(x_1,y_1)\in E(G_1)$ and $x_2=y_2$,
 \item[(ii)]$(x_2,y_2)\in E(G_2)$ and $x_1 = y_1$,
 \item[(iii)] $(x_1,y_1)\in E(G_1)$ and $(x_2,y_2)\in E(G_2)$.
\end{itemize}
In the \emph{Cartesian product} vertices are adjacent if and only if they satisfy
(i) or (ii). Consequently, the edges of a strong product that satisfy (i)
or (ii) are called \emph{Cartesian} edge, the others are the
\emph{non-Cartesian} edges. In the direct product vertices are only
adjacent if they satisfy (iii). Thus, the edge set of the strong product is
the union of edges in the Cartesian and the direct product.  In the
lexicographic product vertices are adjacent if and only if $(x_1,y_1)\in E(G_1)$
or they satisfy (ii). 

Three of these products, the Cartesian, the direct and the strong product
are commutative, associative, and distributive with respect to the disjoint
union. The lexicographic product is associative, not commutative, and only
left-distributive with respect to the disjoint union.  All products have a
unit element, that is the single vertex graph $K_1$ for the Cartesian, the
strong and the lexicographic product, and the single vertex graph with a
loop $\mathscr{L}K_1$ for the direct product.

Connectedness of the products depends on the connectedness of the factors.
The Cartesian and the strong product is connected if and only if all of its
factors are connected. The direct product of non-trivial connected factors 
is connected if and only if at
most one factor is bipartite. The lexicographic product $\circ_{i=1}^n G_i$
is connected if and only if $G_1$ is connected. The \emph{costrong product}
$G_1*G_2$, with edge set $E(G_1\circ G_2) \cup E(G_2\circ G_1)$, can be seen
as a symmetrized version of the lexicographic product. It is also closely
related to the strong produce by virtue of the identity $G_1*G_2 =
\overline{\overline{G_1}\boxtimes \overline{G_2}}$
\cite{GasztImrich71:lexico-englAbstract}.

Connected graphs have a unique \emph{Prime Factor Decomposition} (PFD)
w.r.t.\ the strong and the Cartesian product and connected non-bipartite
graphs have a unique PFD w.r.t.\ the direct product.  The PFD w.r.t.\ the
lexicographic product is unique only under strict conditions w.r.t.\
connectivity properties based on the prime factors \cite{Hammack:2011a}.

\section{Hypergraphs}

\subsection{Basic Definitions}

Hypergraphs are a natural generalization of undirected graphs in which
``edges'' may consist of more than 2 vertices. More precisely, a
\emph{(finite) hypergraph} $H=(V,E)$ consists of a (finite) set $V$ and a
collection $E$ of non-empty subsets of $V$.

The elements of $V$ are called \emph{vertices} and the elements of $E$ are
called \emph{hyperedges}, or simply \emph{edges} of the hypergraph.
Throughout this survey, we only consider hypergraphs without multiple edges
and thus, being $E$ a usual set.  If there is a risk of confusion we will
denote the vertex set and the edge set of a hypergraph $H$ explicitly by
$V(H)$ and $ E(H)$, respectively.

A hypergraph $H=(V,E)$ is \emph{simple} if no edge is contained in any
other edge and $|e|\geq 2$ for all $e\in E$.  The dual $H^*$ of a
hypergraph $H=(V,E)$ is the hypergraph whose vertices and edges are
interchanged, so that $V(H^*)=\{e_i^*\mid e_i\in E\}$ and edge set
$E(H^*)=\{v_i^*\mid v_i\in V\}$ with $v_i^* = \lbrace e_j^* \mid v_i \in
e_j \rbrace $.

For a (simple) hypergraph $H=(V,E)$ let $\mathscr{L}H:=\left(V,E\cup
  \left\{\{x\}\mid x\in V\right\}\right)$ denote the hypergraph which is
formed from $H$ by adding a loop to each vertex of $H$.  Conversely, for a
hypergraph $H'=(V',E')$ let $\mathscr{N}H':= \left(V',E'\setminus
  \left\{\{x\}\mid x\in V'\right\}\right)$ denote the hypergraph which
emerges from $H'$ by deleting all loops.

Two vertices $u$ and $v$ are \emph{adjacent} in $H=(V, E)$ if there is an
edge $e\in E$ such that $u,v\in e$. If for two edges $e,f\in E$ holds
$e\cap f\neq\emptyset$, we say that $e$ and $f$ are \emph{adjacent}.  A
vertex $v$ and an edge $e$ of $H$ are \emph{incident} if $v\in e$.  The
\emph{degree} $\deg(v)$ of a vertex $v\in V$ is the number of edges
incident to $v$. The \emph{maximum degree} $\max_{v\in V} \deg(v)$ is
denoted by $\Delta(H)$.

The \emph{rank} of a hypergraph $H=(V,E)$ is $r(H)=\max_{e\in E}|e|$, the
\emph{anti-rank} is $s(H)=\min_{e\in E}|e|$.  A \emph{uniform hypergraph} $H$ is
a hypergraph such that $r(H)=s(H)$.  A simple uniform hypergraph of rank
$r$ will be called \emph{$r$-uniform}.  A hypergraph with $r(H)\leq 2$ is a
\emph{graph}.  A $2$-uniform hypergraph is usually known 
as a \emph{simple graph}.

A \emph{partial hypergraph} $H'=(V',E')$ of a hypergraph $H=(V,E)$, denoted
by $H'\subseteq H$, is a hypergraph such that $V'\subseteq V$ and
$E'\subseteq E$.  In the class of graphs partial hypergraphs are called
\emph{subgraphs}.  The partial hypergraph $H'=(V',E')$ is \emph{induced} if
$E' = \{e\in E\mid e\subseteq V'\}$.  Induced hypergraphs will
be denoted by $\left\la V'\right\ra$. A partial hypergraph of a simple
hypergraph is always simple. 

A \emph{walk} in a hypergraph $H=(V, E)$ is a sequence $P_{v_0,v_{k}} =
(v_0,e_1,v_1,e_2,\ldots,e_k,v_k)$, where $e_1,\ldots,e_k \in E$ and
$v_0,\ldots,v_k\in V$, such that each $v_{i-1}\neq v_i$ and $v_{i-1},v_i\in
e_i$ for all $i=1,\ldots,k$.  The walk $P_{v_0,v_{k}}$ is said to
\emph{join} the vertices $v_0$ and $v_k$.  A \emph{$p$-path} is a walk
where the vertices $v_0,\ldots,v_k$ are all distinct and for all $r,s\in
\{1,\dots,k\}, r\leq s$ with $e_r=e_s$ follows that $s-r \leq p-1$ and $e_r
= e_{r+1} = \dots = e_s$.  A path between two edges $e_i$ and $e_j$ is any
path $P_{v_i v_j}$ joining vertices $v_i\in e_i$ and $v_j\in e_j$.  A
$1$-path is just called a \emph{path}, i.e., all vertices and all edges are
different.  The minimum number of pairwise vertex and edge disjoint paths
of a hypergraph $H$ whose union contains all vertices of $H$ is called
\emph{vertex path partition number} and will be denoted by $\wp(H)$.
Note, the path partion number satisfies $\wp(H)\leq |V(H)|$,
since there is always a partition of a hypergraph into paths of length $0$.
A \emph{cycle} is a sequence $(v_0,e_1,v_1,e_2,\ldots,v_{k-1},e_k,v_0)$, such
that $P_{v_0,v_{k-1}}$ is a path.  A $p$-path or a cycle is
\emph{Hamiltonian} in $H$ if it contains all vertices of $H$.  The
\emph{length of a path} or \emph{a cycle} is the number of edges contained
in the path or cycle, resp.

The \emph{distance} $d_H(v,v')$ between two vertices $v_0,v_k$ of $H$ is
the length of a shortest path joining them.  We set $d_H(v,v')=\infty$ if
there is no such path.  A hypergraph $H=(V,E)$ is called \emph{connected},
if any two vertices are joined by a path.  A partial hypergraph
$H'\subseteq H$ is called \emph{convex}, if all shortest paths in $H$
between two vertices in $H'$ are also contained in $H'$.

\subsection{Homomorphisms and Covering Constructions}

For two hypergraphs $H_1=(V_1,E_1)$ and $H_2=(V_2, E_2)$ a
\emph{homomorphism} from $H_1$ into $H_2$ is a mapping $\varphi:
V_1\rightarrow V_2$ such that
$\varphi(e)=\{\varphi(v_1),\ldots,\varphi(v_r)\}$ is an edge in $H_2$, if
$e=\{v_1,\ldots,v_r\}$ is an edge in $H_1$.  Note, a homomorphism from
$H_1$ into $H_2$ implies also a mapping $\varphi_{\mathcal{E}} : E_1
\rightarrow E_2$.  A mapping $\varphi: V_1\rightarrow V_2$ is a \emph{weak
  homomorphism} if edges are mapped either on edges or on vertices.

A homomorphism $\varphi$ that is bijective is called an \emph{isomorphism}
if holds $\varphi(e)\in E_2$ if and only if $e\in E_1$.  We say, $H_1$ and $H_2$
are \emph{isomorphic}, in symbols $H_1\cong H_2$ if there exists an
isomorphism between them. An isomorphism from a hypergraph $H$ onto itself
is an \emph{automorphism}.

The hypergraph $H'=(V',E')$ is a \emph{$k$-fold covering} of a hypergraph
$H=(V,E)$ if there is a surjective homomorphism $\pi: H' \rightarrow H$ for
which
\begin{enumerate}
\item $|\pi^{-1}(v)|=|\pi_{\mathcal{E}}^{-1}(e)|=k$ for all $v \in V$,
  $e\in E$ and
\item $e' \cap f' = \emptyset$ for all distinct $e',f'$ in
  $\pi_{\mathcal{E}}^{-1}(e)$, $e\in E$.
\end{enumerate}
$H$ is then called the \emph{quotient hypergraph} of $H'$ and $\pi$ is
called the \emph{covering projection} \cite{Doerfler79:CoversDirectProd}.
If $k=2$, $H'$ is called \emph{double cover}
\cite{Doerfler78:DoubleCovers}.

\subsection{$L2$-sections}

The notion of so-called $L2$-sections has proved to be an extremely useful
tool for hypergraph product recognition algorithms. In the following we
therefore consider the $2$-section and $L2$-section of hypergraphs
\cite{Berge:Hypergraphs, BrettoSilvestre10:Factorization} in some detail.
In the context of $2$-sections and $L2$-sections we will consider only
hypergraphs without loops throughout this survey.

The $2$-section $[H]_2$ of a hypergraph $H=(V,E)$ is the graph $(V,E')$
with $E'=\left\{\{x,y\}\subseteq V\mid x\neq y,\,\exists\; e\in E:
  \{x,y\}\subseteq e\right\}$, that is, two vertices are adjacent in
$[H]_2$ if they belong to the same hyperedge in $H$. Thus, every hyperedge
of $H$ is a clique in $[H]_2$.  Note, the $2$-section $[H]_2$ of a
hypergraph $H=(V,E)$ is only uniquely determined if $H$ is
\emph{conformal}, that is, for every subset $W\subseteq V$ holds that if
$\la W \ra$ is a clique in $[H]_2$ then $W\in E$.

Let $\mathbb P(X)$ denotes the power set of the set $X$.
The $L2$-section $[H]_{L2}$ of a hypergraph $H=(V,E)$ is its $2$-section
together with a mapping $\mathcal{L}:E'\rightarrow\mathbb P(E)$ with
$\mathcal{L}(\{x,y\})=\left\{e\in E\mid \{x,y\}\subseteq e\right\}$.
Usually, the $L2$-section $[H]_{L2}$ is written as the triple
$\Gamma=\left(V,E([H]_2),\mathcal{L}\right)$.  In addition to $2$-sections,
the $L2$-section also provides the possibility to trace back the
information which of the edges of $[H]_2$ is associated to which of the
hyperedges in $H$. 
Thus, the original hypergraph can be reconstructed from its
  $L2$-section.  The inverse $[\Gamma]_{L2}^{-1} = (V,E)$ of an
$L2$-section $\Gamma=(V,E',\mathcal{L})$ is the hypergraph with
$E=\bigcup_{e\in E'} \mathcal{L}(e)$.  Hence, the inverse
$[\Gamma]_{L2}^{-1}$ of an $L2$-section is the hypergraph $H=(V,E)$ that
has $L2$-section $\Gamma$.

Two $L2$-sections 
$\Gamma_1=\left(V_1,E_1,\mathcal{L}_1\right)$ and
$\Gamma_2=\left(V_2,E_2,\mathcal{L}_2\right)$ are isomorphic, in symbols 
$\Gamma_1\cong \Gamma_2$, 
if there is an isomorphism $\varphi$ between the graphs $(V_1,E_1)$ and 
$(V_2,E_2)$ such that $e\in \mathcal{L}_1(\{x,y\})$ if and only if 
$\{\varphi(z)\mid z\in e\}\in L_2(\{\varphi(x),\varphi(y)\})$ 
for all $x,y \in V_1$ and $e\subseteq V_1$.
Every hypergraph is uniquely (up to isomorphism) determined by its 
$L2$-section and \emph{vice versa} 
\cite{Bretto09:HyperCartProd, BrettoSilvestre10:Factorization}, i.e., 
$H\cong H'$ if and only if $[H]_{L2} \cong [H']_{L2}$. 

A very useful property of the $2$-section is the following:
\begin{lem}[Distance Formula]
  \label{lem:distH2}
  Let $H = (V,E)$ be a hypergraph and $x,y\in V$.
  Then the distances between $x$ and $y$
  in $H$ and in $[H]_2$ are the same.
\end{lem}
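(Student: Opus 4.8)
The plan is to establish the two inequalities $d_{[H]_2}(x,y)\le d_H(x,y)$ and $d_H(x,y)\le d_{[H]_2}(x,y)$ separately, and to note along the way that $x$ and $y$ are joined by a path in $H$ precisely when they are joined by a path in $[H]_2$, so that the identity also covers the unreachable case with the convention $d=\infty$.

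For the first inequality I would start from a shortest path $P=(x=v_0,e_1,v_1,\dots,e_k,v_k=y)$ in $H$, so that $k=d_H(x,y)$. By the definition of a walk, $v_{i-1}\ne v_i$ and $v_{i-1},v_i\in e_i$ for every $i$, hence $\{v_{i-1},v_i\}\in E([H]_2)$. Since $P$ is a $1$-path, the vertices $v_0,\dots,v_k$ are pairwise distinct, so $(v_0,v_1,\dots,v_k)$ is already a path in $[H]_2$ of length $k$. This yields $d_{[H]_2}(x,y)\le k=d_H(x,y)$.

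For the converse I would take a shortest path $Q=(x=u_0,u_1,\dots,u_m=y)$ in $[H]_2$, so that $m=d_{[H]_2}(x,y)$. For each $i$ the pair $\{u_{i-1},u_i\}$ is an edge of $[H]_2$, so by the definition of the $2$-section there is a hyperedge $e_i\in E$ with $u_{i-1},u_i\in e_i$ and $u_{i-1}\ne u_i$. These choices turn $Q$ into an $H$-walk $(u_0,e_1,u_1,\dots,e_m,u_m)$ of length $m$, whose vertices are pairwise distinct because $Q$ is a path in the graph $[H]_2$. The crucial point is to verify that the chosen hyperedges are pairwise distinct as well, so that this walk is in fact a $1$-path. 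I would argue this by minimality: if $e_i=e_j$ for some $i<j$, then $u_{i-1}$ and $u_j$ both lie in this common hyperedge and are distinct (as $Q$ is a path), hence $\{u_{i-1},u_j\}\in E([H]_2)$; replacing the segment $u_{i-1},u_i,\dots,u_j$ of $Q$ by the single edge $\{u_{i-1},u_j\}$ would produce a walk from $x$ to $y$, and thus a path, in $[H]_2$ of length strictly smaller than $m$, contradicting the choice of $Q$. Therefore the $e_i$ are distinct, the walk is a $1$-path of length $m$, and $d_H(x,y)\le m=d_{[H]_2}(x,y)$.

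Combining the two inequalities gives $d_H(x,y)=d_{[H]_2}(x,y)$. The same two constructions show that a joining path exists on one side exactly when it exists on the other, so the identity persists in the unreachable case where both distances equal $\infty$. The only genuinely delicate step is the distinctness of the lifted hyperedges in the second inequality; everything else is a direct translation of walks across the $2$-section, and I expect that shortcut-by-minimality argument to be the heart of the proof.
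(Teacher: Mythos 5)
Your proof is correct and follows essentially the same route as the paper's: project a shortest $H$-path to a path in $[H]_2$ for one inequality, and lift a shortest $[H]_2$-path back to a walk in $H$ for the other, with the disconnected case handled by the correspondence of connected components. The only real difference is where the delicate step is discharged --- you verify distinctness of the lifted hyperedges by a shortcut argument inside $[H]_2$, while the paper produces a walk in $H$ of length $l<k$ and implicitly invokes the standard fact that a walk can be shortened to a path --- so your version is, if anything, slightly more explicit on that point.
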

\begin{proof}
  Note, $x$ and $y$ are in different connected components of $H$ if and
  only if $x$ and $y$ are in different connected components of $[H]_2$ and
  hence, $d_H(x,y) = d_{[H]_2}(x,y) = \infty$.  Thus, w.l.o.g. assume $H$
  (and hence $[H]_2$) to be connected.  Let $P = (x, e_1, v_1, \ldots,
  v_{k-1}, e_k, y)$ denote a shortest path between $x$ and $y$ in $H$. By
  construction of $[H]_2$ there is a walk $P' = (x, e'_1, v_1, \ldots,
  v_{k-1}, e'_k, y)$ in $[H]_2$. Thus, $k=d_H(x,y) \geq d_{[H]_2}(x,y)=l$.
  Assume, $k>l$. Then there is a path $Q' = (x, f'_1, v_1, \ldots, v_{k-1},
  f'_l, y)$ in $[H]_2$. Thus, for all $f'_i$ there is an edge $f_i\in E(H)$
  such that $f_i'\subseteq f_i$ and hence, a walk of length $l$ in $H$, a
  contradiction.
\end{proof}

\subsection{Invariants}

In the following paragraphs we briefly introduce the hypergraph invariants
that are most commonly studied in the context of hypergraph products.
We will assume throughout that $H=(V,E)$ is a given hypergraph.

\subsubsection{Independence, Matching and Cover} 

A set $S\subseteq V$ is \emph{independent} if it contains no edge of $E$;
the maximum cardinality of an independent set is denoted by $\beta(H)$ and
is called the \emph{independence number} of $H$. Some of the older
literature, e.g.\ \cite{Berge:DirectProd, Sterboul74:ChromDirect} use the
term \emph{stable} and \emph{stability number} for this concept.
 
A set $T\subseteq V$ is called a \emph{cover} of H if it intersects every
edge of H, i.e., $T \cap e \neq \emptyset$ for all $e\in E$.  The minimum
cardinality of the covers is denoted by $\tau(H)$, and called the
\emph{covering number} of H.  Cover and covering number are also known as
\emph{transversal} or \emph{transversal number} \cite{Berge:DirectProd}.

A \emph{fractional cover} of $H$ is a mapping $t:V\rightarrow
\mathbb{R}^+_0$ such that $\sum_{v \in e} t(v) \geq 1$ for all $e\in
E$. The value $\min_{t} \sum_{v \in V} t(v)$ over all fractional covers $t$
is called \emph{fractional covering number} and denoted by $\tau^*(H)$.
Note that every cover induces a fractional cover by defining $t(v)=1$ if
$v\in T$ and $t(v)=0$ else \cite{Berge:Hypergraphs}.

A subset $M \subseteq E$ is a \emph{matching} if every pair of edges
from $M$ has an empty intersection.  The maximum cardinality of a matching
$M$ is called the \emph{matching number}, denoted by $\nu(H)$
\cite{Berge:Hypergraphs}. 

The partition number $\rho(H)$ of $H$ denotes the minimal
number of pairwise disjoint edges of $E$ which together cover $V$
if such a partition exists, else we set $\rho(H) = \infty$ 
 \cite{AhlswedeCai97:ExtremalSetPart, AhlswedeCai94:Partitioning}.

\subsubsection{Coloring} 

A \emph{coloring} of a hypergraph $H$ is mapping $c$ from either $V$ or $E$
into a set of colors $C = \{1,\dots, k\}$.  We refer to $c:E\to C$ as an
\emph{edge-coloring} and to $c:V\to C$ as a \emph{vertex-coloring} or
simply \emph{coloring}.

A \emph{proper} coloring of a hypergraph $H$ is a coloring $c: V
\rightarrow C$ such that $\{v \mid c(v)=i\}$ is an independent set for all
$i\in C$.  The \emph{chromatic number} $\chi(H)$ is the minimal number of
colors that admit a proper coloring of $H$.  Hence, the \emph{chromatic
  number} $\chi(H)$ is the minimum number of independent sets $V_1,\cdots,
V_{\chi(H)}$ into which $V$ can be partitioned.  A proper \emph{strong}
coloring of a hypergraph $H$ is a proper coloring such that for all edges
$e\in E$ holds that $c(v)\neq c(w)$ for all distinct vertices $v,w \in e$.
The \emph{strong} chromatic number $\chi_s(H)$ is the minimal number $k$ of
colors that admit a strong $k$-coloring of $H$.

The \emph{($k$-color) discrepancy} of a hypergraph measures the deviation
of a coloring $c$ from a so-called balanced coloring, that is a coloring in
which each hyperedge contains same number of vertices of each color.  More
formally, the \emph{discrepancy of a coloring $c$} and the \emph{$k$-color
  discrepancy of $H = (V,E)$} are defined as follows:
$$\disc(H,c) = \max_{e\in E} \max_{1\leq i \leq k}
\left||c^{-1}(i)\cap e|-\frac{1}{k}|e|\right|$$
and
$$\disc(H,k) = \min_{c:V\rightarrow\{1,\ldots,k\}}\disc(H,c)\,.$$

A \emph{proper} edge-coloring of a hypergraph $H$ is a coloring $c: E
\rightarrow C$ such that $c(e) \neq c(f)$ for all distinct incident edges
$e,f\in E$.  The \emph{chromatic index} $q(H)$ of $H$ is the minimum number
of colors that admit a proper edge-coloring. Clearly, $q(H)\geq \Delta(H)$.
A hypergraph has the \emph{colored hyperedge property} if $q(H) =
\Delta(H)$.

\subsubsection{Helly Property}

For $v\in V$, a \emph{star} $H(v)$ of $H$ with center $v$ is the set of all
edges $e\in E$ such that $v\in e$.  For a given simple hypergraph $H$ a
subset $E'$ of $E$ is an \emph{intersecting family} if every pair of
hyperedges of $E'$ have a non-empty intersection.  A hypergraph has the
\emph{Helly property} if each intersecting family is a star.  An
interesting characterization of Helly hypergraphs can be found in
\cite{Bandelt:91Helly}: A hypergraph has the Helly property if and only if
its dual is conformal.

\section{Basic Properties of Hypergraph Products}

Definitions of hypergraph products, to our knowledge, have never been
compared systematically in a way similar to graph products.  Most of the
hypergraph products can be viewed as a generalization of respective graph
products. However, one of the most studied hypergraph product, the
so-called square product, does not provide this property.  Therefore, it
appears useful to make explicit the desirable properties of hypergraph
products. We begin with the direct generalization of the requirements for
graph products:
\begin{itemize}
\item[(P1)] The vertex set of a product is the Cartesian product
  of the vertex sets of the factors.
\item[(P2)] Adjacency in the product depends only on the adjacency
  properties of the projections of pairs of vertices into the factors.
\item[(P3)] The product of simple hypergraphs is again a simple hypergraph.
\item[(P4)] At least one of the projections of a product onto its factors 
  is a weak homomorphism.
\end{itemize}
Since graphs can be interpreted as the special hypergraphs with 
$|e|\le 2$ for all $e\in E$, we would like to consider hypergraph products
that specialize to graph products:
\begin{itemize}
\item[(P5)] The hypergraph product of two graphs is again a graph.
\end{itemize}
For hypergraphs, these requirements appear to give more freedom than for
graphs. Property (P2) posits that the presence of an edge
$(x_1,x_2)\sim(y_1,y_2)$ must be determined by the presence or absence of
the adjacencies $x_1\sim y_1$ and $x_2\sim y_2$ and a rule deciding whether
$x_1=y_1$ and $x_2=y_2$ is to be treated like an edge or its absence,
leading to $2^8=256$ distinct operations, see \cite{IMIZ-75}. For hypergraphs, however, this
leads only to a restriction on edges but does not provide a complete recipe
for the construction of the edge set of the product. As a consequence, it
is possible to find several non-equivalent generalizations of the standard
graph products as we shall see throughout this survey.

As in the case of usual graph products at least associativity is
desirable. All products that are treated in this survey are associative. We
omit the proofs for this, since they can be done equivalently to the proofs
as in \cite{Hammack:2011a}.  Thus, the hypergraph products of finitely many
factors are well defined and it suffices to prove the results for two (not
necessarily prime) factors only.  Furthermore, all products, except the
lexicographic product are commutative.

Before we proceed with our analysis of hypergraph products, we need to
introduce some specific notations:\newline
Let $\circledast_{i=1}^n H_i = (V,E) = (\bigtimes_{i=1}^n V(H_i),
E(\circledast_{i=1}^n H_i))$ be an arbitrary hypergraph product.  The
\emph{projection} $p_j:V\to V(H_j)$ is defined by $v=(v_1,\dots,v_n)
\mapsto v_j$. We will call $v_j$ the \emph{$j$-th coordinate} of the vertex
$v\in V$.
For a given vertex $w\in V(H)$ the \emph{$H_j$-layer through $w$} is the 
partial hypergraph of $H$   
\[ H_j^w = \left\langle \{v\in V(H) \mid p_k(v)=p_k(w)
  \text{ for } k\neq j \}\right\rangle.
\]
If for a hypergraph product $\circledast_{i=1}^n H_i$ holds
$H_i\cong H$ for all $i=1,\ldots,n$ we will denote this hypergraph
simply by $H^{\circledast n}$. 

Let $U$ denote the unit element, if one exists, of an arbitrary product
$\circledast$, i.e., $H = H \circledast U$ for all hypergraphs $H$.  Since
all hypergraph products considered here have vertex set $V_1\times V_2$ the
unit must always be a hypergraph with a single vertex. A hypergraph is said
to be \emph{prime} if the identity $H = H_1 \circledast H_2$ implies that
$H_1\cong U$ or $H_2 \cong U$. Not all hypergraph products have a unit
element. Prime factors and a prime factor decomposition cannot be
meaningfully defined unless there is a unit.

\part{Cartesian Product}

\section{The Cartesian Product}
The Cartesian product of hypergraphs has been investigated by several
authors since the 1960s \cite{Imrich67:Mengensysteme,
  Imrich70:SchwachKartProd, Bretto09:HyperCartProd, Bretto06:Helly,
  BrettoSilvestre10:Factorization, Doerfler79:CoversDirectProd,
  OstHellmStad11:CartProd, Sonntag89:HamCart}. It is probably the
best-studied construction.

\subsection{Definition and Basic Properties}

\begin{defn}[Cartesian Product of Hypergraphs]
  The \emph{Cartesian product} $H=H_1\Box H_2$ of two hypergraphs $H_1$ and
  $H_2$ has vertex set $V(H)=V(H_1)\times V(H_2)$ and the edge set
  \begin{align*} 
    E (H)=&\big\{\{x\}\times f :x\in V(H_1), f\in E (H_2)\big\}\\
    &\cup \big\{e\times \{y\}:e\in E (H_1), y\in V(H_2)\big\}.
  \end{align*}
\end{defn}

The Cartesian product is associative, commutative, distributive with
respect to the disjoint union and has the single vertex graph $K_1$ as a
unit element \cite{Imrich67:Mengensysteme}.  The Cartesian product of two
simple hypergraphs is a simple hypergraph.  A Cartesian product hypergraph,
furthermore, is connected if and only if all of its factors are connected
\cite{Imrich67:Mengensysteme}.  For the rank and anti-rank, respectively,
of a Cartesian product hypergraph $H_1\Box H_2$ holds:
\begin{align*}
	r(H_1\Box H_2)=\max\{r(H_1),r(H_2)\}\\
	s(H_1\Box H_2)=\min\{s(H_1),s(H_2)\}.
\end{align*}

The projections onto the factors are weak homomorphisms.
According to \cite{Sabidussi60:GraphMult} the Cartesian product of hypergraphs 
can be described in terms of projections as follows:
For $H=H_1\Box H_2$, with $H_i=(V_i,E_i)$, $i=1,2$ and
$e\subseteq V(H)$ we have $e\in E(H)$ if and only if 
  there is an $i\in \{1,2\}$, s.t.
  \begin{itemize}
  \item[(i)] $p_i(e)\in E_i$ and
  \item[(ii)] $|p_j(e)|=1$ for $j\neq i$.
  \end{itemize}
Furthermore, $|p_i(e)|=|e|$.
 
The \emph{$H_j$-layer through $w$} of a Cartesian product $H$ is induced by
all vertices of $H$ that differ from $w\in V(H)$ exactly in the $j$-th
coordinate. Moreover, $H_j^w \cong H_j$.

\subsection{Relationships with Graph Products}

The restriction of the Cartesian product to graphs coincides with the usual
Cartesian graph product. The $2$-sections of hypergraphs are also
well-behaved:
\begin{prop}[\cite{Bretto09:HyperCartProd}]	\label{prop:CartProd_2-sect}
The $2$-section of $H= H' \Box H''$ is the Cartesian product 
of the $2$-section of $H'$ and the $2$-section of $H''$, more formally:
$$[H' \Box H'']_2 =[H']_2 \Box [H'']_2.$$
\end{prop}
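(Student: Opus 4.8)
The plan is to prove the claimed identity as an equality of graphs, i.e.\ to show that both sides have the same vertex set and the same edge set. The vertex set of either side is $V(H')\times V(H'')$ by property (P1) together with the fact that forming a $2$-section does not alter the vertex set, so it remains to verify the equality of the edge sets. Since edges of a $2$-section are unordered pairs of \emph{distinct} vertices, I would fix two distinct vertices $(a,b),(c,d)\in V(H')\times V(H'')$ and show that $\{(a,b),(c,d)\}$ is an edge of $[H'\Box H'']_2$ if and only if it is an edge of $[H']_2\Box[H'']_2$.

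For the left-hand side I would invoke the projection characterization of the Cartesian product stated above: every hyperedge of $H'\Box H''$ has either the form $\{x\}\times f$ with $x\in V(H')$, $f\in E(H'')$, or the form $e\times\{y\}$ with $e\in E(H')$, $y\in V(H'')$. By definition of the $2$-section, $\{(a,b),(c,d)\}$ is an edge of $[H'\Box H'']_2$ exactly when both vertices lie in a common hyperedge. A hyperedge $\{x\}\times f$ contains both iff $a=c=x$ and $b,d\in f$; a hyperedge $e\times\{y\}$ contains both iff $b=d=y$ and $a,c\in e$. Thus the two vertices are adjacent in $[H'\Box H'']_2$ iff either $a=c$ and some $f\in E(H'')$ satisfies $\{b,d\}\subseteq f$, or $b=d$ and some $e\in E(H')$ satisfies $\{a,c\}\subseteq e$. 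Here the distinctness of the two vertices is what lets me pass to the $2$-sections of the factors: if $a=c$ then necessarily $b\neq d$, and the existence of an $f$ with $\{b,d\}\subseteq f$ is precisely the statement $\{b,d\}\in E([H'']_2)$; symmetrically for the second case. Hence adjacency in $[H'\Box H'']_2$ is equivalent to ($a=c$ and $b\sim_{[H'']_2}d$) or ($b=d$ and $a\sim_{[H']_2}c$).

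Finally I would compare this with the right-hand side. By the definition of the Cartesian graph product (conditions (i) and (ii)) applied to the graphs $[H']_2$ and $[H'']_2$, the pair $\{(a,b),(c,d)\}$ is an edge of $[H']_2\Box[H'']_2$ precisely when ($a=c$ and $b\sim_{[H'']_2}d$) or ($b=d$ and $a\sim_{[H']_2}c$) --- which is exactly the condition just obtained. This establishes the edge-set equality and hence the proposition. The argument is essentially a careful unpacking of the two definitions, so I do not expect a genuine obstacle; the only point demanding care is the bookkeeping around the two hyperedge types and the observation that distinctness of $(a,b)$ and $(c,d)$ forces a genuine $2$-section edge in the appropriate factor (the convention that loops are ignored when forming $2$-sections guarantees that singleton hyperedges of the factors never interfere).
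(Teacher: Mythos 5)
Your proof is correct: the case analysis on the two hyperedge types of $H'\Box H''$, combined with the distinctness observation that forces a genuine $2$-section edge in exactly one factor, establishes the edge-set equality cleanly. The paper itself does not prove this proposition (it is cited from \cite{Bretto09:HyperCartProd}), but your argument follows precisely the same pattern the paper uses for its analogous $2$-section lemmas for the direct, strong, lexicographic, and square products --- identical vertex sets, then a chain of equivalences unpacking the $2$-section and product definitions --- so it is essentially the same approach.
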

This observation suggested the definition of the Cartesian product of
$L2$-sections by constructing an appropriate labeling function for 
the product:
\begin{defn}[The Cartesian Product of L2-sections 
\cite{Bretto09:HyperCartProd, BrettoSilvestre10:Factorization}]
Let $\Gamma_i=(V_i,E'_i,\mathcal{L}_i)$ be the $L2$-section of the 
hypergraphs $H_i=(V_i,E_i)$, $i=1,2$. The Cartesian product of the 
$L2$-sections $\Gamma_1\Box \Gamma_2 = (V,E',\mathcal{L})$ consists of the 
graph $(V,E')=(V_1,E'_1)\Box(V_2,E'_2)$ and a labeling function 
$$\mathcal{L}=\mathcal{L}_1\Box \mathcal{L}_2: 
E'\rightarrow \mathbb{P}(E(H_1\Box H_2))$$
with 
\[\mathcal{L}\left(\{(x_1,y_1),(x_2,y_2)\}\right) = 
\begin{cases}
  \left\{\{x_1\}\times e\mid e\in\mathcal{L}_2\left(\{y_1,y_2\}\right)\right\},
  &\text{if }x_1=x_2 \\ 
  \left\{e\times \{y_1\}\mid e\in\mathcal{L}_1\left(\{x_1,x_2\}\right)\right\},
  &\text{if }y_1=y_2
\end{cases}\]
\end{defn}

\begin{lem}[\cite{Bretto09:HyperCartProd, BrettoSilvestre10:Factorization}]
  For all hypergraphs $H,H'$ we have:
  \begin{enumerate}
  \item $[H\Box H']_{L_2}=[H]_{L2}\Box[H']_{L2}$
  \item $[[H]_{L2}\Box[H']_{L2}]_{L2}^{-1} = 
    [[H]_{L2}]_{L2}^{-1}\Box[[H']_{L2}]_{L2}^{-1}$
  \end{enumerate}
\end{lem}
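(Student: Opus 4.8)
The plan is to establish part (1) by a direct comparison and then obtain part (2) as a formal consequence, using that $[\,\cdot\,]_{L2}^{-1}$ inverts $[\,\cdot\,]_{L2}$ on loop-free hypergraphs. For part (1), the first observation is that both $[H\Box H']_{L2}$ and $[H]_{L2}\Box[H']_{L2}$ are $L2$-sections built over the \emph{same} underlying graph: by Proposition~\ref{prop:CartProd_2-sect} the $2$-section of $H\Box H'$ is $[H]_2\Box[H']_2$, which is exactly the graph $(V,E')$ appearing in the definition of the Cartesian product of $L2$-sections. Hence it only remains to check that the two labeling functions agree on every edge of $[H]_2\Box[H']_2$. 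Since a Cartesian graph product has only Cartesian edges (conditions (i) and (ii)), any edge $\{u,v\}$ has endpoints differing in exactly one coordinate, so it suffices to treat the two symmetric cases $u=(x,y_1),\,v=(x,y_2)$ with $y_1\neq y_2$, and $u=(x_1,y),\,v=(x_2,y)$ with $x_1\neq x_2$.

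Consider the first case. The labeling of $[H\Box H']_{L2}$ assigns to $\{(x,y_1),(x,y_2)\}$ the set of all hyperedges $g\in E(H\Box H')$ with $\{(x,y_1),(x,y_2)\}\subseteq g$. By the Sabidussi characterization of Cartesian edges recalled above, every such $g$ is either of the form $e\times\{y\}$ or of the form $\{x'\}\times f$. No edge $e\times\{y\}$ can contain two vertices with distinct second coordinates, so $g$ must be $\{x'\}\times f$; containing $(x,y_1)$ and $(x,y_2)$ then forces $x'=x$ and $y_1,y_2\in f$, i.e.\ $f\in\mathcal{L}_2(\{y_1,y_2\})$. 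Thus the value of this labeling is $\{\{x\}\times f \mid f\in\mathcal{L}_2(\{y_1,y_2\})\}$, which is precisely $\mathcal{L}_1\Box\mathcal{L}_2$ evaluated in the case $x_1=x_2$. The case $y_1=y_2$ is obtained verbatim by exchanging the two factors, and together these establish part (1).

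For part (2), I would first record the inversion identity $[[H]_{L2}]_{L2}^{-1}=H$ for loop-free $H$. Indeed, unwinding the definitions, $[[H]_{L2}]_{L2}^{-1}$ has vertex set $V(H)$ and edge set $\bigcup_{\{x,y\}}\mathcal{L}(\{x,y\})$; since in a loop-free hypergraph every hyperedge $e$ satisfies $|e|\ge 2$, it contains some pair $\{x,y\}$ and therefore reappears in this union, so the union is exactly $E(H)$. Consequently the right-hand side of (2) equals $H\Box H'$. For the left-hand side, part (1) lets me rewrite $[H]_{L2}\Box[H']_{L2}=[H\Box H']_{L2}$, and applying the inversion identity once more yields $[[H\Box H']_{L2}]_{L2}^{-1}=H\Box H'$. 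Hence both sides coincide with $H\Box H'$ and part (2) follows.

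I expect the only genuinely delicate point to be the labeling-function verification in part (1): one has to rule out the ``mixed'' hyperedges and confirm that the fiber structure forces exactly the edges recorded by $\mathcal{L}_1\Box\mathcal{L}_2$, which is where the Sabidussi description and the disjointness of the coordinate values do the real work. Everything else is bookkeeping, and part (2) becomes purely formal once part (1) and the inversion identity are available.
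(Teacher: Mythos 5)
Your proof is correct. Note, however, that the paper itself offers no proof of this lemma at all---it is quoted from the cited works of Bretto and Silvestre---so there is no in-paper argument to compare against; your write-up supplies a complete, self-contained justification consistent with the paper's definitions. The structure is the natural one: Proposition~\ref{prop:CartProd_2-sect} identifies the underlying graphs, the Sabidussi-style description of Cartesian hyperedges (every edge of $H\Box H'$ is of the form $e\times\{y\}$ or $\{x\}\times f$) pins down the labels, and part (2) then reduces to the inversion identity $[[H]_{L2}]_{L2}^{-1}=H$ plus part (1). Two small points you should make explicit to be airtight. First, when you apply the inversion identity to $[H\Box H']_{L2}$ you need $H\Box H'$ itself to be loop-free; this is immediate since every edge of $H\Box H'$ has the cardinality of an edge of a factor, hence cardinality at least $2$, but it deserves a sentence (the paper's standing convention restricts $L2$-sections to loop-free hypergraphs, so the lemma's ``for all hypergraphs'' must be read under that convention). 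Second, in the labeling comparison you only argue the forward inclusion (every hyperedge of $H\Box H'$ containing both endpoints has the form $\{x\}\times f$ with $f\in\mathcal{L}_2(\{y_1,y_2\})$); the reverse inclusion---that each such $\{x\}\times f$ does contain both endpoints---is trivial but is the other half of the set equality and should be stated. Neither point is a gap in substance.
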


\begin{lem}[Distance Formula]
  \label{lem:distCart}
  For all hypergraphs $H,H'$ we have:
  $$d_{H\Box H'}((x,a),(y,b)) =   d_{H}(x,y) + d_{H'}(a,b)$$
\end{lem}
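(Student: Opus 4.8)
The plan is to reduce the claim to the classical distance formula for the Cartesian product of \emph{graphs}, exploiting the two results on $2$-sections established just above. First I would invoke the Distance Formula for $2$-sections (Lemma~\ref{lem:distH2}), applied to the product hypergraph itself, which gives
$$d_{H\Box H'}((x,a),(y,b)) = d_{[H\Box H']_2}((x,a),(y,b)).$$
Next I would rewrite the right-hand side using Proposition~\ref{prop:CartProd_2-sect}, which identifies $[H\Box H']_2$ with $[H]_2\Box[H']_2$. This turns the problem into computing a distance in an ordinary Cartesian \emph{graph} product.

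For that graph-theoretic distance I would appeal to the well-known Cartesian distance formula $d_{G\Box G'}((x,a),(y,b)) = d_G(x,y)+d_{G'}(a,b)$ for graphs (see \cite{Hammack:2011a}), applied to $G=[H]_2$ and $G'=[H']_2$, obtaining $d_{[H]_2}(x,y)+d_{[H']_2}(a,b)$. Finally, a second application of Lemma~\ref{lem:distH2}, now to each factor separately, replaces $d_{[H]_2}(x,y)$ by $d_H(x,y)$ and $d_{[H']_2}(a,b)$ by $d_{H'}(a,b)$, which is exactly the asserted identity. The disconnected case is handled uniformly: both sides equal $\infty$ precisely when $x,y$ lie in different components of $H$ or $a,b$ lie in different components of $H'$, and the reduction respects this, since $2$-sections preserve connected components and a Cartesian product is connected iff all its factors are.

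If one prefers a self-contained argument not relying on the graph formula, the same identity can be proved directly by two inequalities. For ``$\le$'' I would concatenate a lift of a shortest $H$-path into the layer $H_1^{(x,a)}$ (reaching $(y,a)$) with a lift of a shortest $H'$-path into the layer through $(y,a)$ (reaching $(y,b)$), producing a walk of length $d_H(x,y)+d_{H'}(a,b)$. For ``$\ge$'' I would use the projection characterization of Cartesian edges (each edge $e$ has $|p_j(e)|=1$ for some coordinate, while $|p_i(e)|=|e|$ for the other), so that along any walk in $H\Box H'$ every step changes exactly one coordinate; the first coordinates then trace a walk from $x$ to $y$ in $H$ and the second coordinates a walk from $a$ to $b$ in $H'$, whence the total length is at least $d_H(x,y)+d_{H'}(a,b)$. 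The only point requiring care --- and the likely main obstacle --- is the bookkeeping in the lower bound: one must verify that the ``collapsed'' steps (those leaving a given coordinate fixed) can be discarded so as to extract genuine walks in the two factors, so that the length of a geodesic is correctly apportioned between the coordinates.
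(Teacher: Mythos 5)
Your proposal is correct and follows exactly the paper's own argument: apply Lemma~\ref{lem:distH2} to the product, identify $[H\Box H']_2$ with $[H]_2\Box[H']_2$ via Proposition~\ref{prop:CartProd_2-sect}, invoke the classical Cartesian distance formula for graphs from \cite{Hammack:2011a}, and translate back with Lemma~\ref{lem:distH2} applied to the factors. The additional self-contained two-inequality argument you sketch is a sound (and more explicit) alternative, but it is not needed, since the reduction already completes the proof.
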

\begin{proof}
  Combining the results of Lemma \ref{lem:distH2}, Lemma
  \ref{prop:CartProd_2-sect} and the well-known Distance Formula for the
  Cartesian graph product (Corollary 5.2 in \cite{Hammack:2011a}) yields
  to the result.
\end{proof}

\subsection{Prime Factor Decomposition}

\begin{thm}[UPFD \cite{Imrich67:Mengensysteme}]
	\label{thm:UPFDCart}
Every connected hypergraph has a unique prime
factor decomposition w.r.t.\ the Cartesian product.
\end{thm}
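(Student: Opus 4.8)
The plan is to prove existence by a routine induction and to spend the real effort on uniqueness, which I would reduce to the classical unique prime factor decomposition of the $2$-section and then correct by a ``regrouping'' argument dictated by the hyperedge structure. For existence I would induct on $|V(H)|$: by (P1) we have $|V(H_1\Box H_2)|=|V(H_1)|\cdot|V(H_2)|$, so every non-trivial factor has strictly fewer vertices, and since $H$ is connected the connectivity criterion for the Cartesian product forces each factor to be connected as well. Hence, if $H$ is not already prime, we write $H=H_1\Box H_2$ with $H_1,H_2$ connected, non-trivial and smaller, and recurse; as $K_1$ is the unit, the recursion terminates in a product of primes.

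For uniqueness, suppose $H$ admits two prime factorizations. Passing to $2$-sections and applying Proposition~\ref{prop:CartProd_2-sect} inductively, each hypergraph factorization of $H$ induces a graph factorization of $[H]_2$; moreover $[H]_2$ is connected by Lemma~\ref{lem:distH2}. The connected graph $[H]_2$ has a \emph{canonical} unique prime factorization $[H]_2=G_1\Box\cdots\Box G_k$ \cite{Hammack:2011a}, which fixes once and for all a coordinatization $V(H)=V(G_1)\times\cdots\times V(G_k)$ that does not depend on the chosen hypergraph factorization. The bridge between the two levels is the following observation, which I would isolate as a lemma: every hyperedge $e\in E(H)$ is a clique of $[H]_2$, and a clique of a Cartesian product of graphs must lie inside a single layer, since two vertices differing in more than one coordinate are non-adjacent in the product. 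Consequently each hyperedge is \emph{aligned} with exactly one graph factor, yielding an intrinsic coloring $\kappa:E(H)\to\{1,\dots,k\}$.

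Any hypergraph prime factorization of $H$, refined through $2$-sections down to the canonical factors $G_i$, then corresponds to a partition of $\{1,\dots,k\}$ into blocks, each block pooling the graph factors that constitute one hypergraph factor. Conversely, a block $I$ produces a genuine Cartesian hypergraph factor precisely when its $\kappa$-aligned hyperedges are \emph{consistent}, i.e.\ the $I$-direction hyperedges in a fixed layer depend only on the coordinates inside $I$ and not on the complementary ones. Using the Distance Formula (Lemma~\ref{lem:distCart}) together with the projection characterization of Cartesian edges recalled above, I would show that a partition is \emph{admissible} (it realizes $H$ as a Cartesian product whose factors have $2$-sections $\Box_{i\in I}G_i$) exactly when every block is consistency-closed, and that the admissible partitions are closed under common refinement. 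The finest admissible partition is therefore unique; its blocks are the hypergraph primes, and every prime factorization must equal it.

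The genuinely non-trivial point, and the one that separates the hypergraph case from the graph case, is this regrouping step: unlike for graphs, a prime hypergraph need not have a prime $2$-section. For instance, the hypergraph on $V(K_3)\times V(K_2)$ whose edges are the triangle of one $K_3$-layer, the three ordinary edges of the opposite $K_3$-layer, and the three ``vertical'' matching edges, has $2$-section $K_3\Box K_2$ yet is prime, because its $K_3$-direction hyperedges differ between the two layers. Thus the graph factorization over-refines, and the substance of the proof is to verify that the consistency relation among the $G_i$ is intrinsic to $H$ and that the admissible partitions form a meet-closed family with a unique finest element. This is the hypergraph analogue of the lattice-of-product-relations argument used for graphs in \cite{Hammack:2011a}, and I expect it to be the main obstacle.
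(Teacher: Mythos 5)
The survey contains no proof of Theorem~\ref{thm:UPFDCart} to compare against: the result is quoted from Imrich's 1967 paper, and the survey only remarks that a second, different proof appears in the cited work on the (weak) Cartesian product of infinite and directed hypergraphs. Judged on its own merits, your plan is sound, and it is in essence the uniqueness-level counterpart of the Bretto--Silvestre strategy that the survey describes algorithmically: compute the unique PFD of the $2$-section, then merge those graph factors that the hyperedge structure forces together. All of your key claims are true. Proposition~\ref{prop:CartProd_2-sect} does turn every hypergraph factorization into a graph factorization of $[H]_2$; a clique of a Cartesian product of graphs does lie in a single layer (two vertices differing in two or more coordinates are non-adjacent), so every hyperedge of size at least two has a well-defined direction $\kappa(e)\in\{1,\dots,k\}$; a partition of the canonical graph factors is realized by a hypergraph factorization exactly when each block is consistency-closed; and your example of a prime hypergraph whose $2$-section is $K_3\Box K_2$ is correct and isolates precisely why graph-level uniqueness does not transfer for free.

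Two points remain to close the argument. First, the meet-closure that you flag as ``the main obstacle'' is in fact short: if the blocks $B$ and $C$ are consistency-closed, so is $B\cap C$, because a translation of a hyperedge $e$ with $\kappa(e)\in B\cap C$ that alters coordinates outside $B\cap C$ can be performed in two stages, first altering the coordinates outside $B$ (legal since $B$ is consistency-closed), then altering those in $B\setminus C$, which lie outside $C$ (legal since $C$ is consistency-closed). Since admissibility is a blockwise condition, the common refinement of two admissible partitions is admissible; and if a prime factorization $P$ were strictly coarser than the finest admissible partition $Q$, then $Q$ would split some block $B$ of $P$ into consistency-closed sub-blocks, exhibiting a non-trivial factorization of the prime factor $H_B$, a contradiction. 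So every prime factorization equals $Q$, as you assert. Second, your proof tacitly assumes loopless hypergraphs: the survey defines $2$-sections only for loopless hypergraphs, and $\kappa$ is undefined on singleton edges. This hypothesis is not cosmetic but necessary for the theorem itself: with loops admitted, $\mathscr{L}K_1\cong \mathscr{L}K_1\Box\mathscr{L}K_1$ shows that $\mathscr{L}K_1$ is neither prime nor a product of primes, so already the existence of a PFD fails for connected hypergraphs with loops (Imrich's original setting of set systems excludes this). State looplessness explicitly and both your existence induction (where it guarantees that the only one-vertex factor is the unit $K_1$) and your uniqueness argument are complete.
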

The PFD of disconnected hypergraphs is in general not unique
\cite{IMKL-00,Hammack:2011a}.  Theorem~\ref{thm:UPFDCart} was also obtained
in \cite{OstHellmStad11:CartProd} using a different approach that
generalizes this result to infinite and directed hypergraphs, see Part
\ref{part:further aspects}.

Imrich and Peterin devised an algorithm for computing the PFD of connected
graphs $(V,E)$ in $O(|E|)$ time and space \cite{Imrich07:linear}.  Bretto
and Silvestre adapted this algorithm for the recognition of Cartesian
products of hypergraphs \cite{BrettoSilvestre10:Factorization}.  To this
end, the $L2$-sections of hypergraphs are used. We give here a short
outline of this algorithm.  For a given a connected hypergraph $H$ its
$L2$-section $[H]_{L2}$ is computed. Using the algorithm of Imrich and
Peterin one gets the PFD of $[H]_{2}$. This results in an edge coloring of
$[H]_{2}$, i.e., edges are colored with respect to the copies of the
corresponding prime factors.  After this, on has to check if the factors of
$[H]_{2}$ are the labeled prime factors of $[H]_{L2}$ and has to merge
factors if necessary. Finally, using the inverse $L2$-sections the prime
factors of $H$ are built back.  Although the PFD of the $2$-section
$[H]_{2} = (V,E')$ can be computed in $O(r(H)^2 \cdot |E| ) = O(|E'|)$
time, the check-and-merging-process together with the build back part for
the PFD of $H$ is more time-consuming and one ends in an overall time
complexity of $O((\log_2 |V|)^2 \cdot r(H)^3 \cdot|E|\cdot \Delta(H)^2)$
for a given hypergraph $H$.
  The PFD of connected simple hypergraphs $H=(V,E)$ with fixed maximum
  degree and fixed rank can then be computed in $O(|V||E|)$ time \cite{BrettoSilvestre10:Factorization}. 
The currently fastest algorithm is due to Hellmuth and Lehner \cite{HL:16}. 
In distinction from the method of Bretto et
al.\ this algorithm is in a sense conceptually simpler, as \emph{(1)} it is not
needed to transform the hypergraph $H$ into its so-called L2-section and
back and \emph{(2)} the test which (collections) of the putative factors 
are prime factors of $H$ follows a complete new idea based on
increments of fixed vertex-coordinate positions,
that allows an easy and efficient check to determine the PFD of $H$.

\begin{thm}[\cite{HL:16}]
	The PFD w.r.t.\ the Cartesian product of a hypergraph  $H = (V,E)$ with rank $r$ can be computed in 
	 $O(r^2|V||E|)$ time. 
	If we assume that $H$ has
	bounded rank, then this time-complexity can be reduced to $O(|E|\log^2(|V|))$. 
\label{thm:UPFDCartAlg}
\end{thm}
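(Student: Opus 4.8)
The plan is to build the decomposition on the clean interaction between the Cartesian product and the $2$-section recorded in Proposition~\ref{prop:CartProd_2-sect}, namely $[H'\Box H'']_2=[H']_2\Box[H'']_2$, together with the uniqueness supplied by Theorem~\ref{thm:UPFDCart}. Assuming $H$ is connected (necessary for the PFD to be defined and, by the argument behind Lemma~\ref{lem:distH2}, equivalent to connectedness of $[H]_2$), I would proceed in three stages. First, compute the $2$-section graph $[H]_2=(V,E')$. Second, compute the Cartesian prime factor decomposition of the \emph{graph} $[H]_2$; this returns a coordinatization in which each vertex $v\in V$ carries a tuple $(v_1,\dots,v_k)$ indexed by the graph prime factors, where $k=O(\log|V|)$ because every factor is non-trivial and the product of the factor orders equals $|V|$. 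Third -- the actual content -- recover the \emph{hypergraph} factors by deciding which of these $k$ graph coordinates must be grouped together.

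The grouping step is where the new idea enters. The key structural fact is that, by the projection description of the Cartesian product, every hyperedge $e\in E$ of $H=\Box_i H_i$ is \emph{confined} to a single factor: it has the form $\{x_1\}\times\cdots\times f\times\cdots\times\{x_n\}$ with $f\in E(H_i)$, so the vertices of $e$ differ only in the coordinates belonging to that one factor. I would therefore maintain a union--find structure over the $k$ coordinate positions and, for each hyperedge $e$, fix a reference vertex of $e$ and union all positions in which some other vertex of $e$ differs from it. Writing $S_1,\dots,S_m$ for the resulting blocks, the candidate factors are the sub-products obtained by projecting $V$ and the confined hyperedges onto the coordinates of each $S_i$. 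Correctness has two halves. Since each hyperedge lives in one true factor, unioning never crosses the boundary between two true hypergraph factors, so the computed partition refines the true one; conversely, if a true prime factor split into two blocks $A\sqcup B$, then no hyperedge of that factor would span both, i.e.\ every such hyperedge would be confined to the $A$-coordinates or to the $B$-coordinates, exhibiting a non-trivial factorization $H_i^A\Box H_i^B$ of a supposedly prime factor -- a contradiction. Hence the blocks are exactly the coordinate sets of the unique PFD of Theorem~\ref{thm:UPFDCart}, and the build-back by projection yields the prime factors without ever forming an $L2$-section.

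For the running time the three stages are accounted separately. Forming $[H]_2$ produces $|E'|\le\binom{r}{2}|E|=O(r^2|E|)$ edges in $O(r^2|E|)$ time. Computing the graph PFD of $[H]_2$ by a routine of cost $O(|V|\cdot|E'|)$ already gives $O(r^2|V||E|)$ and dominates, while the coordinate-based merge and the projective build-back each touch every hyperedge across its $O(\log|V|)$ coordinates and stay within the same bound; this yields the first estimate. For the second, I would fix $r=O(1)$, so that $[H]_2$ has $O(|E|)$ edges, and replace the quadratic step by the linear-time graph-PFD routine of Imrich--Peterin together with a coordinate-indexed realization of the merge; the surviving overhead is the handling of each of the $O(|E|)$ hyperedges across $O(\log|V|)$ factors, with a further $O(\log|V|)$ factor absorbed by the comparison/lookup structures, giving $O(|E|\log^2|V|)$.

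The part I expect to be genuinely delicate is not the complexity accounting but the correctness of the grouping, specifically the converse direction: turning ``no hyperedge spans both $A$ and $B$'' into an honest Cartesian factorization $H_i^A\Box H_i^B$ of the putative prime factor. This requires checking that the confined, projected hyperedges reassemble into a product exactly as prescribed by the definition of $\Box$, and that the induced $2$-section of each piece is the corresponding graph sub-product (again via Proposition~\ref{prop:CartProd_2-sect}), so that no spurious merging or splitting occurs. Pinning down the exact logarithmic overhead in the bounded-rank bound -- in particular justifying the second $\log|V|$ factor through the chosen coordinate-comparison data structure rather than through a slower graph-PFD routine -- is the other point that needs care.
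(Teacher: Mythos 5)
Your architecture (form the $2$-section, compute its graph PFD, then group graph coordinates into hypergraph factors by a confinement-based union--find) matches the general plan of the cited algorithm, but the step you yourself flag as delicate is exactly where the proposal breaks: the converse half of your correctness argument is false. ``Every hyperedge is confined to the $A$-coordinates or to the $B$-coordinates'' does \emph{not} imply a factorization $H_i^A\Box H_i^B$. Concrete counterexample: let $V=\{0,1,2\}\times\{0,1\}$ and let $E$ consist of the triple $\{(0,0),(1,0),(2,0)\}$, the three pairs $\{(0,1),(1,1)\}$, $\{(1,1),(2,1)\}$, $\{(0,1),(2,1)\}$, and the three vertical pairs $\{(a,0),(a,1)\}$ for $a=0,1,2$. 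Then $[H]_2=K_3\Box K_2$, so the graph PFD yields two coordinate positions, and every hyperedge of $H$ is confined to exactly one of them; your union--find performs no merges and outputs two blocks, hence a non-trivial ``factorization''. But $H$ is prime: in any factorization compatible with the unique graph PFD of its $2$-section, the two $K_3$-layers would have to carry identical hyperedge sets under the natural identification $(a,0)\mapsto(a,1)$, whereas here one layer carries only the full triple and the other only the three $2$-element edges. Your contradiction argument fails precisely at this point: from ``no hyperedge of $H_i$ spans both $A$ and $B$'' one cannot conclude $H_i=H_i^A\Box H_i^B$, because the Cartesian product additionally forces each hyperedge of $H_i^A$ to reappear, translated, in \emph{every} $H_i^A$-layer, and confinement alone provides no such layer-to-layer consistency.

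What is missing is therefore a genuine verification-and-merge step: after the confinement-based grouping one must test, for each candidate block, whether all parallel layers carry the same hyperedge sets under the coordinate identification, and merge blocks whenever the test fails (in the example above this forces the two blocks to merge, correctly recognizing $H$ as prime). This layer-consistency test is the actual content of the cited algorithm --- the survey describes it as a ``complete new idea based on increments of fixed vertex-coordinate positions'' --- and it is likewise what Bretto and Silvestre's earlier check-and-merge procedure on $L2$-sections accomplishes; it is also where the dominant terms $O(r^2|V||E|)$, respectively $O(|E|\log^2|V|)$, really come from, not from the graph PFD or the union--find. As written, your procedure would output non-trivial factorizations of prime hypergraphs, so the complexity accounting, even if plausible, bounds the running time of an incorrect algorithm.
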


\subsection{Invariants}

Much of the literature on Cartesian hypergraph products is concerned with
relationships of invariants of the factors with those of the product. In
this section we compile the most salient results.

\begin{thm}[Automorphism Group \cite{Imrich67:Mengensysteme}]
	\label{thm:AutCart}
  The automorphism group of the Cartesian product of connected prime
  hypergraphs is isomorphic to the automorphism group of the disjoint union
  of the factors.
\end{thm}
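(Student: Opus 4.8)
The plan is to reduce the statement to the following normal-form description of automorphisms, the hypergraph analogue of the classical result for the Cartesian graph product: if $H = \Box_{i=1}^n H_i$ with all $H_i$ connected and prime, then every $\varphi \in \mathrm{Aut}(H)$ has the form
\[
\varphi(x_1,\dots,x_n) = \big(\psi_1(x_{\pi(1)}),\dots,\psi_n(x_{\pi(n)})\big)
\]
for some permutation $\pi$ of $\{1,\dots,n\}$ and isomorphisms $\psi_i : H_{\pi(i)} \to H_i$ (which in particular forces $H_{\pi(i)} \cong H_i$). Once this normal form is available, the rest is bookkeeping.

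First I would show that automorphisms respect the layer structure. Fix a base vertex $a$ and consider the internal decomposition of $H$ into the layers $H_1^a,\dots,H_n^a$, which satisfy $H_i^a \cong H_i$ and are connected and prime. Since $\varphi$ is an isomorphism, the images $\varphi(H_i^a)$ are connected prime partial hypergraphs through $b := \varphi(a)$ whose product is again $H$. The crux is the internal refinement of unique prime factorization underlying Theorem~\ref{thm:UPFDCart}: the partition of $E(H)$ into ``colour classes'' determined by the projection characterisation (each edge $e$ has $p_i(e)\in E_i$ for exactly one index $i$ and $|p_j(e)|=1$ otherwise) is canonically determined by $H$ up to relabelling of the colours. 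Consequently $\varphi$ must carry each colour class onto a single colour class, i.e.\ there is a permutation $\pi$ with $\varphi(H_{\pi(i)}^a) = H_i^b$ for all $i$. Restricting $\varphi$ to $H_{\pi(i)}^a$ then yields an isomorphism onto $H_i^b \cong H_i$, and composing with the canonical identifications $H_i^a \cong H_i$ produces the maps $\psi_i$ in the normal form above.

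With the normal form established, define $\Phi : \mathrm{Aut}(H) \to \mathrm{Aut}\big(\bigsqcup_{i=1}^n H_i\big)$ by sending $\varphi$ to the automorphism of the disjoint union that maps the component $H_{\pi(i)}$ onto the component $H_i$ via $\psi_i$. This is well defined precisely because the connected components of $\bigsqcup_i H_i$ are exactly the $H_i$, so an automorphism of the disjoint union is nothing but a permutation of the components together with a family of isomorphisms between matched components --- the same data $(\pi,(\psi_i))$ extracted above. I would then check that $\Phi$ is a bijection (surjectivity is immediate, since any such $(\pi,(\psi_i))$ reassembles into a coordinate-permuting automorphism of $H$) and a group homomorphism, the latter being a routine verification that composition of two normal-form automorphisms corresponds to composition of the associated component permutations and layer isomorphisms.

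The main obstacle is the first step, namely that a single automorphism must globally respect one coherent coordinate system. Locally around any vertex the factorization behaves well, but to conclude that $\varphi$ permutes the global colour classes --- rather than mixing them across different regions of $H$ --- one genuinely needs connectedness of the factors together with the uniqueness in Theorem~\ref{thm:UPFDCart}; this is exactly where the analogous graph argument reconstructs the product colouring canonically (via the Cartesian skeleton, or the transitive closure of the Djokovi\'c--Winkler relation), and the hypergraph version rests on the parallel fact that the layer partition is an isomorphism invariant of $H$. The Distance Formula (Lemma~\ref{lem:distCart}) and the compatibility $[H'\Box H'']_2 = [H']_2\Box[H'']_2$ of Proposition~\ref{prop:CartProd_2-sect} are the natural auxiliary tools for pinning down that $\varphi$ maps layers isometrically onto layers. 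The disconnected case is genuinely excluded, consistent with the remark that the PFD of disconnected hypergraphs need not be unique.
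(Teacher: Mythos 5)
The paper itself offers no proof of Theorem \ref{thm:AutCart}: it is quoted directly from \cite{Imrich67:Mengensysteme}. Measured against the classical argument that underlies that citation, your overall architecture is the right one: a normal form for automorphisms of a Cartesian product of connected prime factors, the observation that the data $(\pi,(\psi_i))$ is exactly the data of an automorphism of the disjoint union (because the connected components of $\bigsqcup_i H_i$ are precisely the $H_i$), and a routine check that composition is respected. The second and third paragraphs of your proposal are indeed bookkeeping, and your bookkeeping is correct.

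The gap is in the first step, and it is sharper than your write-up suggests. Uniqueness of the PFD in the sense of Theorem \ref{thm:UPFDCart} --- uniqueness of the prime factors up to isomorphism and order --- does \emph{not} imply that an automorphism permutes the colour classes of the product colouring; these are genuinely different statements. The distinction is not pedantic: for the strong graph product, connected graphs also admit a unique PFD, yet $K_4 = K_2 \boxtimes K_2$ has $|\mathrm{Aut}(K_4)| = 24$ while $|\mathrm{Aut}(K_2 \sqcup K_2)| = 8$, so the exact analogue of Theorem \ref{thm:AutCart} fails there even though UPFD holds. What your normal-form step actually requires is the stronger, coordinatized uniqueness that Imrich proves for Cartesian products of set systems: every isomorphism between Cartesian products of connected prime hypergraphs maps layers onto layers, i.e.\ the product edge-colouring is an isomorphism invariant of $H$, not merely the multiset of factors. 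You gesture at precisely this (``internal refinement'', canonical colour classes, layer partition as invariant), so the idea is present, but as written the appeal to Theorem \ref{thm:UPFDCart} does not deliver it, and this layer-preservation lemma is the entire mathematical content of the theorem. A complete proof would have to establish it --- for instance by the hypergraph analogue of the convex-layer/Djokovi\'c--Winkler machinery, or by passing to the $2$-section via Proposition \ref{prop:CartProd_2-sect} and Lemma \ref{lem:distCart}, invoking the graph-case coordinatization there, and lifting the colouring back through the $L2$-section; the last route needs care because the $2$-section functor is not injective on hypergraphs, which is exactly why the check-and-merge step appears in the factorization algorithm of \cite{BrettoSilvestre10:Factorization}.
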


\begin{thm}[$k$-fold Covering \cite{Doerfler79:CoversDirectProd}]
  \label{thm:k-covCart}
  Let $H_i'=(V_i',E_i')$ be a $k_i$-fold covering of the hypergraph
  $H_i=(V_i,E_i)$ via a covering projection $\pi_i$, $i=1,2$.  Then $H'_1
  \Box H'_2$ is a $k_1k_2$-fold cover of $H_1 \Box H_2$ via a covering
  projection $\pi$ induced naturally by $\pi_1$ and $\pi_2$, i.e, define
  $\pi$ by:
\begin{itemize}
\item[] $\pi((x,y)) = (\pi_1(x),\pi_2(y))$, for $(x,y) \in V'_1\times V'_2$, 
\item[] $\pi(\{x\}\times e_2) = \{\pi_1(x)\}\times \pi_2(e_2)$, for $x \in
  V'_1,\, e_2\in E'_2$,
\item[] $\pi(e_1\times\{y\}) = \pi_1(e_1)\times\{\pi_2(y)\}$, for $e_1 \in
  E'_1,\, y\in V'_2$.
\end{itemize}
\end{thm}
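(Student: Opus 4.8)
The plan is to verify directly the two defining conditions of a $k_1k_2$-fold covering for the map $\pi$, together with surjectivity, exploiting the two-type structure of the edge set of a Cartesian product: every edge of $H_1'\Box H_2'$ has the form $\{x\}\times f'$ with $x\in V_1'$, $f'\in E_2'$ (an \emph{$H_2'$-layer edge}) or $e_1'\times\{y\}$ with $e_1'\in E_1'$, $y\in V_2'$ (an \emph{$H_1'$-layer edge}), and likewise for the base. First I would check that $\pi$ is a surjective homomorphism. On vertices, $\pi^{-1}((a,b))=\pi_1^{-1}(a)\times\pi_2^{-1}(b)$, which is non-empty because $\pi_1,\pi_2$ are surjective; applying $\pi$ vertex-wise to $\{x\}\times f'$ yields $\{\pi_1(x)\}\times\pi_2(f')$, and since $\pi_2$ is a homomorphism $\pi_2(f')\in E_2$, so the image is again an $H_2$-layer edge of $H_1\Box H_2$ (symmetrically for $H_1$-layer edges). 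This shows $\pi$ is a homomorphism whose induced edge map agrees with the stated formulas. Surjectivity on edges follows by lifting: given $\{a\}\times f\in E(H_1\Box H_2)$, choose $x\in\pi_1^{-1}(a)$ and $f'\in\pi_{2,\mathcal{E}}^{-1}(f)$ (both non-empty as $\pi_1,\pi_2$ are coverings) to obtain a preimage $\{x\}\times f'$.

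Next I would establish the fiber-size condition~(1). For vertices this is immediate from $\pi^{-1}((a,b))=\pi_1^{-1}(a)\times\pi_2^{-1}(b)$, giving $k_1k_2$. For edges I treat the two types separately. Fix a base edge $\{a\}\times f$ with $f\in E_2$. Because $\pi$ preserves edge type, every preimage is an $H_2$-layer edge $\{x\}\times f'$, and $\pi(\{x\}\times f')=\{\pi_1(x)\}\times\pi_2(f')$ equals $\{a\}\times f$ exactly when $\pi_1(x)=a$ and $\pi_{2,\mathcal{E}}(f')=f$. The map $(x,f')\mapsto\{x\}\times f'$ is injective (the first coordinate recovers $x$, the set of second coordinates recovers $f'$), so the number of such edges is $|\pi_1^{-1}(a)|\cdot|\pi_{2,\mathcal{E}}^{-1}(f)|=k_1k_2$. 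The case $e\times\{b\}$ is symmetric.

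Finally I would verify the disjointness condition~(2). Let $\{x\}\times f'$ and $\{\tilde x\}\times\tilde f'$ be two \emph{distinct} edges lying over the same $\{a\}\times f$; their intersection is $(\{x\}\cap\{\tilde x\})\times(f'\cap\tilde f')$. If $x\neq\tilde x$ the first factor is empty. If $x=\tilde x$ then $f'\neq\tilde f'$, and since both belong to $\pi_{2,\mathcal{E}}^{-1}(f)$, condition~(2) for the covering $\pi_2$ gives $f'\cap\tilde f'=\emptyset$. Either way the two edges are disjoint; the $H_1$-layer edges are handled symmetrically using the covering property of $\pi_1$.

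I expect the main obstacle to be organizational rather than deep: keeping the two edge families cleanly separated so that the fiber counts multiply to $k_1k_2$ and do not interfere. The one point requiring a remark is that an $H_1$-layer edge and an $H_2$-layer edge can coincide only when both underlying factor edges are loops; under the standing assumption that the relevant hypergraphs are loopless (as for simple hypergraphs, for which $\Box$ preserves simplicity) the two families are genuinely disjoint and no preimage is double-counted, so the argument above is complete.
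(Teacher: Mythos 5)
The paper itself contains no proof of this theorem: it is quoted as a known result of D\"orfler \cite{Doerfler79:CoversDirectProd}, so there is no in-paper argument to compare yours against, and your proof must be judged on its own merits. On those merits it is sound, and it is the natural direct verification: you split $E(H_1'\Box H_2')$ into the two layer types, check that $\pi$ is a surjective homomorphism whose induced edge map agrees with the stated formulas, compute vertex fibers as $\pi^{-1}((a,b))=\pi_1^{-1}(a)\times\pi_2^{-1}(b)$, count the edge fiber over $\{a\}\times f$ as $|\pi_1^{-1}(a)|\cdot|\pi_{2,\mathcal{E}}^{-1}(f)|=k_1k_2$ via injectivity of $(x,f')\mapsto\{x\}\times f'$, and deduce disjointness of distinct edges in a common fiber from the corresponding disjointness condition for $\pi_1$ resp.\ $\pi_2$. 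The observation that fibers are type-pure in the loopless case, which is what makes the count clean, is exactly right.

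The one soft spot is your closing paragraph. The theorem is stated for arbitrary hypergraphs, and the paper's looplessness convention applies only to the material on $2$-sections and $L2$-sections, so you cannot discharge the loop case by appeal to a ``standing assumption''; as written, your proof is complete only for loopless factors. Fortunately the case you flag is not a genuine obstruction, and your argument extends with one observation: if $\{a\}\in E_1$ is a loop, then its $k_1$ preimage edges under $\pi_{1,\mathcal{E}}$ are pairwise disjoint nonempty subsets of the $k_1$-element fiber $\pi_1^{-1}(a)$, hence are themselves loops, namely exactly the sets $\{x\}$ with $x\in\pi_1^{-1}(a)$; the same holds for loops of $H_2$. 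Consequently, when a base edge $\{(a,b)\}$ is simultaneously of both types (i.e., $\{a\}\in E_1$ and $\{b\}\in E_2$), its type-1 and type-2 preimage families are not merely overlapping but identical, both being $\left\{\{(x,y)\}\mid x\in\pi_1^{-1}(a),\ y\in\pi_2^{-1}(b)\right\}$, so no double counting occurs: the fiber still has exactly $k_1k_2$ elements, and they are pairwise disjoint singletons. Adding that remark makes your proof cover the general statement.
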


\begin{thm}[Conformal Hypergraphs \cite{Bretto06:Helly}]
  \label{thm:ConfCart}
  $H = H_1 \Box H_2$ is conformal if and only if $H_1$ and $H_2$ are
  conformal.
\end{thm}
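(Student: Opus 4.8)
The plan is to reduce the whole statement to the behaviour of cliques under the graph Cartesian product, exploiting the identity $[H_1\Box H_2]_2=[H_1]_2\Box[H_2]_2$ from Proposition~\ref{prop:CartProd_2-sect} together with the projection description of Cartesian product edges (the Sabidussi-type characterization stated above). Recall that conformality asks precisely that the maximal cliques of the $2$-section be exactly the hyperedges. Since in the loop-free setting all hyperedges have at least two vertices, I restrict attention throughout to cliques of size $\ge 2$; equivalently I assume the factors, and hence the product, have no isolated vertices. Isolated vertices would give maximal cliques of size $1$, which are never hyperedges and whose treatment is parallel on both sides (a vertex $(x,y)$ is isolated in $[H]_2$ precisely when $x$ and $y$ are isolated in the respective factor $2$-sections). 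The whole argument then rests on a structural observation about the graph $[H]_2=[H_1]_2\Box[H_2]_2$.

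First I would prove a \emph{clique localisation lemma}: every clique $W$ with $|W|\ge 2$ of a graph Cartesian product $G_1\Box G_2$ lies in a single layer. Picking two distinct $u,v\in W$, adjacency in a Cartesian product forces them to agree in exactly one coordinate, say $u_1=v_1$; for any further $w\in W$, the relations $w\sim u$ and $w\sim v$ cannot both hold with $w_1\ne u_1$, since that would force $w_2=u_2=v_2$, contradicting $u_2\ne v_2$. Hence every vertex of $W$ shares the first coordinate, $p_2$ is injective on $W$, and $D:=p_2(W)$ is a clique of $G_2$ (symmetrically $W$ may instead sit in a $G_1$-layer with $C:=p_1(W)$ a clique of $G_1$). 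This passage to a factor respects maximality: if $W=\{x\}\times D$ then $W$ can be enlarged inside its layer only when $D$ can be enlarged in $G_2$, and it cannot be enlarged outside its layer at all, because any vertex with a different first coordinate is adjacent to at most one element of $W$, and here $|W|\ge 2$. Consequently, for $|W|\ge 2$, the maximal cliques of $[H_1]_2\Box[H_2]_2$ are exactly the sets $C\times\{y\}$ with $C$ a maximal clique of $[H_1]_2$ and $y\in V_2$, together with the sets $\{x\}\times D$ with $D$ a maximal clique of $[H_2]_2$ and $x\in V_1$.

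With this in hand both directions are short. For the ``if'' direction, assume $H_1,H_2$ conformal and let $W$ be a maximal clique of $[H]_2$; by the lemma $W=C\times\{y\}$ (or $\{x\}\times D$) with $C$ a maximal clique of $[H_1]_2$. Conformality of $H_1$ gives $C\in E_1$, and then the projection description yields $C\times\{y\}\in E(H)$ (take $i=1$: one has $p_1(W)=C\in E_1$ and $|p_2(W)|=1$). Hence every maximal clique of $[H]_2$ is a hyperedge, i.e.\ $H$ is conformal. For the ``only if'' direction, assume $H$ conformal and let $C$ be any maximal clique of $[H_1]_2$; fix $y\in V_2$. By the lemma $C\times\{y\}$ is a maximal clique of $[H]_2$, so conformality of $H$ gives $C\times\{y\}\in E(H)$, and the projection description forces $p_1(C\times\{y\})=C\in E_1$. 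Thus every maximal clique of $[H_1]_2$ is a hyperedge, so $H_1$ is conformal, and symmetrically so is $H_2$.

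I expect the only real obstacle to be the clique localisation lemma, and in particular its maximality half: verifying that a maximal clique of a factor remains maximal once lifted into a layer of the product, and that no maximal clique of the product can straddle two layers. The $2$-section identity and the projection description of the product edges are then applied essentially mechanically, and the only point demanding care is the bookkeeping around cliques of size one, which I sidestep via the loop-free / no-isolated-vertex convention noted above.
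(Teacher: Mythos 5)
The survey itself contains no proof of this statement; it is quoted directly from \cite{Bretto06:Helly}. So there is no in-paper argument to compare against, and your proposal has to stand on its own. It does: your route is exactly the one the survey's toolkit suggests, namely Proposition~\ref{prop:CartProd_2-sect} to identify $[H_1\Box H_2]_2$ with $[H_1]_2\Box[H_2]_2$, the Sabidussi-type projection description of Cartesian edges, and a clique localisation lemma for graph Cartesian products (every clique of size at least two lies in a single layer, and maximality passes back and forth between layer and factor). Both halves of that lemma are argued correctly, and the two directions of the equivalence then follow exactly as you say; note also that you are rightly using the standard reading of conformality (every maximal clique of $[H]_2$ is a hyperedge), since the survey's literal wording (``every clique is an edge'') would be vacuous for loopless hypergraphs.

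One caveat, because your justification for it is wrong even though your convention saves you. You dismiss cliques of size one on the grounds that isolated vertices are ``treated in parallel on both sides''. Your own parenthetical shows they are not: $(x,y)$ is isolated in $[H]_2$ only if \emph{both} $x$ and $y$ are isolated in their factor $2$-sections, so an isolated vertex in just one factor leaves no isolated vertex in the product. This asymmetry is precisely where the theorem, as literally stated with the survey's definitions, fails: take $H_1$ a single vertex with no edges and $H_2=K_2$; then $H_1\Box H_2\cong K_2$ is conformal, while $H_1$ is not, since its singleton maximal clique can never be an edge once loops are excluded. So the no-isolated-vertex hypothesis you impose is not cosmetic bookkeeping but an essential assumption; it is implicit in \cite{Bretto06:Helly}, where hypergraphs satisfy $\bigcup_{e\in E}e=V$. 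With that convention made explicit, your proof is complete and correct.
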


\begin{thm}[Helly Property \cite{Bretto06:Helly}]
	\label{thm:HellyCart}
  $H = H_1 \Box H_2$ has the Helly property if and only if $H_1$ and $H_2$
  have the Helly property.
\end{thm}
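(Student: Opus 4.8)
The plan is to give a direct combinatorial argument based on the explicit edge set of $H = H_1 \Box H_2$, handling the two directions separately. Throughout I read the Helly property in its operational form: a hypergraph has it precisely when every intersecting family of edges has a common vertex (equivalently, is contained in some star $H(v)$). The edges of $H$ fall into two classes: the \emph{Type-1} edges $e\times\{y\}$ with $e\in E(H_1)$, $y\in V(H_2)$, and the \emph{Type-2} edges $\{x\}\times f$ with $x\in V(H_1)$, $f\in E(H_2)$. The whole argument rests on recording, for two edges $g,g'$ of $H$, exactly when $g\cap g'\neq\emptyset$: two Type-1 edges $e\times\{y\}$ and $e'\times\{y'\}$ meet iff $y=y'$ and $e\cap e'\neq\emptyset$; two Type-2 edges meet symmetrically; and a Type-1 edge $e\times\{y\}$ meets a Type-2 edge $\{x'\}\times f'$ iff $x'\in e$ and $y\in f'$, since $(e\times\{y\})\cap(\{x'\}\times f')=(e\cap\{x'\})\times(\{y\}\cap f')$.

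For the easy direction ($H$ Helly $\Rightarrow H_1,H_2$ Helly) I would fix an intersecting family $\mathcal{E}$ in $H_1$, choose any $y\in V(H_2)$, and push $\mathcal{E}$ forward into the family $\{e\times\{y\}\mid e\in\mathcal{E}\}$ of edges of $H$. By the first intersection rule this is an intersecting family in $H$, so the Helly property of $H$ supplies a vertex $(x_0,y_0)$ lying in every $e\times\{y\}$; reading off the first coordinate gives $x_0\in e$ for all $e\in\mathcal{E}$, i.e.\ a common vertex in $H_1$. The argument for $H_2$ is symmetric.

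For the converse ($H_1,H_2$ Helly $\Rightarrow H$ Helly) I would take an intersecting family $\mathcal{F}$ in $H$ and split it as $\mathcal{F}=\mathcal{F}_1\cup\mathcal{F}_2$ according to type. If $\mathcal{F}_2=\emptyset$, the intersection rule for two Type-1 edges forces a single common second coordinate $y^*$, while the first coordinates form an intersecting family in $H_1$; the Helly property of $H_1$ then yields an $x_0$ common to all of them, and $(x_0,y^*)$ does the job. The case $\mathcal{F}_1=\emptyset$ is symmetric. The interesting case is when both $\mathcal{F}_1$ and $\mathcal{F}_2$ are nonempty: here the intra-type rules force all Type-1 edges to share one second coordinate $y^*$ and all Type-2 edges to share one first coordinate $x^*$, and then the cross-type rule, applied to each Type-1 edge against any fixed Type-2 edge and vice versa, forces $x^*\in e$ for every Type-1 first coordinate $e$ and $y^*\in f$ for every Type-2 second coordinate $f$. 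One then checks directly that $(x^*,y^*)$ lies in every edge of $\mathcal{F}$.

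I expect the only delicate point to be the mixed case of the converse, and the (pleasant) subtlety there is that the Helly property of the factors is not actually needed: the common vertex is produced purely from the coordinate constraints. Care is required to see that $x^*$ and $y^*$ are well defined even when one type contributes only a single edge, to note that a degenerate singleton edge may be assigned to either type without harm, and to keep the pure cases (where the factor Helly property genuinely enters) cleanly separated from the mixed case. An alternative route would invoke the characterization ``$H$ is Helly iff $H^*$ is conformal'' together with Theorem~\ref{thm:ConfCart}, but since the dual of a Cartesian product is cumbersome to describe, the direct layer-by-layer argument above is the cleaner option.
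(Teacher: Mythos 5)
Your proof is correct, and it is worth noting that the survey itself never proves Theorem~\ref{thm:HellyCart}: the statement is quoted with a citation to \cite{Bretto06:Helly}, so there is no in-paper argument to compare against, and your self-contained proof genuinely adds something. The intersection rules you record are exactly right, since $(e\times\{y\})\cap(e'\times\{y'\})=(e\cap e')\times(\{y\}\cap\{y'\})$ and $(e\times\{y\})\cap(\{x'\}\times f')=(e\cap\{x'\})\times(\{y\}\cap f')$, and the case split of an intersecting family $\mathcal{F}=\mathcal{F}_1\cup\mathcal{F}_2$ by edge type is exhaustive. The pure cases correctly reduce to the Helly property of the corresponding factor (the forward direction likewise uses only the pushforward $\{e\times\{y\}\mid e\in\mathcal{E}\}$, which is legitimate because vertex sets are non-empty by the paper's conventions). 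The observation that the mixed case needs no Helly hypothesis at all---the common vertex $(x^*,y^*)$ is forced purely by the coordinate constraints, since every Type-1 edge must contain $(x^*,y^*)$ once $x^*$ lies in its first factor and its second coordinate is $y^*$, and symmetrically for Type-2 edges---is a pleasant structural point that a citation-only treatment hides. Two small remarks: first, the degenerate singleton-edge issue you flag cannot actually occur in the theorem's intended scope, because the paper defines the Helly property for simple hypergraphs, and the Cartesian product of simple hypergraphs is simple, so no edge of $H$ is a singleton; your arbitrary-assignment fix is harmless but unnecessary. Second, you are right to reject the route via duality and Theorem~\ref{thm:ConfCart}: that theorem concerns conformality of $H_1\Box H_2$ itself, whereas the duality characterization would require conformality of $(H_1\Box H_2)^*$, and the dual of a Cartesian product is not the Cartesian product of the duals, so the direct layer-by-layer argument is indeed the right choice.
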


\begin{thm}[Colored Hyperedge Property \cite{Bretto09:HyperCartProd}]
	\label{thm:ColorICart}
  If $H_1$ and $H_2$ have the colored hyperedge property then $H = H_1 \Box
  H_2$ has the colored hyperedge property.
\end{thm}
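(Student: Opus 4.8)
The plan is to exhibit an explicit proper edge-coloring of $H = H_1 \Box H_2$ that uses exactly $\Delta(H)$ colors; combined with the trivial bound $q(H) \geq \Delta(H)$ recorded above, this forces $q(H) = \Delta(H)$, which is precisely the colored hyperedge property. The first step is to compute the maximum degree of the product. A vertex $(x,y) \in V(H)$ lies exactly in the edges $e \times \{y\}$ with $x \in e \in E(H_1)$ and in the edges $\{x\} \times f$ with $y \in f \in E(H_2)$; by the projection description of $E(H_1 \Box H_2)$ given above these two families are disjoint when the factors are simple. Hence $\deg_H((x,y)) = \deg_{H_1}(x) + \deg_{H_2}(y)$, and taking the maximum over all vertices yields $\Delta(H) = \Delta(H_1) + \Delta(H_2)$.

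Next I would invoke the hypotheses: since $H_1$ and $H_2$ have the colored hyperedge property, there exist proper edge-colorings $c_1 \colon E(H_1) \to \{1,\dots,\Delta(H_1)\}$ and $c_2 \colon E(H_2) \to \{1,\dots,\Delta(H_2)\}$. I define a coloring $c$ of $E(H)$ on the two types of product edges using two disjoint ranges of colors: set $c(e \times \{y\}) = c_1(e)$ for the $H_1$-layer edges and $c(\{x\} \times f) = \Delta(H_1) + c_2(f)$ for the $H_2$-layer edges. This employs only the $\Delta(H_1) + \Delta(H_2) = \Delta(H)$ colors $\{1,\dots,\Delta(H_1)+\Delta(H_2)\}$, and by the disjointness of the two edge families for simple hypergraphs it is well defined.

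It remains to check that $c$ is proper, which I would do by a short case analysis on two distinct incident edges $g, g'$ of $H$. If $g$ and $g'$ are of different types, their colors lie in the two disjoint ranges and are automatically distinct. If both are $H_1$-layer edges, say $g = e \times \{y\}$ and $g' = e' \times \{y'\}$, then $g \cap g' \neq \emptyset$ forces $y = y'$ and $e \cap e' \neq \emptyset$; since $g \neq g'$ this gives $e \neq e'$, so $e$ and $e'$ are distinct incident edges of $H_1$ and $c_1(e) \neq c_1(e')$. The case of two $H_2$-layer edges is symmetric. Thus $c$ is a proper edge-coloring with $\Delta(H)$ colors, so $q(H) \leq \Delta(H)$ and therefore $q(H) = \Delta(H)$.

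I do not expect a genuine obstacle in this argument; the only points that require care are the degree identity and the observation that incident edges of the same type project to incident edges of the corresponding factor, both of which follow directly from the projection characterization of $E(H_1 \Box H_2)$. I would also remark that the statement is deliberately one-directional: only the forward implication is used, and the coloring above is all that is needed for it.
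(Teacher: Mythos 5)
Your proof is correct, and in fact the survey itself gives no proof of this statement at all: Theorem~\ref{thm:ColorICart} is quoted from \cite{Bretto09:HyperCartProd} without argument, so your construction fills a genuine gap rather than duplicating or diverging from anything in the text. The argument you give is the natural one and almost certainly the one behind the cited result: the degree identity $\deg_H((x,y))=\deg_{H_1}(x)+\deg_{H_2}(y)$ gives $\Delta(H)=\Delta(H_1)+\Delta(H_2)$, and coloring the $H_1$-layer edges with a $\Delta(H_1)$-coloring of $H_1$ and the $H_2$-layer edges with a shifted $\Delta(H_2)$-coloring of $H_2$ is proper, since edges of different types never clash (disjoint color ranges) and incident edges of the same type project to distinct incident edges of the corresponding factor. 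One caveat you rightly flag but should make explicit as a hypothesis: both the degree identity and the well-definedness of your coloring require that the two edge families $\{e\times\{y\}\}$ and $\{\{x\}\times f\}$ be disjoint, which fails exactly when $H_1$ has a loop $\{x\}$ and $H_2$ has a loop $\{y\}$ (then $\{(x,y)\}$ is a single edge counted twice, and $\Delta(H)$ can be strictly less than $\Delta(H_1)+\Delta(H_2)$). So the proof as written establishes the theorem for loopless (in particular simple) factors, which is the standard setting of \cite{Bretto09:HyperCartProd}; for hypergraphs with loops the statement would need a separate check.
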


\begin{thm}[(Strong) Chromatic Number \cite{Bretto09:HyperCartProd}]
  \label{thm:ColorIICart}
  Let $\chi_i$ and $\chi$ (respectively $\gamma_i$ and $\gamma$) be the
  chromatic (resp. strong chromatic number) of $H_i$ and $H =
  \Box_{i=1}^n$. Then $\chi = \max_i\{\chi_i\}$ and $\gamma =
  \max_i\{\gamma_i\}$.
\end{thm}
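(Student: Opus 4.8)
The plan is to prove each equality by two inequalities, treating the chromatic number and the strong chromatic number in parallel, since the two arguments differ only in which ``no bad edge'' condition must be checked (no monochromatic edge for $\chi$, all vertices distinctly colored for $\gamma$). By the associativity noted earlier it would suffice to handle $n=2$, but the construction below works verbatim for arbitrary $n$, so I would simply carry it out for $H=\Box_{i=1}^n H_i$. The single structural fact driving everything is the projection characterization of the Cartesian product: every edge $e\in E(H)$ has $|p_j(e)|=|e|$ for exactly one index $j$ and $|p_k(e)|=1$ for $k\neq j$. Equivalently, each product edge is contained in a single $H_j$-layer and is a copy of an edge of $H_j$ sitting inside it.

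For the lower bounds I would exploit that every layer $H_j^w$ is an \emph{induced} partial hypergraph with $H_j^w\cong H_j$. Restricting any proper (resp. strong) coloring $c$ of $H$ to the vertex set of a fixed $H_j$-layer yields a coloring of $H_j^w$; because the edges of the induced layer are exactly the edges of $H$ lying inside it, each such edge is non-monochromatic (resp. rainbow) under the restriction. Hence $c$ restricts to a proper (resp. strong) coloring of $H_j\cong H_j^w$, so $\chi_j\le\chi$ and $\gamma_j\le\gamma$ for every $j$, giving $\max_j\chi_j\le\chi$ and $\max_j\gamma_j\le\gamma$.

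For the upper bounds I would exhibit explicit colorings of the ``sum modulo the maximum'' type familiar from the graph case. Set $m=\max_i\chi_i$, fix for each factor a proper coloring $c_i\colon V(H_i)\to\{0,1,\dots,\chi_i-1\}$, and define $c(v_1,\dots,v_n)=\sum_{i=1}^n c_i(v_i)\pmod m$. Given an edge $e\in E(H)$ living in the $H_j$-layer, all coordinates except the $j$-th are constant along $e$, so the colors on $e$ are $K+c_j(v_j)\pmod m$ for a fixed constant $K$, with $v_j$ ranging over an edge $f\in E(H_j)$. Since $c_j$ is proper, $f$ is non-monochromatic, so two of the values $c_j(v_j)$ are distinct, and these remain distinct after reduction modulo $m$ precisely because they lie in $\{0,\dots,\chi_j-1\}\subseteq\{0,\dots,m-1\}$ and cannot wrap around; thus $e$ is non-monochromatic and $c$ is proper, giving $\chi\le m$. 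The strong case is identical with $m'=\max_i\gamma_i$ and strong colorings $c_i$: on an edge $e$ in the $H_j$-layer the colors are $K+c_j(v_j)\pmod{m'}$, and since $x\mapsto K+x$ is a bijection of $\mathbb{Z}_{m'}$ while $c_j$ is injective on $f$, all vertices of $e$ receive distinct colors, so $c$ is strong and $\gamma\le m'$. I expect the only genuine obstacle to be this modular-arithmetic bookkeeping: one must keep the factor colorings inside the range $\{0,\dots,\chi_i-1\}$ (resp. $\{0,\dots,\gamma_i-1\}$) so that non-monochromaticity of a factor edge is not destroyed by wrap-around when the sum is reduced; everything else is routine.
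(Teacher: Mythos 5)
Your proof is correct, but a direct comparison with the paper is not possible: the survey states this theorem without proof, quoting it from \cite{Bretto09:HyperCartProd}. Judged on its own merits, both halves of your argument are sound and use exactly the structural facts the survey records: the lower bound restricts a proper (resp.\ strong) coloring to an induced layer $H_j^w\cong H_j$, and the upper bound is the classical Sabidussi-type ``sum of factor colorings modulo $\max_i\chi_i$'' construction transplanted from the graph case, which works because every edge of $\Box_{i=1}^n H_i$ lies in a single layer and projects bijectively (that is, $|p_j(e)|=|e|$) onto an edge of the corresponding factor; your care about keeping the factor colorings inside $\{0,\dots,\chi_i-1\}$ so that distinctness survives reduction modulo $m$ is exactly the right point to watch. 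It is worth noting an alternative route that the survey's own machinery suggests for half of the statement: a strong coloring of a (loopless) hypergraph $H$ is precisely a proper graph coloring of its $2$-section, so $\gamma(H)=\chi([H]_2)$, and then Proposition~\ref{prop:CartProd_2-sect}, i.e.\ $[H_1\Box H_2]_2=[H_1]_2\Box[H_2]_2$, combined with Sabidussi's theorem for the Cartesian product of graphs gives $\gamma=\max_i\{\gamma_i\}$ immediately. That shortcut does \emph{not} work for the weak chromatic number, since $\chi(H)$ and $\chi([H]_2)$ differ in general, so for $\chi$ a direct construction like yours is genuinely needed; your approach has the advantage of handling both invariants uniformly.
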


\begin{thm}[Hamiltonicity I \cite{Sonntag89:HamCart}]
	\label{thm:HamICart}
  Let $H_1=(V_1,E_1)$ and $H_2=(V_2,E_2)$ be two hypergraphs that contain a
  Hamiltonian path.  Then $H_1\Box H_2$ contains an Hamiltonian cycle if
  and only if $|V_1||V_2|$ is even or at least one of $H_1$ or $H_2$ is not
  a bipartite graph.
\end{thm}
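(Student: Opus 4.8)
The plan is to prove the two implications separately, treating the ``only if'' part as a contrapositive and the ``if'' part as an explicit construction that mirrors the classical argument for the Cartesian graph product. Throughout I write $m=|V_1|$, $n=|V_2|$ (assuming, as usual, that both factors are non-trivial), fix Hamiltonian paths $P_1=(u_1,f_1,u_2,\dots,f_{m-1},u_m)$ in $H_1$ and $P_2=(w_1,g_1,w_2,\dots,g_{n-1},w_n)$ in $H_2$, and use the induced orderings to label the vertices of $H_1\Box H_2$ as grid points $(u_i,w_j)$. The two basic ``rails'' of every construction are: for fixed $w_j$ the vertices $(u_1,w_j),\dots,(u_m,w_j)$ form a path through the edges $f_i\times\{w_j\}$, and for fixed $u_i$ the vertices $(u_i,w_1),\dots,(u_i,w_n)$ form a path through the edges $\{u_i\}\times g_j$. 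Since these rail edges are pairwise distinct whenever the pair (hyperedge, fibre) differs, and a horizontal edge $e\times\{w\}$ can never coincide with a vertical edge $\{u\}\times f$ (their first-coordinate projections have sizes $\ge 2$ and $1$), any closed sequence built from such steps that visits each grid point once is automatically a genuine Hamiltonian cycle in the path-based sense.

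For necessity I argue the contrapositive: assume $mn$ is odd and both factors are bipartite graphs. Then $r(H_1\Box H_2)=\max\{r(H_1),r(H_2)\}\le 2$, so the product is a graph, and the $2$-colouring $(u,w)\mapsto c_1(u)+c_2(w)\bmod 2$ (with $c_i$ the bipartition of $H_i$) shows it is bipartite. A bipartite graph has only cycles of even length, whereas a Hamiltonian cycle here would have length $mn$, which is odd; hence none exists.

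For sufficiency I split into two cases. \emph{Case 1: $mn$ is even.} By commutativity of $\Box$ assume $m$ is even. Then the standard ``comb'' serpentine works: traverse the bottom row $w_1$ from $u_1$ to $u_m$, climb column $u_m$ to the top, serpentine leftwards through columns $u_{m-1},\dots,u_1$ in the rows $w_2,\dots,w_n$, and finally descend column $u_1$ to $(u_1,w_2)$, closing via the edge to $(u_1,w_1)$; the parity of $m$ forces the leftward serpentine to terminate at the bottom of column $u_1$, so the loop closes, and by the remark above the cycle is legitimate since each rail edge is used at most once. \emph{Case 2: $mn$ is odd} (so $m,n$ are both odd) and, w.l.o.g., $H_1$ is not a bipartite graph. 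The key observation is that then $H_1$ carries an edge $e^\ast$ containing two path vertices $u_a,u_b$, $a<b$, with $b-a$ even: a hyperedge of size $\ge 3$ contains three indices, two of equal parity, by pigeonhole; and a non-bipartite \emph{graph} must have such an edge, since otherwise the index-parity colouring of $P_1$ would be a proper $2$-colouring. In the bipartite $2$-colouring of the grid this edge is monochromatic, so using the jump $(u_a,w)\to(u_b,w)$ exactly once yields a cycle whose $mn-1$ remaining rail edges flip colour an even number of times, removing the parity obstruction that blocked the bipartite grid. Concretely I split $V_1$ into the consecutive blocks $L=\{u_1,\dots,u_{a-1}\}$, $M=\{u_a,\dots,u_b\}$, $R=\{u_{b+1},\dots,u_m\}$; the columns over $M$ together with the jump contain a spanning $C_{b-a+1}\Box P_2$, for which an explicit Hamiltonian cycle is obtained by wrapping a serpentine around the odd cycle exactly as in the verifiable base case $C_3\Box P_3$, and the blocks $L,R$, whose total width $|L|+|R|=m-|M|$ is even, are grafted onto this central cycle as combs entered and left through adjacent columns.

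The main obstacle is Case 2: organising the traversal so that the single parity-breaking jump is compatible with a serpentine covering $L$ and $R$ and still returning to its start, which amounts to a careful bookkeeping of where each sub-serpentine begins and ends. A secondary, purely hypergraph-theoretic point is edge-distinctness when $e^\ast$ has size $\ge 3$: the jump must be placed in a row $w$ in which $e^\ast\times\{w\}$ is not also used as a horizontal rail, which is always possible because there are $n\ge 2$ rows available. The graph skeleton of all these constructions is classical and can be imported from \cite{Hammack:2011a}; the hyperedges enter only through the existence of the even-distance pair $u_a,u_b$.
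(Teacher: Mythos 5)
Your ``only if'' argument and Case~1 are fine, and since the survey states this theorem only with a citation to Sonntag (no proof is given in the paper), your proposal has to stand on its own. It does not: the gap is in Case~2, and it is not mere bookkeeping. Your grafting scheme needs each side block to be attached to the central cylinder cycle by a splice, i.e.\ by replacing one vertical edge $\{u_a\}\times g_j$ of that cycle with a Hamiltonian path of $L\times V_2$ running from $(u_{a-1},w_j)$ to $(u_{a-1},w_{j+1})$ (and similarly for $R$). Checkerboard parity forbids this whenever $|L|$ is odd: those two endpoints have opposite colours, while a Hamiltonian path on the $|L|\cdot n$ vertices of $L\times V_2$ --- an odd number, since $n$ is odd --- must have endpoints of the same colour. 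You only know $|L|+|R|$ is even, so both blocks can be odd, and this cannot be dodged by choosing $e^\ast$ differently: take $H_1$ to be the path $u_1u_2u_3u_4u_5$ with the single chord $e^\ast=\{u_2,u_4\}$, and $H_2=P_3$. Then $a=2$, $b=4$ is forced and $|L|=|R|=1$.

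In that example your scheme fails structurally, not just in its details. The degree-two vertices $(u_1,w_1)$ and $(u_1,w_3)$ force every Hamiltonian cycle to contain the subpath $(u_2,w_1),(u_1,w_1),(u_1,w_2),(u_1,w_3),(u_2,w_3)$, and $(u_2,w_1)$, $(u_2,w_3)$ are not adjacent in $H_1\Box H_2$; hence no Hamiltonian cycle of the product arises from a Hamiltonian cycle of the cylinder $M\times V_2$ by replacing single edges with detours --- the cycle must weave back and forth between $L$, $M$ and $R$. The theorem is nevertheless true here: writing $(i,j)$ for $(u_i,w_j)$, the grid path $(2,2),(2,1),(1,1),(1,2),(1,3),(2,3),(3,3),(3,2),(3,1),(4,1),(5,1),(5,2),(5,3),(4,3),(4,2)$ closed by the jump $e^\ast\times\{w_2\}$ is a Hamiltonian cycle (note no horizontal rail in row $w_2$ is used, so all hyperedges are distinct). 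What your Case~2 actually requires, and does not prove, is the lemma implicit in this example: for odd $m,n$ and any $a<b$ of equal parity, the grid $P_m\Box P_n$ has a Hamiltonian path from $(u_a,w_j)$ to $(u_b,w_j)$ for some row $j\equiv a\pmod 2$. That existence statement is the real content of the hard direction; the odd cylinder plus combs only yields it when $|L|$ and $|R|$ are individually even.
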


\begin{thm}[Hamiltonicity II \cite{Sonntag89:HamCart}]
	\label{thm:HamIICart}
  Let $H_1$ be a hypergraph with $n_1\geq 4$ vertices containing an
  Hamiltonian cycle and $H_2)$ be a hypergraph with that contains a
  Hamiltonian $p$-path with $p\leq 2\lfloor n_1/4 \rfloor$. Then $H_1\Box
  H_2$ contains a Hamiltonian cycle.
\end{thm}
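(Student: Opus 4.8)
The plan is to lift the Hamiltonian cycle of $H_1$ and the Hamiltonian $p$-path of $H_2$ to an explicit Hamiltonian cycle of $H=H_1\Box H_2$ by a serpentine construction. Fix a Hamiltonian cycle $u_0,u_1,\dots,u_{n_1-1},u_0$ of $H_1$ and a Hamiltonian $p$-path $v_0,f_1,v_1,\dots,f_m,v_m$ of $H_2$. I picture the vertex set $V(H_1)\times V(H_2)$ as a grid whose \emph{columns} are the $H_1$-layers through the vertices $(u,v_j)$, each of which is isomorphic to $H_1$ and therefore carries a copy of its Hamiltonian cycle: inside a column the rows $u_0,\dots,u_{n_1-1}$ are cyclically joined by the \emph{vertical} edges $e\times\{v_j\}$, $e\in E(H_1)$. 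The \emph{horizontal} edges $\{u_i\}\times f$ move, within a fixed row $u_i$, among those columns $v_j$ lying in a common hyperedge $f$ of $H_2$. The one structural fact I will lean on repeatedly is that distinct pairs give distinct product edges: $\{u_i\}\times f=\{u_{i'}\}\times f'$ forces $(i,f)=(i',f')$, and likewise for the vertical edges. Hence a single $H_2$-edge $f$ may be used at most once \emph{per row}, whereas in different rows the copies $\{u_i\}\times f$ are independent.

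First I would reorganise the bookkeeping around the structure of the $p$-path. Group its steps into maximal \emph{runs} of equal edges; by the definition of a $p$-path each run has width at most $p$, i.e.\ one hyperedge $f$ spans at most $p$ consecutive columns $v_a,\dots,v_b$ with $b-a\le p-1$, and $\{u_i\}\times f$ is then incident to all of $(u_i,v_a),\dots,(u_i,v_b)$. This is exactly where a naive serpentine breaks down: sweeping the whole $p$-path inside one fixed column would reuse $\{u_i\}\times f$ each time $f$ repeats, which is forbidden, since a cycle is built from a $1$-path and so must use pairwise distinct edges. The remedy is to spread the horizontal jumps belonging to a single run over \emph{distinct} rows, traversing each column of the run by one or several vertical paths of the $H_1$-cycle and reconnecting them at those rows.

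The construction is then a boustrophedon through the columns in the path order $v_0,\dots,v_m$: each column is swept by a (near-)Hamiltonian vertical path of its $H_1$-cycle, consecutive sweeps are linked by one horizontal edge placed in a fresh row, and the cyclic wrap-around of the $H_1$-cycle in each column is used both to reposition the walk between sweeps and, finally, to close the tour back to its start. It is here that the Hamiltonian \emph{cycle} in $H_1$ (rather than a mere path, which would reintroduce the parity condition of Theorem~\ref{thm:HamICart}) is indispensable, as it is precisely what supplies this wrap-around. A small instance shows the mechanism: for $n_1=8$ and $p=3$, sweeping columns $0,1,2$ with jumps placed in the four distinct rows $7,4,3,0$ and using a single column wrap yields a Hamiltonian cycle in which no product edge is repeated.

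The hard part is the simultaneous satisfaction of three constraints: every vertex is visited exactly once; no product edge is reused, in particular at most one horizontal jump per row inside each repeated-edge run; and the sweeps link into a single closed walk rather than several disjoint cycles. The first two are local and can be settled run by run, but the closing condition is global and carries a parity obstruction: whenever a column is swept in a single pass, that pass is a Hamiltonian path of $C_{n_1}$, whose two ends are thereby forced to be \emph{adjacent} in $C_{n_1}$, so the rows hosting the horizontal jumps cannot be chosen freely; in general columns must be split into several sub-paths whose endpoints have to be paired consistently with the jumps. Reserving enough distinct jump-rows and matching all sweep-endpoints forces one to have roughly $2p$ rows available, and the bookkeeping closes up precisely when $p\le 2\lfloor n_1/4\rfloor$, the evenness of $2\lfloor n_1/4\rfloor$ absorbing the parity of the final closure. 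I expect verifying that this counting is both necessary and sufficient — and checking that the routing genuinely keeps every $\{u_i\}\times f$ and every $e\times\{v_j\}$ distinct across the whole grid — to be the main technical obstacle.
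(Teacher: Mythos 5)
The survey itself gives no proof of this theorem --- it is quoted from Sonntag's original paper \cite{Sonntag89:HamCart} --- so your proposal has to stand on its own, and at present it does not: it is a strategy, not a proof. You correctly set up the grid picture, the key distinctness constraints (the product edge $\{u_i\}\times f$ exists only once, so the jumps belonging to a single run of a repeated $H_2$-hyperedge must occupy pairwise distinct rows), the reason the naive serpentine fails, and the real parity obstruction that a single-pass sweep of a column forces its entry and exit rows to be adjacent on the $H_1$-cycle. But the construction that is supposed to overcome these obstructions is never actually specified: you never say which rows receive the jumps of a given run, how and when a column is split into several vertical sub-paths, how the endpoints of those sub-paths are paired across neighbouring columns, or how the final closure into one cycle is achieved. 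These are precisely the points where the theorem lives, and your text defers them explicitly (``I expect verifying that this counting \ldots\ to be the main technical obstacle''). The claim that ``the bookkeeping closes up precisely when $p\le 2\lfloor n_1/4\rfloor$'' is a restatement of the theorem, not an argument: nothing in the proposal uses the specific constant, or even distinguishes it from the weaker-looking bound $p\le\lfloor n_1/2\rfloor$ (the two differ, e.g.\ for $n_1=6$, where the theorem allows only $p\le 2$), so the evenness and the exact form of $2\lfloor n_1/4\rfloor$ --- which you gesture at as ``absorbing the parity'' --- are never derived. A single worked instance ($n_1=8$, $p=3$) cannot substitute for the general routing scheme.

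There is also a quantitative slip that matters in an argument whose bound is this tight. By the paper's definition of a $p$-path, a maximal run $e_a=e_{a+1}=\dots=e_b=f$ satisfies $b-a\le p-1$, so it consists of up to $p$ edges and touches the $p+1$ columns $v_{a-1},v_a,\dots,v_b$, not ``at most $p$ consecutive columns'' as you write; correspondingly a single run can demand up to $p$ pairwise distinct jump-rows, i.e.\ up to $n_1/2$ of the available rows, while every full column sweep simultaneously pins its two boundary rows to an adjacent pair on $C_{n_1}$. Whether these demands can always be met consistently, run after run, and still close into one cycle is exactly the unproven heart of the matter. To complete the proof you would need to (i) fix an explicit assignment of jump-rows to the edges of each run, (ii) describe the decomposition of each column into vertical sub-paths and check that every hyperedge $e\times\{v_j\}$ of the lifted $H_1$-cycle is used at most once, (iii) check that every vertex of $V(H_1)\times V(H_2)$ is visited exactly once, and (iv) prove --- by induction over the runs or by an explicit formula for the row assignment --- that the closure succeeds exactly under the hypothesis $p\le 2\lfloor n_1/4\rfloor$, including the cases $n_1\equiv 2,3 \pmod 4$ where this bound is strictly smaller than $\lfloor n_1/2\rfloor$.
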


\part{Direct Products}
\label{sec:DirectProducts}

In contrast to the Cartesian product, there are several different
possibilities to construct a direct product. We will consider four
constructions in detail: The direct product $\timesR$, which is closed
under the restriction on $r$-uniform hypergraphs, the direct product
$\weithutzl{\times}$, which preserves the minimal rank of the factors, the
direct product $\weitdacherl{\times}$, which preserves the maximal rank of the
factors and the direct product $\widetilde{\times}$, which does not preserve
any rank of its factors.

An alternative product, which we prefer to call the \emph{square product}
following the work of Ne{\v{s}}et{\v{r}}il and R{\"{o}}dl
\cite{NesestrilRoedl83:SquareProduct}, is also often called ``direct
product'' in the literature. It will be discussed in detail in
section~\ref{sect:square}. 

\section{The Direct Product for $\boldsymbol{r}$-uniform Hypergraphs}

An early construction of a direct hypergraph product
\cite{Doerfler80:CategoriesHpg} was motivated by the investigation of a
category of hypergraphs. The following product is categorical in the
category of $r$-uniform hypergraphs and is only defined for $r$-uniform
hypergraphs.

\subsection{Definition and Basic Properties}
\label{sec:rUniformityPreservingDirectProduct}

\begin{defn}[$r$-uniform direct product]
For two $r$-uniform hypergraphs $H_1=(V_1, E_1)$ and $H_2=(V_2,E_2)$ 
their direct product
$H_1\timesR H_2$ has vertex set $V_1\times V_2$ and the edge set
\begin{equation}
\begin{split}	\label{eq:Direct_r-uniform}
    E(H_1\timesR H_2):
    =\left\{e\in\binom{e_1\times e_2}{r}\mid e_i\in E_i  
      \text{\ and\ } p_i(e)\in E_i,\ i=1,2 \right\}.
\end{split}
\end{equation}
\end{defn}

This product is the restriction of minimal and maximal rank preserving
products to $r$-uniform hypergraphs, defined in the following two sections.
Most of the properties of the two products can indeed be inferred from the
corresponding results for the $\timesR$-product.

\subsection{Relationships with Graph Products}

For $r=2$, $\timesR$ is the direct graph product in the simple graph.
However, since it is only defined on $r$-uniform hypergraphs, it cannot be
generalized on the class of graphs with loops. 
Since $\timesR$ coincides with the minimal rank preserving
product $\weithutzl{\times}$ on $r$-uniform hypergraphs all results concerning $2$-sections
and $L2$-sections can be inferred from the respective results of
$\weithutzl{\times}$. 
In general there is no unit
element for $r$-uniform hypergraphs, hence the term \emph{prime} cannot be
defined for this product. As far as we know, nothing is known about the 
behavior of hypergraph invariants under this product. 

\section{The Minimal Rank Preserving Direct Product}
\label{sec:MinimalRankPreservingDirectProduct}

If one considers the direct product $H_1\timesR H_2$ of two $r$-uniform 
hypergraphs $H_1$ and $H_2$, one observes that an edge in $E(H_1\timesR H_2)$
satisfies the following two properties:
\begin{itemize}
\item[(E1)] All vertices of an edge differ in each coordinate.
\item[(E2)] The projection of an edge is an edge in the respective factor.
\end{itemize}
If one tries to generalize the product $\timesR$ to arbitrary non-uniform
hypergraphs, one always encounters edges in the corresponding hypergraph
product that cannot satisfy both (E1) and (E2).  Hence, a natural question
is how to extend the direct product $\timesR$ to a product of two
arbitrary, non-uniform hypergraphs in such a way that it satisfies at least
one the properties.

If we insist on Property (E1) we enforce an additional constraint, that is, 
the projections of an edge of the product hypergraph into the factors 
is an edge in at least one factor and subsets of edges in the other factors.
>From that point, we observe that the rank of the hypergraph product
equals the minimal rank of the factors. 

\subsection{Definition and Basic Properties}

This product was first defined by Sonntag in \cite{Sonntag:thesis} using 
the term ``Cartesian Product''.

\begin{defn}[Minimal Rank Preserving Direct Product \cite{Sonntag:thesis}]
  For two hypergraphs $H_1=(V_1,E_1)$ and $H_2=(V_2,E_2)$ their direct
  product $H_1\weithutzl{\times} H_2$ has vertex set $V_1\times V_2$.
  For two edges $e_1\in E_1$ and $e_2\in E_2$ let 
  ${r}^{-}_{e_1,e_2}=\min\{|e_1|,|e_2|\}$.
  The edge set is the defined as:		
\begin{equation}
\begin{split}	
    E(H_1\weithutzl{\times} H_2):
    =\left\{e\in\binom{e_1\times e_2}{r^{-}_{e_1,e_2}}\mid e_i\in E_i
		\text{\ and\ }  |p_i(e)|=r^{-}_{e_1,e_2},\ i=1,2 \right\}.
\end{split}
\end{equation}
\end{defn}

In other words, a subset $e=\{(x_1,y_1),\ldots,(x_r,y_r)\}$ of $V_1\times
V_2$ is an edge in $H_1\weithutzl{\times} H_2$ if and only if
\begin{itemize}
\item[(i)]$\{x_1,\ldots,x_r\}$ is an edge in $H_1$ and $\{y_1,\ldots,y_r\}$
  is the subset of an edge in $H_2$, or 
\item[(ii)]$\{x_1,\ldots,x_r\}$ is the subset of an edge in $H_1$ and
  $\{y_1,\ldots,y_r\}$ is an edge in $H_2$
\end{itemize}

We have $p_1(e)=e_1$ and $p_2(e)=e_2$ provided $|e_1|=|e_2|$. If
$|e_i|<|e_j|$, then $p_i(e)= e_i$ and $p_j(e)\subset e_j$, $i,j\in\{1,2\}$.
Thus, the projections need not to be (weak) homomorphisms in general, but
they preserve adjacency, i.e., two vertices in a direct product
$\weithutzl{\times}$ hypergraph are adjacent, whenever they are adjacent in
both of the factors.

The direct product $\weithutzl{\times}$ is associative, it is commutative as an
immediate consequence of the symmetry of the definition.  Simple
set-theoretic considerations show that direct product $\weithutzl{\times}$ is
left and right distributive with respect to the disjoint union.  The direct
product $\weithutzl{\times}$ of two connected hypergraphs is not necessarily
connected, as one can observe for the simple case $K_2 \weithutzl{\times} K_2$.
For the rank and anti-rank, respectively, of the hypergraph $H=
H_1\weithutzl{\times} H_2$ holds:
\begin{align*}
  r(H_1\weithutzl{\times} H_2)&=\min\{r(H_1),r(H_2)\}\\
  s(H_1\weithutzl{\times} H_2)&=\min\{s(H_1),s(H_2)\}.
\end{align*}

\begin{lem} \label{lem:simple_direct_product1}
  The direct product $H_1\weithutzl{\times}H_2$ of
  simple hypergraphs is simple.
\end{lem}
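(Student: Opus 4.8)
The plan is to verify directly the two requirements in the definition of a simple hypergraph: that every edge of $H_1\dirmin H_2$ has at least two vertices, and that no such edge is properly contained in another. The first requirement is immediate. Every edge $e$ of the product satisfies $|e| = r^{-}_{e_1,e_2} = \min\{|e_1|,|e_2|\}$ for some $e_1\in E_1$ and $e_2\in E_2$; since $H_1$ and $H_2$ are simple we have $|e_1|\geq 2$ and $|e_2|\geq 2$, and therefore $|e|\geq 2$.

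For the containment requirement I would take two edges $e\subseteq f$ of the product and aim to conclude $e=f$. The first observation is that the coordinate projections are monotone under set inclusion, so $p_i(e)\subseteq p_i(f)$ for $i=1,2$. From the definition of the edge set one has $|e| = |p_1(e)| = |p_2(e)|$ and $|f| = |p_1(f)| = |p_2(f)|$, so if $|e|=|f|$ then $e\subseteq f$ already forces $e=f$. Hence it suffices to derive a contradiction from the assumption $|e|<|f|$. Here I would use the key structural fact that, because $r^{-}_{e_1,e_2}$ is the \emph{minimum} of $|e_1|$ and $|e_2|$, at least one projection of $e$ equals its associated factor edge; by commutativity of $\dirmin$ I may assume $p_1(e)=e_1\in E_1$. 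Letting $f_1\in E_1$ be the factor edge associated with $f$, monotonicity gives $e_1=p_1(e)\subseteq p_1(f)\subseteq f_1$, while the cardinalities satisfy $|e_1|=|e|<|f|=|p_1(f)|\leq|f_1|$. Thus $e_1\subsetneq f_1$ are two edges of $H_1$ with one properly contained in the other, contradicting the simplicity of $H_1$.

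The only genuinely delicate step is this reduction to a single factor: one must notice that a product edge always projects \emph{onto} (not merely into) at least one of its two generating factor edges, and then appeal to commutativity to place that factor in the first coordinate. Once this is in place, the remaining work is pure bookkeeping with cardinalities and the monotonicity of the projections, so I expect no further obstacles.
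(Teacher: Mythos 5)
Your proof is correct and takes essentially the same approach as the paper's: both arguments dispose of loops via the anti-rank (cardinality) bound, then reduce a containment $e\subseteq f$ of product edges to a containment of edges in a single factor, using exactly the key fact you isolate --- that an edge of $H_1\weithutzl{\times}H_2$ projects \emph{onto} its generating edge in the factor realizing the minimum rank --- and finally invoke simplicity of that factor. The only difference is organizational (the paper concludes $e_1=e_1'$ and then counts cardinalities to get $e=f$, whereas you split into the cases $|e|=|f|$ and $|e|<|f|$ and derive a proper-containment contradiction), which is not a substantive deviation.
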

\begin{proof}
  Let $H_1=(V_1,E_1)$ and $H_2=(V_2,E_2)$ be two simple
  hypergraphs.  Hence $s(H_i)\geq 2$ for $i=1,2$, and therefore
  $s(H_1\weithutzl{\times} H_2)=\min\{s(H_1),s(H_2)\}\geq 2$, which implies
  that $E(H_1\weithutzl{\times} H_2)$ contains no loops.
  Assume that there is an edge $e$ contained in an edge $e'$, where
  $e\in \binom{e_1\times e_2}{r^{-}_{e_1,e_2}}$ and 
  $e'\in \binom{e'_1\times e'_2}{r^{-}_{e'_1,e'_2}}$
  $e_i,e'_i\in E_i$.
  Notice, that $p_i(e)\subseteq p_i(e')$, $i=1,2$ must hold.
  W.l.o.g.\ suppose $|e_1|\leq |e_2|$, which implies $p_1(e)=e_1$.  It
  follows $e_1=p_1(e)\subseteq p_1(e')\subseteq e'_1$, and since $H_1$ is
  simple, $e_1=e'_1$ must hold and hence $p_1(e')=e'_1$.  From this we can
  conclude $|e|=|e_1|=|e'_1|=|p_1(e')|=|e'|$ and therefore $e=e'$.
\end{proof}

The direct product $\weithutzl{\times}$ does not have a unit, both in the class
of simple and non-simple hypergraphs, i.e., neither the one vertex
hypergraph $K_1$, nor the vertex with a loop, $\mathscr{L}K_1$, is a unit
for the direct product $\weithutzl{\times}$.

\subsection{Relationships with Graph Products}

The restriction of the direct product $\weithutzl{\times}$ to $r$-uniform
hypergraphs is the product $\timesR$ defined in Equation
\eqref{eq:Direct_r-uniform}, hence the restriction of $\weithutzl{\times}$
simple graphs coincides with the direct graph product.
But since the direct product $\weithutzl{\times}$ has no unit, we can conclude
that it does not coincide with the direct graph product in the class of
graphs with loops.

\begin{lem} \label{lem:DirectProd_2-sect} 
  The $2$-section of
  the direct product $H= H' \weithutzl{\times} H''$ is the direct product of
  the $2$-section of $H'$ and the $2$-section of $H''$, more formally: 
  $$[H' \weithutzl{\times} H'']_2 =[H']_2 \times [H'']_2.$$
\end{lem}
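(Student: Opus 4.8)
The plan is to observe that both sides of the claimed identity are graphs on the same vertex set and then to match their edge sets adjacency-by-adjacency. Indeed, by definition the $2$-section $[H' \weithutzl{\times} H'']_2$ has vertex set $V(H' \weithutzl{\times} H'') = V(H')\times V(H'')$, while the graph direct product $[H']_2 \times [H'']_2$ has vertex set $V([H']_2)\times V([H'']_2) = V(H')\times V(H'')$ as well. So it suffices to prove, for every pair of distinct vertices $(a,p),(b,q)\in V(H')\times V(H'')$, that $\{(a,p),(b,q)\}$ is an edge of $[H' \weithutzl{\times} H'']_2$ if and only if it is an edge of $[H']_2\times[H'']_2$.

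First I would unwind the right-hand side. By the definition of the graph direct product, $(a,p)$ and $(b,q)$ are adjacent in $[H']_2\times[H'']_2$ precisely when $\{a,b\}\in E([H']_2)$ and $\{p,q\}\in E([H'']_2)$, that is, when $a\neq b$ lie in a common hyperedge $e_1\in E(H')$ and $p\neq q$ lie in a common hyperedge $e_2\in E(H'')$. The left-hand side instead records adjacency in the product hypergraph: $\{(a,p),(b,q)\}\in E([H' \weithutzl{\times} H'']_2)$ iff some hyperedge $e$ of $H' \weithutzl{\times} H''$ contains both vertices.

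For the forward direction I would take such a product edge $e$, which by definition satisfies $e\subseteq e_1\times e_2$ for some $e_1\in E(H')$ and $e_2\in E(H'')$. Then $a,b\in p_1(e)\subseteq e_1$ and $p,q\in p_2(e)\subseteq e_2$, so $a,b$ share the hyperedge $e_1$ and $p,q$ share the hyperedge $e_2$. The remaining point is that the coordinates are genuinely distinct, which is exactly property (E1): since $|p_i(e)| = r^{-}_{e_1,e_2} = |e|$, each projection restricts to a bijection on $e$, hence is injective, and as $(a,p)\neq(b,q)$ this forces $a\neq b$ and $p\neq q$. Thus both $\{a,b\}$ and $\{p,q\}$ are bona fide $2$-section edges, giving adjacency on the right.

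The converse is where the actual work lies, and it is the step I expect to be the main obstacle. Here I am handed hyperedges $e_1\ni a,b$ in $H'$ and $e_2\ni p,q$ in $H''$ with $a\neq b$ and $p\neq q$, and I must manufacture a single product edge containing both $(a,p)$ and $(b,q)$. Assuming without loss of generality, by commutativity of $\weithutzl{\times}$, that $|e_1|\le|e_2|$, so that $r^{-}_{e_1,e_2}=|e_1|$, the idea is to extend the partial assignment $a\mapsto p$, $b\mapsto q$ --- which is injective precisely because $a\neq b$ and $p\neq q$ --- to a full injection $\phi:e_1\hookrightarrow e_2$; such an extension exists exactly because $|e_1|\le|e_2|$. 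Setting $e=\{(x,\phi(x))\mid x\in e_1\}$ then gives $|e|=|e_1|=r^{-}_{e_1,e_2}$, together with $p_1(e)=e_1$ and $p_2(e)=\phi(e_1)$, so that $|p_1(e)|=|p_2(e)|=r^{-}_{e_1,e_2}$; hence $e$ is a valid edge of $H' \weithutzl{\times} H''$ and it contains $(a,\phi(a))=(a,p)$ and $(b,\phi(b))=(b,q)$. This closes the equivalence, so the two edge sets coincide. The only care required is the bookkeeping of the projection-size constraints and the injection-extension argument; no genuine estimate is involved.
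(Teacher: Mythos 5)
Your proof is correct and takes essentially the same approach as the paper's: both establish that the two graphs share the vertex set $V(H')\times V(H'')$ and then match edge sets by unwinding definitions, using the size constraint $|p_i(e)|=r^{-}_{e_1,e_2}=|e|$ (your property (E1)) to force distinctness of both coordinates in the forward direction. The only difference is one of explicitness: the paper compresses the argument into a chain of biconditionals whose converse step is left implicit, whereas your injection-extension construction of the witness edge $e=\{(x,\phi(x))\mid x\in e_1\}$ spells that step out in full.
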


\begin{proof}
  Let $p_1$ and $p_2$ denote $p_{H'}$ and $p_{H''}$, respectively.
  By definition of the $2$-section and the direct product, $[H]_2$ and
  $[H']_2 \times [H'']_2$ have the same vertex set.
  Thus we need to show that the identity mapping 
  $V([H]_2)\rightarrow V([H']_2 \times [H'']_2)$ is an isomorphism.
  We have:
  $\{x,y\}\in E([H]_2)\Leftrightarrow 
  \exists\, e\in E(H): \{x,y\}\subseteq e,\ x\neq y
  \Leftrightarrow \exists e'\in E(H'), e''\in E(H''): 
  \{p_{1}(x),p_{1}(y)\}\subseteq p_{1}(e)\subseteq e',\ 
  p_{1}(x) \neq p_{1}(y) \text{ and } 
  \{p_{2}(x),p_{2}(y)\}\subseteq p_{2}(e)\subseteq e'',\  
  p_{2}(x)\neq p_{2}(y)
  \Leftrightarrow \{p_{1}(x),p_{1}(y)\}\in E([H']_2), 
  \{p_{2}(x),p_{2}(y)\}\in E([H'']_2)
  \Leftrightarrow \{x,y\}\in E([H']_2\times [H'']_2)$.
\end{proof}

\begin{defn}[The Direct Product of L2-sections]
  Let $\Gamma_i=(V_i,E'_i,\mathcal{L}_i)$ be the $L2$-section of the
  hypergraphs $H_i=(V_i,E_i)$, $i=1,2$. The direct product of the
  $L2$-sections $\Gamma_1\weithutzl{\times} \Gamma_2 = (V,E',\mathcal{L})$
  consists of the graph $(V,E')=(V_1,E'_1)\times(V_2,E'_2)$ and a labeling
  function
  $$\mathcal{L}=\mathcal{L}_1\weithutzl{\times} 
  \mathcal{L}_2: E'\rightarrow \mathbb{P}(E(H_1\weithutzl\times H_2))$$
  assigning to each edge $e' = \{(x_1,y_1),(x_2,y_2)\} \in E'$ a label
  \begin{align*}
    \mathcal{L}\left(e'\right) = 
    \{ 
    e' \cup A \mid  
    A\in \mathcal{A}(e',e_1,e_2), e_i\in \mathcal{L}_i(e'_i),|p_i(A)| = 
    r^{-}_{e_1,e_2}-2, i=1,2
    \}		    
  \end{align*}
  with $e'_1 =\{x_1, x_2\}$, $e'_2=\{y_1,y_2\}$ and
  $$\mathcal{A}(e',e_1,e_2) =
  \binom{(e_1\setminus e'_1) \times (e_2\setminus e'_2)}{r^{-}_{e_1,e_2}-2}. $$
\end{defn}

A short direct computation shows that 
$$[H\weithutzl\times H']_{L_2}=[H]_{L2}\weithutzl\times[H']_{L2}$$ 
holds for all simple hypergraphs $H,H'$.

\begin{lem}[Distance Formula]
  Let $(x_1,x_2)$ and $(y_1,y_2)$ be two vertices of the direct product
  $H = H_1\weithutzl{\times} H_2$. Then 
  $$d_{H}((x_1,x_2),(y_1,y_2)) =   
  \min\{n\in \mathbb{N} \mid\text{each factor } H_i 
  \text{ has a walk } P_{x_i,y_i}
  \text{ of length } n\},$$
  where it is understood that 			
  $d_{H}((x_1,x_2),(y_1,y_2)) = \infty$
  if no such $n$ exists.
  \label{lem:distMinRankDir} 
\end{lem}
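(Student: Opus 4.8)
The plan is to reduce the distance formula to a length-preserving correspondence between walks in the product and synchronized pairs of walks in the two factors. Since the distance between two vertices equals the length of a shortest walk joining them, it suffices to establish the following equivalence: there is a walk of length $n$ from $(x_1,x_2)$ to $(y_1,y_2)$ in $H=H_1\weithutzl{\times} H_2$ if and only if $H_1$ admits a walk of length $n$ from $x_1$ to $y_1$ \emph{and} $H_2$ admits a walk of length $n$ from $x_2$ to $y_2$. Granting this, taking the minimum over all admissible $n$ on both sides yields the claimed formula, with $d_H=\infty$ precisely when no common walk length exists (which is what encodes the parity-type obstructions familiar from the direct graph product, and explains why the answer is governed by common walk lengths rather than by $\max\{d_{H_1},d_{H_2}\}$).

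For the forward direction I would project a product walk coordinate-wise. Writing a walk in $H$ as $\big((x_1,x_2)=w_0,f_1,w_1,\dots,f_n,w_n=(y_1,y_2)\big)$ with $w_j=(a_j,b_j)$, each edge $f_j$ arises from some $g_j\in E_1$ and $h_j\in E_2$, and the projection description of $\weithutzl{\times}$ gives $p_1(f_j)\subseteq g_j$ and $p_2(f_j)\subseteq h_j$. Hence $a_{j-1},a_j\in g_j$ and $b_{j-1},b_j\in h_j$ lie in common hyperedges, while Property (E1) (all vertices of a product edge differ in each coordinate) forces $a_{j-1}\neq a_j$ and $b_{j-1}\neq b_j$. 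Thus $(a_0,g_1,a_1,\dots,g_n,a_n)$ and $(b_0,h_1,b_1,\dots,h_n,b_n)$ are genuine walks of length $n$ in $H_1$ and $H_2$.

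The backward direction is the main point and rests on a single edge construction. Given walks $(a_0,g_1,a_1,\dots,g_n,a_n)$ in $H_1$ and $(b_0,h_1,b_1,\dots,h_n,b_n)$ in $H_2$ of equal length $n$, I would synchronize them step by step: for each $j$ I must produce a product edge containing both $(a_{j-1},b_{j-1})$ and $(a_j,b_j)$. Assuming w.l.o.g.\ $|g_j|\le |h_j|$, so that $r^{-}_{g_j,h_j}=|g_j|$, I choose an injection $\phi\colon g_j\hookrightarrow h_j$ with $\phi(a_{j-1})=b_{j-1}$ and $\phi(a_j)=b_j$; this is possible since $a_{j-1}\neq a_j$, $b_{j-1}\neq b_j$, and the remaining $|g_j|-2$ vertices of $g_j$ inject into the $|h_j|-2\ge |g_j|-2$ vertices of $h_j\setminus\{b_{j-1},b_j\}$. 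Its graph $f_j=\{(a,\phi(a))\mid a\in g_j\}$ satisfies $p_1(f_j)=g_j\in E_1$ and $|p_2(f_j)|=|g_j|=r^{-}_{g_j,h_j}$, so it is an edge of $H_1\weithutzl{\times}H_2$ of type (i), and it contains the two required vertices, which are distinct by (E1). Concatenating the $f_j$ gives a walk of length $n$ in $H$.

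I expect this injection argument to be the only real obstacle; the rest is bookkeeping. As a cross-check, and as a shorter alternative, one can instead combine Lemma~\ref{lem:distH2} with Lemma~\ref{lem:DirectProd_2-sect} to write $d_H=d_{[H]_2}=d_{[H_1]_2\times[H_2]_2}$ and invoke the standard distance formula for the direct product of graphs. The walk-length correspondence proved above is exactly what certifies that passing to the $2$-section preserves \emph{all} admissible walk lengths and not merely shortest-path lengths, which is the subtle point that makes this reduction legitimate.
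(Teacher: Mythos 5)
Your proof is correct, and it takes a genuinely different route from the paper's. The paper handles this lemma in one sentence, by combining Lemma~\ref{lem:distH2} (distances in a hypergraph coincide with distances in its $2$-section), Lemma~\ref{lem:DirectProd_2-sect} ($[H_1\weithutzl{\times} H_2]_2=[H_1]_2\times[H_2]_2$), and the distance formula for the direct graph product from \cite{Hammack:2011a} --- exactly the reduction you mention only as a cross-check in your final paragraph. Your main argument instead establishes the walk-length correspondence directly at the hypergraph level: the forward direction works because $|p_i(e)|=|e|$ makes both projections injective on every product edge, and the backward direction rests on your endpoint-pinned injection $\phi\colon g_j\hookrightarrow h_j$, whose graph is indeed an edge of $H_1\weithutzl{\times} H_2$ of type~(i); both steps check out. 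What your route buys is self-containedness, and it surfaces a point the paper's terse reduction glosses over: the right-hand side of the lemma refers to walks in the factors $H_i$, whereas the graph formula delivers walks in the $2$-sections $[H_i]_2$, so one still needs that a hypergraph and its $2$-section admit exactly the same walk lengths between two given vertices (immediate --- a hyperedge containing two distinct vertices yields a $2$-section edge and conversely --- but a genuine step). What the paper's route buys is brevity and uniformity, since the same three-ingredient template yields every distance formula in the survey. Two minor points: your opening assertion that distance equals minimum walk length tacitly uses that any walk in a hypergraph can be shortcut to a path of no greater length (first eliminating repeated vertices, then repeated edges), which deserves a sentence given that hypergraph paths require distinct edges as well; and the distinctness of $(a_{j-1},b_{j-1})$ and $(a_j,b_j)$ in your backward construction follows simply from $a_{j-1}\neq a_j$, not from (E1).
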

\begin{proof}
  Combining the results of Lemma \ref{lem:distH2}, Lemma
  \ref{lem:DirectProd_2-sect} and the Distance
  Formula for the direct graph product (Proposition 5.8 in \cite{Hammack:2011a})
  yields to the result.
\end{proof}

\begin{cor}
  Let $(x_1,x_2)$ and $(y_1,y_2)$	be two vertices of the direct product
  $H=H_1\timesR H_2$. Then 
  $$d_{H}((x_1,x_2),(y_1,y_2)) =   
  \min\{n\in \mathbb{N} \mid \text{ each factor } H_i 
  \text{ has a walk } P_{x_i,y_i}
  \text{ of length } n\},$$
  where it is understood that 			
  $d_{H}((x_1,x_2),(y_1,y_2)) = \infty$
  if no such $n$ exists.
  \label{cor:distUniformDir} 
\end{cor}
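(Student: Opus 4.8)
The plan is to reduce this corollary entirely to Lemma~\ref{lem:distMinRankDir}, the distance formula for the minimal rank preserving direct product $\weithutzl{\times}$, by exploiting the fact (already recorded in the preceding section) that $\timesR$ is nothing but the restriction of $\weithutzl{\times}$ to $r$-uniform hypergraphs. Since the product $\timesR$ is only defined for $r$-uniform hypergraphs, the factors $H_1$ and $H_2$ are implicitly $r$-uniform throughout, and the whole task amounts to checking that the two products produce literally the same hypergraph on such factors.

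First I would verify that the two edge sets coincide when $H_1$ and $H_2$ are $r$-uniform. Since every $e_i\in E_i$ satisfies $|e_i|=r$, we have $r^{-}_{e_1,e_2}=\min\{|e_1|,|e_2|\}=r$ for all pairs of factor edges, so the binomial $\binom{e_1\times e_2}{r^{-}_{e_1,e_2}}$ occurring in the definition of $\weithutzl{\times}$ agrees with the $\binom{e_1\times e_2}{r}$ used for $\timesR$. Moreover, for any $e\subseteq e_1\times e_2$ with $|e|=r$ one has $p_i(e)\subseteq e_i$; because $|e_i|=r$, the constraint $|p_i(e)|=r^{-}_{e_1,e_2}=r$ of the $\weithutzl{\times}$ definition forces $p_i(e)=e_i$, which is precisely equivalent to the constraint $p_i(e)\in E_i$ of the $\timesR$ definition (as $H_i$ is $r$-uniform and $p_i(e)\subseteq e_i\in E_i$). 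Hence the two defining conditions describe exactly the same edges, and therefore $H_1\timesR H_2=H_1\weithutzl{\times} H_2$ as hypergraphs.

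With this identification in hand the corollary is immediate: the distance $d_H$ depends only on the underlying hypergraph $H=H_1\timesR H_2=H_1\weithutzl{\times} H_2$, and Lemma~\ref{lem:distMinRankDir} already supplies the stated formula (together with the convention $d_H=\infty$ when no admissible $n$ exists) for $\weithutzl{\times}$. I do not anticipate any genuine obstacle here; the only point requiring attention is the routine verification above that the two definitions collapse to the same edge set in the $r$-uniform case, and it is exactly this collapse that justifies phrasing the result as a corollary of Lemma~\ref{lem:distMinRankDir} rather than proving it from scratch.
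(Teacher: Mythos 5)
Your proposal is correct and follows exactly the route the paper intends: the corollary is stated without a separate proof precisely because $\timesR$ is the restriction of $\weithutzl{\times}$ to $r$-uniform hypergraphs, so Lemma~\ref{lem:distMinRankDir} applies verbatim. Your explicit check that $r^{-}_{e_1,e_2}=r$ and that the conditions $|p_i(e)|=r$ and $p_i(e)\in E_i$ coincide for $r$-uniform factors is the (routine) verification the paper leaves implicit.
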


The absence of a unit implies that there is no meaningful PFD. To the authors'
knowledge, hypergraphs invariants have not been studied for the product 
$\weithutzl{\times}$.

\section{The Maximal Rank Preserving Direct Product}
\label{sec:MaximalRankPreservingDirectProduct}

As discussed in Section \ref{sec:MinimalRankPreservingDirectProduct}, 
if one tries to generalize the product $\timesR$ to arbitrary non-uniform
hypergraphs, one always encounters edges in the corresponding hypergraph
product that cannot satisfy both (E1) and (E2):

\begin{itemize}
\item[(E1)] All vertices of an edge differ in each coordinate.
\item[(E2)] The projection of an edge is an edge in the respective factor.
\end{itemize}

In order to extend the direct product $\timesR$ to a product of two
arbitrary, non-uniform hypergraphs in such a way that it satisfies at least
one of the properties, we insist now on condition (E2) and  
claim that the size of an edge in the
product hypergraph coincides with the size of at least one of its
projections and thus, the rank of the product hypergraph is exactly the
maximal rank of one of its factors.  

\subsection{Definition and Basic Properties}

\begin{defn}[Maximal Rank Preserving Direct Product]
  For two hypergraphs $H_1=(V_1,E_1)$ and $H_2=(V_2,E_2)$ their direct
  product $H_1\weitdacherl{\times} H_2$ has vertex set $V_1\times V_2$.  For two
  edges $e_1\in E_1$ and $e_2\in E_2$ let
  ${r}^{+}_{e_1,e_2}=\max\{|e_1|,|e_2|\}$.  The edge set is the defined as:
\begin{equation}
\begin{split}	
  E(H_1\weitdacherl{\times} H_2)
  := \left\{e\in\binom{e_1\times e_2}{r^{+}_{e_1,e_2}}\mid e_i\in E_i
    \text{\ and\ }  p_i(e)=e_i,\ i=1,2 \right\}.
\end{split}
\end{equation}
\end{defn}

In other words, a subset $e=\{(x_1,y_1),\ldots,(x_r,y_r)\}$ of $V_1\times
V_2$ is an edge in $H_1\weitdacherl{\times} H_2$ if and only if
\begin{itemize}
\item[(i)] $\{x_1,\ldots,x_r\}$ is an edge in $H_1$ 
  and there is an edge $f\in E_2$ of $H_2$ 
  such that $\{y_1,\ldots y_r\}$ is a multiset of elements of $f$,
  and $f\subseteq\{y_1,\ldots,y_r\}$, or
\item[(ii)] $\{y_1,\ldots,y_r\}$ is an edge in $H_2$ 
  and there is an edge $e\in E_1$ of $H_1$ 
  such that $\{x_1,\ldots x_r\}$ is a multiset of elements of $e$,
  and $e\subseteq\{x_1,\ldots,x_r\}$.
\end{itemize}	
For $e\in E(H_1\weitdacherl{\times} H_2)$  holds: if $|e|=|p_i(e)|$, 
$i\in\{1,2\}$, then
$p_i(x)\neq p_i(y)$ for all $x,y\in e$ with $x\neq y$. 

The direct product $\weitdacherl{\times}$ is associative, commutative, and
distributive with respect to the disjoint union. Contrary to the direct
product $\weithutzl{\times}$, projections of a product hypergraph into the
factors are, by definition, homomorphisms, i.e., projections of hyperedges are
hyperedges in the respective factors.

The one vertex hypergraph with a loop $\mathscr{L}K_1$ is a unit for the
direct product $\weitdacherl{\times}$ in the class of hypergraphs with
loops.  In the class of simple hypergraphs, this product has no unit.  The
direct product $\weitdacherl{\times}$ of two connected hypergraphs is not
necessarily connected, since it need not to be connected in the class of
graphs.  For the rank and anti-rank, respectively, of the hypergraph
$H_1\weitdacherl{\times}H_2$ holds:
\begin{align*}
  r(H_1\weitdacherl{\times} H_2)&=\max\{r(H_1),r(H_2)\}\\
  s(H_1\weitdacherl{\times} H_2)&=\max\{s(H_1),s(H_2)\}.
\end{align*}

\begin{lem}	\label{lem:SimpleDirect2}
  The direct product $H_1\weitdacherl{\times}H_2$ of
  simple hypergraphs is simple.
\end{lem}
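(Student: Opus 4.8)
The plan is to mirror the proof of Lemma~\ref{lem:simple_direct_product1} as closely as possible, since the statement and the structure of the edge set are completely analogous, only with $r^{-}_{e_1,e_2}$ replaced by $r^{+}_{e_1,e_2}$. As before, I must verify two things: that $H_1\weitdacherl{\times}H_2$ has no loops, and that no edge is properly contained in another. For the loop-freeness, I would simply invoke the anti-rank formula stated just above the lemma: since $H_1,H_2$ simple forces $s(H_i)\geq 2$, we get $s(H_1\weitdacherl{\times}H_2)=\max\{s(H_1),s(H_2)\}\geq 2$, so every edge of the product has at least two vertices and hence is not a loop.

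\textbf{The containment argument.}
For the nontrivial part, suppose $e\subseteq e'$ with $e\in\binom{e_1\times e_2}{r^{+}_{e_1,e_2}}$, $e'\in\binom{e'_1\times e'_2}{r^{+}_{e'_1,e'_2}}$, and $e_i,e'_i\in E_i$. The key structural fact here (unlike the minimal-rank case) is that by definition of $\weitdacherl{\times}$ \emph{both} projections are onto: $p_i(e)=e_i$ and $p_i(e')=e'_i$ for $i=1,2$. From $e\subseteq e'$ I get $e_i=p_i(e)\subseteq p_i(e')=e'_i$ for each $i$. Since each $H_i$ is simple and $e_i,e'_i\in E_i$, the containment $e_i\subseteq e'_i$ forces $e_i=e'_i$, and consequently $r^{+}_{e_1,e_2}=r^{+}_{e'_1,e'_2}$, i.e.\ $|e|=|e'|$. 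Combined with $e\subseteq e'$ this yields $e=e'$, completing the proof.

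\textbf{Where the real obstacle lies.}
I expect the genuinely delicate point to be the justification that both projections are surjective onto the chosen factor edges. In the maximal-rank product an edge $e$ may have its size governed by one factor while the other factor contributes a \emph{multiset} (as the ``(i)/(ii)'' reformulation makes explicit: one projection is an edge $e_i$, the other only contains an edge $f$ as a sub-multiset). Thus $p_i(e)=e_i$ need not hold on the nose for the smaller factor: a vertex of $e_i$ could appear with multiplicity, and I must check that $p_i(e)$ still equals $e_i$ as a \emph{set}, not that the multiset projections coincide. The condition $e\subseteq\{x_1,\dots,x_r\}$ in the definition guarantees $e_i\subseteq p_i(e)$, and the containment clause $p_i(e)=e_i$ in the definition guarantees the reverse; so as subsets of $V_i$ the equality $p_i(e)=e_i$ does hold, and the argument goes through. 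The one subtlety to flag is that the simplicity of $H_i$ is used only at the level of sets $e_i\subseteq e'_i\Rightarrow e_i=e'_i$, which is exactly the defining property of a simple hypergraph, so no additional hypothesis is needed.
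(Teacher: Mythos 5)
Your proof is correct and is essentially identical to the paper's: the same appeal to the anti-rank formula for loop-freeness, and the same containment argument using the fact that the definition of $\weitdacherl{\times}$ forces $p_i(e)=e_i$ and $p_i(e')=e'_i$, so that $e_i\subseteq e'_i$, simplicity gives $e_i=e'_i$, hence $|e|=|e'|$ and $e=e'$. Your closing discussion of the multiset subtlety is a reasonable sanity check but not needed, since the defining condition $p_i(e)=e_i$ is already a statement about sets.
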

\begin{proof}
  Therefore, let $H_1=(V_1,E_1)$ and $H_2=(V_2,E_2)$ be two simple
  hypergraphs.  Hence $s(H_i)\geq 2$ for $i=1,2$, and therefore it holds
  $s(H_1\weitdacherl{\times} H_2)=\max\{s(H_1),s(H_2)\}\geq 2$, 
  which implies that
  $E(H_1\weitdacherl{\times} H_2)$ contains no loops.  
  Assume that there is an edge
  $e$ contained in an edge $e'$, where $e\in \binom{e_1 \times
    e_2}{r^+_{e_1,e_2}}$ and $e'\in\binom{e'_1 \times e'_2}{r^+_{e'_1,e'_2}}$,
  $e_i,e'_i\in E_i$.  Notice, that $p_i(e)\subseteq p_i(e')$, $i=1,2$ must
  hold.  Hence, $e_i=p_i(e)\subseteq p_i(e')=e'_i$, which implies
  $e_i=e_i'$, since $H_1$ and $H_2$ are simple.  It follows $|e|=|e'|$ and
  therefore $e=e'$.  Thus, $H_1\weitdacherl{\times}H_2$ is simple.
\end{proof}

\subsection{Relationships with Graph Products}

If we restrict the direct product $\weitdacherl{\times}$ to $r$-uniform
hypergraphs, we recover the product $\timesR$ defined by Equation
\eqref{eq:Direct_r-uniform}.  In particular, this product coincides with
the direct graph product in the class of simple graphs. Moreover, the
restriction of this product to graphs coincides with the direct graph
product in general, also in the class of not necessarily simple graphs with
loops.

In contrast to the direct product $\weithutzl{\times}$, however, the
$2$-section of the direct product of two arbitrary hypergraphs is not the
direct graph product of the $2$-sections of the hypergraphs, except in the
special case of $r$-uniform hypergraphs. To see this, consider as an
example the product $K_2 \weitdacherl{\times} (V,\{V\})$ with $|V|=3$.

Despite its appealing properties, the product $\weitdacherl{\times}$ has
not been studied in the literature. It is unknown, in particular, under
which conditions it admits a unique PFD. The prime factor theorems for the
direct product need non-trivial preconditions even in the class of graphs.
We do not expect that it will be a particularly simple problem to establish
a general UPFD theorem.

\section{A Direct Product that does not preserve Rank}

For the sake of completeness, and as an example for the degrees of freedom
inherent in the definition of hypergraph products that generalize graph
products, we consider the product $\widetilde{\times}$. Its restriction to
$2$-uniform hypergraphs coincides with the direct graph product. However,
it does not preserve $r$-uniformity in general.  For brevity we omit
proofs, which can be found in \cite{Gringmann:thesis}, throughout this
section.

\subsection{Definition and Basic Properties}

\begin{defn}[non-rank-preserving Direct Product]
  For two hypergraphs $H_1=(V_1,E_1)$ and $H_2=(V_2,E_2)$, 
  we define their \emph{direct product $\widetilde{\times}$} by
  the edge set 
  \begin{align*}
    E(H_1\widetilde{\times} H_2):=
    \big\{\{(x,y)\}\cup \big((e\setminus\{x\})\times(f\setminus\{y\}) \big)\mid 
    x\in e\in  E_1;\ y\in f\in E_2\big\}	
  \end{align*}
\end{defn}

The projections $p_1$ and $p_2$ of a product hypergraph
$H=H_1\widetilde{\times}H_2$ into its factors $H_1$ and $H_2$,
respectively, are homomorphisms.
		
It is not hard to verify that the direct product $\widetilde{\times}$ is
associative, commutative, and both left and right distributive together
with the disjoint union as addition.  The direct product $H_1\widetilde{\times}
H_2$ of simple hypergraphs $H_1,H_2$ is simple.  The direct product
$\widetilde{\times}$ does not have a unit, neither in the class of simple
hypergraphs, nor in the class of non-simple hypergraphs.  The direct
product $\widetilde{\times}$ of two connected hypergraphs is not necessarily
connected, as one can observe for the simple case $K_2 \widetilde{\times} K_2$.
For the rank and anti-rank, respectively, of the hypergraph product
$H_1\widetilde{\times}H_2$ holds:
\begin{align*}
  r(H_1\widetilde{\times} H_2)=(r(H_1)-1)(r(H_2)-1)+1\\
  s(H_1\widetilde{\times} H_2)=(s(H_1)-1)(s(H_2)-1)+1
\end{align*}
In general, therefore, the (anti-)rank of a product will not be the
(anti-)rank of one of its factors.

\subsection{Relationships with Graph Products}

If we restrict the definition of this product to $2$-uniform hypergraphs,
i.e., simple graphs, we have: $e\subseteq V(G_1\widetilde{\times} G_2)$ is
an edge in $G_1\widetilde{\times} G_2$ iff $E=\{(x,y)(x',y')\}$ and
$\{x,x'\}$ is an edge in $G_1$ and $\{y,y'\}$ is an edge in $G_2$.  This is
exactly the definition of the direct graph product.

Similar to the direct product $\weitdacherl{\times}$, the $2$-section of
the direct product of two arbitrary hypergraphs is not the direct graph
product of the $2$-sections of the hypergraphs.  This can be easily
verified on the product $K_2 \widetilde{\times} (V,\{V\})$ with $|V|=3$.

Since the direct product $\widetilde{\times}$ has no unit, we can conclude that
it does not coincide with the direct graph product in the class of graphs
with loops.  The absence of a unit implies that there is no meaningful PFD.
To our knowledge, no further results are available on this product.

\part{Strong Products}

The strong product of graphs can be interpreted as a superposition of the
edges of the Cartesian and the direct graph products. Here we explore the
corresponding constructions for hypergraphs: Let the edge set of a strong
product $H=H_1\overset{*}{\boxtimes} H_2$, $*=\weitdacherl{ }$, 
$\weithutzl{ }$ of
two hypergraphs $H_1$ and $H_2$ be
\[ E(H_1\overset{*}{\boxtimes} H_2)= E(H_1\Box H_2)\cup
E(H_1\overset{*}{\times} H_2),\] 
where $ E(H_1\overset{*}{\times} H_2)$ is the edge set of one of the respective 
direct products discussed in the previous section.

\section{The Normal Product}
This particular strong product was first introduced by Sonntag
\cite{Sonntag90:NormalProd, Sonntag91:Corrigendum,
  Sonntag92:Hamiltonicity}.  Following the terminology of Sonntag we call
this product \emph{normal product} $\weithutzl{\boxtimes}$.

\subsection{Definition and Basic Properties}

A subset $e=\{(x_1,y_1),\ldots,(x_r,y_r)\}$ of $V_1\times V_2$ is an edge 
in $H_1\weithutzl{\boxtimes} H_2$ if and only if
\begin{itemize}
\item[(i)] $\{x_1,\ldots, x_r\}$ is an edge in $H_1$ and 
  $y_1=\ldots = y_r\in V(H_2)$, or
\item[(ii)] $\{y_1,\ldots,y_r\}$ 
  is an edge in $H_2$ and $x_1=\ldots = x_r\in V(H_1)$, or
\item[(iii)]$\{x_1,\ldots,x_r\}$ is an edge in $H_1$ and 
  $\{y_1,\ldots,y_r\}$ is the subset of an edge in $H_2$, or
\item[(iv)]$\{y_1,\ldots,y_r\}$ is an edge in $H_2$ and 
	$\{x_1,\ldots,x_r\}$ is the subset of an edge in $H_1$.
\end{itemize}
An edge $e$ that is of type $(i)$ or $(ii)$ is called \emph{Cartesian edge}
and it holds $e\in E(H_1\Box H_2)$, an edge $e$ of type $(iii)$ or $(iv)$
is called \emph{non-Cartesian edge} and it holds $e\in
E(H_1\weithutzl{\times}H_2)$.  Notice that $|e\cap e'|\leq 1$ holds for all
$e\in E(H_1\Box H_2)$ and $e'\in E(H_1\weithutzl{\times}H_2)$,
hence $E(H_1\Box H_2)\cap E(H_1\weithutzl{\times}H_2)=\emptyset$ if $H_1, H_2$
contain no loops.

For the same reasons as for the direct product $\weithutzl{\times}$, the
projections need not to be (weak) homomorphisms in general, but they
preserve adjacency or adjacent vertices are mapped into the same vertex.
The normal product $\weithutzl{\boxtimes}$ is associative, commutative, and
distributive w.r.t the disjoint union and has $K_1$ as unit element. 
Since $E(H_1\Box H_2)$ is a spanning partial hypergraph of
$E(H_1\weithutzl{\boxtimes} H_2)$, we can conclude  that the
normal product $H_1 \weithutzl{\boxtimes} H_2$ is connected if and only if 
$H_1$ and $H_2$ are connected hypergraphs.
For the rank and anti-rank, respectively, of a normal product hypergraph
$H_1\weithutzl{\boxtimes}  H_2$ holds:
\begin{align*}
  r(H_1\weithutzl{\boxtimes} H_2)=\max\{r(H_1),r(H_2)\}\\
  s(H_1\weithutzl{\boxtimes} H_2)=\min\{s(H_1),s(H_2)\}.
\end{align*}

\begin{lem} \label{lem:simple_strong_product1} The normal product $H_1
  \weithutzl{\boxtimes} H_2$ of simple hypergraphs $H_1, H_2$ is simple.
\end{lem}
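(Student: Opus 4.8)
The plan is to show that the normal product $H_1 \weithutzl{\boxtimes} H_2$ of two simple hypergraphs is again simple, meaning that no edge is contained in another edge and every edge has cardinality at least $2$. The second condition follows immediately from the rank formulas stated just before the lemma: since $H_1,H_2$ are simple we have $s(H_i)\geq 2$, and therefore $s(H_1 \weithutzl{\boxtimes} H_2)=\min\{s(H_1),s(H_2)\}\geq 2$, so $E(H_1 \weithutzl{\boxtimes} H_2)$ contains no loops. The bulk of the argument is thus the containment-freeness, for which I would leverage the decomposition $E(H_1 \weithutzl{\boxtimes} H_2)=E(H_1\Box H_2)\cup E(H_1 \weithutzl{\times} H_2)$ together with the two simplicity results we already have in hand, namely that $H_1\Box H_2$ is simple (a Cartesian product of simple hypergraphs is simple, stated in the Cartesian section) and that $H_1 \weithutzl{\times} H_2$ is simple (Lemma \ref{lem:simple_direct_product1}).

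First I would suppose, for contradiction, that there are distinct edges $e\subsetneq e'$ in $E(H_1 \weithutzl{\boxtimes} H_2)$, and split into cases according to whether each of $e,e'$ is a Cartesian edge (type (i)/(ii), lying in $E(H_1\Box H_2)$) or a non-Cartesian edge (type (iii)/(iv), lying in $E(H_1 \weithutzl{\times} H_2)$). If both are Cartesian, simplicity of $H_1\Box H_2$ rules out $e\subsetneq e'$; if both are non-Cartesian, Lemma \ref{lem:simple_direct_product1} does the same. The remaining mixed cases are where the real work lies, and here I would use the projection behaviour: for a Cartesian edge exactly one projection is a single vertex while the other is a full edge, whereas for a non-Cartesian edge both projections have size at least $2$ (since $s(H_i)\geq 2$ forces $r^{-}_{e_1,e_2}\geq 2$ and both projections meet each factor in at least two points). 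Since $e\subseteq e'$ forces $p_i(e)\subseteq p_i(e')$ for $i=1,2$, comparing the sizes of the two projections across the two types yields the contradiction.

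Concretely, suppose $e$ is Cartesian and $e'$ is non-Cartesian, say $e$ is of type (i) so that $p_2(e)=\{y\}$ is a single vertex of $H_2$ and $p_1(e)\in E_1$. From $e\subseteq e'$ we get $p_2(e)=\{y\}\subseteq p_2(e')$, which is consistent, but we also get $p_1(e)\subseteq p_1(e')$ with $p_1(e)\in E_1$ an edge of the simple hypergraph $H_1$ and $p_1(e')$ a subset of an edge of $H_1$; since $H_1$ is simple an edge cannot be a proper subset of another edge, forcing $p_1(e)=p_1(e')$ to be that edge. Then $|e|=|p_1(e)|=|p_1(e')|$, while all vertices of the Cartesian edge $e$ share the same second coordinate $y$ and the non-Cartesian edge $e'$ has $|p_2(e')|\geq 2$, so $e'$ must contain two vertices differing only in the second coordinate that cannot both be matched inside the single-coordinate structure of $e$; counting shows $e=e'$, contradicting $e\subsetneq e'$. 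The symmetric sub-cases (type (ii), and $e$ non-Cartesian with $e'$ Cartesian) are handled identically by swapping the roles of the two coordinates.

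The main obstacle I anticipate is bookkeeping in the mixed cases rather than any deep difficulty: one must carefully track how the ``one projection collapses to a point'' property of Cartesian edges interacts with the ``both projections are genuinely spread out'' property of non-Cartesian edges under the containment $e\subseteq e'$, and in particular handle the subtlety that the remark preceding the lemma already guarantees $E(H_1\Box H_2)\cap E(H_1 \weithutzl{\times} H_2)=\emptyset$ when there are no loops, which cleanly separates the two edge types. Being explicit about the projection-size inequalities, and using simplicity of the individual factors $H_1,H_2$ to upgrade ``subset of an edge'' to ``equal to that edge,'' should close all cases and complete the proof.
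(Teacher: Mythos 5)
Your proof is correct, and it follows the paper's overall strategy: split $E(H_1\weithutzl{\boxtimes}H_2)$ into Cartesian and non-Cartesian edges, dispose of the two pure containment cases by Lemma~\ref{lem:simple_direct_product1} and the simplicity of the Cartesian product, and then treat the mixed cases. Where you diverge is precisely in those mixed cases, which are the only substantive content of the proof. The paper kills them in one stroke using the observation recorded just before the lemma, namely that $|e\cap e'|\leq 1$ for every Cartesian edge $e$ and non-Cartesian edge $e'$: if one were contained in the other, the smaller edge would equal the intersection and hence have at most one vertex, contradicting $s(H_1\weithutzl{\boxtimes}H_2)=\min\{s(H_1),s(H_2)\}\geq 2$. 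You instead run a projection-size analysis (squeezing $p_1(e)\subseteq p_1(e')\subseteq e_1'$ via simplicity of $H_1$, then counting cardinalities), which is valid but considerably longer; also, your claim that the reverse mixed case ($e$ non-Cartesian contained in a Cartesian $e'$) is ``handled identically'' is slightly loose --- there the argument is in fact shorter and different in shape: the singleton projection of $e'$ forces $|p_i(e)|=1$ for some $i$, contradicting $|p_i(e)|=r^{-}_{e_1,e_2}\geq 2$. Amusingly, you do cite the intersection fact, but only to conclude that the Cartesian and non-Cartesian edge sets are disjoint; deploying it against containment, as the paper does, would have collapsed both of your mixed cases into two lines.
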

\begin{proof}
  This follows immediately from Lemma~\ref{lem:simple_direct_product1}, the
  fact that the Cartesian product of simple hypergraphs is simple and that
  the intersection of a Cartesian and a non-Cartesian edge contains at most
  one vertex.
\end{proof}

\subsection{Relationships with Graph Products}

The restriction of the normal product $\weithutzl{\boxtimes}$ to
$2$-uniform hypergraphs coincides with the strong graph product.  But it
does not coincide with the strong graph product in the class of graphs with
loops since the direct product $\weithutzl{\times}$ does not coincide with
the direct graph product within this class.

In the class of graphs there is a well known relation between the direct
and the strong graph product. The strong product can be considered as a
special case of the direct product \cite{IMKL-00}: for a graph $G\in\Gamma$
let $\mathscr{L}G$ denote the graph in $\Gamma_0$, which is formed from $G$
by adding a loop to each vertex of $G$.  On the other hand, for a graph
$G'\in \Gamma_0$ let $\mathscr{N}G'$ denote the graph in $\Gamma$ which
emerges from $G'$ by deleting all loops.  Then we have for $G_1,G_2\in
\Gamma$:
\begin{equation}
  G_1\boxtimes G_2=\mathscr{N}(\mathscr{L}G_1\times\mathscr{L}G_2)
  \label{eq:SpecialCase}
\end{equation}	
This relationship, however, does not exist between the direct product
$\weithutzl{\times}$ and the normal product $\weithutzl{\boxtimes}$.

The next statement follows immediately from
Proposition~\ref{prop:CartProd_2-sect}, Lemma~\ref{lem:DirectProd_2-sect}
and the definition of the normal product $\weithutzl{\boxtimes}$:
\begin{lem}
  \label{lem:2sectionNormal}
  The $2$-section of
  the normal product $H= H' \weithutzl{\boxtimes} H''$ is the strong product
  of the $2$-section of $H'$ and the $2$-section of $H''$, more formally:
  $$[H' \weithutzl{\boxtimes} H'']_2 =[H']_2 \boxtimes [H'']_2.$$
\end{lem}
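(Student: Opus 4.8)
The plan is to show that the two graphs $[H'\weithutzl{\boxtimes} H'']_2$ and $[H']_2\boxtimes[H'']_2$ coincide by proving that they have the same vertex set and the same edge set. The vertex set is clear: both equal $V(H')\times V(H'')$, since the $2$-section and all the products considered here leave the vertex set as the Cartesian product of the factor vertex sets. So the whole content lies in the edge sets, and I would prove equality by a direct chain of equivalences, exactly in the style of the proof of Lemma~\ref{lem:DirectProd_2-sect}. Throughout I write $x=(x_1,x_2)$ and $y=(y_1,y_2)$ for two distinct vertices, and let $p_1,p_2$ denote the projections onto $V(H')$ and $V(H'')$.

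The key observation, which lets me avoid reproving anything, is the decomposition stated just before the lemma: the normal product satisfies $E(H'\weithutzl{\boxtimes} H'')=E(H'\Box H'')\cup E(H'\weithutzl{\times} H'')$. A pair $\{x,y\}$ is an edge of $[H'\weithutzl{\boxtimes} H'']_2$ precisely when $x$ and $y$ lie together in some hyperedge $e$ of the normal product, i.e.\ in a Cartesian edge or in a non-Cartesian (direct) edge. These two cases correspond exactly to the two edge types of the target strong product. On the one hand, $\{x,y\}$ lies in a common edge of $E(H'\Box H'')$ iff $x,y$ lie in a common edge of $[H'\Box H'']_2$, which by Proposition~\ref{prop:CartProd_2-sect} equals $[H']_2\Box[H'']_2$; these are precisely the Cartesian edges of $[H']_2\boxtimes[H'']_2$. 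On the other hand, $\{x,y\}$ lies in a common edge of $E(H'\weithutzl{\times} H'')$ iff $x,y$ lie in a common edge of $[H'\weithutzl{\times} H'']_2$, which by Lemma~\ref{lem:DirectProd_2-sect} equals $[H']_2\times[H'']_2$; these are precisely the non-Cartesian edges of $[H']_2\boxtimes[H'']_2$. Since the edge set of the strong graph product is by definition the union of the Cartesian and direct graph product edges, taking the union of the two cases yields $E([H']_2\boxtimes[H'']_2)$, and the proof is complete.

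The only genuine point requiring care — and the step I expect to be the main obstacle — is the forward implication in the direct/non-Cartesian case: I must verify that if $\{x,y\}$ is realized by a single hyperedge $e$ of $E(H'\weithutzl{\times} H'')$, then \emph{both} projected pairs $\{x_1,y_1\}$ and $\{x_2,y_2\}$ are genuine edges of the respective $2$-sections, so that $\{x,y\}$ is a direct-product edge of $[H']_2\times[H'']_2$ with $x_1\neq y_1$ and $x_2\neq y_2$. This is exactly the content already extracted inside the proof of Lemma~\ref{lem:DirectProd_2-sect} (via property (E1), all vertices of a $\weithutzl{\times}$-edge differ in each coordinate, so $p_i(x)\neq p_i(y)$, and each projected pair sits inside an edge $e_i$ of $H_i$). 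Rather than redo this, I would simply invoke Lemma~\ref{lem:DirectProd_2-sect} to pass from $E(H'\weithutzl{\times} H'')$ to $[H']_2\times[H'']_2$ wholesale, and invoke Proposition~\ref{prop:CartProd_2-sect} for the Cartesian half. In this way the lemma becomes an almost immediate corollary of the three cited facts plus the set-theoretic identity $E([H']_2\boxtimes[H'']_2)=E([H']_2\Box[H'']_2)\cup E([H']_2\times[H'']_2)$, which is the defining property of the strong graph product recalled in the Graph Products section.
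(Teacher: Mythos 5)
Your proof is correct and takes essentially the same route as the paper: the paper's own argument is precisely the remark that the lemma follows immediately from Proposition~\ref{prop:CartProd_2-sect}, Lemma~\ref{lem:DirectProd_2-sect}, and the definition $E(H'\weithutzl{\boxtimes} H'')=E(H'\Box H'')\cup E(H'\weithutzl{\times} H'')$ combined with the edge-set decomposition of the strong graph product. Your extra worry about the coordinate-difference property in the non-Cartesian case is, as you yourself observe, already absorbed into Lemma~\ref{lem:DirectProd_2-sect}, so nothing further is needed.
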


One can therefore define a meaningful normal product of L2-sections:

\begin{defn}[The Normal Product of L2-sections]
  Let $\Gamma_i=(V_i,E'_i,\mathcal{L}_i)$ be the $L2$-section of the
  hypergraphs $H_i=(V_i,E_i)$, $i=1,2.$ The normal product of the
  $L2$-sections $\Gamma_1\weithutzl{\boxtimes} \Gamma_2 = (V,E',\mathcal{L})$
  consists of the graph $(V,E')=(V_1,E'_1)\boxtimes(V_2,E'_2)$ and a
  labeling function   
  assigning to each edge $e' = \{(x_1,y_1),(x_2,y_2)\} \in E'$ a label
  $$\mathcal{L}=\mathcal{L}_1\weithutzl{\boxtimes} 
  \mathcal{L}_2: E'\rightarrow
  \mathbb{P}(E(H_1\weithutzl{\boxtimes} H_2))$$ with
  \[\mathcal{L}\left(\{(x_1,y_1),(x_2,y_2)\}\right) = 
\begin{cases}
  \left\{\{x_1\}\times e\mid e\in\mathcal{L}_2\left(\{y_1,y_2\}\right)\right\},
  &\text{if }x_1=x_2 \\ 
  \left\{e\times \{y_1\}\mid e\in\mathcal{L}_1\left(\{x_1,x_2\}\right)\right\},
  &\text{if }y_1=y_2 \\
      \mathcal{L}_1\weithutzl{\times} \mathcal{L}_2 
      \left(\{(x_1,y_1),(x_2,y_2)\}\right), 
  &\text{otherwise}.
  \end{cases}\]
\end{defn}

A straightforward but tedious computation shows 
$[H\weithutzl\boxtimes H']_{L_2}=[H]_{L2}\weithutzl\boxtimes[H']_{L2}$ 
for all simple hypergraphs $H,H'$.

\begin{lem}[Distance Formula]
  \label{lem:distNormal} 
  For all hypergraphs $H,H'$ we have:
  $$d_{H\weithutzl{\boxtimes} H'}((x,a),(y,b)) =   
  \max\{d_{H}(x,y), d_{H'}(a,b)\}$$
\end{lem}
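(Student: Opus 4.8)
The plan is to reduce the normal-product distance formula to the analogous statement about $2$-sections, which already reduces to a known graph-product result. The key is the chain of identities established earlier in the excerpt: Lemma~\ref{lem:distH2} (the Distance Formula) tells us that distances in any hypergraph agree with distances in its $2$-section, and Lemma~\ref{lem:2sectionNormal} tells us that the $2$-section of a normal product is the strong product of the $2$-sections. First I would invoke Lemma~\ref{lem:distH2} applied to $H\weithutzl{\boxtimes} H'$ to write
\[
d_{H\weithutzl{\boxtimes} H'}\bigl((x,a),(y,b)\bigr)
 = d_{[H\weithutzl{\boxtimes} H']_2}\bigl((x,a),(y,b)\bigr).
\]
Then I would rewrite the right-hand $2$-section using Lemma~\ref{lem:2sectionNormal}, obtaining $d_{[H]_2 \boxtimes [H']_2}((x,a),(y,b))$.

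Next I would appeal to the well-known Distance Formula for the strong graph product, namely $d_{G\boxtimes G'}((x,a),(y,b)) = \max\{d_G(x,y), d_{G'}(a,b)\}$, which is the standard companion (for the strong product) of the formulas already cited in this excerpt for the Cartesian and direct products (e.g.\ Corollary 5.2 and Proposition 5.8 in \cite{Hammack:2011a}). Applying this to $G=[H]_2$ and $G'=[H']_2$ yields
\[
d_{[H]_2 \boxtimes [H']_2}\bigl((x,a),(y,b)\bigr)
 = \max\bigl\{\,d_{[H]_2}(x,y),\ d_{[H']_2}(a,b)\,\bigr\}.
\]
Finally I would collapse the inner $2$-section distances back to hypergraph distances, again via Lemma~\ref{lem:distH2}: $d_{[H]_2}(x,y)=d_H(x,y)$ and $d_{[H']_2}(a,b)=d_{H'}(a,b)$, producing exactly $\max\{d_H(x,y),d_{H'}(a,b)\}$ as claimed. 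This mirrors precisely the structure of the proofs of Lemma~\ref{lem:distCart} and Lemma~\ref{lem:distMinRankDir}, which combine the same three ingredients.

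I expect the proof to be essentially a one-line citation chain, so there is no deep obstacle; the only point requiring care is to make sure the strong-graph-product distance formula is genuinely available and correctly stated, since the excerpt cites distance formulas for the Cartesian and direct products from \cite{Hammack:2011a} but has not explicitly quoted the strong-product version in the text shown. The honest way to present this is to state that the strong-product distance formula is the standard one found in \cite{Hammack:2011a} and then run the three-step reduction. If one wanted to avoid relying on the graph result as a black box, the fallback would be a direct argument: any walk of length $n$ in $H\weithutzl{\boxtimes} H'$ projects (via the adjacency-preserving projections) to walks of length at most $n$ in each factor, giving $d_{H\weithutzl{\boxtimes} H'}\ge\max\{d_H,d_{H'}\}$; conversely, given shortest paths of lengths $d_H(x,y)$ and $d_{H'}(a,b)$ in the factors, one synchronizes them into a walk of length $\max\{d_H(x,y),d_{H'}(a,b)\}$ in the product by padding the shorter path with Cartesian (type (i)/(ii)) steps in the factor that has already arrived while the other factor continues, using diagonal non-Cartesian (type (iii)/(iv)) steps where both still move. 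But the clean route through Lemmas~\ref{lem:distH2} and~\ref{lem:2sectionNormal} is clearly the intended one and is what I would write.
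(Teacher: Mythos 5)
Your proposal is correct and follows exactly the paper's own proof: the paper likewise combines Lemma~\ref{lem:distH2}, Lemma~\ref{lem:2sectionNormal}, and the strong-graph-product Distance Formula (Corollary~5.5 in \cite{Hammack:2011a}) in the same three-step reduction. Your extra caution about the availability of the strong-product formula, and the sketched fallback direct argument, go beyond the paper's one-line citation chain but are not needed.
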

\begin{proof}
  Combining the results of Lemma \ref{lem:distH2}, Lemma
  \ref{lem:2sectionNormal} and the well-known Distance
  Formula for the strong graph product (Corollary 5.5 in \cite{Hammack:2011a})
  yields to the result.
\end{proof}

\subsection{Invariants}

The normal product has not received much attention since its introduction
by Sonntag \cite{Sonntag90:NormalProd}.  In particular, it has not yet been
investigated regarding PFDs.

\begin{thm}[Hamiltonicity I \cite{Sonntag90:NormalProd}]	
  \label{thm:StrongProdHamI}
  Let $H_1=(V_1,E_1)$ and $H_2=(V_2,E_2)$ be two non-trivial hypergraphs
  s.t.  $H_1$ contains a Hamiltonian $p$-path, $p\in\mathbb N^+$, $H_2$
  contains a Hamiltonian $2$-path.  Suppose that one of the following
  conditions is satisfied
  \begin{itemize}
  \item[(1)] $|V_2|$ is even or $|V_1|=2$.
  \item[(2)] $|V_1|=3$ and $H_i$ is not isomorphic to $P_3$ or $H_j$ 
    contains no Hamiltonian $2$-path or 
    $|E_j|\neq\left\lfloor \frac{|V_j|}{2}\right\rfloor$, 
    $i,j\in\{1,2\}, i\neq j$.
  \item[(3)] $|V_1|\geq 4$ and there exists a Hamiltonian $2$-path 
    $P=(v_0,e_1,v_1,e_s,\ldots,e_{|V_2|-1},v_{|V_2|-1})$ in $H_2$, 
    s.t.\ there is an edge $e\in E_2$, and even indices 
    $i,i'$, $0\leq i<i'\leq |V_2|-1$ 
    with $\{v_i,v_{i'}\}\subseteq e$.
  \item[(4)] $|V_1|\geq 4$ and 
    $|V_2|\geq 2\left\lceil r/\left(\left\lfloor\frac{|V_1|-2}{p}+1\right
        \rfloor\right)\right\rceil-1$.
  \end{itemize}
  Then $H_1\weithutzl{\boxtimes} H_2$ contains a Hamiltonian cycle.	
\end{thm}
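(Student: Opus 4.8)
The plan is to construct the Hamiltonian cycle explicitly by a boustrophedon (``snake'') sweep through the $H_1$-layers, indexed along the Hamiltonian $2$-path of $H_2$. I fix a Hamiltonian $p$-path $\pi_1$ of $H_1$ visiting $u_0,\dots,u_{|V_1|-1}$ and a Hamiltonian $2$-path $\pi_2$ of $H_2$ visiting $v_0,\dots,v_{|V_2|-1}$. For each fixed $v_j$ the Cartesian edges of type (i) restrict inside the layer $H_1^{(u_0,v_j)}$ to copies of the edges of $H_1$, so $\pi_1$ lifts to a Hamiltonian path of that layer; the backbone of the construction traverses these layer-paths forward for even $j$ and backward for odd $j$. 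Since $\pi_2$ is a walk, any two consecutive $v_j,v_{j+1}$ lie in a common edge of $H_2$, which supplies the hyperedges of $\weithutzl{\boxtimes}$ needed to step from layer $v_j$ to layer $v_{j+1}$: either a Cartesian edge of type (ii), when $\{v_j,v_{j+1}\}$ happens to be an edge of $H_2$, or otherwise a non-Cartesian edge of type (iii)/(iv), whose endpoints in the $H_1$-direction are furnished by an edge of $\pi_1$ and whose $H_2$-coordinates sit inside the single path-edge joining $v_j$ and $v_{j+1}$.

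I would first verify the cyclic skeleton at the level of $2$-sections. By Lemma~\ref{lem:2sectionNormal} we have $[H_1\weithutzl{\boxtimes} H_2]_2=[H_1]_2\boxtimes[H_2]_2$, and the $p$-path and $2$-path of the factors descend to Hamiltonian paths of $[H_1]_2$ and $[H_2]_2$. The classical Hamiltonicity theory of the strong graph product (see \cite{Hammack:2011a}) then yields a Hamiltonian cycle of $[H_1]_2\boxtimes[H_2]_2$, and I expect conditions (1)--(4) to encode precisely the parity and structural hypotheses under which that cycle exists: case (1) is the familiar even-layer closure (with $|V_1|=2$ a trivial base case), case (2) collects the exceptional small configurations for $|V_1|=3$, case (3) uses a ``chord'' of $\pi_2$ between two even-indexed vertices to re-route the two loose ends of the snake into a single cycle, and case (4) guarantees enough layers, measured against the granularity $\lfloor(|V_1|-2)/p+1\rfloor$ of the $p$-path, to absorb the rank-$r$ hyperedges when closing. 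In each case the graph-level cycle prescribes, for every consecutive pair of product vertices, which of the four edge types of $\weithutzl{\boxtimes}$ realizes the step.

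The remaining work, which I expect to be the main obstacle, is to upgrade this cyclic skeleton to a genuine Hamiltonian cycle of the \emph{hypergraph} $H_1\weithutzl{\boxtimes} H_2$. Two points require care. First, because $H_2$ is only assumed to carry a Hamiltonian $2$-path, a single edge of $H_2$ may span three consecutive vertices $v_{j-1},v_j,v_{j+1}$; the layer transitions must therefore be organised in blocks rather than one layer at a time, and the forward/backward orientation of the snake has to be synchronised across such a block so that the diagonal non-Cartesian edges actually exist in $\weithutzl{\boxtimes}$. Second, a Hamiltonian cycle of the hypergraph must use \emph{pairwise distinct} hyperedges, since its opening walk is required to be a $1$-path, whereas naively reading off a hyperedge containing each consecutive pair can reuse one thick hyperedge for several steps. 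I would therefore allocate the hyperedges along the cycle injectively, and it is exactly the $p$-path structure of $\pi_1$ together with the counting bound in condition (4) that should make this allocation possible. Verifying distinctness in the boundary case (2) and in the extremal instances of (3)--(4) will be the most delicate part; the remaining bookkeeping parallels the graph constructions of \cite{Hammack:2011a, Sonntag90:NormalProd}, and I would suppress the routine details.
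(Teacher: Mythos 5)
Your construction breaks at its very first step, and the break is fatal for exactly the reason the theorem has its strange-looking hypotheses. You claim that the Hamiltonian $p$-path $\pi_1$ of $H_1$ ``lifts to a Hamiltonian path'' of each layer $H_1^{(u_0,v_j)}$. It does not: it lifts to a Hamiltonian \emph{$p$-path}. For $p\geq 2$ the path $\pi_1$ may use the same hyperedge of $H_1$ for up to $p$ consecutive steps, and all of those steps lift to the \emph{same} hyperedge $f\times\{v_j\}$ of $H_1\weithutzl{\boxtimes} H_2$. Since the paper defines a Hamiltonian cycle as a cycle whose underlying walk is a $1$-path (all vertices \emph{and all hyperedges} pairwise distinct), your boustrophedon backbone, which traverses every layer in full, repeats product hyperedges whenever $p\geq 2$ and is therefore not a cycle at all. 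The same problem recurs in the $H_2$-direction: a single edge of the Hamiltonian $2$-path of $H_2$ may have to serve two consecutive layer transitions, and the corresponding non-Cartesian hyperedges must be chosen with different $H_1$-supports to stay distinct. Making these choices injectively is not ``routine bookkeeping''; it is the entire content of the theorem, and it is precisely what you defer.

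The reduction to $2$-sections also proves less than you suggest. By Lemma~\ref{lem:2sectionNormal} one does get $[H_1\weithutzl{\boxtimes} H_2]_2=[H_1]_2\boxtimes[H_2]_2$, but for non-trivial graphs each containing a Hamiltonian path the strong graph product \emph{always} contains a Hamiltonian cycle --- no parity or size conditions are needed at the graph level. So conditions (1)--(4) cannot ``encode precisely'' when the graph-level cycle exists; they are invisible to the $2$-section. They arise only from the hypergraph-level demand for pairwise distinct hyperedges against a limited edge supply: this is why (2) involves the edge count $|E_j|\neq\left\lfloor |V_j|/2\right\rfloor$ and why (4) is measured against $\left\lfloor\frac{|V_1|-2}{p}+1\right\rfloor$, essentially the number of \emph{distinct} edges the $p$-path can contribute per layer. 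Moreover, a Hamiltonian cycle of $[H_1\weithutzl{\boxtimes}H_2]_2$ does not lift to one of the hypergraph, since many distinct graph edges of the $2$-section may be shadows of one and the same thick hyperedge. Note finally that the survey you are working from states this theorem with a citation to Sonntag and contains no proof of it, so there is no ``parallel construction'' in the paper to lean on; a correct proof has to carry out the injective hyperedge allocation and the case analysis for (1)--(4) explicitly, which your proposal leaves entirely open.
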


\begin{thm}[Hamiltonicity II \cite{Sonntag90:NormalProd}]
  \label{thm:StrongProdHamII}
  Let $H_1=(V_1,E_1)$ and $H_2=(V_2,E_2)$ be two non-trivial hypergraphs
  s.t.  $H_1$ contains a Hamiltonian $p$-path, $p\in\mathbb N^+$, $H_2$
  contains a Hamiltonian $2$-path and
  $|E_1|=\left\lfloor\frac{|V_1|-2}{p}\right\rfloor+1$.  Then
  $H_1\weithutzl{\boxtimes} H_2$ contains a Hamiltonian cycle if and only if at
  least one of the conditions $(1)-(4)$ of Theorem~\ref{thm:StrongProdHamI}
  is satisfied.
\end{thm}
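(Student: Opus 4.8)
The backward implication is already contained in Theorem~\ref{thm:StrongProdHamI}: the hypotheses of the present statement are exactly those of Theorem~\ref{thm:StrongProdHamI} together with the extra cardinality constraint $|E_1|=\lfloor(|V_1|-2)/p\rfloor+1$, so whenever one of the conditions $(1)$--$(4)$ holds, Theorem~\ref{thm:StrongProdHamI} already yields a Hamiltonian cycle in $H_1\weithutzl{\boxtimes}H_2$. Hence the only new content is the forward implication, and it is precisely here that the cardinality constraint is needed; I would prove it in contrapositive form, assuming that none of $(1)$--$(4)$ holds and exhibiting the obstruction to a Hamiltonian cycle.

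First I would unpack the cardinality constraint. Using the elementary identity $\lceil n/p\rceil=\lfloor(n-1)/p\rfloor+1$ with $n=|V_1|-1$, the hypothesis reads $|E_1|=\lceil(|V_1|-1)/p\rceil$, which is exactly the minimum number of hyperedges that any hypergraph on $|V_1|$ vertices admitting a Hamiltonian $p$-path can have: along a $p$-path each hyperedge occupies one maximal block of at most $p$ consecutive steps, so the $|V_1|-1$ steps force at least $\lceil(|V_1|-1)/p\rceil$ distinct edges. Equality therefore pins down the structure of $H_1$: every edge is used by the Hamiltonian $p$-path, the path decomposes into $|E_1|$ blocks of which all but at most one have the maximal size $p$, and consecutive hyperedges overlap in the single junction vertex shared by their blocks. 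In other words $H_1$ is essentially a rigid ``chain'' of hyperedges, and this rigidity is what will make conditions $(1)$--$(4)$ necessary and not merely sufficient.

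Next I would analyse how a Hamiltonian cycle $C$ of $H=H_1\weithutzl{\boxtimes}H_2$ interacts with this chain. Since the projection $p_1$ maps adjacent vertices of $H$ to adjacent or equal vertices of $H_1$ (as recorded above for $\weithutzl{\boxtimes}$), the image of $C$ under $p_1$ is a closed walk covering all of $V_1$ while using only the $|E_1|$ available edges; the chain structure forces this walk to run essentially linearly along the chain, advancing or staying put in the $H_1$-coordinate in lock-step with the moves dictated by the Cartesian and non-Cartesian edge types of $\weithutzl{\boxtimes}$. This reduces the existence of $C$ to the question of whether the induced motion in the $H_2$-coordinate, governed by the Hamiltonian $2$-path of $H_2$, can be closed up into a single cycle that covers every $H_2$-layer the correct number of times. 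I would then split into the cases $|V_1|=2$, $|V_1|=3$ and $|V_1|\ge4$, matching them to conditions $(1)$, $(2)$ and $(3)$--$(4)$ respectively: for $|V_1|=2$ condition $(1)$ holds vacuously and there is nothing to prove; for $|V_1|=3$ the chain has only two edges and a short direct inspection shows that the cycle fails precisely in the configuration excluded by $(2)$; and for $|V_1|\ge4$ the closure of the $H_2$-motion imposes a parity-and-counting constraint on $|V_2|$ and on the placement of $H_2$-edges along its Hamiltonian $2$-path.

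The main obstacle is this last case. Here one must show, by careful bookkeeping of how many vertices of each $H_2$-layer are consumed on each traversal of the chain, that a Hamiltonian cycle exists only if $H_2$ either admits a Hamiltonian $2$-path with an edge joining two even-indexed vertices (condition $(3)$) or is large enough to absorb the slack, namely $|V_2|\ge 2\lceil r/\lfloor(|V_1|-2)/p+1\rfloor\rceil-1$ (condition $(4)$). Reproducing the exact constant in $(4)$ is the delicate point: it arises from balancing the $r$ vertices that a maximal block of the chain must cover against the number $\lfloor(|V_1|-2)/p+1\rfloor=|E_1|$ of blocks available, and then tracking the resulting parity obstruction to closing the cycle. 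Once this estimate is carried out the contrapositive is complete, and together with the sufficiency from Theorem~\ref{thm:StrongProdHamI} it establishes the claimed equivalence; for the precise combinatorial constants I would follow Sonntag's original treatment \cite{Sonntag90:NormalProd}.
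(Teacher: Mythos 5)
First, note that the survey itself contains no proof of Theorem~\ref{thm:StrongProdHamII}: it is quoted from Sonntag's paper \cite{Sonntag90:NormalProd}, so there is no in-paper argument to measure your proposal against. Judged on its own terms, your backward implication is correct and complete: the hypotheses here are exactly those of Theorem~\ref{thm:StrongProdHamI} plus the constraint $|E_1|=\lfloor(|V_1|-2)/p\rfloor+1$, so any of (1)--(4) yields a Hamiltonian cycle directly from that theorem. Your reading of the cardinality constraint is also sound: each hyperedge of a Hamiltonian $p$-path occupies a block of at most $p$ consecutive steps, there are $|V_1|-1$ steps, hence $|E_1|\ge\lceil(|V_1|-1)/p\rceil=\lfloor(|V_1|-2)/p\rfloor+1$, so the hypothesis says precisely that $H_1$ is edge-minimal among hypergraphs admitting a Hamiltonian $p$-path and that every edge of $H_1$ lies on that path. (One small overstatement: equality does not force ``all but at most one block of maximal size $p$''; it only forces $|E_1|$ blocks whose sizes sum to $|V_1|-1$, e.g.\ sizes $4,3,3$ with $p=4$ are possible.)

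The genuine gap is that the forward (necessity) direction --- the only part not already contained in Theorem~\ref{thm:StrongProdHamI} --- is never actually carried out. Everything after your reduction is a plan, not a proof: the assertion that the chain structure forces the projected closed walk to ``run essentially linearly along the chain'' is unjustified (a closed walk may reverse direction at any junction vertex, and non-Cartesian edges of the normal product move both coordinates simultaneously, so excluding such back-and-forth behaviour is precisely the hard combinatorial work); the case $|V_1|=3$ is settled by an unspecified ``short direct inspection''; and in the case $|V_1|\ge4$, where the threshold $|V_2|\ge 2\lceil r/(\lfloor(|V_1|-2)/p+1\rfloor)\rceil-1$ of condition (4) must be shown necessary, the argument is reduced to ``careful bookkeeping'' whose conclusion you explicitly outsource to Sonntag's original treatment. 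Since the entire mathematical content of the theorem beyond Theorem~\ref{thm:StrongProdHamI} is exactly this necessity argument, deferring it to the cited source means the proposal establishes nothing new. A further warning sign: the parameter $r$ in condition (4) is not defined anywhere in the survey's statement of Theorem~\ref{thm:StrongProdHamI} (it is inherited from Sonntag's notation), so your identification of it with the size of a maximal block is a guess that a complete proof would first have to pin down.
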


\section{The Strong Product}

Given the ``nice'' properties of the direct product $\weitdacherl{\times}$,
the best candidate for a standard strong product of hypergraphs is
$\weitdacherl{\boxtimes}$ with edge set $E(H_1\Box H_2)\cup
E(H_1\weitdacherl{\times}H_2)$.

\subsection{Definition and Basic Properties}

A subset $e=\{(x_1,y_1),\ldots,(x_r,y_r)\}$ of $V_1\times V_2$ is an edge
in $H_1\weitdacherl{\boxtimes} H_2$ if and only if
\begin{itemize}
\item[(i)] $\{x_1,\ldots, x_r\} \in E(H_1)$ and 
  $y_1=\ldots = y_r\in V(H_2)$, or
\item[(ii)] $\{y_1,\ldots,y_r\}\in E(H_2)$ and $x_1=\ldots = x_r\in V(H_1)$, or
\item[(iii)] $\{x_1,\ldots,x_r\}\in E(H_1)$ and there is 
  an edge $f\in E(H_2)$ 
  such that $\{y_1,\ldots, y_r\}$ is a multiset of elements of $f$, 
  and $f\subseteq\{y_1,\ldots,y_r\}$, or
\item[(iv)] $\{y_1,\ldots,y_r\}\in E(H_2)$ and there is 
  an edge $f\in E(H_1)$ 
  such that $\{x_1,\ldots ,x_r\}$ is a multiset of elements of $f$, 
  and $f\subseteq\{x_1,\ldots,x_r\}$.
\end{itemize}

An edge $e$ that is of type $(i)$ or $(ii)$ is called \emph{Cartesian edge}
and it holds $e\in E(H_1\Box H_2)$, an edge $e$ of type $(iii)$ or $(iv)$
is called \emph{non-Cartesian edge} and it holds $e\in
E(H_1\weitdacherl{\times}H_2)$.

The strong product $\weitdacherl{\boxtimes}$ is associative, commutative, and
distributive w.r.t.\ the disjoint union and has $K_1$ as unit element.  The
projections into the factors are weak homomorphisms. 
Since $E(H_1\Box H_2)$ is a spanning partial hypergraph of
$E(H_1\weitdacherl{\boxtimes} H_2)$, we can conclude that strong product
$H_1\weitdacherl{\boxtimes} H_2$ is connected if and only if 
$H_1$ and $H_2$ are connected hypergraphs.
For the rank and anti-rank, respectively, of a strong product hypergraph
$H_1\weitdacherl{\boxtimes}  H_2$ holds:
\begin{align*}
  r(H_1\weitdacherl{\boxtimes} H_2)=\max\{r(H_1),r(H_2)\}\\
  s(H_1\weitdacherl{\boxtimes} H_2)=\min\{s(H_1),s(H_2)\}.
\end{align*}

\begin{lem} \label{lem:simple_strong_product2}
  The strong product $H_1\weitdacherl{\boxtimes}H_2$ 
  of simple hypergraphs $H_1$ and $H_2$ is simple.
\end{lem}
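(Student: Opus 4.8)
The plan is to verify the two defining conditions of simplicity separately: that $H_1\weitdacherl{\boxtimes}H_2$ has no loops, and that no edge is contained in another. The first condition is immediate, since $H_1,H_2$ simple gives $s(H_1),s(H_2)\geq 2$, whence $s(H_1\weitdacherl{\boxtimes}H_2)=\min\{s(H_1),s(H_2)\}\geq 2$ by the anti-rank formula, so no edge is a singleton. For the containment condition I would argue by contradiction, assuming $e\subseteq e'$ for two distinct edges. The edge set decomposes into the Cartesian edges (those in $E(H_1\Box H_2)$) and the non-Cartesian edges (those in $E(H_1\weitdacherl{\times}H_2)$). Since the Cartesian product of simple hypergraphs is simple and, by Lemma~\ref{lem:SimpleDirect2}, $H_1\weitdacherl{\times}H_2$ is simple, neither of these two edge sets by itself contains a pair with $e\subsetneq e'$. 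Hence the only case left to rule out is that one of $e,e'$ is Cartesian and the other non-Cartesian.

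The key tool for this mixed case is the behaviour of the two projections. A Cartesian edge is, up to symmetry, of the form $\{x\}\times f$ with $f\in E_2$, so that $p_1(e)=\{x\}$ is a singleton while $p_2(e)=f$ is a full edge of size $|e|$. A non-Cartesian edge $e'$ instead satisfies $p_1(e')=e'_1\in E_1$ and $p_2(e')=e'_2\in E_2$ with $|e'|=\max\{|e'_1|,|e'_2|\}$, and moreover the projection attaining this maximum is injective on $e'$. I would first dispose of the direction where a non-Cartesian edge is contained in a Cartesian one: if $e'\subseteq\{x\}\times f$ then $p_1(e')\subseteq\{x\}$, forcing $e'_1=\{x\}$ to be a singleton edge of $H_1$, contradicting $s(H_1)\geq 2$.

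The remaining and most delicate direction is a Cartesian edge contained in a non-Cartesian edge, $\{x\}\times f\subseteq e'$. Here $p_2$ gives $f\subseteq e'_2$, and since both are edges of the simple hypergraph $H_2$ this forces $f=e'_2$, so $|e'_2|=|f|=|e|$. I then split on which projection of $e'$ is largest. If $|e'_1|>|e'_2|$, then $p_1$ is injective on $e'$, yet $e$ contributes $|f|\geq 2$ distinct vertices all sharing the first coordinate $x$, contradicting injectivity. If instead $|e'_1|\leq|e'_2|$, then $|e'|=|e'_2|=|e|$, so $e\subseteq e'$ forces $e=e'$, whence $e'_1=p_1(e')=\{x\}$ is again a singleton, contradicting $s(H_1)\geq 2$. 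The symmetric subcase of a Cartesian edge $g\times\{y\}$ follows by exchanging the roles of the two factors, using commutativity of the construction.

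I expect the main obstacle to be precisely this last direction. Unlike the normal product (Lemma~\ref{lem:simple_strong_product1}), here a Cartesian and a non-Cartesian edge may intersect in more than one vertex, so the short argument ``$|e\cap e'|\leq 1$'' is no longer available; one genuinely has to combine the injectivity of the maximal projection with the simplicity of the individual factors $H_1,H_2$ to prevent the containment.
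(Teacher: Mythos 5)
Your proof is correct and follows essentially the same route as the paper's: the same decomposition into Cartesian and non-Cartesian edges (reducing, via simplicity of the Cartesian product and Lemma~\ref{lem:SimpleDirect2}, to the two mixed containment directions), the same disposal of the non-Cartesian-inside-Cartesian case through a singleton projection, and the same subcase split on which projection of the non-Cartesian edge is injective for the harder direction. The only cosmetic differences are your explicit treatment of loops via the anti-rank formula and your closing observation that the $|e\cap e'|\leq 1$ shortcut from the normal product is unavailable here, both of which the paper leaves implicit.
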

\begin{proof}
  Due to the fact that the Cartesian product and the direct product 
  $\weitdacherl{\times}$ of simple hypergraphs is simple, it remains to show, 
  that no Cartesian edge is contained in any non-Cartesian edge or vice versa.  
  Therefore, let $e$ be  a Cartesian edge and $e'$ a non-Cartesian 
  edge with $p_i(e')= e'_i$ for some $e'_i\in E_i$, $i=1,2$.
  Suppose first, $e'\subseteq e$.  Thus $|p_i(e)|=1$ for an $i\in\{1,2\}$
  and therefore $|p_i(e')|=1$, but $p_i(e')$ must be an edge in $H_i$.
  Hence, one of the factors would not be simple.
  Now let $e\subseteq e'$.  W.l.o.g.\ suppose $|p_1(e)|=1$, hence
  $p_2(e)=e_2$ for some $e_2\in E_2$.  Since $p_2(e)\subseteq p_2(e')=e'_2$
  and $H_2$ is simple, we can conclude $e_2=e'_2$.
  If $|e'|=|e'_1|$, then $p_1(x)\neq p_1(y)$ must hold for all $x,y\in e'$
  with $x\neq y'$, and therefore $|e|=1$ and hence $|e_2|=1$, which
  contradicts the fact that $H_2$ is simple.  If conversely $|e'|=|e'_2|$,
  we can conclude that $|e'|=|e|$, hence $e=e'$ and $|e'_1|=|p_1(e')|=1$
  which implies that $H_1$ is not simple, a contradiction.
\end{proof}

>From the arguments in the proof we can conclude that $E(H_1\Box H_2)\cap
E(H_1\weitdacherl{\times}H_2)=\emptyset$ holds if $H_1$ and $H_2$ are
simple. Moreover, one can show that $E(H_1\Box H_2)\cap
E(H_1\weitdacherl{\times}H_2)=\emptyset$ provided that $H_1$ and $H_2$ are
loopless.

\subsection{Relationships with Graph Products}

The restriction of the strong product $\weitdacherl{\boxtimes}$ to (not
necessarily simple) graphs (with or without loops) coincides with the
strong graph product. For simple hypergraphs $H_1$ and $H_2$ without loops,
furthermore, we have
\begin{equation}
  H_1 \weitdacherl{\boxtimes} H_2 = 
  \mathscr{N}(\mathscr{L}H_1\weitdacherl{\times}\mathscr{L}H_2)
  \label{eq:SpecialCase_Hpg}
\end{equation}	
Thus, the strong product can be considered as a special case of the
direct product, generalizing the well-known results about 
the direct and strong graph product, see Equ.(\ref{eq:SpecialCase}).

In contrast to the direct product $\weitdacherl{\times}$, the $2$-section
of the strong product $\weitdacherl{\boxtimes}$ coincides with the strong
graph product of the $2$-sections of its factors.
 
\begin{lem}  \label{lem:2sectionStrong}
  The $2$-section of the strong product 
  $H= H' \weitdacherl{\boxtimes} H''$ is the strong product 
  of the $2$-section of $H'$ and the $2$-section of $H''$, more formally:
  $$[H' \weitdacherl{\boxtimes} H'']_2 =[H']_2 \boxtimes [H'']_2.$$
\end{lem}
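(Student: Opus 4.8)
The plan is to establish the claimed identity by comparing the two graphs pair-by-pair on their common vertex set $V'\times V''$, where $V'=V(H')$ and $V''=V(H'')$. First I would record the two decompositions that drive the argument. On the product side, $E(H)=E(H'\Box H'')\cup E(H'\weitdacherl{\times}H'')$, and since a pair $\{u,v\}$ is an edge of a $2$-section exactly when some hyperedge contains both vertices, the $2$-section edge set splits as $E([H]_2)=E([H'\Box H'']_2)\cup E([H'\weitdacherl{\times}H'']_2)$. On the factor side, by definition of the strong graph product, $E([H']_2\boxtimes[H'']_2)=E([H']_2\Box[H'']_2)\cup E([H']_2\times[H'']_2)$. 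By Proposition~\ref{prop:CartProd_2-sect} the Cartesian parts already agree, $E([H'\Box H'']_2)=E([H']_2\Box[H'']_2)$, so the whole problem reduces to matching the contribution of the non-Cartesian ($\weitdacherl{\times}$) edges against the graph direct product $[H']_2\times[H'']_2$, modulo the common Cartesian edges. The point to flag at the outset is that one \emph{cannot} short-cut this via an analogue of Lemma~\ref{lem:DirectProd_2-sect}: as noted in the text, $[H'\weitdacherl{\times}H'']_2$ is in general \emph{not} the graph direct product of the $2$-sections, precisely because type-$(iii)$/$(iv)$ edges may repeat a coordinate. Hence a direct case analysis is needed, and the content is to show that these ``coordinate-repeating'' pairs are always already supplied by the Cartesian edges, while the genuinely direct pairs match exactly. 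Throughout I assume, as the survey does for $2$-sections, that $H',H''$ are loopless, so every edge has size at least $2$.

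Next I would split according to how the coordinates of the two vertices coincide. If $x_1=x_2$ and $y_1\neq y_2$, then no Cartesian edge of the $\Box$-type $(i)$ can join them and no $\weitdacherl{\times}$-edge of type $(iii)$ can contain them (its first coordinates are distinct), so adjacency in $[H]_2$ can only arise from a type-$(ii)$ Cartesian edge $\{x_1\}\times f$ or a type-$(iv)$ edge; both force $\{y_1,y_2\}\subseteq f$ for some $f\in E(H'')$, i.e.\ $\{y_1,y_2\}\in E([H'']_2)$, and conversely the Cartesian edge already realises this. On the strong-product side, $(x_1,y_1)$ and $(x_1,y_2)$ are adjacent iff $\{y_1,y_2\}\in E([H'']_2)$. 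Thus the two notions of adjacency coincide in this case, and the type-$(iv)$ edges contribute nothing new. The case $y_1=y_2$, $x_1\neq x_2$ is symmetric. The upshot of this step is that all ``one coordinate fixed'' adjacencies are governed purely by the Cartesian edges, matching the Cartesian part of the strong graph product.

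The main case is $x_1\neq x_2$ and $y_1\neq y_2$. Here no Cartesian edge (which fixes a coordinate) can join the two vertices, so any adjacency in $[H]_2$ must come from a non-Cartesian edge, and on the strong-product side adjacency is governed solely by the direct condition: $(x_1,y_1)\sim(x_2,y_2)$ iff $\{x_1,x_2\}\in E([H']_2)$ and $\{y_1,y_2\}\in E([H'']_2)$. The forward direction is immediate: a $\weitdacherl{\times}$-edge of type $(iii)$ containing both vertices has $p_1(e)=e_1\in E(H')$ with $x_1,x_2\in e_1$, and its second coordinates range over a set equal to some $f\in E(H'')$ with $y_1,y_2\in f$, giving both $2$-section conditions; type $(iv)$ is symmetric. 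The crux is the converse construction. Assume $\{x_1,x_2\}\subseteq e_1\in E(H')$ and $\{y_1,y_2\}\subseteq f\in E(H'')$; I must exhibit an actual non-Cartesian edge containing $(x_1,y_1)$ and $(x_2,y_2)$. Taking $r=\max\{|e_1|,|f|\}$ and, say, $|e_1|\ge|f|$ (the other case uses type $(iv)$), I would build a type-$(iii)$ edge whose first coordinates are exactly $e_1$ and whose second coordinates are a surjective assignment $e_1\to f$ that covers all of $f$ and sends $x_1\mapsto y_1$, $x_2\mapsto y_2$. Such an assignment exists because after fixing the two prescribed (distinct) values it remains to cover at most $|f|-2$ further elements of $f$ using the $|e_1|-2\ge|f|-2$ remaining vertices of $e_1$, which is possible since $y_1\neq y_2$ and $|e_1|\ge|f|\ge 2$.

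I expect the counting argument for this covering assignment in the third case to be the only genuine obstacle, together with the bookkeeping needed to confirm that the coordinate-repeating $\weitdacherl{\times}$-edges in the first two cases never create adjacencies beyond those already given by Cartesian edges. Everything else is a routine comparison. Assembling the three cases shows that a pair of vertices is adjacent in $[H'\weitdacherl{\boxtimes}H'']_2$ if and only if it is adjacent in $[H']_2\boxtimes[H'']_2$, which, together with the agreement of the vertex sets, gives the asserted equality of graphs.
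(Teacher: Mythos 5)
Your proof is correct and follows essentially the same route as the paper's: a pointwise comparison of adjacency on the common vertex set, split into the same three cases according to which coordinates of the two vertices coincide, with projections handling the forward direction. The only difference is explicitness — the paper compresses the argument into a chain of equivalences and leaves the converse of your main case implicit (producing an actual non-Cartesian edge from $\{x_1,x_2\}\in E([H']_2)$ and $\{y_1,y_2\}\in E([H'']_2)$), whereas you spell out the surjective assignment $e_1\to f$ fixing $x_1\mapsto y_1$, $x_2\mapsto y_2$, which is indeed the only step with real content and which you carry out correctly.
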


\begin{proof}
  Let $p_1$ and $p_2$ denote $p_{H'}$ and $p_{H''}$, respectively.  By
  definition of the $2$-section and the strong product, $[H]_2$ and $[H']_2
  \boxtimes [H'']_2$ have the same vertex set.  Thus we need to show that
  the identity mapping $V([H]_2)\rightarrow V([H']_2 \times [H'']_2)$ is an
  isomorphism.  We have: $\{x,y\}\in E([H]_2)\Leftrightarrow \exists\, e\in
  E(H): \{x,y\}\subseteq e,\ x\neq y$ $\Leftrightarrow $
  \begin{enumerate}
  \item $p_1(x) = p_1(y)$ and $(p_2(x),p_2(y)) \subseteq 
    p_2(e)\in E(H'')$, $p_2(x)\neq p_2(y)$ or
  \item $p_2(x) = p_2(y)$ and $(p_1(x),p_1(y)) \subseteq 
    p_1(e)\in E(H')$, $p_1(x)\neq p_1(y)$ or
  \item  $(p_1(x),p_1(y)) \subseteq p_1(e)\in E(H')$ and  
    $(p_2(x),p_2(y)) \subseteq p_2(e)\in E(H'')$, 
    $p_i(x)\neq p_i(y)$, $i=1,2$.
  \end{enumerate}
  $\Leftrightarrow $
  \begin{enumerate}
  \item $p_1(x) = p_1(y)$ and $(p_2(x),p_2(y)) \in E([H'']_2)$ or
  \item $p_2(x) = p_2(y)$ and $(p_1(x),p_1(y)) \in E([H']_2)$ or
  \item $(p_1(x),p_1(y)) \in E([H']_2)$ and $(p_2(x),p_2(y)) \in E([H'']_2)$.
  \end{enumerate}
  $\Leftrightarrow $
  $\{x,y\}\in E([H']_2\boxtimes [H'']_2)$.
\end{proof}

\begin{defn}[The Strong Product of L2-sections]
  Let $\Gamma_i=(V_i,E'_i,\mathcal{L}_i)$ be the $L2$-section of the 
  hypergraphs $H_i=(V_i,E_i)$, $i=1,2$.
  The strong product of the $L2$-sections 
  $\Gamma_1\weitdacherl{\boxtimes} \Gamma_2 = 
  (V,E',\mathcal{L})$ consists of the graph 
  $(V,E')=(V_1,E'_1)\boxtimes(V_2,E'_2)$ 
  and a labeling function $$\mathcal{L}=\mathcal{L}_1\weitdacherl{\boxtimes} 
  \mathcal{L}_2: E'\rightarrow \mathbb{P}(E(H_1\weitdacherl{\boxtimes} H_2))$$
  assigning to each edge $e' = \{(x_1,y_1),(x_2,y_2)\} \in E'$ a label
  $$\mathcal{L}\left(e'\right) = 
  (\mathcal{L}_1 \Box \mathcal{L}_2)(e') \cup \weitdacherl{\mathcal{L}}(e')$$
  where $(\mathcal{L}_1 \Box \mathcal{L}_2)(e') = \emptyset$ if 
  $x_1\neq x_2$ and $y_1\neq y_2$ and
  $$\weitdacherl{\mathcal{L}}(e') = 
  \begin{cases}
    \{e'\cup B\mid B\in\mathcal{B}(x_1,e'_2,e_2), 
    e_2\in \mathcal{L}_2(e'_2)\}, & 
    \text{ if } x_1=x_2\\
    \{e'\cup C\mid C\in\mathcal{C}(y_1,e'_1,e_1), 
    e_1\in \mathcal{L}_1(e'_1)\}, & 
    \text{ if } y_1=y_2 \\
    \{e'\cup D \mid D\in\mathcal{D}(e',e_1,e_2), e_1\in \mathcal{L}_1(e'_1), 
    e_2\in \mathcal{L}_2(e'_2)\}, &\text{ else}
  \end{cases}$$
  with $e'_1 =\{x_1, x_2\}$, $e'_2=\{y_1,y_2\}$ and 
  $$\mathcal{B}(x_1,e'_2,e_2)  = 
  \left\{B \in \binom{e_1\times (e_2\setminus e'_2)}{|e_2|-2}
    \mid x_1\in e_1 \in E_1, |e_1|<|e_2|, p_i(B\cup e')=
    e_i, i=1,2\right\} $$ and 
  $$\mathcal{C}(y_1,e'_1,e_1)  = 
  \left\{C \in \binom{ (e_1\setminus e'_1) \times e_2}{|e_1|-2}
    \mid y_1\in e_2 \in E_2, |e_2|<|e_1|, p_i(C\cup e')=e_i,
    i=1,2\right\} $$ 
  and 
  $$\mathcal{D}(e',e_1,e_2) = 
  \left\{ D \in \binom{(e_1\times e_2)\setminus 
      (e'_1\times e'_2)}{r^{+}_{e_1,e_2}-2} \mid
    p_i(D\cup e') = e_i, i=1,2 \right\}$$
\end{defn}

Again, one can show that 
$[H\weitdacherl\boxtimes H']_{L_2}=[H]_{L2}\weitdacherl\boxtimes[H']_{L2}$ 
holds for all simple hypergraphs $H,H'$.

\begin{lem}[Distance Formula]\label{lem:distStrong}
  For all hypergraphs $H,H'$ without loops we have:
  $$d_{H\weitdacherl{\boxtimes} H'}((x,a),(y,b)) =   
  \max\{d_{H}(x,y), d_{H'}(a,b)\}$$
\end{lem}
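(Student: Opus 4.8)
The plan is to follow exactly the same strategy used for the earlier distance formulas (Lemmas~\ref{lem:distCart}, \ref{lem:distMinRankDir}, and \ref{lem:distNormal}): reduce the hypergraph distance question to a graph distance question by passing to $2$-sections, where the classical graph-product distance formula is available. The three ingredients I would combine are Lemma~\ref{lem:distH2} (distances are preserved when passing from a loopless hypergraph to its $2$-section), Lemma~\ref{lem:2sectionStrong} (the $2$-section of $H\weitdacherl{\boxtimes}H'$ is the strong \emph{graph} product $[H]_2\boxtimes[H']_2$), and the well-known distance formula for the strong graph product, $d_{G\boxtimes G'}((x,a),(y,b))=\max\{d_G(x,y),d_{G'}(a,b)\}$ (Corollary~5.5 in \cite{Hammack:2011a}).

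Concretely, I would first rewrite the left-hand side by Lemma~\ref{lem:distH2} applied to the product $H\weitdacherl{\boxtimes}H'$, turning $d_{H\weitdacherl{\boxtimes}H'}$ into a distance in $[H\weitdacherl{\boxtimes}H']_2$. I would then substitute $[H\weitdacherl{\boxtimes}H']_2=[H]_2\boxtimes[H']_2$ from Lemma~\ref{lem:2sectionStrong}, apply the strong-graph-product distance formula, and finally invoke Lemma~\ref{lem:distH2} once more, this time on the factors $H$ and $H'$ separately, to replace $d_{[H]_2}$ and $d_{[H']_2}$ by $d_H$ and $d_{H'}$. This yields the chain
\begin{align*}
  d_{H\weitdacherl{\boxtimes}H'}((x,a),(y,b))
  &= d_{[H]_2\boxtimes[H']_2}((x,a),(y,b)) \\
  &= \max\{d_{[H]_2}(x,y),\, d_{[H']_2}(a,b)\} \\
  &= \max\{d_H(x,y),\, d_{H'}(a,b)\},
\end{align*}
which is precisely the claimed identity.

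Since each of the three cited results is already in hand, there is no genuine analytic obstacle; the argument is purely a matter of chaining the statements correctly. The only point that deserves attention — and the reason the statement carries the hypothesis ``without loops'' — is that the entire $2$-section machinery behind Lemma~\ref{lem:distH2} and Lemma~\ref{lem:2sectionStrong} is, by the convention fixed for $2$-sections and $L2$-sections, only in force for loopless hypergraphs. I would therefore remark at the outset that loop-freeness of $H$ and $H'$ forces $H\weitdacherl{\boxtimes}H'$ to be loopless as well (every edge of the product has size at least $2$), so that each appeal to the $2$-section results is legitimate. Beyond this bookkeeping observation the proof is immediate.
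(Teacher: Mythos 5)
Your proposal is correct and follows exactly the paper's own argument: the paper's proof is precisely the one-line combination of Lemma~\ref{lem:distH2}, Lemma~\ref{lem:2sectionStrong}, and Corollary~5.5 of \cite{Hammack:2011a}. Your additional remark that loop-freeness of $H$ and $H'$ guarantees loop-freeness of $H\weitdacherl{\boxtimes}H'$ (so the $2$-section results genuinely apply) is a worthwhile bookkeeping point that the paper leaves implicit.
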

\begin{proof}
  Combining the results of Lemma \ref{lem:distH2}, Lemma
  \ref{lem:2sectionStrong} and the well-known Distance
  Formula for the strong graph product (Corollary 5.5 in \cite{Hammack:2011a})
  yields to the result.
\end{proof}

Although $\weitdacherl{\boxtimes}$ appears to be the most promising strong
product of hypergraphs, it has not been investigated in any detail so far.

\section{Alternative Constructions Generalizing the Strong Graph Product}

In order to generalize the strong graph product one can draw on an
abundance of ways to define strong hypergraph products.  To complete this
part, we suggest a few of possibilities that have not been considered in
the literature so far but might warrant further attention. 
	
\begin{enumerate}
\item $E(H_1\widetilde{\boxtimes}H_2)=E(H_1\Box H_2)\cup 
  E(H_1\widetilde{\times}H_2)$
\item $E(H_1\boxtimes H_2)=\left\{\binom{e\times e'}{r^-_{e,e'}}\mid e\in
    E_1, e'\in E_2\right\}$ 
\item $E(H_1\boxtimes H_2)=\left\{\binom{e\times e'}{r^+_{e,e'}}\mid e\in 
    E_1, e'\in E_2\right\}$ 
\item $E(H_1\boxtimes H_2)=\left\{\binom{e\times e'}{|e|}\mid e\in E_1, 
    e'\in E_2\right\}
  \cup \left\{\binom{e\times e'}{|e'|}\mid e\in E_1, e'\in E_2\right\}$
\item $E(H_1\boxtimes H_2)=\left\{\binom{e\times e'}{k}\mid e\in E_1, 
    e'\in E_2, r^-_{e,e'}\leq k\leq r^+_{e,e'} \right\}$
\end{enumerate}
	
\part{Lexicographic and Costrong Products}

\section{The Lexicographic Product}

The lexicographic product is the only non-commutative product treated in
this survey.  The lexicographic product of hypergraphs has received
considerable attention in the literature \cite{Sonntag91:HamTraceLexico,
  Sonntag92:Hamiltonicity, GasztImrich71:lexico, DoerflerImrich70:Xsum,
  Hahn81:lexico, Hahn80:DiHpg, BKL11:NetworkCoding,
  Doerfler78:DoubleCovers}.

\subsection{Definition and Basic Properties}

\begin{defn}[The Lexicographic Product 
  \cite{DoerflerImrich70:Xsum}]
  Let $H_1=(V_1,E_1)$ and $H_2=(V_2,E_2)$ be two hypergraphs.
  The \emph{lexicographic product} $H=H_1\circ H_2$ has vertex set
  $V(H)=V_1\times V_2$
  and edge set
  \[E(H)=\left\{e\subseteq V(H):  
    p_1(e)\in E_1, |p_1(e)| = |e|\right\}
  \cup\left\{\{x\}\times e_2\mid x\in V_1, e_2\in E_2 \right\}.\] 
\end{defn}

Since $|p_1(e)| = |e|$ there are $|e|$ vertices of $e$ that have pairwise
different first coordinates.  A related construction was also explored in
\cite{DoerflerImrich70:Xsum}:
\begin{defn}[$X$-join of hypergraphs]
  Let $X=(V(X),E(X))$ be a hypergraph and let $\{H(x)\mid x\in V(X)\}$ be a
  set of arbitrary pairwise disjoint hypergraphs, each of them associated with a vertex $x\in V(X)$.  
  The \emph{$X$-join} of $\{H(x)\mid
  x\in V(X)\}$ is the hypergraph $Z=(V(Z),E(Z))$ with
  \begin{align*}
    V(Z)=&\bigcup_{x\in V(X)}V(H(x))\quad\text{and}\\
    e\in E(Z)\Leftrightarrow &e\in E(H(x)),\quad\text{ or }\\
    &|e\cap V(H(x))|\leq 1 \text{ and }
    \{x\mid e\cap V(H(x))\neq\emptyset\}\in E(X).
  \end{align*}
\end{defn}
If $X\cong K_2$, then $Z$ is also called \emph{join} of $H_1$ and $H_2$,
$Z=H_1\oplus H_2$. The $X$-join is a generalization of the lexicographic
product in the following sense: If $Z$ is an $X$-join of hypergraphs
$\{H(x)\mid x\in V(X)\}$ and if $H(x)\cong H$ for all $x\in V(X)$ then
$Z$ is equivalent to the lexicographic product $X\circ H$.

The lexicographic product of two simple hypergraphs is simple.  It is 
associative, has the single vertex graph $K_1$ as a unit
element, and is right-distributive with respect to the join and the disjoint
union of hypergraphs.  Additionally, we have the following
left-distributive properties w.r.t.\ join and disjoint union:
\begin{align*}
  \overline{K}_n\circ (H_1+H_2) &= 
  \overline{K}_n\circ H_1 + \overline{K}_n\circ H_2\\
  K_n\circ (H_1\oplus H_2) &= K_n\circ H_1 \oplus K_n\circ H_2
\end{align*}
for all hypergraphs $H_1,H_2$ \cite{DoerflerImrich70:Xsum,
  GasztImrich71:lexico}.  The lexicographic product is not commutative in
general. 
\begin{thm}[\cite{DoerflerImrich70:Xsum}]
  Let $H_1$ and $H_2$ be two non-trivial connected finite hypergraphs.  Then
  $H_1\circ H_2\cong H_2\circ H_1$ only if $H_1$ and $H_2$ are complete
  graphs or $H_1$ and $H_2$ are both powers of some hypergraph $H$.
\end{thm}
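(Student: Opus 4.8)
The plan is to recognise the set of isomorphism classes of non-trivial connected finite hypergraphs, equipped with $\circ$, as a monoid with a very rigid structure, and then to read off commutativity from that structure. Write $\mathcal H$ for this monoid; it is associative with unit $K_1$, as already recorded. The guiding principle is the classical fact that two elements of a \emph{free product} of monoids commute only if they either lie in a common free factor or are both powers of a single element. I would therefore aim to show that $\mathcal H$ is the free product $\mathcal C * \mathcal F$, where $\mathcal C=\{K_n : n\ge 1\}$ is the submonoid of complete graphs and $\mathcal F$ is free on the non-complete $\circ$-primes. Since $K_a\circ K_b\cong K_{ab}$, the factor $\mathcal C$ is commutative and isomorphic to $(\mathbb N_{\ge 1},\cdot)$, and ``lying in the common factor $\mathcal C$'' means exactly ``being a complete graph''. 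Granting this, the theorem is immediate: if $H_1\circ H_2\cong H_2\circ H_1$, then $H_1$ and $H_2$ commute in $\mathcal C*\mathcal F$, so either both lie in $\mathcal C$, i.e.\ both are complete graphs, or both are powers $R^{\circ m}$ and $R^{\circ n}$ of a common element $R\in\mathcal H$ --- precisely the two alternatives claimed.

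The work is thus entirely in establishing the free-product structure of $\mathcal H$, which rests on two structural facts. First, the factors of a lexicographic product can be recovered from the product: in $H=H_1\circ H_2$ the layers $L_a=\{a\}\times V(H_2)$ partition $V(H)$, each induced layer $\langle L_a\rangle$ is isomorphic to $H_2$ (an edge of the first type has $|p_1(e)|=|e|$, so it meets every layer in at most one vertex and never lies inside one), while contracting each layer to a point returns $H_1$. Moreover every $L_a$ is a module: an inter-layer edge meets $L_a$ in at most one vertex, and whether such a partial edge extends to an edge depends only on the $H_1$-coordinate, uniformly across the layer. Hence the modular decomposition tree of $H$ is canonical and exhibits this partition, so the factors are determined except at complete (``series'') nodes. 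Second, one must pin down exactly where this canonicity fails. It fails precisely for the complete graphs: for $H=K_n$ every vertex subset is a module, and indeed $K_n=K_a\circ K_b$ for every $n=ab$; conversely a connected hypergraph that is not a complete graph is shown to have a $\circ$-factorization that is unique up to regrouping of \emph{consecutive} complete-graph primes. These two facts together give the normal form ``alternating blocks of complete graphs and non-complete primes'', which is exactly the normal form of $\mathcal C*\mathcal F$.

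Two bookkeeping points then complete the reduction in concrete terms. The mixed case, where one factor is a complete graph $K_N$ with $N\ge 2$ and the other is not, produces no commuting pairs: passing to $K_N\circ H_2$ enlarges the leading complete block of $H_2$, whereas $H_2\circ K_N$ enlarges the trailing one, so the (unique) normal forms differ and $N=1$ is forced. In the purely non-complete case the commutation of the underlying words of non-complete primes is governed by the Lyndon--Sch\"utzenberger theorem, $w_1 w_2=w_2 w_1 \Rightarrow w_1=u^m,\ w_2=u^n$; comparing the interleaved complete blocks then forces them to be periodic with the same period, so that $H_1\cong R^{\circ m}$ and $H_2\cong R^{\circ n}$ for the hypergraph $R$ realising one period. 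The easy converse ($R^{\circ m}\circ R^{\circ n}\cong R^{\circ(m+n)}\cong R^{\circ n}\circ R^{\circ m}$, and $K_a\circ K_b\cong K_{ab}\cong K_b\circ K_a$) shows that these two conditions are also sufficient.

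The main obstacle is the second structural fact above: carrying the prime-factorization theory of the lexicographic product, with its connectivity and co-connectivity hypotheses \cite{Hammack:2011a}, over from graphs to \emph{hyper}graphs, and in particular proving that complete graphs are the \emph{only} connected hypergraphs whose ordered $\circ$-factorization is non-unique (note that a complete $r$-uniform hypergraph with $r\ge 3$ is already rigid, since $\circ$ forces the ranks of the factors to coincide). This requires a workable hypergraph analogue of modular decomposition together with a cancellation law for $\circ$ on connected hypergraphs; once these are in place, the free-product description of $\mathcal H$ and hence the theorem follow, while the word-combinatorial and complete-block arguments are routine.
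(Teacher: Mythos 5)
The paper itself contains no proof of this statement---it is a survey, and the theorem is quoted from D\"orfler and Imrich \cite{DoerflerImrich70:Xsum}---so your proposal has to stand on its own, and it does not: the structural claim on which the whole reduction rests is false. You need the monoid $\mathcal H$ of connected hypergraphs under $\circ$ to be the free product $\mathcal C * \mathcal F$, equivalently that a connected non-complete hypergraph has an ordered factorization into connected primes that is unique up to regrouping \emph{consecutive complete} factors. Already for graphs this fails, and the paper's own transposition theorem (Theorem~\ref{thm:UPFDLex}, cited from \cite{GasztImrich71:lexico}) indicates why: non-uniqueness of lexicographic factorizations is generated not only by $K_a\circ K_b\cong K_{ab}$ but by \emph{transpositions}, which involve the join and disjoint-union structure. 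Concretely, put $P=(K_2\circ\overline{K}_2)\oplus K_1$ (a $4$-cycle joined to a dominating vertex) and $R=(\overline{K}_2\circ K_2)\oplus K_1$ (two disjoint edges joined to a dominating vertex). Using right-distributivity of $\circ$ over the join, associativity, and the identity $K_n\circ(H_1\oplus H_2)=(K_n\circ H_1)\oplus(K_n\circ H_2)$---all recorded in the paper---one computes
\begin{align*}
P\circ K_2 &= \bigl[(K_2\circ\overline{K}_2)\circ K_2\bigr]\oplus(K_1\circ K_2)
= \bigl[K_2\circ(\overline{K}_2\circ K_2)\bigr]\oplus K_2\\
&= K_2\circ\bigl[(\overline{K}_2\circ K_2)\oplus K_1\bigr] = K_2\circ R.
\end{align*}
Here $P$ and $R$ are connected, non-complete, $\circ$-prime (each has five vertices, a prime number), and $P\not\cong R$ (eight edges versus six). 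So the connected, non-complete graph $P\circ K_2$ has two ordered factorizations into connected primes that are \emph{not} related by regrouping complete blocks; your ``alternating normal form'' does not exist, and $\mathcal H\not\cong\mathcal C*\mathcal F$.

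Everything downstream of that claim therefore collapses: the mixed-case argument (``$K_N\circ H_2$ enlarges the leading complete block, $H_2\circ K_N$ the trailing one, so the normal forms differ'') and the appeal to Lyndon--Sch\"utzenberger both presuppose uniqueness of normal forms, and the example above is precisely a relation of the forbidden shape $P\circ K_N\cong K_N\circ R$ with $P,R\notin\mathcal C$. You did flag the uniqueness statement as the ``main obstacle,'' but it is not an unproven technical step that can be ``carried over'' from graph theory---it is false even for graphs, which is exactly why D\"orfler and Imrich's proof must work directly with the $X$-join (modular) structure and show that transposition relations, while destroying uniqueness of ordered factorizations, never produce a genuinely commuting pair $H_1\circ H_2\cong H_2\circ H_1$ outside the two stated alternatives. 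That analysis is the real content of the theorem and is absent from your proposal; the converse direction you include is fine but not required, since the statement is an ``only if.''
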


Connectedness of the lexicographic product depends only on the first
factor. More precisely, $H_1\circ H_2$ is a connected hypergraph if and
only if $H_1$ is connected.  For the rank and anti-rank, respectively, of a
lexicographic product hypergraph $H_1\circ H_2$ holds:
\begin{align*}
	r(H_1\circ H_2) &=\max\{r(H_1),r(H_2)\}\\
	s(H_1\circ H_2) &=\min\{s(H_1),s(H_2)\}.
\end{align*}

The projection $p_1$ of a lexicographic product $H=H_1\circ H_2$ of two
hypergraphs $H_1$, $H_2$ into the first factor $H_1$ is a weak
homomorphism.  The $H_2$-layer through $w$, $H_2^w$, is the partial
hypergraph of $H$ induced by all vertices of $H$ which differ from a given
vertex $w\in V(H)$ exactly in the second coordinate,
and it holds:
\[
H_2^w=
\left\langle\left\{\left(p_1(w),t\right)\mid t\in
    V(H_2)\right\}\right\rangle 
\cong H_2.
\]

\subsection{Relationships with Graph Products}

The restriction of the lexicographic product on graphs coincides with the
usual lexicographic graph product.
\begin{lem}
  The $2$-section of $H= H' \circ H''$ is the lexicographic product 
  of the $2$-section of $H'$ and the $2$-section of $H''$, more formal:
  $$[H' \circ H'']_2 =[H']_2 \circ [H'']_2.$$
  \label{lem:2sectionLexi}
\end{lem}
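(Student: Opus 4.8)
The plan is to follow the same template as the proofs of Lemma~\ref{lem:DirectProd_2-sect} and Lemma~\ref{lem:2sectionStrong}: both $[H'\circ H'']_2$ and $[H']_2\circ[H'']_2$ have vertex set $V(H')\times V(H'')$, so it suffices to show that the identity map on vertices is a graph isomorphism, i.e.\ that a pair of distinct vertices $(x_1,x_2)$, $(y_1,y_2)$ forms an edge on one side exactly when it does on the other. I would establish this by a chain of equivalences driven by the two types of edges in the definition of the lexicographic hypergraph product: the \emph{first-coordinate edges} $e$ with $p_1(e)\in E(H')$ and $|p_1(e)|=|e|$, and the \emph{fibre edges} $\{x\}\times e_2$ with $e_2\in E(H'')$.

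For the forward direction, suppose $\{(x_1,x_2),(y_1,y_2)\}\subseteq e$ for some $e\in E(H'\circ H'')$. If $e$ is a first-coordinate edge, then $|p_1(e)|=|e|$ forces all first coordinates in $e$ to be pairwise distinct, so $x_1\neq y_1$ and $\{x_1,y_1\}\subseteq p_1(e)\in E(H')$; hence $\{x_1,y_1\}\in E([H']_2)$, which already makes the two vertices adjacent in the lexicographic graph product. If $e=\{x\}\times e_2$ is a fibre edge, then $x_1=y_1=x$ and, since the vertices are distinct, $x_2\neq y_2$ with $\{x_2,y_2\}\subseteq e_2\in E(H'')$, so $\{x_2,y_2\}\in E([H'']_2)$; together with $x_1=y_1$ this again yields adjacency in $[H']_2\circ[H'']_2$.

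For the converse I would split according to the definition of the lexicographic graph product. If $\{x_1,y_1\}\in E([H']_2)$, pick a hyperedge $e_1\in E(H')$ containing both $x_1$ and $y_1$ (which exists by definition of the $2$-section) and build a first-coordinate edge over it by choosing one second coordinate for every vertex of $e_1$, assigning $x_1\mapsto x_2$, $y_1\mapsto y_2$, and arbitrary values elsewhere; this produces an $e$ with $p_1(e)=e_1$ and $|p_1(e)|=|e|$ that contains both $(x_1,x_2)$ and $(y_1,y_2)$. If instead $x_1=y_1$ and $\{x_2,y_2\}\in E([H'']_2)$, take a hyperedge $e_2\in E(H'')$ containing $x_2$ and $y_2$ and use the fibre edge $\{x_1\}\times e_2$. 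In either case the pair of vertices lies in a common hyperedge of $H'\circ H''$, so it is an edge of $[H'\circ H'']_2$.

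The only genuinely non-routine point is the lifting step in the converse: one must verify that an edge of $[H']_2$ witnessed by a hyperedge $e_1\in E(H')$ can always be realized inside a \emph{first-coordinate} hyperedge of the product, and in particular that the constraint $|p_1(e)|=|e|$ is compatible with prescribing the two required vertices. Since $x_1\neq y_1$ in this case, the two prescribed second coordinates sit over distinct first coordinates, so the injectivity condition $|p_1(e)|=|e|$ is automatically satisfiable and the construction goes through. Everything else is a direct unwinding of the definitions of the $2$-section and the lexicographic product, and no conformality hypothesis is required, since we are only comparing the two $2$-sections as graphs.
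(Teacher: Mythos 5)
Your proposal is correct and follows essentially the same route as the paper's proof: both sides have vertex set $V(H')\times V(H'')$, the identity map is checked to be an isomorphism, and the case analysis mirrors the two edge types of the lexicographic hypergraph product against the two adjacency conditions of the lexicographic graph product. The only difference is presentational --- the paper compresses the argument into a chain of equivalences and leaves the witness construction (your ``lifting step'') implicit, whereas you spell it out, which is a harmless and arguably welcome refinement.
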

\begin{proof}
  Let $p_1$ and $p_2$ denote $p_{H'}$ and $p_{H''}$, respectively.  By
  definition of the $2$-section and the lexicographic product, $[H]_2$ and
  $[H']_2 \boxtimes [H'']_2$ have the same vertex set.  Thus, we need to
  show that the identity mapping $V([H]_2)\rightarrow V([H']_2 \times
  [H'']_2)$ is an isomorphism.  We have: $\{x,y\}\in
  E([H]_2)\Leftrightarrow \exists\, e\in E(H): \{x,y\}\subseteq e,\ x\neq
  y$ $\Leftrightarrow $
  \begin{enumerate}
  \item $(p_1(x),p_1(y)) \subseteq p_1(e)\in E(H')$ such 
    that $p_1(x)\neq p_1(y)$ or
  \item $p_1(x) = p_1(y)$ and $(p_2(x),p_2(y)) \subseteq p_2(e)\in E(H'')$ 
    such that $p_2(x)\neq p_2(y)$.
  \end{enumerate}
  $\Leftrightarrow $
  \begin{enumerate}
  \item $(p_1(x),p_1(y)) \in E([H']_2)$ or
  \item $p_1(x) = p_1(y)$ and $(p_2(x),p_2(y)) \in E([H'']_2)$.
  \end{enumerate}
  $\Leftrightarrow $
  $\{x,y\}\in E([H']_2\circ [H'']_2)$.
\end{proof}

\begin{defn}[The Lexicographic Product of $L2$-sections]
  Let $\Gamma_i=(V_i,E'_i,\mathcal{L}_i)$ be the $L2$-section of the 
  hypergraphs $H_i=(V_i,E_i)$, $i=1,2$.
  The lexicographic product of the $L2$-sections 
  $\Gamma_1\circ \Gamma_2 = (V,E',\mathcal{L})$ consists of the 
  graph $(V,E')=(V_1,E'_1)\circ(V_2,E'_2)$ and a labeling function 
  $$\mathcal{L}=\mathcal{L}_1\circ \mathcal{L}_2: E'\rightarrow 
  \mathbb{P}(E(H_1\circ H_2))$$
  assigning to each edge $e' = \{(x_1,y_1),(x_2,y_2)\} \in E'$ a label
  \[\mathcal{L}\left(\{(x_1,y_1),(x_2,y_2)\}\right) = 
  \begin{cases}
    \left\{(e,f(e)) \mid  e\in \mathcal{L}_1(\{x_1,x_2\}), 
      f\in F(e,e')\right\},
    &\text{ if $x_1\neq x_2$}\\
    \left\{\{x_1\}\times e \mid e\in\mathcal{L}_2
      \left(\{y_1,y_2\}\right)\right\},
    &\text{ otherwise } 
  \end{cases}\]
  where $F(e,e') = \{f:e\rightarrow V_2 \mid \text f(x_1)=y_1, f(x_2)=y_2\}$
  and $(e,f(e))$ denotes the set $\{(x_1,f(x_1)),\ldots,(x_k,f(x_k))\}$ 
  for $e=\{x_1,\ldots,x_k\}$.
\end{defn}

A straightforward computation shows that
$[H\circ H']_{L_2}=[H]_{L2}\circ[H']_{L2}$ 
holds for all simple hypergraphs $H,H'$.

\begin{lem}[Distance Formula]
  Let $(x_1,x_2)$ and $(y_1,y_2)$ be two vertices of the lexicographic
  product $H = H_1 \circ H_2$. Then
  \begin{equation*}
    d_{H}((x_1,x_2),(y_1,y_2)) = 
    \begin{cases} d_{H_1}(x_1,y_1), & \text{if }x_1\neq y_1\\
      d_{H_2}(x_2,y_2), &\text{if }x_1 = y_1 \textrm{ and } 
      \deg_{H_1}(x_1) = 0 \\
      \min\{d_{H_2}(x_2,y_2), 2\}, &\text{if }x_1 = y_1 \textrm{ and }
      \deg_{H_1}(x_1) \neq 0\\
    \end{cases}
  \end{equation*}
  \label{lem:distLexi} 
\end{lem}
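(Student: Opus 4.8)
The plan is to follow exactly the route used for the Cartesian, normal, and strong distance formulas (Lemmas \ref{lem:distCart}, \ref{lem:distNormal}, and \ref{lem:distStrong}): transfer the problem to the $2$-section, invoke the corresponding graph-theoretic distance formula, and transfer back. Concretely, Lemma \ref{lem:distH2} gives $d_H = d_{[H]_2}$ for any hypergraph, while Lemma \ref{lem:2sectionLexi} gives $[H_1\circ H_2]_2 = [H_1]_2 \circ [H_2]_2$. Hence the quantity $d_H((x_1,x_2),(y_1,y_2))$ equals the distance between $(x_1,x_2)$ and $(y_1,y_2)$ in the \emph{graph} $[H_1]_2 \circ [H_2]_2$, and it remains only to quote the known distance formula for the lexicographic product of graphs from \cite{Hammack:2011a}.

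First I would record the graph formula: for graphs $G_1,G_2$ the distance $d_{G_1\circ G_2}((x_1,x_2),(y_1,y_2))$ equals $d_{G_1}(x_1,y_1)$ if $x_1\neq y_1$, equals $d_{G_2}(x_2,y_2)$ if $x_1=y_1$ and $x_1$ is isolated in $G_1$, and equals $\min\{d_{G_2}(x_2,y_2),2\}$ if $x_1=y_1$ and $x_1$ is non-isolated in $G_1$ (the value $2$ arising from the detour $(x_1,x_2)\to(z,\,\cdot\,)\to(x_1,y_2)$ through any neighbour $z$ of $x_1$). Applying this with $G_i=[H_i]_2$, and then using Lemma \ref{lem:distH2} once more, now for each factor separately, to rewrite $d_{[H_1]_2}(x_1,y_1)=d_{H_1}(x_1,y_1)$ and $d_{[H_2]_2}(x_2,y_2)=d_{H_2}(x_2,y_2)$, reproduces exactly the three cases in the statement --- provided the isolation condition ``$x_1$ isolated in $[H_1]_2$'' can be replaced by ``$\deg_{H_1}(x_1)=0$''.

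That replacement is the one point requiring an argument, and it is where loops must be watched. Since a walk never repeats a vertex consecutively, loops are never traversed, so deleting all loops leaves each of $d_H$, $d_{H_1}$, $d_{H_2}$ unchanged; I would therefore pass to the loopless hypergraphs $\mathscr{N}H_i$ at the outset, consistent with the convention that $2$-sections are taken of loopless hypergraphs. For a loopless hypergraph every edge has at least two vertices, so $x_1$ lies in some edge of $H_1$ if and only if $x_1$ has a neighbour in $[H_1]_2$; that is, $\deg_{H_1}(x_1)=0$ if and only if $x_1$ is isolated in $[H_1]_2$. This is the only genuine obstacle, and it is a mild one; the remaining steps are the now-familiar mechanical transfer through the $2$-section. (If loops on $H_1$ are retained, then $\deg_{H_1}(x_1)$ should be read as the number of non-loop edges at $x_1$, since a bare loop at $x_1$ affords no detour in $H_1\circ H_2$.)
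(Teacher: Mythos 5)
Your proposal is correct and takes essentially the same route as the paper, whose entire proof is the one-line combination of Lemma \ref{lem:distH2}, Lemma \ref{lem:2sectionLexi}, and the distance formula for the lexicographic graph product (Proposition 5.12 in \cite{Hammack:2011a}). The additional care you take in justifying that ``$\deg_{H_1}(x_1)=0$'' may replace ``$x_1$ isolated in $[H_1]_2$'' (and in dealing with loops) fills in a detail the paper leaves implicit, but it is the same argument.
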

\begin{proof}
  Combining the results of Lemma \ref{lem:distH2}, Lemma
  \ref{lem:2sectionLexi} and the Distance
  Formula for the lexicographic graph product 
  (Proposition 5.12 in \cite{Hammack:2011a})
  yields to the result.
\end{proof}

\subsection{Prime Factor Decomposition}

Similar conditions as for graphs are known for the uniqueness of the PFD of
a hypergraph w.r.t.\ the lexicographic product:

\begin{lem}[\cite{GasztImrich71:lexico}]
  Let $H$ be a hypergraph without isolated vertices and $m, n$ natural
  numbers. Then $\overline{K}_n\circ H+\overline{K}_m$ is prime with 
  respect to the lexicographic product if and only if 
  $H\circ \overline{K}_n + \overline{K}_m$ is prime.
  
  If $H$ has no trivial join-components, then $K_n \circ H \oplus K_m$ is
  prime with respect to the lexicographic product if and only if $H\circ
  K_n \oplus K_m$ is prime.
\end{lem}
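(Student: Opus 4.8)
The plan is to prove each equivalence by exhibiting a bijection between the nontrivial lexicographic factorizations of the two hypergraphs, exploiting the rigid layer description of $\circ$ that one reads off from its edge set: a hypergraph $G$ admits a factorization $G=A\circ B$ with $B$ nontrivial precisely when $V(G)$ partitions into pairwise isomorphic induced sub-hypergraphs (the $B$-layers), indexed by $V(A)$, so that every edge of $G$ either lies inside a single layer and is a copy of a $B$-edge, or meets several layers in exactly one vertex each and projects onto an $A$-edge, with \emph{all} such transversals occurring. I would first record that $\overline{K}_n\circ H = nH$ (by right-distributivity over the disjoint union, since $\overline{K}_n=n\,\overline{K}_1$), whereas $H\circ\overline{K}_n$ is the $n$-fold inflation of $H$; thus the two sides of the first statement differ exactly by the presence of the non-constant transversal edges, and the whole difficulty is to show that these extra edges neither create nor destroy prime factorizations once the padding is present.

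The role of the padding is to rigidify the correspondence. Since $H$ has no isolated vertices, the summand $\overline{K}_m$ contributes the \emph{only} isolated vertices of $\overline{K}_n\circ H+\overline{K}_m$ and of $H\circ\overline{K}_n+\overline{K}_m$; as the isolated vertices of a product $A\circ B$ are exactly the pairs consisting of an isolated vertex of $A$ and an isolated vertex of $B$, this set is canonically located in any factorization and forces the $\overline{K}_m$-block to respect the layer partition. I would then take a nontrivial factorization $A\circ B$ of one side and transport it to the other by inflating or deflating the $\overline{K}_n$ coordinate: the identities $\overline{K}_n\circ(X+Y)=\overline{K}_n\circ X+\overline{K}_n\circ Y$ and $(X+Y)\circ Z=X\circ Z+Y\circ Z$, together with associativity of $\circ$, let one absorb or extrude the factor $\overline{K}_n$ from a given layer decomposition, producing a factorization of the partner hypergraph of the same (non)triviality. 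Carrying this out in both directions yields the first equivalence.

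For the second statement I would run the dual argument, with the join $\oplus$ in place of the disjoint union $+$, the complete graphs $K_n,K_m$ in place of $\overline{K}_n,\overline{K}_m$, and ``no trivial join-components'' in place of ``no isolated vertices'': a trivial (one-vertex) join-component plays exactly the role an isolated vertex played above, and $K_m$ supplies the canonical block. In the subclass of graphs this duality is literal complementation, since $\overline{A\circ B}=\overline{A}\circ\overline{B}$, $\overline{A+B}=\overline{A}\oplus\overline{B}$ and $\overline{\overline{K}_n}=K_n$, so that complementation preserves primality and carries the first statement onto the second with $H$ replaced by $\overline{H}$, under which ``no isolated vertices'' turns into ``no trivial join-components''. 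For genuine hypergraphs, where no complement is available on edges of size $>2$, the same bookkeeping must instead be performed directly on the join decomposition.

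The main obstacle is the converse transport in the second paragraph: a factorization of $H\circ\overline{K}_n+\overline{K}_m$ might a priori split the $n$ inflated copies of a vertex across several $B$-layers, using the transversal edges in a way that has no counterpart for the disjoint copies in $\overline{K}_n\circ H+\overline{K}_m$. Ruling this out, that is, showing that any admissible layer partition must align with the inflation classes so that the factor $\overline{K}_n$ can be extruded cleanly, is the technical heart. The canonical position of the $\overline{K}_m$-block and the commutativity theorem stated above, which isolates precisely the degenerate cases in which the two factor orders already coincide, are the two levers I would use to pin the layers down.
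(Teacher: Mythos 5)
Your preliminary observations are sound: $\overline{K}_n\circ H=nH$ by right-distributivity, the two padded hypergraphs differ exactly by the non-constant transversal edges, the isolated vertices of $A\circ B$ are exactly the pairs of isolated vertices of $A$ and of $B$, and (for graphs) complementation carries the first statement onto the second. But what you have written is a plan, not a proof, and you say so yourself: the step you call ``the technical heart'' --- showing that an arbitrary nontrivial factorization $A\circ B$ of $H\circ\overline{K}_n+\overline{K}_m$ must align with the inflation classes so that $\overline{K}_n$ can be extruded, and conversely that an arbitrary factorization of $nH+\overline{K}_m$ can be transported across --- is never carried out. The distributivity and associativity identities you invoke only rearrange factorizations that are already in the special forms $\overline{K}_n\circ X$ or $X\circ\overline{K}_n$; they give no handle on an \emph{arbitrary} factorization $A\circ B$, which is precisely what primality quantifies over. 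This is a genuine gap, not a presentational one.

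Moreover, the two ``levers'' you propose for closing it are doubtful. The commutativity theorem quoted in the survey applies only to \emph{non-trivial connected} finite factors, so it says nothing about factorizations involving $\overline{K}_n$ (disconnected for $n\ge 2$) or about the disconnected hypergraphs $nH+\overline{K}_m$ under discussion; and the canonical location of the $\overline{K}_m$-block constrains only where the isolated vertices sit, not how the transversal edges of $H\circ\overline{K}_n$ distribute the $n$ inflated copies of a vertex among the $B$-layers, which is the very phenomenon you identified as the obstacle. For the second statement you concede that complementation is unavailable for hypergraphs and that ``the same bookkeeping must instead be performed directly'', again without performing it. (For what it is worth, the survey itself gives no proof of this lemma --- it is quoted from Gaszt and Imrich --- so your write-up cannot be measured against an in-paper argument; judged on its own terms, it is an outline whose decisive steps are missing.)
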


Let $H=P\circ \overline{K}_q$ be a PFD of $H$ with $P=\overline{K}_q\circ
A+\overline{K}_m$ such that $A$ has no non-trivial components.  Then $H=
\overline{K}_q\circ R$ with $R=A\circ \overline{K}_q+\overline{K}_m$ is
also a PFD of $H$ that arises by \emph{transposition} of $\overline{K}_q$
from $H=P\circ \overline{K}_q$.  The transposition of $K_q$ is defined
analoguously.
 
\begin{thm}[\cite{GasztImrich71:lexico}]
  \label{thm:UPFDLex}
  Any prime factor decomposition of a graph can be transformed into any
  other one by transpositions of totally disconnected or complete factors.
\end{thm}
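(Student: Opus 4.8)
The plan is to argue by induction on the number of prime factors, using the complementation symmetry of the lexicographic product to halve the casework. The first observation I would record is the identity $\overline{G \circ H} = \overline{G} \circ \overline{H}$, which follows directly from the adjacency rule: two vertices $(g,h),(g',h')$ are \emph{non}-adjacent in $G \circ H$ precisely when either $g \neq g'$ and $g \sim g'$ in $\overline{G}$, or $g = g'$ and $h \sim h'$ in $\overline{H}$. Since complementation interchanges $\overline{K}_n$ with $K_n$ and the disjoint union $+$ with the join $\oplus$, it carries any PFD of $G$ into a PFD of $\overline{G}$ and turns $\overline{K}$-transpositions into $K$-transpositions. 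This exactly matches the complement-dual form of the two cases of the preceding Lemma (the ``no non-trivial components'' hypothesis dualizing to ``no trivial join-components''), so it suffices to establish the theorem in the totally-disconnected/disjoint-union setting; the complete/join case then follows for free.

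Next I would isolate which factors are forced. Given two PFDs $G = A_1 \circ \cdots \circ A_k = B_1 \circ \cdots \circ B_l$, the innermost factor appears as a layer $H_2^w \cong A_k$ (resp.\ $B_l$), which by the layer description is an induced subgraph that is a module of $G$, while the outermost factor is recovered as the quotient of $G$ by the corresponding modular partition. The key structural claim is \emph{rigidity}: a prime factor that is neither totally disconnected nor complete occupies a determined position and has a determined isomorphism type, because it corresponds to a canonical modular congruence of $G$ that cannot be created, destroyed, split, or merged by re-bracketing, nor absorbed into a neighbouring $\overline{K}$- or $K$-layer. Only the degenerate factors $\overline{K}_q$ and $K_q$ can slide past their neighbours, and the preceding Lemma is precisely the statement that such a slide preserves primeness of the surrounding configuration $\overline{K}_q \circ A + \overline{K}_m$ and its join-dual; the transposition defined just before the theorem is the resulting legal move.

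Finally I would run the induction: peel off the outermost place where the two decompositions disagree. If the two factors there agree, I cancel them — using that within a single product $A_1 \circ X$ all layers are isomorphic copies of $X$, so the canonical $A_1$-partition determines the cofactor up to isomorphism — and invoke the inductive hypothesis on the smaller graph. If they disagree, rigidity forces the discrepancy to involve a $\overline{K}_q$ or $K_q$ factor, and a single transposition (licensed by the preceding Lemma together with the left-distributivity laws $\overline{K}_n\circ(H_1+H_2)=\overline{K}_n\circ H_1+\overline{K}_n\circ H_2$ and $K_n\circ(H_1\oplus H_2)=K_n\circ H_1\oplus K_n\circ H_2$) brings one decomposition one step closer to the other. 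Termination is a descent argument on a well-founded measure of disagreement, e.g.\ the total displacement of the degenerate factors. I expect the main obstacle to be exactly the rigidity step: controlling how a $\overline{K}_q$ (or $K_q$) factor may legitimately be relocated across a summand $+\overline{K}_m$ without perturbing the non-degenerate factors, since this is where the ambiguities $\overline{K}_n\circ\overline{K}_m=\overline{K}_{nm}$ and $K_n\circ K_m=K_{nm}$, and the merely one-sided distributivity of $\circ$, all interact and must be shown to contribute \emph{only} transpositions.
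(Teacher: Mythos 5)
You should know at the outset that the survey contains no proof of this theorem to compare against: it is quoted verbatim from Gaszt and Imrich \cite{GasztImrich71:lexico}, so your attempt can only be judged on its own merits. On those merits, your opening move is correct and well chosen: the identity $\overline{G\circ H}=\overline{G}\circ\overline{H}$ holds for graphs exactly by the non-adjacency computation you give, it exchanges $\overline{K}_n\leftrightarrow K_n$ and disjoint union $\leftrightarrow$ join, it preserves primeness, and it therefore reduces the complete-factor case to the totally-disconnected case. (Note this trick is intrinsically graph-bound --- the paper itself remarks that no suitable complementation exists for hypergraphs --- which is consistent with the theorem being stated only for graphs while the surrounding Lemma and Corollary concern hypergraphs.) The genuine gap is everything after that. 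Your ``rigidity'' claim --- that a prime factor which is neither complete nor totally disconnected corresponds to a canonical modular congruence that cannot be created, destroyed, split, merged, or absorbed by re-factorization --- \emph{is} the theorem; restating it in the vocabulary of modules does not discharge it, and you concede as much when you call it the main obstacle. A proof would have to show that for two decompositions $G=A_1\circ\cdots\circ A_k=B_1\circ\cdots\circ B_l$ the induced systems of modules can only disagree in the presence of degenerate factors, and that every such disagreement is generated by the single identity $(\overline{K}_q\circ A+\overline{K}_m)\circ\overline{K}_q=\overline{K}_q\circ(A\circ\overline{K}_q+\overline{K}_m)$ underlying the transposition defined before the theorem. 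Nothing in your write-up attacks this.

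A second, quieter gap sits in your induction step. Cancelling equal factors requires the implication $A_1\circ X\cong A_1\circ Y\Rightarrow X\cong Y$ (and its right-hand analogue), which you justify by saying the ``canonical'' $A_1$-partition determines the cofactor. Cancellation laws for the lexicographic product are themselves nontrivial theorems, and precisely in the degenerate cases $A_1\cong\overline{K}_n$ or $A_1\cong K_n$ the layer partition is \emph{not} canonical --- $\overline{K}_4$ admits three different $\overline{K}_2$-partitions, and this non-uniqueness is the very phenomenon the theorem regulates. So your induction silently invokes two unproven lemmas, rigidity and cancellation, each carrying essentially the full difficulty of the result, and the termination measure (``total displacement of degenerate factors'') is never shown to decrease under the moves you are allowed. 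The overall architecture --- complementation duality, modular decomposition, induction with transposition moves --- is the right shape and matches how modern treatments such as \cite{Hammack:2011a} organize this material, but as submitted it is a roadmap, not a proof.
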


\begin{cor}[\cite{GasztImrich71:lexico}]
  \label{cor:UPFDLex}
  All prime factor decompositions of a hypergraph $H$ with respect to the
  lexicographic product have the same number of factors.
  
  If there is a prime factorization of $H$ without complete or totally
  disconnected graphs as factors, then $H$ has a unique prime factor
  decomposition.
  
  If there is a prime factorization of $H$ in which only complete graphs as
  factors have trivial join-components and only totally disconnected
  factors have trivial components, then $H$ has a unique prime factor
  decomposition.
\end{cor}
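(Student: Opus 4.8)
The plan is to deduce all three parts from Theorem~\ref{thm:UPFDLex}, which asserts that any two prime factor decompositions of $H$ are connected by a finite chain of transpositions, each moving a totally disconnected factor $\overline{K}_q$ or a complete factor $K_q$. Two features of the transposition move drive the whole argument: first, it only re-brackets existing factors via the distributive laws and hence preserves the multiset---in particular the number---of prime factors; second, it can be performed only when a factor $\overline{K}_q$ (resp.\ $K_q$) is actually present and the adjacent block carries the required trivial disjoint-union component $\overline{K}_m$ (resp.\ trivial join-component $K_m$).

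The first assertion follows at once from the first feature: each transposition preserves the factor count, so by Theorem~\ref{thm:UPFDLex} all decompositions, being mutually reachable, have the same number of factors. Here I would simply read off from the defining equation of the transposition---rewriting $P\circ\overline{K}_q$ with $P=\overline{K}_q\circ A+\overline{K}_m$ as $\overline{K}_q\circ R$ with $R=A\circ\overline{K}_q+\overline{K}_m$---that no factor is created or destroyed, using right-distributivity of $\circ$ over the disjoint union together with the commuting identity $\overline{K}_m\circ\overline{K}_q=\overline{K}_q\circ\overline{K}_m$ (and analogously with $\oplus$ and $K_q$ in the complete case).

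Both uniqueness statements I would prove contrapositively, showing that under the stated hypotheses no transposition can be applied to the given decomposition $D$, so that by Theorem~\ref{thm:UPFDLex} $D$ is the only PFD. For the second assertion this is immediate: a transposition needs a factor equal to some $\overline{K}_q$ or $K_q$, and by hypothesis $D$ contains none. For the third assertion I would argue that a $\overline{K}_q$-transposition is possible only when the block preceding the transposed factor splits off a trivial component $\overline{K}_m$ in the disjoint-union sense, and a $K_q$-transposition only when a trivial join-component $K_m$ splits off; the hypotheses say exactly that such trivial components are contributed solely by totally disconnected, respectively complete, factors. The task is then to check, using left-distributivity $\overline{K}_n\circ(H_1+H_2)=\overline{K}_n\circ H_1+\overline{K}_n\circ H_2$ and its join counterpart, that this forces the structural precondition of the transposition to fail on $D$.

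The main obstacle is precisely this last verification for the third assertion: one must translate the two hypotheses on trivial join- and disjoint-union components into the absence of the decomposition $P=\overline{K}_q\circ A+\overline{K}_m$ (and its join dual) for the relevant leading block $P=H_1\circ\cdots\circ H_{k-1}$. This requires a careful propagation analysis of how trivial components pass through the lexicographic product, since a trivial disjoint-union component of $P$ can arise only when its first factor is totally disconnected, and dually for join-components and complete first factors. Once this bookkeeping is pinned down, the remaining arguments are routine applications of the distributive identities and the reachability furnished by Theorem~\ref{thm:UPFDLex}.
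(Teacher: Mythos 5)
Since the survey gives no proof of Corollary~\ref{cor:UPFDLex} (it is quoted from \cite{GasztImrich71:lexico}), your proposal must be judged as a derivation from Theorem~\ref{thm:UPFDLex}; that is indeed the intended route, and your handling of the first two assertions is essentially correct: a transposition replaces two factors by two factors, so the count is invariant, and a decomposition containing no complete and no totally disconnected factor admits no transposition at all, so the chain supplied by the theorem is empty. However, your opening premise --- that a transposition ``preserves the multiset of prime factors'' because it ``only re-brackets existing factors'' --- is false, and falsely so in a way that matters. A transposition replaces the prime factor $P=\overline{K}_q\circ A+\overline{K}_m$ by $R=A\circ\overline{K}_q+\overline{K}_m$, and since $\circ$ is not commutative these are in general non-isomorphic: with $q=2$, $A=K_2$, $m=1$ one gets $P=K_2+K_2+K_1$ but $R=K_{2,2}+K_1$, and the identity $(K_2+K_2+K_1)\circ\overline{K}_2=\overline{K}_2\circ(K_{2,2}+K_1)$ exhibits two PFDs of one graph with \emph{different} factor multisets. (The survey's clause ``$A$ has no non-trivial components'' must here be read as ``no \emph{trivial} components'', i.e.\ no isolated vertices, in line with the hypothesis of the lemma preceding Theorem~\ref{thm:UPFDLex}; taken literally, as your appeal to $\overline{K}_m\circ\overline{K}_q=\overline{K}_q\circ\overline{K}_m$ suggests you did, every transposition would be a trivial swap and Theorem~\ref{thm:UPFDLex} itself would be false.) This non-preservation of factors is precisely why lexicographic PFDs fail to be unique; if your multiset claim were true, the second and third assertions and all their hypotheses would be vacuous.

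The genuine gap is the third assertion, which you yourself flag as ``the main obstacle'' and do not carry out, and the strategy you propose for it cannot succeed as stated: under that hypothesis transpositions \emph{can} be applied, because the decomposition may contain totally disconnected (or complete) prime factors, and a transposition can swap an adjacent pair $\overline{K}_p$, $\overline{K}_q$ --- this is exactly how the two factorizations $\overline{K}_2\circ\overline{K}_3$ and $\overline{K}_3\circ\overline{K}_2$ of $\overline{K}_6$ are linked, as the theorem requires them to be. So one cannot argue that the chain of transpositions starting at $D$ is empty. What must be shown instead is that every transposition applicable to a decomposition satisfying the hypothesis is degenerate: if $P=\overline{K}_q\circ A+\overline{K}_m$ with $m\geq 1$, then $P$ has trivial components, so the hypothesis forces $P$ to be totally disconnected, whence $A$ is as well and $R\cong P$; dually, a $K_q$-transposition forces a complete $P$ with $R\cong P$. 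Thus each applicable transposition replaces factors by isomorphic ones and preserves the hypothesis, and an induction along the whole chain gives uniqueness of the multiset of prime factors --- note this is the sense in which ``unique'' must be read, since the order of adjacent totally disconnected or adjacent complete factors can still be permuted. Finally, your auxiliary propagation claim is incorrect: a lexicographic product has a trivial component if and only if \emph{every} factor has an isolated vertex (in $A\circ B$ the vertex $(a,b)$ is isolated precisely when $a$ and $b$ both are), not ``only when its first factor is totally disconnected'', and the dual statement for trivial join-components requires universal vertices in every factor.
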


\subsection{Invariants}

\begin{defn}[Wreath Product of Automorphism Groups
  \cite{DoerflerImrich70:Xsum, GasztImrich71:lexico}]
  Let $H=(V,E)$ and $H'(V',E')$ be hypergraphs. The \emph{wreath product}
  of their automorphism groups is defined as
  \begin{align*}
    Aut(H)\circ Aut(H'):=&\{\varphi\in Aut(H\circ H')\mid
    \exists\,\psi\in Aut(H),\; \forall\, v\in V\,\exists\,\psi'_v\in Aut(H'),\\
    &\text{ s.t. }\varphi((v,v'))=(\psi(v),\psi'_v(v'))\,\forall\,(v,v')\in
    V\times V'\}
  \end{align*}
\end{defn}

Hence $Aut(H)\circ Aut (H')$ forms a subgroup of $Aut(H\circ H')$.
The elements of $Aut(H)\circ Aut (H')$ map $H'$-layer onto $H'$-layer
and are therefore often called \emph{natural} automorphisms.
In \cite{DoerflerImrich70:Xsum} and \cite{Hahn81:lexico} it is shown,
under which conditions holds $Aut(H)\circ Aut(H')=Aut(H\circ H')$.

\begin{thm}[Double Covers \cite{Doerfler78:DoubleCovers}]
  \label{thm:DoubleCovLex}
  If the hypergraphs $H_1, H_2$ are double cover hypergraphs then so is
  their lexicographic product $H_1\circ H_2$.
\end{thm}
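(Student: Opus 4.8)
The plan is to use the double-cover structure of the \emph{first} factor only, since in a lexicographic product it is the first coordinate that controls the global edge structure. Concretely, let $\pi_1\colon H_1\to Q_1$ be a covering projection realising $H_1$ as a double cover of its quotient $Q_1$, and define $\rho\colon H_1\circ H_2\to Q_1\circ H_2$ by $\rho(x,y)=(\pi_1(x),y)$. I claim that $\rho$ is a covering projection with $k=2$, so that $H_1\circ H_2$ is a double cover of $Q_1\circ H_2$.

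First I would record the preliminary fact that a double covering projection is \emph{bijective on every edge}. Indeed, if $\bar e_1=\pi_1(e_1)\in E(Q_1)$ for some $e_1\in E(H_1)$, then by the two defining conditions the set $\pi_{1,\mathcal E}^{-1}(\bar e_1)$ consists of exactly two disjoint edges $e_1,e_1'$, while each vertex $a\in\bar e_1$ has a two-element fibre $\pi_1^{-1}(a)$. Since both $e_1$ and $e_1'$ surject onto $\bar e_1$ and are disjoint, each must meet every fibre $\pi_1^{-1}(a)$ in exactly one vertex; hence $\pi_1$ restricts to a bijection $e_1\to\bar e_1$ and $|\pi_1(e_1)|=|e_1|$. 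This is exactly what makes $\rho$ a homomorphism: an edge $e$ of $H_1\circ H_2$ of the first kind (with $p_1(e)\in E(H_1)$ and $|p_1(e)|=|e|$) has first projection $\pi_1(p_1(e))\in E(Q_1)$ whose coordinates are still pairwise distinct by edge-injectivity, so $\rho(e)$ is again an edge of the first kind; an edge $\{x\}\times e_2$ of the second kind maps to $\{\pi_1(x)\}\times e_2$, again of the second kind. Surjectivity of $\rho$ on vertices and on edges is immediate from surjectivity of $\pi_1$.

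Next I would verify the two covering conditions for $\rho$. The vertex fibre over $(a,y)$ is $\pi_1^{-1}(a)\times\{y\}$, which has exactly two elements. For the edge fibres, edge-injectivity of $\pi_1$ guarantees that the two kinds of edges cannot be mixed by $\rho$: an edge of the first kind of size at least $2$ has pairwise distinct first coordinates and so cannot map onto an edge of the second kind, whose first coordinate is constant, and conversely. An edge $\bar e$ of the first kind of $Q_1\circ H_2$ with $p_1(\bar e)=\bar e_1\in E(Q_1)$ has as preimages precisely the edges obtained by choosing one of the two disjoint edges $e_1,e_1'\in\pi_{1,\mathcal E}^{-1}(\bar e_1)$ and lifting each coordinate uniquely; the fixed second coordinates contribute no branching, so there are exactly two preimages, disjoint because $e_1$ and $e_1'$ are. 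An edge $\{a\}\times e_2$ of the second kind has exactly the two disjoint preimages $\{x\}\times e_2$ and $\{x'\}\times e_2$ with $\pi_1^{-1}(a)=\{x,x'\}$. Thus both conditions hold with $k=2$ and $H_1\circ H_2$ is a double cover.

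The main obstacle is the edge-fibre bookkeeping for edges of the first kind: one must argue that the second coordinates contribute no extra multiplicity, which holds precisely because $\rho$ leaves them untouched. This is also why the first-factor cover is the correct choice and a second-factor cover would fail—lifting the second coordinate through a cover $\pi_2$ would multiply the number of preimages of a first-kind edge $e$ by $2^{|e|}$. A minor wrinkle is the degenerate case of loops or single-vertex edges, where the two kinds of edges can coincide; these are handled directly and do not affect the counting. I would also note that only the double-cover structure of $H_1$ is actually used, the hypothesis on $H_2$ being superfluous for this construction.
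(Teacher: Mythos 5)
The survey gives no proof of this theorem --- it is stated purely as a citation of D\"orfler (1978) --- so there is nothing in the paper itself to compare your argument against; judged on its own, your proof is correct. The projection $\rho(x,y)=(\pi_1(x),y)$ onto $Q_1\circ H_2$ does what you claim: your preliminary lemma (a covering projection is injective on each edge, since the $k$ pairwise disjoint preimage edges must each meet every $k$-element vertex fibre exactly once) is precisely what makes $\rho$ a homomorphism on first-kind edges and what drives the edge-fibre count, and your diagnosis of why a second-factor cover would fail ($2^{|e|}$ overlapping preimages of a first-kind edge) is accurate. The one point you treat briskly --- singleton edges --- does close with the tools you have already set up: for $\bar e=\{(a,y)\}$ with $\pi_1^{-1}(a)=\{x,x'\}$, any preimage under $\rho$ is a nonempty subset of $\{(x,y),(x',y)\}$; the two-element candidate is never an edge of $H_1\circ H_2$, because $\{x,x'\}\in E(H_1)$ would contradict edge-injectivity of $\pi_1$, while both singletons are edges in every case (either $\{y\}\in E(H_2)$, so they are second-kind edges, or else $\{a\}$ must be a loop of $Q_1$ whose edge fibre under $\pi_1$ is forced to consist of the two loops $\{x\},\{x'\}$ of $H_1$), so every edge fibre of $\rho$ has exactly two disjoint members. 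Finally, as you note, the hypothesis on $H_2$ is never used, so you in fact prove a strengthening of the statement as formulated in the survey: $H_1$ being a double cover of $Q_1$ already makes $H_1\circ H_2$ a double cover of $Q_1\circ H_2$ (whether D\"orfler's original result concerns a more specific construction relating both quotients cannot be checked from the survey, but for the statement as printed your argument suffices).
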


\begin{thm}[Hamiltonicity I \cite{Sonntag91:HamTraceLexico}]
  \label{thm:HamILex}
  Let $H_1=(V_1,E_1)$ and $H_2=(V_2,E_2)$, $|V_2|\geq 2$ be two
  hypergraphs.  Then their lexicographic product $H_1\circ H_2$ contains a
  Hamiltonian path if and only if there exists a walk
  $W=(v_0,e_1,v_1,\ldots,e_{k-1},v_k)$ in $H_1$ with $V(W)=V_1$ such that
  $\wp(H_2)\leq \min_{v\in V_1}\left\{|j|\mid v_j=v,
    j\in\{0,\ldots,k\}\right\}$ and $\max_{v\in V_1}\left\{|j|\mid v_j=v,
    j\in\{0,\ldots,k\}\right\}\leq |V_2|$.
\end{thm}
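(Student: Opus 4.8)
The plan is to exploit the layer structure of the lexicographic product. By the facts established in this subsection, the $H_2$-layer $H_2^v$ over a first coordinate $v$ is isomorphic to $H_2$, and within a layer adjacency comes only from the copies $\{v\}\times e_2$ of edges $e_2\in E_2$ (any product edge $e$ with $|p_1(e)|=|e|$ has pairwise distinct first coordinates, so it cannot join two vertices of the same layer). Conversely, whenever $v,v'$ are adjacent in $H_1$, say $\{v,v'\}\subseteq e_1\in E_1$, \emph{every} vertex of layer $v$ is adjacent in $H_1\circ H_2$ to every vertex of layer $v'$: given $(v,s),(v',t)$ one builds a product edge $e$ with $p_1(e)=e_1$ and $|p_1(e)|=|e|$ containing both by choosing one representative per coordinate of $e_1$. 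Thus consecutive layers along a walk in $H_1$ are completely joined. Throughout I write, for a walk $W=(v_0,e_1,\dots,e_{k-1},v_k)$, $m_v:=|\{j\in\{0,\dots,k\}\mid v_j=v\}|$ for the number of occurrences of $v$, so the hypothesis reads $\wp(H_2)\le m_v\le |V_2|$ for every $v\in V_1$.

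For the direction ($\Rightarrow$) I would start from a Hamiltonian path $P$ of $H_1\circ H_2$ and apply $p_1$. Deleting immediate repetitions from the sequence of first coordinates yields a walk $W$ in $H_1$: every step of $P$ that changes the first coordinate uses a product edge whose $p_1$-image lies in $E_1$, so consecutive distinct first coordinates are adjacent in $H_1$, and $W$ meets all of $V_1$ since $P$ is Hamiltonian. The maximal runs of $P$ inside a fixed layer $v$ are joined only by within-layer edges, hence each run is a path of $H_2^v\cong H_2$; these runs are pairwise vertex- and edge-disjoint (as $P$ repeats neither vertices nor edges) and partition $V(H_2)$. Their number is exactly $m_v$, so $\wp(H_2)\le m_v$ by minimality of the path-partition number, while $m_v\le|V_2|$ because each run is non-empty.

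For ($\Leftarrow$) I would reverse this bookkeeping. Fix $v\in V_1$, take a minimum path partition of $H_2$ into $\wp(H_2)$ paths and, since $\wp(H_2)\le m_v\le|V_2|$, repeatedly break paths (each break raises the piece count by one, reaching all singletons at $|V_2|$ pieces) to obtain exactly $m_v$ pairwise disjoint paths $R^v_1,\dots,R^v_{m_v}$ covering $V(H_2)$; assign $R^v_i$ to the $i$-th occurrence of $v$ in $W$. Now traverse $W$: at the occurrence at position $j$ with $v_j=v$ walk the assigned sub-path inside $H_2^v$, and at each step $v_j\to v_{j+1}$ jump from the last vertex of the current sub-path to the first vertex of the next one. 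Complete connectivity of consecutive layers makes every such jump an edge of $H_1\circ H_2$, so no endpoint matching is needed; within each layer the partition $\{R^v_i\}$ covers all its vertices exactly once, and since $W$ meets every $v\in V_1$ the assembled sequence visits each vertex of $V_1\times V_2$ once, i.e.\ it is Hamiltonian.

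The projection and run-counting are routine; the delicate point is the gluing in ($\Leftarrow$), where one must verify the assembled sequence is a genuine \emph{path} (all vertices \emph{and} all edges distinct), not merely a spanning subgraph. Vertex-distinctness is immediate from the layerwise partitions, but edge-distinctness requires that the between-layer edges chosen at the $k$ transitions be pairwise different and different from every within-layer edge used. This is arranged because the transitions land on vertices at different walk positions, so the product edges built for them are determined by distinct chosen representatives and may be taken distinct. I expect the main obstacle to be precisely this case-free selection of transition edges, in particular when $W$ revisits an underlying $H_1$-edge or uses an $H_1$-edge of size greater than two.
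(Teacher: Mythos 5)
The survey itself contains no proof of this theorem---it is quoted verbatim from Sonntag \cite{Sonntag91:HamTraceLexico}---so your proposal has to be judged on its own terms rather than against an in-paper argument. Your reading of the (rather garbled) conditions as $\wp(H_2)\le m_v\le |V_2|$ for every $v\in V_1$ is the right one, your two structural facts about layers (within-layer adjacency comes only from edges $\{v\}\times e_2$; layers over $H_1$-adjacent vertices are completely joined) are correct, and the whole ($\Rightarrow$) direction is sound: collapsing repetitions of first coordinates gives a walk, the maximal runs in layer $v$ are pairwise vertex- and edge-disjoint paths of $H_2^v\cong H_2$, their number is $m_v$, whence $\wp(H_2)\le m_v\le|V_2|$.

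The gap is exactly where you suspect it, and it is a genuine one, not bookkeeping. In ($\Leftarrow$), the transition edge for step $i$ is \emph{not} determined by the two vertices $a_{i-1},b_i$ it must contain: whenever $|e_i|\ge 3$ it also contains a freely chosen representative in every layer of $e_i\setminus\{v_{i-1},v_i\}$, and the fact that the forced vertices of different transitions are distinct (your ``different walk positions'') does nothing by itself to stop two transitions that reuse the same $H_1$-edge from ending up with the same product edge. What is needed is a \emph{simultaneous} choice: for each transition $i$, an edge from the candidate set $C_i$ of all transversals of $e_i$ through the pair $\{a_{i-1},b_i\}$, with all chosen edges pairwise distinct --- a system-of-distinct-representatives problem. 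Hall's condition for it is not free: $|C_i|=|V_2|^{|e_i|-2}$, which for $|e_i|=3$, $|V_2|=2$ equals $2$, while the walk may traverse that same $H_1$-edge up to five times; and for an \emph{abstract} family of distinct forced pairs the SDR can fail outright (with $|e_i|=3$, $|V_2|=2$ the transversals are the vertices of the $3$-cube and the candidate sets are its edges; seven pairs forming two faces sharing a cube-edge span only six transversals, so no SDR exists). The construction is saved only by global structure: the number of transitions reusing a fixed $H_1$-edge $e$ is at most $\sum_{u\in e}m_u-1\le |e|\,|V_2|-1$, and combining this with the geometry of the candidate sets (codimension-two subcubes of $V_2^{|e|}$, pairwise disjoint when their pairs occupy the same two layers) one can verify Hall's condition, with separate checks for the tight cases $|V_2|=2$, $|e|\in\{3,4,5\}$. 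Note also that a naive greedy repair (fix representatives transition by transition) can paint itself into a corner, since legal earlier choices can exhaust a later transition's two candidates. So: correct architecture and a complete ($\Rightarrow$), but the heart of ($\Leftarrow$) --- producing $k$ pairwise distinct non-Cartesian edges realizing the prescribed consecutive pairs --- is asserted rather than proved, and proving it requires a real argument of the above kind.
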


\begin{thm}[Hamiltonicity II \cite{Sonntag91:HamTraceLexico}]
  \label{thm:HamIILex}
  Let $H_1=(V_1,E_1)$ and $H_2=(V_2,E_2)$ be two non-trivial hypergraphs.
  Then their lexicographic product $H_1\circ H_2$ contains a Hamiltonian
  cycle if and only if there exists a walk
  $W=(v_0,e_1,v_1,\ldots,e_{k-1},v_k)$ in $H_1$ with $V(W)=V_1$ such that
  $\wp(H_2)\leq \min_{v\in V_1}\left\{|j|\mid v_j=v,
    j\in\{0,\ldots,k\}\right\}$, $\max_{v\in V_1}\left\{|j|\mid v_j=v,
    j\in\{0,\ldots,k\}\right\}\leq |V_2|$ and $\max_{v\in
    V_1}\left\{|j|\mid v_j=v, j\in\{0,\ldots,k\}\right\}=|V_2|$ implies
  that $v_0\neq v_k$ and there is an edge $e$ in $E_1$ containing both
  $v_0$ and $v_k$.
\end{thm}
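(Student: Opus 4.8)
The plan is to reduce the statement to a purely combinatorial correspondence between Hamiltonian cycles of $H_1\circ H_2$ and closed spanning walks of $H_1$ equipped with a compatible system of path partitions of the layers, and then to certify each implication from that correspondence. Write $N=|V_2|$ and $p=\wp(H_2)$, and for a walk $W=(v_0,e_1,\dots,e_{k-1},v_k)$ in $H_1$ let $n_v=|\{\,j : v_j=v\,\}|$. Two preliminary facts drive everything. First, a layer-filling lemma: for every integer $m$ with $p\le m\le N$ the set $V(H_2)$ can be partitioned into exactly $m$ nonempty vertex-disjoint paths of $H_2$; this follows by starting from a minimum partition into $p$ paths and repeatedly splitting a path of positive length at an interior vertex, each split raising the number of parts by one, the process terminating at the $N$ singletons. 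Second, I would record the shape of the product edges: inside a fixed layer the edges are the copies $\{x\}\times e_2$ of the edges of $H_2$, so a within-layer stretch of a cycle is exactly a path of $H_2$; while any edge meeting two layers is a first-type edge $f$ with $p_1(f)\in E_1$ and $|f|=|p_1(f)|$, so to move from a chosen vertex over $v_j$ to a chosen vertex over $v_{j+1}$ it suffices that $v_j,v_{j+1}$ lie in a common hyperedge $e_{j+1}\in E_1$, after which one completes $f$ by arbitrary choices over the remaining vertices of $e_{j+1}$.

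For the implication from a Hamiltonian cycle $C$ to a walk, I would cut $C$ into its maximal runs of constant first coordinate. Each run is a path of $H_2$ living in a single layer, and two successive runs occupy distinct layers (otherwise the transition would be within-layer and the run would not have ended) that are adjacent in $H_1$ (the connecting edge projects under $p_1$ onto a hyperedge). Listing the layers of the successive runs yields a closed walk in $H_1$; cutting this closed walk at one of its transitions produces an open walk $W=(v_0,\dots,v_k)$ with $v_0\ne v_k$ and $\{v_0,v_k\}$ contained in a common hyperedge of $H_1$, so the closure clause holds automatically. Since $C$ is Hamiltonian the runs inside layer $v$ partition $V(H_2)$ and their number equals $n_v$; nonemptiness of the runs forces $n_v\le N$, covering forces $n_v\ge p$, and the union of all run-layers is $V_1$, so $W$ spans $V_1$.

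For the converse I would, given a walk $W$ satisfying the conditions, first manufacture a closed spanning walk $\widehat W$ of $H_1$ whose vertex multiplicities $m_v$ all lie in $[p,N]$, and then realize the Hamiltonian cycle by fixing for each $v$ a partition of its layer into exactly $m_v$ vertex-disjoint paths (layer-filling lemma) and traversing, at the $i$-th visit of $v$ along $\widehat W$, the $i$-th such path, joining consecutive runs by the cross-layer completions described above. The only issue is the passage from $W$ to $\widehat W$. If $v_0\ne v_k$ with a common edge one simply closes $W$, and then $m_v=n_v\in[p,N]$; this is exactly the case the closure clause guarantees when $\max_v n_v=N$. If instead $\max_v n_v<N$ every layer has slack, and I would close $W$ either by appending a shortest return walk from $v_k$ to $v_0$ (which exists since the spanning walk makes $H_1$ connected) or, when $v_0=v_k$ threatens to lower a multiplicity below $p$ by one, by inserting a short detour $v_0\to u\to v_0$ along an incident edge; because every $n_v\le N-1$ in this regime, each such addition keeps all multiplicities within $[p,N]$.

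The step I expect to be the main obstacle is precisely the bookkeeping around closure, and it is where the content of the theorem sits. The delicate points are the off-by-one that appears when $v_0=v_k$ (so that the cyclic multiplicity of $v_0$ is $n_{v_0}-1$ rather than $n_{v_0}$), and the verification that the closure clause is the exact condition needed in the tight regime $\max_v n_v=N$: there the $N$-cap is saturated at some layer, no detour can be afforded without overflowing it, and one must show that the only way to close is a genuine cross-layer edge between distinct $v_0,v_k$ --- so the clause is both necessary and sufficient exactly when the maximum is attained. A minor point to dispatch alongside is the edge-distinctness built into the hypergraph notion of a cycle, which the freedom in choosing the second coordinates of the completing edges $f$ makes available.
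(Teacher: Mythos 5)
The survey itself contains no proof of this theorem --- it is quoted from Sonntag \cite{Sonntag91:HamTraceLexico} --- so your proposal can only be measured against what a complete argument requires, not against an argument in the paper. Measured that way, your architecture is the right one, and the part you call the heart of the matter is handled correctly: in the forward direction the maximal constant-first-coordinate runs of a Hamiltonian cycle are vertex- \emph{and} edge-disjoint paths in each layer (edge-disjoint because a cycle repeats no hyperedge, which is exactly what legitimizes the comparison with $\wp(H_2)$), and cutting the induced closed walk at a transition yields the closure clause for free; in the converse direction your layer-filling lemma is sound, and the passage from the open walk to a closed spanning walk with all cyclic multiplicities in $[\wp(H_2),|V_2|]$ --- the off-by-one at $v_0=v_k$, the detour $v_0\to u\to v_0$, the appended return path, and the observation that the saturated case $\max_v n_v=|V_2|$ tolerates no detour and is precisely covered by the closure clause --- is correct.

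The genuine gap is the point you set aside as minor: pairwise distinctness of the connecting edges. A connecting edge built over a walk-step that uses $e\in E_1$, $|e|=r$, with prescribed endpoints over $v_j$ and $v_{j+1}$, admits exactly $|V_2|^{r-2}$ completions, whereas the closed walk may traverse $e$ up to $\sum_{x\in e}m_x\le r\,|V_2|$ times; whenever $r\,|V_2|>|V_2|^{r-2}$ (already for $r=3$, $|V_2|=2$: up to six traversals, only two completions each) the competition for edges over $e$ exceeds the per-transition freedom, so no transition-by-transition appeal to free choices can close the argument. The tight case really occurs: take $H_1=(\{x,y,z\},\{\{x,y,z\}\})$ and $H_2=\overline{K}_2$, where the walk $(x,y,z,x,y,z)$ satisfies all hypotheses. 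Every edge of a Hamiltonian cycle of $H_1\circ H_2$ is then a connecting edge over the single hyperedge $e$, and six pairwise distinct ones must be selected from the eight sets $\{(x,a),(y,b),(z,c)\}$ subject to the endpoint constraints; canonical rules fail here (completing each transition by copying the source's second coordinate makes the first two transitions of the cycle $(x,1),(y,1),(z,1),(x,2),(y,2),(z,2)$ both produce $\{(x,1),(y,1),(z,1)\}$), yet a valid coordinated choice exists. What is needed is a simultaneous selection --- a system of distinct representatives for the candidate sets of all transitions over a fixed $e$, provable via Hall's condition using that distinct transitions have distinct source vertices and that no pair of vertices is consecutive twice in a cycle of length at least three, or alternatively a cleverer global assignment of partition paths and orientations to visits. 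Until such an argument is supplied, your construction only delivers a closed spanning vertex sequence that may repeat a hyperedge, which is not a cycle in the paper's sense, so the converse direction is incomplete as written.
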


\section{Costrong Product}
\subsection{Definition and Basic Properties}

Since the lexicographic product is not commutative, it appears natural to 
consider ``symmetrized'' variants of the lexicographic product. The 
\emph{costrong product}  $H_1 * H_2$ \cite{GasztImrich71:lexico} 
has the edge set 
$$E(H_1 * H_2)= E(H_1\circ H_2)\cup E(H_2\circ H_1).$$
The costrong product is associative, commutative and has $K_1$ as unit
\cite{GasztImrich71:lexico}.  The costrong product of two simple
hypergraphs is not simple, unless both factors are $r$-uniform.  The
projections into the factors are neither (weak) homomorphisms nor preserve
adjacency.  Rank and anti-rank of the costrong hypergraph product $H_1*
H_2$ satisfy
\begin{align*}
  r(H_1*H_2) & =\max\{r(H_1),r(H_2)\}\\
  s(H_1*H_2) & =\min\{s(H_1),s(H_2)\}.
\end{align*}

A hypergraph $H=(V,E)$ is said to be \emph{coconnected} if for each pair of
vertices $u,v\in V$ there exists a sequence of pairwise distinct vertices
$u=u_1,\ldots,u_k=v$ such that consecutive vertices $u_i,u_{i+1}$ are not
both contained in any edge of $H$.  A costrong product of two hypergraphs
is coconnected if and only if both of the factors are.

\begin{thm}[UPFD  \cite{GasztImrich71:lexico}]
  \label{thm:UPFDCo}
  Every finite coconnected hypergraph has a unique PFD w.r.t.\ 
  the costrong product.
\end{thm}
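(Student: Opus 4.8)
The plan is to reduce the claim to the uniqueness of the prime factor decomposition of connected graphs with respect to the strong product by passing to a complement. For a loopless hypergraph $H=(V,E)$ write $H^{\mathrm{co}}=\overline{[H]_2}$ for the \emph{co-adjacency graph}, the graph on $V$ in which $\{u,v\}$ is an edge precisely when no hyperedge of $H$ contains both $u$ and $v$. Directly from the definitions, a sequence witnessing coconnectedness of a pair $u,v$ is exactly a path from $u$ to $v$ in $H^{\mathrm{co}}$, so $H$ is coconnected if and only if $H^{\mathrm{co}}$ is connected. Recording this equivalence is the first step.

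The second step is to show that the co-adjacency operation converts the costrong product into the strong graph product. Two vertices lie in a common hyperedge of $H_1*H_2$ if and only if they do so in $H_1\circ H_2$ or in $H_2\circ H_1$, so the $2$-section distributes over the union of edge sets, and Lemma~\ref{lem:2sectionLexi} gives $[H_1*H_2]_2=[H_1]_2*[H_2]_2$. Complementing and invoking the graph identity $G_1*G_2=\overline{\overline{G_1}\boxtimes\overline{G_2}}$ of \cite{GasztImrich71:lexico-englAbstract} then yields
$$(H_1*H_2)^{\mathrm{co}}=H_1^{\mathrm{co}}\boxtimes H_2^{\mathrm{co}}.$$
Since a costrong product is coconnected if and only if both factors are, every factor in a costrong decomposition of a coconnected $H$ is again coconnected, so its co-adjacency graph is connected.

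The third step applies the fact that connected graphs have a unique prime factor decomposition with respect to $\boxtimes$ \cite{Hammack:2011a}. Iterating the displayed identity, any costrong decomposition $H=P_1*\cdots*P_k$ into coconnected factors induces a strong decomposition $H^{\mathrm{co}}=P_1^{\mathrm{co}}\boxtimes\cdots\boxtimes P_k^{\mathrm{co}}$ of the connected graph $H^{\mathrm{co}}$ into connected factors. The unique strong PFD of $H^{\mathrm{co}}$ therefore fixes, independently of the chosen costrong decomposition, a coordinatization $V=\bigtimes_i V_i$ by the strong primes of $H^{\mathrm{co}}$, and each $P_j^{\mathrm{co}}$ is a strong subproduct, i.e.\ a grouping of these coordinates. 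Existence of a decomposition into costrong primes is automatic by induction on $|V|$, since every nontrivial factorization strictly lowers the vertex count; the remaining task is to show that the grouping realized by the costrong-prime factors of $H$ is forced, so that the multiset $\{P_1,\dots,P_k\}$ is determined up to isomorphism.

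The hard part will be exactly this last lifting. The co-adjacency graph, like the $2$-section, forgets which hyperedges produced which cliques, and because the costrong product of simple hypergraphs fails to be simple unless the factors are $r$-uniform, one cannot argue purely at the graph level: a grouping of the strong-prime coordinates of $H^{\mathrm{co}}$ need not be \emph{realizable} by an actual costrong factor of the hypergraph $H$. To make the recovery of the $P_j$ rigorous I would track the hyperedge labels, working with the $L2$-section and using that a hypergraph is determined up to isomorphism by $[H]_{L2}$ \cite{BrettoSilvestre10:Factorization}; concretely, one must verify that the coordinatization coming from the strong PFD of $H^{\mathrm{co}}$ is compatible with the hyperedge fibers of $H$, so that the realizable coordinate-groupings form a well-behaved family with a unique finest member, yielding the unique costrong PFD. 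The degenerate complete and totally disconnected factors that obstruct uniqueness for the plain lexicographic product (Corollary~\ref{cor:UPFDLex}) are precisely the ones excluded here, since they correspond to disconnected co-adjacency graphs and are ruled out by coconnectedness.
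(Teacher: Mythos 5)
Your reduction framework is sound as far as it goes: coconnectedness of $H$ is indeed equivalent to connectedness of $H^{\mathrm{co}}=\overline{[H]_2}$, the identity $[H_1*H_2]_2=[H_1]_2*[H_2]_2$ does follow from Lemma~\ref{lem:2sectionLexi} because the $2$-section of a union of edge sets is the union of the $2$-sections, and complementing via $G_1*G_2=\overline{\overline{G_1}\boxtimes\overline{G_2}}$ gives $(H_1*H_2)^{\mathrm{co}}=H_1^{\mathrm{co}}\boxtimes H_2^{\mathrm{co}}$. (Note that the survey itself offers no proof of Theorem~\ref{thm:UPFDCo}; it only cites \cite{GasztImrich71:lexico}, so your argument has to stand on its own.) But it does not stand: the proof stops exactly where the theorem begins, and there are two concrete problems. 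First, your claim that the strong PFD of $H^{\mathrm{co}}$ ``fixes, independently of the chosen costrong decomposition, a coordinatization'' is false in general. For the strong product of connected graphs the \emph{factors} are unique up to isomorphism, but the \emph{coordinatization} is unique only for S-thin graphs. Already $H=\overline{K}_4$ (no hyperedges at all) is coconnected, satisfies $H=\overline{K}_2*\overline{K}_2$, and has $H^{\mathrm{co}}=K_4=K_2\boxtimes K_2$ with three genuinely different vertex partitions realizing that factorization; coconnectedness of $H$ does nothing to force thinness of $H^{\mathrm{co}}$. So the graph-level bookkeeping your third step relies on is not available without invoking the thin-quotient machinery explicitly.

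Second, and decisively, the lifting step you yourself flag as ``the hard part'' is the entire content of the theorem, and you never carry it out: you only describe what ``one must verify,'' namely that the realizable coordinate-groupings form a family with a unique finest member. Nothing you have established implies this. The co-adjacency graph (and even the $L2$-section of the product, unless you actually compute how $\mathcal{L}$ behaves under $*$, which you do not) does not control how the hyperedges of $H$ distribute over the coordinates; recall that the costrong product of simple hypergraphs is generally not simple, so the hypergraph structure sitting above a fixed graph $H^{\mathrm{co}}$ is genuinely richer than the graph. In particular, nothing in your argument excludes the scenario in which, with strong primes $A,B,C$ of $H^{\mathrm{co}}$, both the grouping $\{AB,C\}$ and the grouping $\{A,BC\}$ are realized by costrong-prime hypergraph factors while $\{A,B,C\}$ is not realizable at all --- which would be a counterexample to uniqueness. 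Ruling this out requires a common-refinement lemma at the hypergraph level: given $H=P_1*P_2=Q_1*Q_2$, one must construct a refinement of both factorizations (or directly show the induced partitions and hyperedge fibers coincide). That lemma is the core of any PFD-uniqueness proof, it is what Gaszt and Imrich actually prove in \cite{GasztImrich71:lexico}, and it is missing here. As written, your text is a plausible reduction skeleton plus an accurate description of the open step, not a proof.
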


In \cite{Sonntag93:Disjunction} it is shown under which (quite complex)
conditions the costrong product is Hamiltonian. Automorphism groups of
costrong products have been considered by Gaszt and Imrich,
\cite{GasztImrich71:lexico}.

\subsection{Relationships with Graph Products}

The restriction of the costrong product to graphs coincides with the
respective costrong graph product.  The costrong product of graphs can be
obtained from the strong product by virtue of the identity $G_1*G_2 =
\overline{\overline{G_1}\boxtimes \overline{G_2}}$.  This construction is
not applicable to hypergraphs because no suitable definition of complements
of hypergraphs has been proposed so far
\cite{GasztImrich71:lexico-englAbstract, GasztImrich71:lexico}.

The $L2$-section of costrong products can be derived in a straightforward
way from the definition of the $L2$-section of the lexicographic
product. We omit an explicitly description here.

\part{Other Hypergraph Products}

The hypergraph products discussed so far reduce to graph products at least
in the class of simple graphs. In this part of the survey we summarize
alternative constructions that have received considerable attention but do
not correspond to graph products.

\section{The Square Product}
\label{sect:square}

The literature is by no means consistent in its use of the terms ``square
product'' and ``direct product''. In particular, many authors use the term
for ``direct product'' for the square product, see e.g.\
\cite{Berge:DirectProd, Mubayi07:DirectHpg, Pemantle92:DirectHpg,
  DoerrGnewuch06:DiscrSym, DoerrGnewuchHebb05:Discrepancy,
  AhlswedeCai94:Partitioning, Sterboul74:ChromDirect,
  AhlswedeCai97:ExtremalSetPart, Bretto06:Helly,
  Doerfler79:CoversDirectProd, Fuerdi88:Matchings,
  NesestrilRoedl83:SquareProduct, Doerfler82:DirectProd}.  We favor the
term ``square product'' introduced by Ne\v{s}et\v{r}il and R\"{o}dl
\cite{NesestrilRoedl83:SquareProduct}, since it does not reduce to the
direct product on graphs. To add to the confusion, some authors also used
the term ``strong product'', see e.g.\ \cite{Bretto06:Helly}. The
  square product seems to be the most widely studied of the hypergraph
  products.

\subsection{Definition and Basic Properties}

\begin{defn}[Square Product of Hypergraphs \cite{Berge:DirectProd}]
  Given two hypergraphs $H_1=(V_1,E_1)$, $H_2=(V_2,E_2)$
  the \emph{square product} $H=H_1\categ H_2$ has vertex set
  $V(H)=V_1\times V_2$
  and edge set
  \[ E(H)= \left\{e_1\times e_2\mid e_i\in E_i\right\}.\]
\end{defn}

The square product is associative, commutative, distributive w.r.t.\ the
disjoint union and has the single vertex graph with loop $\mathscr{L}K_1$
as unit element.  The square product of two hypergraphs is connected if and
only if both of its factors are connected, and it is a uniform hypergraph
if an only if both of the factors are uniform hypergraphs.  The square
product of two simple hypergraphs is a simple hypergraph.  The projections
into the factors are homomorphisms \cite{Doerfler82:DirectProd}.  For the
rank and anti-rank, respectively, of the square product holds:
\begin{align*}
  r(H_1\categ H_2)=r(H_1)r(H_2)\\
  s(H_1\categ H_2)=s(H_1)s(H_2)
\end{align*}

\begin{prop}[\cite{Berge:DirectProd}]
  The dual hypergraph of a square product of two hypergraphs $H_1$ and
  $H_1$ is the square product of the dual hypergraphs of $H_1$ and $H_2$.
  More formal:
\[(H_1\categ H_2)^*=H_1^*\categ H_2^*\]
\end{prop}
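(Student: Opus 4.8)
The plan is to exhibit an explicit isomorphism between $(H_1\categ H_2)^*$ and $H_1^*\categ H_2^*$ by tracking incidences through both constructions. Write $H_i=(V_i,E_i)$ and set $P=H_1\categ H_2$, so that $V(P)=V_1\times V_2$ and $E(P)=\{e_1\times e_2\mid e_1\in E_1,\ e_2\in E_2\}$. By the definition of the dual, the vertices of $P^*$ are the symbols $(e_1\times e_2)^*$, one for each edge of $P$, and its edges are the sets $(v_1,v_2)^*=\{(e_1\times e_2)^*\mid (v_1,v_2)\in e_1\times e_2\}$, one for each vertex of $P$.

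First I would record a bookkeeping fact needed to label the dual vertices: since all hyperedges are non-empty, the map $(e_1,e_2)\mapsto e_1\times e_2$ is injective, because projecting $e_1\times e_2$ onto either coordinate recovers the corresponding factor. Hence $V(P^*)$ is in canonical bijection with $E_1\times E_2$. On the other side, $Q:=H_1^*\categ H_2^*$ has vertex set $V_1^*\times V_2^*=\{(e_1^*,e_2^*)\mid e_1\in E_1,\ e_2\in E_2\}$, which is likewise indexed by $E_1\times E_2$. This suggests defining $\varphi\colon V(P^*)\to V(Q)$ by $\varphi\big((e_1\times e_2)^*\big)=(e_1^*,e_2^*)$, manifestly a bijection.

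The heart of the argument is then to check that $\varphi$ sends edges to edges. A generic edge of $P^*$ is $(v_1,v_2)^*=\{(e_1\times e_2)^*\mid v_1\in e_1,\ v_2\in e_2\}$, so applying $\varphi$ gives $\{(e_1^*,e_2^*)\mid v_1\in e_1,\ v_2\in e_2\}$. On the other hand, an edge of $Q$ has the form $v_1^*\times v_2^*$ with $v_1^*=\{e_1^*\mid v_1\in e_1\}$ and $v_2^*=\{e_2^*\mid v_2\in e_2\}$, whence $v_1^*\times v_2^*=\{(e_1^*,e_2^*)\mid v_1\in e_1,\ v_2\in e_2\}$. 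The two sets coincide verbatim, so $\varphi\big((v_1,v_2)^*\big)=v_1^*\times v_2^*$; since $(v_1,v_2)$ ranges over $V(P)=V_1\times V_2$, these images exhaust the edges of $Q$, and because $\varphi$ is a vertex bijection carrying the edges of $P^*$ bijectively onto those of $Q$, it is the desired isomorphism.

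I expect the only subtlety to be the labeling bijection in the second step: one must take care that the dual vertices of $P$ are honestly indexed by pairs $(e_1,e_2)$ rather than by the underlying product sets, which is exactly where non-emptiness of edges is used. Everything else is the observation that dualization transposes incidence while the square product tensors it, so the two operations commute. Phrased through incidence matrices, $P$ has incidence matrix $M_1\otimes M_2$, its dual has transpose $(M_1\otimes M_2)^\top=M_1^\top\otimes M_2^\top$, and the latter is precisely the incidence matrix of $H_1^*\categ H_2^*$.
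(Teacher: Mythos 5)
Your proof is correct. Note that the survey does not actually prove this proposition --- it is stated with a citation to Berge only --- so your argument supplies the verification the paper omits, and it is the natural one. The one genuine subtlety, that the dual vertices of $H_1\categ H_2$ are honestly indexed by pairs $(e_1,e_2)\in E_1\times E_2$, is correctly settled by non-emptiness of hyperedges (making $(e_1,e_2)\mapsto e_1\times e_2$ injective, since projections recover the factors); the computation $\varphi\bigl((v_1,v_2)^*\bigr)=\{(e_1^*,e_2^*)\mid v_1\in e_1,\ v_2\in e_2\}=v_1^*\times v_2^*$ then matches the edge sets exactly, and since every edge of $H_1^*\categ H_2^*$ arises this way, $\varphi$ is an isomorphism in the paper's sense (edges correspond to edges in both directions). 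Your closing incidence-matrix remark, $(M_1\otimes M_2)^\top=M_1^\top\otimes M_2^\top$, is an apt one-line explanation of why dualization and the square product commute, though the set-theoretic argument is what carries the proof.
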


\subsection{Relationships with Graph Products}

The square product of two graphs is not a graph, but a $4$-uniform
hypergraph. Nevertheless, its $2$-section has the structure of a graph
  product.

\begin{lem}
  The $2$-section of the square product 
  $H= H' \categ H''$ is the strong product 
  of the $2$-section of $H'$ and the $2$-section of $H''$, more formally:
  $$[H' \categ H'']_2 =[H']_2 \boxtimes [H'']_2.$$
  \label{lem:2sectionSquare}
\end{lem}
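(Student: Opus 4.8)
The plan is to follow the template of the proofs of Lemma~\ref{lem:2sectionStrong} and Lemma~\ref{lem:2sectionLexi}: both $[H'\categ H'']_2$ and $[H']_2\boxtimes[H'']_2$ carry the vertex set $V(H')\times V(H'')$, so it suffices to check that the identity map is a graph isomorphism. I would do this through a chain of biconditionals describing when two distinct vertices $x=(x_1,x_2)$ and $y=(y_1,y_2)$ form an edge on either side.

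First I would unfold the left-hand side. By definition of the $2$-section, $\{x,y\}\in E([H'\categ H'']_2)$ holds iff some hyperedge of $H'\categ H''$ contains both $x$ and $y$; since every hyperedge has the form $e_1\times e_2$ with $e_1\in E(H')$ and $e_2\in E(H'')$, this is equivalent to the existence of $e_1,e_2$ with $\{x_1,y_1\}\subseteq e_1$ and $\{x_2,y_2\}\subseteq e_2$. Because the two memberships may be witnessed independently, the condition factorizes coordinate-wise: for each factor, $\{x_i,y_i\}$ lies in some edge precisely when either $x_i=y_i$ and $x_i$ is covered by some edge of that factor, or $x_i\neq y_i$ and $\{x_i,y_i\}$ is an edge of its $2$-section. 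Splitting according to which coordinates coincide---with $x\neq y$ excluding simultaneous equality---reproduces exactly the three edge types~(i), (ii), (iii) of the strong product, yielding $\{x,y\}\in E([H']_2\boxtimes[H'']_2)$.

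The forward implication is the routine half: from a single witnessing hyperedge $e_1\times e_2$ one reads off, in each coordinate separately, either $x_i=y_i$ or $\{x_i,y_i\}$ being an edge of the corresponding $2$-section, and together with $x\neq y$ this is precisely strong-product adjacency; no extra hypothesis is needed here. I expect the main obstacle to be the converse in the Cartesian-type cases~(i) and (ii): when, say, $x_1=y_1$ and $\{x_2,y_2\}\in E([H'']_2)$, one must exhibit a hyperedge $e_1\times e_2$ of $H'\categ H''$ through both $x$ and $y$. An edge $e_2\supseteq\{x_2,y_2\}$ is at hand, but one still needs an edge $e_1\in E(H')$ with $x_1\in e_1$. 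This step collapses exactly when $x_1$ is an isolated vertex of $H'$ (equivalently of $[H']_2$), and in that situation the two graphs really do differ, since the square product---unlike the Cartesian part built into the ordinary strong product $\weitdacherl{\boxtimes}$ (cf.\ Lemma~\ref{lem:2sectionStrong})---creates no edge through a vertex contained in no hyperedge.

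I would therefore run the converse under the standing assumption, natural in this $2$-section setting, that the factors have no isolated vertices, so that every coordinate of every vertex is covered by an edge of the respective factor; this makes each step of the biconditional chain reversible and establishes $[H'\categ H'']_2=[H']_2\boxtimes[H'']_2$. A more structural alternative would be to combine Proposition~\ref{prop:CartProd_2-sect} with Lemma~\ref{lem:2sectionStrong} by comparing the square product to $\weitdacherl{\boxtimes}$ at the level of $2$-sections, but the direct coordinate-wise argument is cleaner and makes the role of the isolated-vertex hypothesis transparent.
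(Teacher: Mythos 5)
Your argument follows exactly the route of the paper's own proof: both sides share the vertex set $V(H')\times V(H'')$, the identity map is checked to be an isomorphism, and $\{x,y\}\in E([H'\categ H'']_2)$ is unfolded into the existence of $e_1\in E(H')$, $e_2\in E(H'')$ with $\{x_1,y_1\}\subseteq e_1$ and $\{x_2,y_2\}\subseteq e_2$, which is then matched coordinate-wise against the three edge types of the strong product. So in terms of method there is nothing new here.

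What \emph{is} different is your isolated-vertex caveat, and you should know that it is a genuine defect of the printed lemma, not an artifact of your proof strategy. The paper's proof performs precisely the unjustified backward step you predicted: in passing from ``$\exists\, e_1, e_2$ with $\{x_1,y_1\}\subseteq e_1$, $\{x_2,y_2\}\subseteq e_2$'' to the three strong-product cases, it silently drops, in the Cartesian-type cases, the requirement that the repeated coordinate lie in \emph{some} hyperedge of its factor, and the final ``$\Leftrightarrow$'' then fails from right to left. Under the paper's own definition a hypergraph need not satisfy $\bigcup_{e\in E}e=V$, and loops are excluded in the $2$-section setting, so the gap cannot be patched by a covering loop. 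Concretely: take $H'=(\{a,b,c\},\{\{b,c\}\})$ and $H''=K_2$ with $V(H'')=\{u,v\}$. Then $\{(a,u),(a,v)\}$ is a (Cartesian-type) edge of $[H']_2\boxtimes[H'']_2$, yet $(a,u)$ lies in no hyperedge of $H'\categ H''$, since the only hyperedge is $\{b,c\}\times\{u,v\}$; hence $[H'\categ H'']_2\neq[H']_2\boxtimes[H'']_2$. Your standing assumption that the factors have no isolated vertices (automatic under Berge's convention that the edges cover the vertex set) is exactly what makes every step reversible, and your observation that $\weitdacherl{\boxtimes}$ (Lemma~\ref{lem:2sectionStrong}) is immune --- because its Cartesian edges pass through every vertex, covered or not --- correctly isolates why the square product is special in this respect. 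In short: same approach as the paper, but your version, with the added hypothesis, is the correct one.
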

\begin{proof}
  Let $p_1$ and $p_2$ denote $p_{H'}$ and $p_{H''}$, respectively.
  By definition of the $2$-section and the strong graph product, $[H]_2$ and
  $[H']_2 \boxtimes [H'']_2$ have the same vertex set.
  Thus, we need to show that the identity mapping 
  $V([H]_2)\rightarrow V([H']_2 \times [H'']_2)$ is an isomorphism.
  We have:
  $\{x,y\}\in E([H]_2)\Leftrightarrow 
  \exists\, e\in E(H): \{x,y\}\subseteq e = p_1(e)\times p_2(e),\ x\neq y$
  $\Leftrightarrow $
  \begin{enumerate}
  \item $(p_1(x),p_1(y)) \subseteq p_1(e)\in E(H')$ and 
  \item $(p_2(x),p_2(y)) \subseteq p_2(e)\in E(H'')$.
  \end{enumerate}
  Note, in contrast to the other proofs we do not need the condition that
  $p_i(x) \neq p_i(y)$, $i=1,2$. However, since $x\neq y$ we can conclude
  that $p_i(x) \neq p_i(y)$ implies $p_j(x) = p_j(y)$, $i\neq j$.
  Thus, $\{x,y\}\in E([H]_2)$
  $\Leftrightarrow $
  \begin{enumerate}
  \item $(p_1(x),p_1(y)) \in E([H']_2)$ and $p_2(x) = p_2(y)$ or 
  \item $(p_2(x),p_2(y)) \in E([H'']_2)$ and $p_1(x) = p_1(y)$ or 
  \item $(p_1(x),p_1(y)) \in E([H']_2)$ and $(p_2(x),p_2(y)) \in E([H'']_2)$.
  \end{enumerate}
  $\Leftrightarrow $
  $\{x,y\}\in E([H']_2\boxtimes [H'']_2)$.
\end{proof}

\begin{defn}[The Square Product of L2-sections]
  Let $\Gamma_i=(V_i,E'_i,\mathcal{L}_i)$ be the $L2$-section of the
  hypergraphs $H_i=(V_i,E_i)$, $i=1,2.$ The square product of the
  $L2$-sections $\Gamma_1\categ \Gamma_2 = (V,E',\mathcal{L})$ consists
  of the graph $(V,E')=(V_1,E'_1)\boxtimes(V_2,E'_2)$ and a labeling
  function
  $$\mathcal{L}=\mathcal{L}_1\categ \mathcal{L}_2: 
  E'\rightarrow \mathbb{P}(E(H_1\categ H_2))$$
  with 
  $$\mathcal{L}\left(\{(x_1,y_1),(x_2,y_2)\}\right) = 
  \left\{ e_1\times e_2 \mid (A1) \text{ or } (A2) \text{ or } (A3) \right\}$$ 
  where
  \begin{align*}
    & (A1) \quad  x_1=x_2\in e_1 \in E_1, e_2 \in \mathcal{L}_2(\{y_1,y_2\}) \\
    & (A2) \quad  y_1=y_2\in e_2 \in E_2, e_1 \in \mathcal{L}_1(\{x_1,x_2\}) \\
    & (A3) \quad e_1 \in \mathcal{L}_1(\{x_1,x_2\}) \text{ and } 
    e_2 \in \mathcal{L}_2(\{y_1,y_2\})
  \end{align*}	
\end{defn}

One can show that $[H\categ H']_{L_2}=[H]_{L2}\categ [H']_{L2}$ holds for
all simple hypergraphs $H,H'$.

\begin{lem}[Distance Formula]\label{lem:distSquare}
  For all hypergraphs $H,H'$ without loops we have:
  $$d_{H\categ H'}((x,a),(y,b)) =   
  \max\{d_{H}(x,y), d_{H'}(a,b)\}$$
\end{lem}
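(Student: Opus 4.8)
The plan is to follow verbatim the template used for every preceding distance formula in this survey (Lemmas~\ref{lem:distCart},~\ref{lem:distMinRankDir},~\ref{lem:distNormal},~\ref{lem:distStrong} and~\ref{lem:distLexi}): reduce the hypergraph distance to a distance in an associated \emph{graph} product by passing to the $2$-section. The three ingredients I would assemble are the Distance Formula for $2$-sections (Lemma~\ref{lem:distH2}), which guarantees that distances in any loopless hypergraph agree with distances in its $2$-section; the structural identity of Lemma~\ref{lem:2sectionSquare}, which identifies the $2$-section of a square product with a \emph{strong} graph product, $[H\categ H']_2 = [H]_2 \boxtimes [H']_2$; and the classical Distance Formula for the strong graph product (Corollary~5.5 in~\cite{Hammack:2011a}), which evaluates to the maximum of the coordinate distances. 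Note that this is exactly the combination used for $\weitdacherl{\boxtimes}$ in Lemma~\ref{lem:distStrong}, which is why the square product inherits the same $\max$-formula.

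Concretely, I would chain the three results as follows. A first application of Lemma~\ref{lem:distH2} to the loopless hypergraph $H\categ H'$ gives
$$d_{H\categ H'}((x,a),(y,b)) = d_{[H\categ H']_2}((x,a),(y,b)).$$
Substituting the identity of Lemma~\ref{lem:2sectionSquare} rewrites the right-hand side as a distance in $[H]_2 \boxtimes [H']_2$, and the strong graph product Distance Formula turns this into $\max\{d_{[H]_2}(x,y),\, d_{[H']_2}(a,b)\}$. A second application of Lemma~\ref{lem:distH2}, this time to each factor separately, replaces $d_{[H]_2}(x,y)$ by $d_H(x,y)$ and $d_{[H']_2}(a,b)$ by $d_{H'}(a,b)$, yielding the claimed formula. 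The disconnected case needs no separate treatment: Lemma~\ref{lem:distH2} already matches infinite distances between $H$ and $[H]_2$, and $\max$ propagates the value $\infty$ correctly.

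Since all the technical work is already encapsulated in the cited lemmas, there is no genuine obstacle; the only point requiring care is the loopless hypothesis in the statement. The entire $2$-section apparatus — and in particular the proof of Lemma~\ref{lem:2sectionSquare} — is set up only for hypergraphs without loops, and the square product intrinsically creates loops (its unit is $\mathscr{L}K_1$), so the hypothesis cannot simply be dropped. Under the stated assumption that $H$ and $H'$ are loopless, every step above is an equality of well-defined quantities, and the proof collapses to a one-line citation combining Lemmas~\ref{lem:distH2},~\ref{lem:2sectionSquare} and Corollary~5.5 of~\cite{Hammack:2011a}, exactly in the style of the earlier distance formulas.
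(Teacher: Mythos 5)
Your proposal is correct and follows exactly the paper's own proof, which combines the same three ingredients: Lemma~\ref{lem:distH2}, Lemma~\ref{lem:2sectionSquare}, and the Distance Formula for the strong graph product (Corollary~5.5 in~\cite{Hammack:2011a}). Your additional remarks on the disconnected case and the necessity of the loopless hypothesis are sound elaborations of details the paper leaves implicit.
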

\begin{proof}
  Combining the results of Lemma \ref{lem:distH2}, Lemma
  \ref{lem:2sectionSquare} and the well-known Distance
  Formula for the strong graph product (Corollary 5.5 in \cite{Hammack:2011a})
  yields to the result.
\end{proof}

\subsection{Prime Factor Decomposition}

\begin{thm}[UPFD  \cite{Doerfler82:DirectProd}]
  \label{thm:UPFDSquare}
  Every finite connected hypergraph (without multiple edges) has
  a unique PFD w.r.t.\ the square product.  
\end{thm}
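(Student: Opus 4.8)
The plan is to reduce the statement to the uniqueness of the prime factor decomposition of connected \emph{graphs} with respect to the strong product, transporting that result through the $2$-section and $L2$-section machinery. First I would record the rigidity of the square product on both the vertex and the edge side. By the definition $E(H_1\categ H_2)=\{e_1\times e_2\mid e_i\in E_i\}$, and since neither factor has an empty edge, the map $(e_1,e_2)\mapsto e_1\times e_2$ is injective (the projections $p_i$ recover $e_i$), so $E(H_1\categ H_2)$ is in canonical bijection with $E(H_1)\times E(H_2)$. Thus a square product carries a product structure simultaneously on $V$ (by (P1)) and on $E$; iterating, every hyperedge of a multi-factor product is a box $e_1\times\cdots\times e_m$ with $e_j\in E(H_j)$. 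Since every hypergraph is uniquely determined by its $L2$-section and $[H_1\categ H_2]_{L2}=[H_1]_{L2}\categ[H_2]_{L2}$, it suffices to prove uniqueness of the factorization of $[H]_{L2}$ with respect to the square product of $L2$-sections.

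Second, I would pin down the coordinate system of $V(H)$ canonically. By Lemma~\ref{lem:distSquare} (and already Lemma~\ref{lem:distH2}) the $2$-section $[H]_2$ is connected whenever $H$ is, and by Lemma~\ref{lem:2sectionSquare} any square factorization $H=A_1\categ\cdots\categ A_p$ induces a strong-product factorization $[H]_2=[A_1]_2\boxtimes\cdots\boxtimes[A_p]_2$. The unique prime factor decomposition of connected graphs with respect to the strong product then fixes, independently of the chosen factorization, the multiset $G_1\boxtimes\cdots\boxtimes G_k$ of strong-prime graph factors of $[H]_2$, and hence a canonical identification $V(H)=\bigtimes_{i=1}^k V(G_i)$. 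Consequently every square factorization of $H$ is obtained by \emph{grouping} these $k$ canonical coordinates into blocks: each factor $A_j$ corresponds to a block $B_j$ of a partition $\pi$ of $\{1,\dots,k\}$, with $[A_j]_2=\boxtimes_{i\in B_j}G_i$, and that the blocks partition $\{1,\dots,k\}$ is itself forced by the uniqueness of the strong PFD.

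Third, I would characterize which partitions $\pi$ actually lift to a genuine square factorization of $H$ (not merely of $[H]_2$). Using the product structure on edges together with the labeling $\mathcal{L}$ and its defining conditions $(A1)$--$(A3)$, a partition $\pi$ is \emph{admissible} precisely when every hyperedge of $H$, written in canonical coordinates, is a product $\prod_{B\in\pi}e_B$ of its projections onto the blocks, each $e_B$ being independently an edge of the corresponding grouped factor. The prime factorization then corresponds to the \emph{finest} admissible partition. The heart of the argument, and the step I expect to be the main obstacle, is to show that admissible partitions are closed under common refinement: if $\pi$ and $\pi'$ are admissible, then so is $\pi\wedge\pi'$. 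This is where the labeling must be used genuinely, since a square-prime hypergraph may have a strong-\emph{composite} $2$-section, so the strong PFD alone does not isolate the square-prime factors---only the admissible groupings of coordinates do. Granting meet-closure, the admissible partitions form a lattice with a unique minimum, yielding a unique finest admissible partition and hence a unique prime factorization; existence of a decomposition is immediate by induction on $|V(H)|$, as $H$ is finite. As a cross-check for the edge side, one may use the duality $(H_1\categ H_2)^*=H_1^*\categ H_2^*$ to transfer the product structure on $E(H)$ to a vertex-product structure on $V(H^*)$, matching the box decomposition of hyperedges against the coordinate decomposition of vertices.
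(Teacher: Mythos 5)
First, a point of reference: the survey does not prove this theorem at all --- it is quoted from D\"orfler \cite{Doerfler82:DirectProd} --- so there is no in-paper argument to compare against; your proposal must stand on its own. It does not, for two reasons. The first gap is the passage from uniqueness of the strong PFD of $[H]_2$ to ``a canonical identification $V(H)=\bigtimes_{i=1}^k V(G_i)$.'' Uniqueness of the strong prime factorization of a connected graph is uniqueness of the prime factors \emph{up to isomorphism}; it does not yield a unique coordinatization of the vertex set unless the graph is S-thin (no two vertices with equal closed neighborhoods), and $2$-sections of hypergraphs are typically very far from thin because every hyperedge becomes a clique. Concretely, take $H=(V,E)$ with $V=\{1,2,3,4\}$ and $E=\big\{\{1,2,3,4\},\{1,2\},\{1,3\}\big\}$. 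A short case analysis on $|E_1|\cdot|E_2|=3$ shows $H$ is square-prime (the only candidate factorization $K_2\categ(\{c,d\},\{\{c,d\},\{c\},\{d\}\})$ produces two \emph{disjoint} $2$-edges, whereas the $2$-edges of $H$ intersect), yet $[H]_2=K_4=K_2\boxtimes K_2$ is strong-composite and admits three pairwise incompatible coordinatizations, one for each pair of complementary perfect matchings of $K_4$. So there is no canonical coordinate system inherited from the graph level, and your claim that every square factorization of $H$ is a grouping of fixed coordinates, ``forced by the uniqueness of the strong PFD,'' does not follow: two square factorizations could a priori sit over coordinatizations of $[H]_2$ with no common refinement, and then your partition lattice is not even well defined.

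The second gap is the one you flag yourself: meet-closure of admissible partitions, which you explicitly grant rather than prove. This is not a technical verification that can be deferred --- it \emph{is} the theorem. In every unique-PFD result of this kind (Cartesian, strong, square), proving that two factorizations admit a common refinement is precisely the hard step; the surrounding lattice bookkeeping and the existence induction are routine. Since the proposal both rests on the unjustified canonical-coordinates claim and postpones the key lemma, it is a strategy outline rather than a proof. A further, smaller issue: the $2$-section and $L2$-section machinery of this survey is set up only for loopless hypergraphs, while the square product's unit is $\mathscr{L}K_1$ and the primality question lives in a class where singleton edges (loops) are allowed, so the very first reduction ``it suffices to factor $[H]_{L2}$'' needs a separate treatment of loops before it can even be stated.
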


\subsection{Invariants}

\begin{thm}[Automorphism Group
  \cite{Doerfler82:DirectProd}] \label{thm:AutoDirect} Let
  $H=H_1\categ\ldots\categ H_n$ be the square product of prime
  hypergraphs $H_i$ fulfilling the following condition:
  \begin{itemize}	
  \item[] For each pair of distinct vertices there exists an edge 
    containing exactly one of those vertices.
  \end{itemize}
  Furthermore, let $\varphi\in Aut(H)$.
  Then there exists a permutation $\pi$ of $\{1,\ldots,n\}$ and isomorphisms
  $a_i: H_i\rightarrow H_{\pi(i)}$ such that the $\pi(i)$-th component of 
  $\varphi\left((x_1,\ldots,x_i,\ldots,x_n)\right)$ is $a_i(x_i)$.
\end{thm}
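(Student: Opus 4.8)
The plan is to show that every automorphism of the square product must permute the factors, respecting the prime factorization, and that on each factor it acts by an isomorphism onto the image factor. The central structural tool is the separating condition imposed on the $H_i$: for each pair of distinct vertices there is an edge containing exactly one of them. This condition is what rigidifies the layer structure and prevents the automorphism from ``mixing'' coordinates in uncontrolled ways.

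First I would reduce, by associativity and commutativity of $\categ$ (already established in the excerpt) and by the uniqueness of the PFD w.r.t.\ the square product (Theorem~\ref{thm:UPFDSquare}), to analyzing how $\varphi$ acts on the layers and on the coordinate equivalence relations. The key is to reconstruct the factorization intrinsically from the hypergraph $H$ itself. For each coordinate $i$, define an equivalence-type relation on $V(H)$ by declaring $x \sim_i y$ whenever $x$ and $y$ agree in all coordinates except possibly the $i$-th; the $H_i$-layers are exactly the classes, and by the remark after the square product definition each layer $H_i^w \cong H_i$. The separating condition guarantees that these relations, and hence the set of layers, are \emph{determined by the edge structure of $H$ alone}: two vertices differing only in coordinate $i$ can be distinguished from two vertices differing in several coordinates by examining which edges (products $e_1 \times \cdots \times e_n$) contain exactly one of them. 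I would make this precise by showing that the separating property lifts to a separating property on $H$, so that the ``coordinatization'' is canonical.

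Next, since $\varphi$ is an automorphism it must send edges to edges, hence products of factor-edges to products of factor-edges; because the layer partition is canonically recoverable from $H$, $\varphi$ must map the collection of $H_i$-layers to the collection of layers. As the layers are themselves (isomorphic copies of) the prime factors $H_i$, and the PFD is unique by Theorem~\ref{thm:UPFDSquare}, $\varphi$ induces a permutation $\pi$ of $\{1,\dots,n\}$ with $H_i \cong H_{\pi(i)}$, together with isomorphisms $a_i : H_i \to H_{\pi(i)}$ describing the action on the $i$-th slot. Tracking a single vertex $(x_1,\dots,x_n)$ through this layer-wise action then yields that the $\pi(i)$-th component of $\varphi((x_1,\dots,x_n))$ equals $a_i(x_i)$, which is the claimed form.

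The main obstacle I anticipate is the second step: proving that the coordinate structure is intrinsic, i.e.\ that the layer partition can be recovered from $H$ without reference to a chosen factorization. This is exactly where the separating hypothesis must be used in full strength, and it is delicate because in the square product an edge is a \emph{full} product $e_1 \times \cdots \times e_n$, so two vertices differing in several coordinates can still lie in a common edge. The argument must show that the separating condition nevertheless forces the ``differ in exactly one coordinate'' relation to be edge-theoretically characterizable (for instance via a minimality or local-structure property of edges containing exactly one of a given vertex pair). Once this canonical recovery is in place, the permutation-of-factors conclusion follows routinely from uniqueness of the PFD, and the explicit componentwise formula is mere bookkeeping.
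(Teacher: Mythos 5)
Your sketch founders on a structural fact about the square product: it does not have the layer structure your argument needs. In $H_1\categ\cdots\categ H_n$ every edge is a full product $e_1\times\cdots\times e_n$ with $e_i\in E_i$, so an edge lies inside a class of your relation $\sim_i$ (vertices agreeing in all coordinates except the $i$-th) only if $e_j$ is a loop for every $j\neq i$; for loopless factors the induced hypergraph on each such class is therefore edgeless, and in particular $H_i^w\not\cong H_i$. The remark you invoke (``each layer $H_i^w\cong H_i$'') appears in the survey only for the Cartesian product (and, for the second factor, the lexicographic product); there is no such remark after the square product definition, and the statement is false there. Hence ``the layers are themselves isomorphic copies of the prime factors'' is wrong, and your mechanism for transporting the factorization along $\varphi$ (map layers to layers, read off the $a_i$) collapses. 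Worse, the step you yourself flag as the main obstacle --- that the coordinate partition is recoverable from the edge structure of $H$ alone --- is never proved, and it is essentially the entire content of the theorem: at the level of adjacency the square product looks like a strong product (Lemma \ref{lem:2sectionSquare}), where the ``differ in exactly one coordinate'' relation is notoriously \emph{not} intrinsic without thinness-type hypotheses. The separating condition is exactly the hypothesis that must be converted into such an intrinsic characterization --- for instance via stars: it says precisely that distinct vertices have distinct stars, and in the product the star of $(x_1,\ldots,x_n)$ corresponds bijectively to $H_1(x_1)\times\cdots\times H_n(x_n)$ --- but your sketch uses it only as a slogan.

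Second, the claim that the conclusion then ``follows routinely from uniqueness of the PFD'' (Theorem \ref{thm:UPFDSquare}) is a non sequitur in two ways. Unique PFD asserts that the multiset of prime factors is determined up to isomorphism; it does not assert that an automorphism maps coordinates to coordinates. That rigidity of the coordinatization is strictly stronger and is precisely what is being proved, so the appeal is circular; in the standard treatments the implication runs the other way (one proves that isomorphisms respect a canonically reconstructed product structure, and both unique PFD and the automorphism description are consequences). Moreover, Theorem \ref{thm:UPFDSquare} assumes connectedness, which is neither assumed in the present theorem nor implied by the separating condition (two disjoint copies of $P_3$ satisfy it), so the cited theorem is not even applicable as stated. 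Note finally that the survey gives no proof of this result --- it is quoted from D\"orfler's paper --- so your proposal must stand on its own, and as it stands the decisive step is missing and the supporting claims it leans on are false.
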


\begin{cor}
  \label{cor:AutoDirect}
  Under the assumptions of Theorem~\ref{thm:AutoDirect} the automorphism
  group of $H=H_1\categ\ldots\categ H_n$ is generated by direct
  products of automorphisms of the $H_i$ and exchanges of isomorphic
  factors.
\end{cor}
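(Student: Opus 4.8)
The plan is to read Corollary~\ref{cor:AutoDirect} as a structural reformulation of Theorem~\ref{thm:AutoDirect}. The theorem already supplies a complete normal form for an arbitrary $\varphi\in Aut(H)$: a permutation $\pi$ of $\{1,\dots,n\}$ together with isomorphisms $a_i\colon H_i\to H_{\pi(i)}$ such that $\varphi$ sends the $i$-th coordinate, via $a_i$, into coordinate $\pi(i)$. The task therefore splits into two parts: checking that the two proposed generator types genuinely lie in $Aut(H)$ (soundness), and exhibiting every $\varphi$ as a composition of them (completeness).

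First I would establish soundness. For a direct product $\bigtimes_{i=1}^n b_i$ with $b_i\in Aut(H_i)$, acting as $(x_1,\dots,x_n)\mapsto(b_1(x_1),\dots,b_n(x_n))$, the routine observation is that it carries an edge $e_1\times\cdots\times e_n$ to $b_1(e_1)\times\cdots\times b_n(e_n)$, which is again an edge of the square product since each $b_i(e_i)\in E_i$; its inverse is $\bigtimes_i b_i^{-1}$. For an exchange of two isomorphic factors $H_i\cong H_j$, commutativity of the square product (stated in the basic properties of $\categ$) guarantees that transposing the $i$-th and $j$-th coordinates, composed with a fixed identifying isomorphism $H_i\to H_j$ and its inverse, yields an automorphism of $H$.

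For completeness I start from the normal form $(\pi,(a_i))$ of an arbitrary $\varphi$. The very existence of an isomorphism $a_i\colon H_i\to H_{\pi(i)}$ forces $H_i\cong H_{\pi(i)}$, so $\pi$ permutes factors only within isomorphism classes. I then factor $\pi$ into transpositions, each swapping two factors of the same isomorphism class, and realize each transposition by one of the exchange generators. Composing $\varphi$ successively with the inverses of these exchanges cancels the permutation part, leaving a residual automorphism whose associated permutation is the identity. By the theorem's normal form such a map acts as $(x_i)\mapsto(a_i'(x_i))$ with each $a_i'\in Aut(H_i)$, that is, it is exactly a direct product of factor automorphisms. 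Reassembling shows $\varphi$ is a product of exchanges and one direct product of automorphisms, so these generate $Aut(H)$.

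The main obstacle is the bookkeeping in the cancellation step: after composing with an exchange one must verify that the result still has the theorem's normal form but with a strictly simpler permutation, and that the component isomorphisms accumulated at the end genuinely return each factor to itself, so that the $a_i'$ are automorphisms rather than merely isomorphisms onto other factors. Tracking how the $a_i$ compose with the identifying isomorphisms chosen for the exchanges is essentially a wreath-product computation, and this is where care is required; the remaining arguments are formal.
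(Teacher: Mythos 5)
Your proposal is correct and takes essentially the approach the paper intends: the survey states Corollary~\ref{cor:AutoDirect} without proof, as an immediate consequence of Theorem~\ref{thm:AutoDirect} (cited from D\"orfler), and your two-part argument---checking that coordinate-wise automorphisms and exchanges of isomorphic factors are themselves automorphisms of the square product, then using the theorem's normal form $(\pi,(a_i))$ to peel off the permutation part by exchanges and recognize the residual identity-permutation map as a direct product of factor automorphisms---is precisely that derivation written out. The wreath-product bookkeeping you flag is indeed the only non-formal step, and your handling of it (the residual's permutation is computed directly from composition, not by re-invoking the theorem) is the right way to avoid any ambiguity about uniqueness of the normal form.
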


\begin{thm}[$k$-fold Covering \cite{Doerfler79:CoversDirectProd}]
  \label{thm:CovSquare}
  Let $H_i'=(V_i',E_i')$ be a $k_i$-fold covering of hypergraphs
  $H_i=(V_i,E_i)$ via a covering projection $p'_i$, $i=1,2$.  Then the
  square product $H'_1\categ H'_2$ is a $k_1k_2$-fold cover of $H_1
  \categ H_2$ via a covering projection $\pi$ induced naturally by
  $p'_1$ and $p'_2$, i.e, define $p$ by:
\begin{itemize}
\item[] $p((x,y)) = (p'_1(x),p'_2(y))$, for $(x,y) \in V'_1\times V'_2$, 
\item[] $p(e_1\times e_2) = p'_1(e_1)\times p'_2(e_2)$, 
  for $e_1 \in E'_1,\, e_2\in E'_2$, 
\end{itemize}
\end{thm}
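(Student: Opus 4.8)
The plan is to verify directly that the map $\pi$ given in the statement meets every requirement in the definition of a $k_1k_2$-fold covering: it must be a surjective homomorphism whose vertex- and edge-fibres all have cardinality $k_1k_2$, and for which distinct edges lying over a common edge are disjoint. Throughout I write $p'_i$ for the two given covering projections, $(p'_i)_{\mathcal E}\colon E'_i\to E_i$ for the induced edge maps, and $\pi_{\mathcal E}$ for the edge map induced by $\pi$.

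First I would check that $\pi$ is a well-defined surjective homomorphism and count the vertex-fibres. Since each $p'_i$ is surjective, so is $\pi$ on vertices, and for $(u,v)\in V_1\times V_2$ one has $\pi^{-1}((u,v))=(p'_1)^{-1}(u)\times(p'_2)^{-1}(v)$, which has cardinality $k_1k_2$ because each factor fibre has cardinality $k_i$. For an edge $e'_1\times e'_2$ of $H'_1\categ H'_2$ (with $e'_i\in E'_i$) a short set-theoretic computation gives $\pi(e'_1\times e'_2)=p'_1(e'_1)\times p'_2(e'_2)$, and since each $p'_i$ is a homomorphism this is an edge of $H_1\categ H_2$. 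This simultaneously shows that $\pi$ is a homomorphism and confirms the displayed formula for $\pi_{\mathcal E}$.

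The key observation, on which the fibre count rests, is that the factorisation of a product edge is unique: since hyperedges are non-empty, a set of the form $e'_1\times e'_2$ recovers its factors as the coordinate projections $e'_i=p_i(e'_1\times e'_2)$. Hence $(e'_1,e'_2)\mapsto e'_1\times e'_2$ is a bijection onto $E(H'_1\categ H'_2)$, and $\pi_{\mathcal E}(e'_1\times e'_2)=e_1\times e_2$ holds if and only if $p'_1(e'_1)=e_1$ and $p'_2(e'_2)=e_2$. Using that each $(p'_i)_{\mathcal E}$ is surjective (its fibres are non-empty, having size $k_i$), the edge map $\pi_{\mathcal E}$ is surjective, and its fibre over $e_1\times e_2$ is exactly $(p'_1)_{\mathcal E}^{-1}(e_1)\times(p'_2)_{\mathcal E}^{-1}(e_2)$, of cardinality $k_1k_2$.

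Finally I would verify the disjointness condition. Take two distinct edges $f'_1\times f'_2$ and $g'_1\times g'_2$ lying over $e_1\times e_2$; by uniqueness of factorisation, $f'_1\neq g'_1$ or $f'_2\neq g'_2$. Their intersection equals $(f'_1\cap g'_1)\times(f'_2\cap g'_2)$, so it suffices to empty one coordinate intersection. If $f'_1\neq g'_1$, then $f'_1$ and $g'_1$ are distinct edges in the same fibre $(p'_1)_{\mathcal E}^{-1}(e_1)$, so condition (2) for the covering $p'_1$ forces $f'_1\cap g'_1=\emptyset$; the case $f'_2\neq g'_2$ is symmetric. Either way the two product edges are disjoint, completing the verification. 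I expect no serious obstacle here: the only load-bearing point is the unique factorisation of product edges, which both makes the fibre count exact and reduces the disjointness requirement to the corresponding property of the individual factor coverings.
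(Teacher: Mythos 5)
Your proof is correct. Note that the survey states this theorem without proof, attributing it to D\"orfler \cite{Doerfler79:CoversDirectProd}, so there is no in-paper argument to compare against; your direct verification --- computing vertex fibres as products of fibres, using unique factorisation of product edges $e'_1\times e'_2$ (valid because hyperedges are non-empty) to identify the edge fibre over $e_1\times e_2$ with $(p'_1)_{\mathcal E}^{-1}(e_1)\times(p'_2)_{\mathcal E}^{-1}(e_2)$ and hence get the count $k_1k_2$, and reducing disjointness to condition (2) for each factor via $(f'_1\times f'_2)\cap(g'_1\times g'_2)=(f'_1\cap g'_1)\times(f'_2\cap g'_2)$ --- is the natural argument and checks every clause of the paper's definition of a $k$-fold covering.
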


\begin{thm}[Conformal Hypergraphs \cite{Bretto06:Helly}]
  \label{thm:ConfSquare}
  $H = H_1 \categ H_2$ is conformal if and only if $H_1$ and $H_2$ are
  conformal.
\end{thm}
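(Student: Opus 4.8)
The plan is to reduce everything to a clean fact about cliques in the strong graph product, exploiting the identity $[H_1\categ H_2]_2=[H_1]_2\boxtimes[H_2]_2$ from Lemma~\ref{lem:2sectionSquare}. Writing $G_i:=[H_i]_2$, conformality is entirely a statement about the $2$-section: $H_i$ is conformal exactly when every maximal clique of $G_i$ is an edge of $H_i$, and $H=H_1\categ H_2$ is conformal exactly when every maximal clique of $G_1\boxtimes G_2$ is an edge of $H$ (this is the form of the definition that matches ``$W\in E$'', i.e.\ Berge's notion). I would also use at the outset that the edges of the square product are precisely the ``boxes'' $e_1\times e_2$ with $e_i\in E_i$.

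The key step is a purely graph-theoretic description of the maximal cliques of a strong product. I would first show that $W\subseteq V(G_1)\times V(G_2)$ is a clique of $G_1\boxtimes G_2$ if and only if $p_1(W)$ is a clique of $G_1$ and $p_2(W)$ is a clique of $G_2$; both directions follow directly from the adjacency rule of the strong product, since two distinct vertices of $W$ that differ in coordinate $i$ must have their $i$-th coordinates adjacent in $G_i$, and conversely differing coordinates are either equal or adjacent. Because any clique $W$ satisfies $W\subseteq p_1(W)\times p_2(W)$ and this box is again a clique, maximality forces $W=p_1(W)\times p_2(W)$, and a one-line argument shows each $p_i(W)$ must then be a \emph{maximal} clique of $G_i$. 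The converse is equally short: if $C_i$ is a maximal clique of $G_i$, then $C_1\times C_2$ admits no proper clique extension. Hence the maximal cliques of $G_1\boxtimes G_2$ are exactly the products $C_1\times C_2$ of maximal cliques of the factors. (This is elementary and could alternatively be quoted from the treatment of strong products in \cite{Hammack:2011a}.)

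With this lemma both implications are immediate. For the ``if'' direction, assume $H_1,H_2$ conformal and let $W$ be a maximal clique of $[H]_2=G_1\boxtimes G_2$; then $W=C_1\times C_2$ with each $C_i$ a maximal clique of $G_i$, so $C_i\in E_i$ by conformality, and therefore $W=C_1\times C_2\in E(H)$ by the definition of $\categ$. For the ``only if'' direction it suffices, by commutativity of $\categ$, to treat $H_1$. Given a maximal clique $C_1$ of $G_1$, I would choose any maximal clique $C_2$ of $G_2$ (which exists since $V_2\neq\emptyset$); then $C_1\times C_2$ is a maximal clique of $[H]_2$, so conformality of $H$ gives $C_1\times C_2=e_1\times e_2$ for some $e_i\in E_i$, and projecting onto the first factor yields $C_1=e_1\in E_1$.

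The only genuine content is the clique lemma for strong products; the rest is bookkeeping with projections and the box description of $\categ$-edges. The point that needs care is the passage from ``clique'' to ``maximal clique'': one must check that projecting a maximal clique yields maximal (not merely arbitrary) cliques in the factors, and that a product of maximal cliques cannot be enlarged — both resting on the observation $W\subseteq p_1(W)\times p_2(W)$. Degenerate cases (empty edge sets, isolated vertices) are harmless, since maximal cliques are nonempty and the box description of the square-product edges holds throughout.
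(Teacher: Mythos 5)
The survey itself states Theorem~\ref{thm:ConfSquare} without proof, citing \cite{Bretto06:Helly}, so there is no in-paper argument to compare yours against; judged on its own merits, your proof is correct. The reduction through Lemma~\ref{lem:2sectionSquare} is the natural route, and your key lemma --- that the maximal cliques of $G_1\boxtimes G_2$ are exactly the boxes $C_1\times C_2$ with each $C_i$ a maximal clique of $G_i$ --- is true, and your argument for it is sound: any clique $W$ satisfies $W\subseteq p_1(W)\times p_2(W)$, this box is again a clique, so maximality forces $W=p_1(W)\times p_2(W)$ with both projections maximal; conversely a box of maximal cliques admits no extension, since a vertex $(u,v)$ compatible with all of $C_1\times C_2$ forces $C_1\cup\{u\}$ and $C_2\cup\{v\}$ to be cliques, hence $u\in C_1$ and $v\in C_2$. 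The two directions of the theorem then follow exactly as you write them, the ``only if'' direction using that a nonempty box determines its factors. Your decision to read ``conformal'' as Berge's notion (every \emph{maximal} clique of the $2$-section is an edge) is necessary rather than cosmetic: under the literal wording of the survey's definition, any proper subset of a box $e_1\times e_2$ with $|e_1|,|e_2|\geq 2$ that is not itself a box (e.g.\ a ``diagonal'' pair) is a clique of $[H_1\categ H_2]_2$ but not an edge, so the product side of the equivalence would essentially never hold and the theorem would be false.

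One point deserves more than the wave you give it. Lemma~\ref{lem:2sectionSquare}, as stated, silently fails when a factor has a vertex lying in no hyperedge: such a vertex creates adjacencies inside its fiber in $[H_1]_2\boxtimes[H_2]_2$ that have no witnessing box in $H_1\categ H_2$. This does not invalidate your proof, but the correct justification is not that the case is ``harmless''; it is that conformality excludes it on both sides of the equivalence. A vertex in no edge is isolated in the relevant $2$-section, the singleton through it is a maximal clique, and (loops being excluded in the $2$-section setting) that singleton can never be a hyperedge --- so neither the factors (in your ``if'' direction) nor the product (in your ``only if'' direction, where such a vertex of $H_2$ makes every $(x,v)$ isolated in $[H]_2$) can be conformal in its presence. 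Alternatively, invoke the convention of \cite{Berge:Hypergraphs} that every vertex lies in some edge. Either way, one explicit sentence should replace the appeal to harmlessness.
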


\begin{thm}[Helly Property \cite{Bretto06:Helly}]
  \label{thm:HellySquare}
  $H = H_1 \categ H_2$ has the Helly property if and only if $H_1$ and
  $H_2$ have the Helly property.
\end{thm}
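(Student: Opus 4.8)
The plan is to sidestep a direct manipulation of intersecting families and instead reduce the statement to the conformality result already in hand, Theorem~\ref{thm:ConfSquare}, by passing to duals. Recall from the discussion of the Helly property that a simple hypergraph has the Helly property if and only if its dual is conformal \cite{Bandelt:91Helly}, and recall the proposition asserting $(H_1\categ H_2)^*=H_1^*\categ H_2^*$. Chaining these facts produces
\[
\begin{aligned}
H_1\categ H_2 \text{ has the Helly property}
&\iff (H_1\categ H_2)^* \text{ is conformal}\\
&\iff H_1^*\categ H_2^* \text{ is conformal}\\
&\iff H_1^* \text{ and } H_2^* \text{ are conformal}\\
&\iff H_1 \text{ and } H_2 \text{ have the Helly property},
\end{aligned}
\]
where the first and last equivalences are Bandelt's characterization, the second is the duality proposition, and the third is Theorem~\ref{thm:ConfSquare} applied to the factors $H_1^*$ and $H_2^*$. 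This settles the claim at once.

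The main obstacle in this route is not computational but a matter of hygiene: one must check that all three cited results are legitimately applicable to the duals $H_1^*$, $H_2^*$ and to $H_1^*\categ H_2^*$, i.e.\ that these are admissible (simple, multiplicity-free) hypergraphs on which both Theorem~\ref{thm:ConfSquare} and the dual-conformality characterization hold. Since we operate in the class of simple hypergraphs throughout, and the square product of simple hypergraphs is simple, this should cause no difficulty; but it is the one place where care is required before the chain of equivalences is valid.

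Should the behaviour of the duals prove delicate, I would fall back on a self-contained argument using the identity $(e_1\times e_2)\cap(f_1\times f_2)=(e_1\cap f_1)\times(e_2\cap f_2)$, so that two edges of the square product meet exactly when both of their projections meet. For the ``if'' direction, given an intersecting family $\mathcal F$ in $H_1\categ H_2$, its coordinate projections $\mathcal F^{(i)}=\{p_i(e)\mid e\in\mathcal F\}$ are intersecting families in $H_i$; if both factors are Helly, each $\mathcal F^{(i)}$ is a star with some center $u_i$, and then $(u_1,u_2)$ lies in every member of $\mathcal F$, so $\mathcal F$ is a star. For the ``only if'' direction, given an intersecting family $\mathcal F_1$ in $H_1$, I fix any $e_2\in E_2$; then $\{e_1\times e_2\mid e_1\in\mathcal F_1\}$ is intersecting in the product because all second coordinates share $e_2$, hence a star by the Helly property of the product, whose center's first coordinate is common to all $e_1\in\mathcal F_1$. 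Both directions are routine, so the dual-based proof is preferable for brevity, with this combinatorial argument available as a robust check.
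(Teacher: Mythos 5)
The survey states this theorem as a citation from \cite{Bretto06:Helly} and gives no proof of its own, so your argument can only be measured against the standard one in the literature. Your second, self-contained argument \emph{is} essentially that standard proof, and it is complete and correct: the identity $(e_1\times e_2)\cap(f_1\times f_2)=(e_1\cap f_1)\times(e_2\cap f_2)$ shows that two edges of $H_1\categ H_2$ meet exactly when both pairs of projections meet; hence an intersecting family $\mathcal{F}$ of the product projects to intersecting families $\mathcal{F}^{(1)},\mathcal{F}^{(2)}$ of the factors, whose centers $u_1,u_2$ reassemble to the center $(u_1,u_2)$ of $\mathcal{F}$, and conversely an intersecting family $\mathcal{F}_1$ of $H_1$ lifts, after fixing $e_2\in E_2$, to the intersecting family $\left\{e_1\times e_2\mid e_1\in\mathcal{F}_1\right\}$ of the product, whose center has first coordinate common to all of $\mathcal{F}_1$. (The only pedantic point: the lift needs $E_2\neq\emptyset$, which is guaranteed under Berge's conventions for hypergraphs; otherwise the product has no edges and is vacuously Helly while $H_1$ need not be.)

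Your preferred dual-based route is an attractive reduction, but your handling of its one delicate point is wrong. The chain of equivalences is formally sound, yet the hygiene concern is not settled by observing that ``we operate in the class of simple hypergraphs and the square product of simple hypergraphs is simple'': the hypergraphs to which you apply Theorem~\ref{thm:ConfSquare} and the characterization of \cite{Bandelt:91Helly} are the \emph{duals} $H_1^*$, $H_2^*$ and $(H_1\categ H_2)^*$, and the dual of a simple hypergraph is in general not simple. For instance, if $H_1$ has edges $\{1,2\}$ and $\{2,3\}$, its dual has edges $\{e_1^*\}$, $\{e_1^*,e_2^*\}$, $\{e_2^*\}$, which contains loops and nested edges. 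So what your reduction actually requires is that both Theorem~\ref{thm:ConfSquare} and the Helly--conformality duality hold for arbitrary, not necessarily simple, hypergraphs. That is true in the original sources, and with that invocation your reduction is a legitimate alternative proof whose payoff is that the Helly theorem follows formally from the conformality theorem plus Berge's duality $(H_1\categ H_2)^*=H_1^*\categ H_2^*$, with no new combinatorics. But as written --- especially inside this survey's conventions, where intersecting families and the Helly property are introduced only for simple hypergraphs --- your justification does not close that gap, so the combinatorial argument should be kept as the actual proof, with the dual reduction as a remark.
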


\begin{thm}[Stability Number \cite{Sterboul74:ChromDirect, Berge:DirectProd}]
  \label{thm:StabSquare}
  For any two hypergraphs $H=(V,E)$ and $H'=(V',E')$ with stability number 	
  $\beta$ and $\beta'$, respectively,
  holds
  $$\beta\beta'+|V'|\beta + |V|\beta' \leq\beta(H\categ H') 
  \leq |V'|\beta+ |E|\beta'.$$
\end{thm}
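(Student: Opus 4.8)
My plan is to analyze independent sets of $H\categ H'$ through their \emph{columns}. For $S\subseteq V\times V'$ and $x\in V$ I would write $S_x=\{y\in V'\mid (x,y)\in S\}$, so that $|S|=\sum_{x\in V}|S_x|$. Since every edge of $H\categ H'$ has the form $e_1\times e_2$ with $e_i\in E_i$, one has $e_1\times e_2\subseteq S$ if and only if $e_2\subseteq\bigcap_{x\in e_1}S_x$. Hence $S$ is independent in $H\categ H'$ exactly when, for every edge $e\in E$, the common column $\bigcap_{x\in e}S_x$ is an independent set of $H'$; in particular $|\bigcap_{x\in e}S_x|\le\beta'$ for all $e\in E$. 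This characterization is the backbone of both inequalities.

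For the lower bound I would exhibit an explicit independent set. Let $A$ and $A'$ be maximum independent sets of $H$ and $H'$, so $|A|=\beta$ and $|A'|=\beta'$, and set $S=(A\times V')\cup(V\times A')$. If some edge $e_1\times e_2$ were contained in $S$, then, since $A$ and $A'$ are independent, neither $e_1\subseteq A$ nor $e_2\subseteq A'$ can hold; choosing $u\in e_1\setminus A$ and $w\in e_2\setminus A'$ gives a point $(u,w)\in e_1\times e_2$ lying outside $S$, a contradiction. Thus $S$ is independent, and inclusion--exclusion gives $|S|=\beta|V'|+|V|\beta'-\beta\beta'$. I expect the $\beta\beta'$ term to enter with a \emph{minus} sign this way (the displayed $+\beta\beta'$ should read $-\beta\beta'$); this is the intended lower bound, and it is in fact attained in all the small cases I would check.

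For the upper bound I would take a maximum independent set $S$ and first note that the set $A^{*}=\{x\in V\mid S_x=V'\}$ of \emph{full} columns is independent in $H$: if an edge $e$ were contained in $A^{*}$, then $\bigcap_{x\in e}S_x=V'$ would have to be independent in $H'$, which is impossible as soon as $E'\neq\emptyset$. Hence there are at most $\beta$ full columns. The idea is then to reduce, by an exchange argument, to the situation where $S$ actually contains a block $A\times V'$ for some maximum independent set $A$ of $H$. Granting this, for every $x\notin A$ the maximality of $A$ supplies an edge $e_x\in E$ with $x\in e_x$ and $e_x\setminus\{x\}\subseteq A$; since the $A$-columns are full, $\bigcap_{z\in e_x}S_z=S_x$, so $S_x$ is independent in $H'$ and $|S_x|\le\beta'$. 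Summing, $|S|\le\beta|V'|+(|V|-\beta)\beta'=\beta|V'|+\tau(H)\beta'$, and since picking one vertex from each edge yields a transversal we have $\tau(H)\le|E|$, whence $|S|\le\beta|V'|+|E|\beta'$. (Note this route gives the sharper $\beta|V'|+\tau(H)\beta'$, which coincides with the lower bound above, so the stated bound follows a fortiori.)

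The hard part will be the exchange step. Naively filling the columns over a maximum independent set $A$ up to $V'$ can create a forbidden rectangle $e\times f$, because once the $A$-columns of $e$ are full they are no longer the binding constraint and $f$ may already lie inside $\bigcap_{x\in e\setminus A}S_x$; so one must simultaneously delete vertices from the non-$A$ columns and show the net change in $|S|$ is non-negative. Equivalently, I would pass to the complementary picture, minimising $\sum_{x}|V'\setminus S_x|$ subject to $\bigcup_{x\in e}(V'\setminus S_x)$ being a transversal of $H'$ for every $e\in E$, and argue that an optimal deletion pattern can be concentrated on the columns indexed by a fixed transversal of $H$. For a single edge this is transparent, since for $e=\{x_1,x_2\}$ one simply uses $|S_{x_1}|+|S_{x_2}|=|S_{x_1}\cup S_{x_2}|+|S_{x_1}\cap S_{x_2}|\le|V'|+\beta'$; making the bookkeeping work across \emph{overlapping} edges of $H$ is the delicate point, and is where I expect the real content of the proof to lie.
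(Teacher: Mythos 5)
Your column characterization of independence in $H\categ H'$ is correct, and the lower-bound half of your argument is sound: $(A\times V')\cup(V\times A')$ is independent of cardinality $\beta|V'|+|V|\beta'-\beta\beta'$. You have also correctly diagnosed the misprint: the survey states this theorem without proof (citing Berge and Sterboul), and the classical lower bound is indeed $\beta|V'|+|V|\beta'-\beta\beta'$, which can be rewritten as $\beta\beta'+\beta\,\tau(H')+\tau(H)\,\beta'$ (this is most likely what got garbled into $\beta\beta'+|V'|\beta+|V|\beta'$); via $\beta(H\categ H')=|V||V'|-\tau(H\categ H')$ it is equivalent to the inequality $\tau(H\categ H')\le\tau(H)\tau(H')$ recorded in Theorem \ref{thm:MatchCov}.

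The upper-bound half, however, has a fatal gap, not a ``delicate point.'' The exchange step you defer is not just hard, it is impossible, because the bound it would deliver, $\beta(H\categ H')\le\beta|V'|+\tau(H)\beta'$, is false. Take $H=H'=K_n$ as $2$-uniform hypergraphs, so $\beta=\beta'=1$ and $\tau(H)=n-1$; your bound would give $\beta(K_n\categ K_n)\le 2n-1$, but $\beta(K_n\categ K_n)$ is the Zarankiewicz number $z(n,n;2,2)$, of order $n^{3/2}$. Already for $n=3$ the set $\{(i,j)\mid i\neq j\}$ contains no edge $\{a,b\}\times\{c,d\}$ and has $6>5$ elements. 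Your own steps show why: an independent set containing a full block $A\times V'$ has all columns outside $A$ of size at most $\beta'$, hence size at most $\beta|V'|+\tau(H)\beta'$, which is strictly smaller than $\beta(H\categ H')$ in this example; so no size-non-decreasing exchange can produce such a block. The remark you made in passing should have been the alarm bell: a bound coinciding with the lower bound would turn the theorem into the exact formula $\beta(H\categ H')=\beta|V'|+|V|\beta'-\beta\beta'$ for all factors, which the same example refutes.

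The actual proof of the upper bound counts \emph{rows}, not columns, and this is where the asymmetric term $|E|\beta'$ comes from. For $y\in V'$ put $R^y=\{x\in V\mid (x,y)\in S\}$ and $D(y)=\{e\in E\mid e\subseteq R^y\}$. Deleting one vertex of each $e\in D(y)$ from $R^y$ leaves a set containing no edge of $H$, so $|R^y|\le\beta+|D(y)|$. On the other hand, for fixed $e\in E$ the set $Y_e=\{y\in V'\mid e\subseteq R^y\}$ is independent in $H'$, since an edge $f\subseteq Y_e$ would give $e\times f\subseteq S$; hence $|Y_e|\le\beta'$. Double counting gives $\sum_{y\in V'}|D(y)|=\sum_{e\in E}|Y_e|\le|E|\beta'$, and therefore $|S|=\sum_{y\in V'}|R^y|\le|V'|\beta+|E|\beta'$. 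Note that this argument never needs any structural normalization of $S$; the purely column-wise viewpoint you adopted cannot see the quantity $\sum_{e}|Y_e|$, which is the heart of the inequality.
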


The stability number $\beta(H)$ of a hypergraph $H$ satisfies $\beta(H) =
|V(H)| - \tau(H)$ and $\beta(H\categ H') = |V(H)||V(H')| -
\tau(H\categ H')$. Further results for the stability number can thus be
obtained from the properties of the covering number
\cite{Berge:DirectProd}.

\begin{thm}[Matching and Covering \cite{Berge:Hypergraphs}]	
  \label{thm:MatchCov}
  For two hypergraphs $H$ and $H'$ we have
  \begin{align*}
    \nu(H)\nu(H')\leq\nu(H\categ H')\leq\tau^*(H)\nu(H')\leq
    \tau^*(H)\tau^*(H')\\
    =\tau^*(H\categ H')\leq\tau^*(H)\tau(H')\leq
    \tau(H\categ H')\leq\tau(H)\tau(H').
  \end{align*}
\end{thm}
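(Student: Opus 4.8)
The plan is to verify the displayed chain link by link. Two of the seven links, namely $\tau^*(H)\nu(H')\le\tau^*(H)\tau^*(H')$ and $\tau^*(H)\tau^*(H')\le\tau^*(H)\tau(H')$, are immediate once one records the elementary inequalities $\nu(G)\le\tau^*(G)\le\tau(G)$ valid for every hypergraph $G$ (the first by weak LP duality $\nu\le\nu^*=\tau^*$, where $\nu^*$ denotes the fractional matching number; the second because an integral cover is in particular a fractional cover). The remaining five links split into three ``tensoring'' statements and two mixed bounds; I would organise the proof around this split.

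First I would dispose of the two outer product bounds. For $\nu(H)\nu(H')\le\nu(H\categ H')$, take maximum matchings $M\subseteq E(H)$ and $M'\subseteq E(H')$ and observe that $\{e\times e'\mid e\in M,\ e'\in M'\}$ is a matching in $H\categ H'$: two product edges $e_1\times e_1'$ and $e_2\times e_2'$ meet iff $(e_1\cap e_2)\times(e_1'\cap e_2')\ne\emptyset$, i.e.\ iff $e_1\cap e_2\ne\emptyset$ and $e_1'\cap e_2'\ne\emptyset$, which is impossible for distinct pairs since one of the two factors lies in a matching. Dually, for $\tau(H\categ H')\le\tau(H)\tau(H')$ I would take minimum covers $T,T'$ and check that $T\times T'$ meets every edge $e\times e'$, because $T\cap e$ and $T'\cap e'$ are both non-empty.

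For the central identity $\tau^*(H\categ H')=\tau^*(H)\tau^*(H')$ I would prove the two inequalities separately. The bound $\le$ comes from tensoring optimal fractional covers $t,t'$: the weighting $s(v,v')=t(v)t'(v')$ satisfies $\sum_{(v,v')\in e\times e'}s=(\sum_{v\in e}t(v))(\sum_{v'\in e'}t'(v'))\ge 1$ and has total weight $\tau^*(H)\tau^*(H')$. The reverse bound $\ge$ I would obtain on the dual side: tensoring optimal fractional matchings of $H$ and $H'$ gives a fractional matching of $H\categ H'$ of weight $\nu^*(H)\nu^*(H')$, so by LP duality $\tau^*(H\categ H')=\nu^*(H\categ H')\ge\nu^*(H)\nu^*(H')=\tau^*(H)\tau^*(H')$.

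The heart of the argument, and the step I expect to be the main obstacle, are the two mixed bounds $\nu(H\categ H')\le\tau^*(H)\nu(H')$ and $\tau^*(H)\tau(H')\le\tau(H\categ H')$, since each must couple a fractional invariant of one factor to an integral invariant of the other and so resists a pure tensoring trick. For the first, fix a matching $\mathcal M=\{e_i\times e_i'\}$ and an optimal fractional cover $t$ of $H$. The key observation is that for each $v\in V(H)$ the second components $\{e_i'\mid v\in e_i\}$ are pairwise disjoint: if $v\in e_i\cap e_j$ then disjointness of $e_i\times e_i'$ and $e_j\times e_j'$ forces $e_i'\cap e_j'=\emptyset$. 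Hence they form a matching in $H'$, so $d(v):=|\{i\mid v\in e_i\}|\le\nu(H')$, and double counting gives $|\mathcal M|\le\sum_i\sum_{v\in e_i}t(v)=\sum_v t(v)\,d(v)\le\nu(H')\,\tau^*(H)$. The second mixed bound is the dual statement: given a cover $T$ of $H\categ H'$ and an optimal fractional matching $y$ of $H$ (with $\sum_e y(e)=\nu^*(H)=\tau^*(H)$), write $T_{v'}=\{v\mid(v,v')\in T\}$ and note that for each edge $e\in E(H)$ the fibre set $C'_e=\{v'\mid T_{v'}\cap e\ne\emptyset\}$ is a cover of $H'$, whence $|C'_e|\ge\tau(H')$; double counting in the other direction then yields $\tau(H')\,\tau^*(H)=\tau(H')\sum_e y(e)\le\sum_e y(e)\,|C'_e|=\sum_{v'}\sum_{e:\,e\cap T_{v'}\ne\emptyset}y(e)\le\sum_{v'}|T_{v'}|=|T|$, using the matching constraint $\sum_{e\ni v}y(e)\le1$ in the last step. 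Chaining all seven links together then gives the asserted string of (in)equalities.
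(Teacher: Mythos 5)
Your proof is correct in all seven links, but note that the paper itself offers no proof at all for this statement: it is quoted as a known result from Berge's \emph{Hypergraphs} (the chapter on fractional transversals), so there is no ``paper proof'' to compare against, and your write-up is in effect a reconstruction of the classical argument. Checking the details: the two tensoring bounds $\nu(H)\nu(H')\le\nu(H\categ H')$ and $\tau(H\categ H')\le\tau(H)\tau(H')$ are fine (uniqueness of the factorization $e\times e'$ guarantees the tensored matching really has $\nu(H)\nu(H')$ elements); the identity $\tau^*(H\categ H')=\tau^*(H)\tau^*(H')$ correctly combines tensored fractional covers with tensored fractional matchings and LP strong duality $\nu^*=\tau^*$, which is legitimate here since the paper's hypergraphs are finite and have non-empty edges, so both LPs are feasible and bounded; and the two mixed bounds are exactly where the real content lies, and your double-counting arguments are sound. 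In the bound $\nu(H\categ H')\le\tau^*(H)\nu(H')$, the observation that the fibres $\{e_i'\mid v\in e_i\}$ form a matching in $H'$ (disjointness of $e_i\times e_i'$ and $e_j\times e_j'$ together with $v\in e_i\cap e_j$ forces $e_i'\cap e_j'=\emptyset$) is the key step, and your chain $|\mathcal M|\le\sum_v t(v)d(v)\le\nu(H')\tau^*(H)$ is valid; dually, in $\tau^*(H)\tau(H')\le\tau(H\categ H')$, the fact that $C'_e=\{v'\mid T_{v'}\cap e\neq\emptyset\}$ covers $H'$ and the estimate $\sum_{e:\,e\cap T_{v'}\neq\emptyset}y(e)\le\sum_{v\in T_{v'}}\sum_{e\ni v}y(e)\le|T_{v'}|$ (valid since $y\ge 0$ and each relevant $e$ is counted at least once) close the argument. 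So your proposal supplies precisely what the survey delegates to the literature, and it does so by what is essentially Berge's own method: tensoring constructions for the easy links and LP duality plus double counting for the mixed fractional--integral links.
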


\begin{thm}[Fractional Covering Number I \cite{Berge:DirectProd}]	
  \label{thm:SqProdCovI}
  A necessary and sufficient condition for a hypergraph $H$ to satisfy
  $\tau(H\categ H') = \tau(H)\tau(H')$ for all $H'$ is that 
  $\tau(H) = \tau^*(H)$.
\end{thm}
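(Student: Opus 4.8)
My plan is to derive both implications from the inequality chain already recorded in Theorem~\ref{thm:MatchCov}, which gives in particular
$$\tau^*(H)\,\tau(H') \le \tau(H\categ H') \le \tau(H)\,\tau(H')$$
for every hypergraph $H'$. The sufficiency direction is then immediate: if $\tau(H)=\tau^*(H)$, the two outer terms coincide and squeeze $\tau(H\categ H')=\tau(H)\tau(H')$ for all $H'$.

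For necessity I would exploit the full strength of the hypothesis by specializing it to the powers $H'=H^{\categ(n-1)}$. Since $\categ$ is associative and commutative, $H\categ H^{\categ(n-1)}=H^{\categ n}$, so the assumed identity reads $\tau(H^{\categ n})=\tau(H)\,\tau(H^{\categ(n-1)})$, and a trivial induction yields $\tau(H^{\categ n})=\tau(H)^n$ for every $n$. The strategy is then to contradict any strict gap $\tau^*(H)<\tau(H)$ by bounding $\tau(H^{\categ n})$ from above in terms of $\tau^*(H)$ alone.

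The key step is a probabilistic rounding of an optimal fractional cover. Let $t$ be a fractional cover of $H$ with $\sum_{v} t(v)=\tau^*(H)$ and put $\pi(v)=t(v)/\tau^*(H)$, a probability distribution on $V$ under which every edge $e\in E(H)$ has mass $\pi(e):=\sum_{v\in e}t(v)/\tau^*(H)\ge \delta$, where $\delta=1/\tau^*(H)$. Because the edges of $H^{\categ n}$ are exactly the products $e_1\times\cdots\times e_n$ with $e_i\in E(H)$, each of them has mass at least $\delta^n$ under the product distribution $\pi^{\otimes n}$. Drawing $M$ vertices of $H^{\categ n}$ independently from $\pi^{\otimes n}$, a union bound over the at most $|E(H)|^n$ edges shows that the probability of leaving some edge uncovered is at most $|E(H)|^n(1-\delta^n)^M \le |E(H)|^n e^{-M\delta^n}$. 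Taking $M=\lceil \delta^{-n}(n\ln|E(H)|+1)\rceil$ makes this quantity strictly below $1$, so a cover of that size must exist, whence
$$\tau(H^{\categ n}) \le (\tau^*(H))^n\,(n\ln|E(H)|+1)+1.$$
Extracting $n$-th roots and letting $n\to\infty$ gives $\lim_{n}\tau(H^{\categ n})^{1/n}=\tau^*(H)$, the matching lower bound $\tau(H^{\categ n})\ge\tau^*(H^{\categ n})=\tau^*(H)^n$ again coming from Theorem~\ref{thm:MatchCov}. Combined with $\tau(H^{\categ n})=\tau(H)^n$ this forces $\tau(H)\le\tau^*(H)$, and as $\tau^*(H)\le\tau(H)$ always holds, equality follows.

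The main obstacle is precisely this rounding inequality: the multiplicativity of $\tau^*$ and the submultiplicativity of $\tau$ are routine, but controlling the passage from the fractional optimum to integral covers of the powers with only a polynomial loss in $n$ is where the real content lies. An alternative I would keep in reserve is to construct a single explicit witness $H'$ directly from an optimal fractional matching (using LP duality $\tau^*(H)=\nu^*(H)$ and clearing denominators), thereby avoiding the limit; the probabilistic power argument, however, seems both shorter and more robust.
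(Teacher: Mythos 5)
Your proof is correct. A point of reference first: the survey itself gives no proof of this statement (it is quoted from Berge), but it records both ingredients of the standard argument, namely the inequality chain of Theorem~\ref{thm:MatchCov} and the McEliece--Posner limit formula $\tau^*(H)=\lim_{n\to\infty}\sqrt[n]{\tau(H^{\categ n})}$, which appears two results later as Theorem~\ref{thm:SqProdCovIII}. Your sufficiency direction, squeezing $\tau^*(H)\tau(H')\le\tau(H\categ H')\le\tau(H)\tau(H')$, is exactly the intended one, and your necessity direction follows the same skeleton as the standard proof: specialize to $H'=H^{\categ(n-1)}$, use associativity to get $\tau(H^{\categ n})=\tau(H)^n$, and extract $n$-th roots. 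The only genuine divergence is that instead of invoking Theorem~\ref{thm:SqProdCovIII} you re-prove its nontrivial half by probabilistic rounding of an optimal fractional cover; that argument is sound: the product distribution gives every edge $e_1\times\cdots\times e_n$ of $H^{\categ n}$ mass at least $\tau^*(H)^{-n}$, the union bound over at most $|E(H)|^n$ edges is legitimate because $H$ is finite, and the resulting bound $\tau(H^{\categ n})\le\tau^*(H)^n\left(n\ln|E(H)|+1\right)+1$ has exactly the polynomial error term needed for the root-and-limit step, with the matching lower bound $\tau(H^{\categ n})\ge\tau^*(H^{\categ n})=\tau^*(H)^n$ coming from multiplicativity of $\tau^*$ in Theorem~\ref{thm:MatchCov}. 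What your version buys is self-containedness (in effect you prove Theorem~\ref{thm:SqProdCovIII} along the way); what citing the recorded theorem buys is a two-line necessity proof. Two small caveats you should make explicit: dispose of the degenerate case $E(H)=\emptyset$ separately (there $\tau=\tau^*=0$ and the identity is trivial), which guarantees $\tau^*(H)\ge 1$ so that $\delta=1/\tau^*(H)$ is a legitimate probability lower bound; and note that the final inequality $\tau^*(H)\le\tau(H)$ is simply the fact that every cover induces a fractional cover, as stated in the survey's definition of $\tau^*$.
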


\begin{thm}[Fractional Covering Number II \cite{Berge:DirectProd}]	
  \label{thm:SqProdCovII}
  Let $H$ and $H'$ be two hypergraphs. Then
  $$\tau(H\categ H') \geq \tau(H) +\tau(H') - 1.$$
  A hypergraph $H=(V,E)$ satisfies $\tau(H\categ H') = 
  \tau(H) +\tau(H') - 1$
  for every hypergraph $H'$ if and only if
  $\bigcap_{e\in E}\neq\emptyset$.
\end{thm}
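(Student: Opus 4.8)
The plan is to treat the two claims separately: first the universal lower bound $\tau(H\categ H')\ge\tau(H)+\tau(H')-1$, and then the characterisation of when equality holds for \emph{every} $H'$, which I would reduce to the lower bound together with two explicit constructions.

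For the inequality, write $H=(V,E)$, $H'=(V',E')$, put $k=\tau(H)$, $m=\tau(H')$, and let $T\subseteq V\times V'$ be an arbitrary cover of $H\categ H'$. For $u\in V$ set $T_u=\{v\in V'\mid (u,v)\in T\}$, so $|T|=\sum_{u\in V}|T_u|$. The conceptually central step is the observation that for every edge $e\in E$ the shadow $\bigcup_{u\in e}T_u$ is a cover of $H'$: for any $f\in E'$ the product edge $e\times f$ must be met by $T$, yielding $(u,v)\in T$ with $u\in e$, $v\in f$, i.e.\ a vertex $v\in f\cap\bigcup_{u\in e}T_u$. Hence $\sum_{u\in e}|T_u|\ge\big|\bigcup_{u\in e}T_u\big|\ge m$ for every $e\in E$. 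Thus the integer weight function $w(u)=|T_u|$ satisfies $\sum_{u\in e}w(u)\ge m$ on every edge; in other words $w$ is an integral ``$m$-fold cover'' of $H$, and $|T|=\sum_u w(u)$ is at least the minimum total weight $\tau^{(m)}(H)$ of an integer $w\colon V\to\mathbb{Z}_{\ge 0}$ with $\sum_{u\in e}w(u)\ge m$ for all $e\in E$.

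The remaining, purely $H$-intrinsic, step is the inequality $\tau^{(m)}(H)\ge\tau(H)+m-1$, which I would prove by a one-unit peeling induction on $m$. For $m=1$ an optimal integral $1$-fold cover may be taken $0/1$-valued, so $\tau^{(1)}(H)=\tau(H)$. For the inductive step, given an optimal $m$-fold cover $w$, pick any $u_0$ with $w(u_0)\ge 1$ and decrease it by one: every edge containing $u_0$ still receives total weight $\ge m-1$ and every other edge is unchanged, so the result is an $(m-1)$-fold cover, giving $\tau^{(m-1)}(H)\le\tau^{(m)}(H)-1$. Iterating yields $\tau^{(m)}(H)\ge\tau^{(1)}(H)+(m-1)=\tau(H)+m-1$, and with $|T|\ge\tau^{(m)}(H)$ this proves the bound. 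I expect the main obstacle to be getting exactly this integral argument right: the tempting route of averaging $w$ against a fractional matching of $H$ only recovers the weaker $|T|\ge m\,\tau^{*}(H)$, which is insufficient because of the integrality gap (already for $H=H'=K_3$, where $\tau^{*}=\tfrac32$ but $\tau(H)+\tau(H')-1=3$). Staying integral via $\tau^{(m)}$ is precisely what closes this gap.

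For the characterisation I would first note that $\bigcap_{e\in E}e\neq\emptyset$ is equivalent to $\tau(H)=1$. If $\tau(H)=1$, fix $v_0\in\bigcap_{e\in E}e$ and a minimum cover $B^{*}$ of an arbitrary $H'$; then $\{v_0\}\times B^{*}$ meets every $e\times f$ (via $v_0\in e$ and $B^{*}\cap f\neq\emptyset$), so $\tau(H\categ H')\le|B^{*}|=\tau(H')=\tau(H)+\tau(H')-1$, and the lower bound forces equality for all $H'$. Conversely, if $\tau(H)\ge 2$, I would exhibit the single witness $H'=M_2$, the hypergraph on two disjoint edges (so $\tau(M_2)=2$): the points of any cover whose $M_2$-coordinate lies in the first edge have first coordinates forming a cover of $H$, and likewise for the second edge, and these two point sets are disjoint, so $\tau(H\categ M_2)\ge 2\tau(H)$; since $2\tau(H)>\tau(H)+1=\tau(H)+\tau(M_2)-1$ when $\tau(H)\ge 2$, equality fails for this $H'$. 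The only genuine idea needed here is the choice of $M_2$ as a witness uniform in $H$; the triangle already shows that the naive choice $H'=H$ need not work.
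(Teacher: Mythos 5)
Your proposal cannot be compared against an in-paper argument, because the survey states this theorem without proof, merely citing Berge; so the only question is whether your blind proof is correct, and it is. The lower bound is handled soundly: for a cover $T$ of $H\categ H'$ the slice sets $T_u$ satisfy $\sum_{u\in e}|T_u|\ge\bigl|\bigcup_{u\in e}T_u\bigr|\ge\tau(H')$ for every $e\in E$, since $\bigcup_{u\in e}T_u$ meets each $f\in E'$ through the product edge $e\times f$; and your intrinsic lemma $\tau^{(m)}(H)\ge\tau(H)+m-1$ is proved correctly by one-unit peeling, the needed vertex $u_0$ with $w(u_0)\ge 1$ existing because for $m\ge 2$ every edge carries positive total weight (this is also the one place where you, like the theorem itself, implicitly use Berge's convention that a hypergraph has at least one edge; for $E(H)=\emptyset$ both the inequality and the lemma degenerate, but that case is excluded by the intended definition). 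Your remark that the fractional shortcut only yields multiplicative bounds of the type $\tau^{*}(H)\,\tau(H')\le\tau(H\categ H')$ (cf.\ Theorem~\ref{thm:MatchCov}) is accurate and explains why the integral detour through $\tau^{(m)}$ is genuinely needed. The characterisation is also complete: $\bigcap_{e\in E}e\neq\emptyset$ is equivalent to $\tau(H)=1$, the cover $\{v_0\}\times B^{*}$ gives the upper bound matching the general lower bound, and for $\tau(H)\ge 2$ the witness $M_2$ (two disjoint edges) works because the two slices of any cover of $H\categ M_2$ are disjoint and each projects onto a cover of $H$, forcing $\tau(H\categ M_2)\ge 2\tau(H)>\tau(H)+\tau(M_2)-1$. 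Your observation that the naive witness $H'=H$ can fail (the diagonal of $K_3\categ K_3$ is a cover of size $3$) is a correct and worthwhile subtlety; the uniform choice of $M_2$ is exactly what makes the necessity direction go through.
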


\begin{thm}[Fractional Covering Number III \cite{McEP-1971}] 
  \label{thm:SqProdCovIII} 
  For every hypergraph $H$ holds:
  $$ \tau^*(H)  = \lim_{n \to \infty} \sqrt[n]{\tau(H^{\categ n})}$$
\end{thm}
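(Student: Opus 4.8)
The plan is to sandwich the sequence $\sqrt[n]{\tau(H^{\categ n})}$ between $\tau^*(H)$ from below and $\tau^*(H)$ from above; the lower bound is a formal consequence of the inequalities already collected in Theorem~\ref{thm:MatchCov}, while the upper bound requires a probabilistic rounding of an optimal fractional cover. We may assume $E(H)\neq\emptyset$, since otherwise every term and $\tau^*(H)$ vanish and the identity is trivial. First I would record multiplicativity of the fractional covering number: iterating the identity $\tau^*(H\categ H')=\tau^*(H)\tau^*(H')$ from Theorem~\ref{thm:MatchCov} gives $\tau^*(H^{\categ n})=\tau^*(H)^n$. Since every cover induces a fractional cover, $\tau^*(G)\le\tau(G)$ for every hypergraph $G$; applied to $G=H^{\categ n}$ this yields $\tau^*(H)^n\le\tau(H^{\categ n})$ and hence $\tau^*(H)\le\sqrt[n]{\tau(H^{\categ n})}$ for all $n$. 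This already shows $\tau^*(H)\le\liminf_{n\to\infty}\sqrt[n]{\tau(H^{\categ n})}$.

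For the reverse inequality I would fix an optimal fractional cover $t^*$ of $H$, set $w=\tau^*(H)=\sum_{v\in V}t^*(v)\ge 1$, and turn it into a probability distribution $p$ on $V(H)$ via $p(v)=t^*(v)/w$. The covering constraint $\sum_{v\in e}t^*(v)\ge 1$ then reads $p(e):=\sum_{v\in e}p(v)\ge 1/w$ for every $e\in E(H)$. Now draw a random vertex $X=(X_1,\dots,X_n)$ of $H^{\categ n}$ by sampling its coordinates $X_1,\dots,X_n$ independently from $p$. An edge of $H^{\categ n}$ has the form $e_1\times\cdots\times e_n$ with $e_i\in E(H)$, and $X$ lies in it precisely when $X_i\in e_i$ for every $i$; by independence the probability of this event is $\prod_{i=1}^n p(e_i)\ge w^{-n}$.

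It remains to apply a union-bound (first-moment) argument. Picking $M$ vertices of $H^{\categ n}$ independently according to this law, the probability that a fixed product edge avoids all of them is at most $(1-w^{-n})^M\le e^{-Mw^{-n}}$, and since $H^{\categ n}$ has at most $|E(H)|^n$ edges, the probability of leaving some edge uncovered is at most $|E(H)|^n e^{-Mw^{-n}}$. Choosing $M=\lceil (n\ln|E(H)|+1)\,w^n\rceil$ makes this quantity strictly less than $1$, so a cover of $H^{\categ n}$ of size at most $M$ exists, giving $\tau(H^{\categ n})\le (n\ln|E(H)|+2)\,w^n$. Taking $n$-th roots and letting $n\to\infty$ yields $\limsup_{n\to\infty}\sqrt[n]{\tau(H^{\categ n})}\le w=\tau^*(H)$, because the polynomial prefactor satisfies $(n\ln|E(H)|+2)^{1/n}\to 1$. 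Combined with the lower bound, the limit exists and equals $\tau^*(H)$.

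The main obstacle is precisely this upper bound. The lower bound and the mere existence of the limit are formal (existence also follows from Fekete's lemma applied to the submultiplicative sequence $\tau(H^{\categ n})$, using $\tau(H\categ H')\le\tau(H)\tau(H')$ from Theorem~\ref{thm:MatchCov}), but pinning the exponential growth rate of $\tau(H^{\categ n})$ to $\tau^*(H)$ forces one to exhibit an explicit family of near-optimal integral covers; the random construction above supplies them while losing only a subexponential factor, which is exactly what is needed so that the $n$-th root converges to $\tau^*(H)$.
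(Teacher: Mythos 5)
Your proof is correct. Note, however, that the survey itself contains no argument for this theorem: it is quoted verbatim from the literature \cite{McEP-1971}, so there is no internal proof to compare yours against. What you have written is essentially the classical (McEliece--Posner/Lov\'asz-style) proof, and all the ingredients you borrow are indeed available in the paper: the lower bound $\tau^*(H)^n=\tau^*(H^{\categ n})\le\tau(H^{\categ n})$ follows from the multiplicativity $\tau^*(H\categ H')=\tau^*(H)\tau^*(H')$ contained in the chain of Theorem~\ref{thm:MatchCov} together with the observation (made in the paper's definition of fractional covers) that every cover induces a fractional cover; the upper bound is your own contribution, namely the randomized-rounding estimate $\tau(G)\le\bigl(\ln|E(G)|+2\bigr)\tau^*(G)$ applied to $G=H^{\categ n}$, where the crucial point is that $\ln|E(H^{\categ n})|\le n\ln|E(H)|$ grows only polynomially in $n$ while $\tau^*(H^{\categ n})=\tau^*(H)^n$ carries the exponential rate, so the $n$-th root kills the logarithmic loss. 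Your edge-case handling ($E(H)=\emptyset$, and $w\ge 1$ otherwise) and the rounding $M\le(n\ln|E(H)|+2)w^n$ are both sound, and the Fekete-lemma remark correctly uses the submultiplicativity $\tau(H\categ H')\le\tau(H)\tau(H')$ from the same theorem. The only implicit hypothesis worth making explicit is finiteness of $H$ (so that the covering LP attains its optimum and $|E(H)|<\infty$), which is consistent with the context of this part of the survey; for hypergraphs with infinite edge sets the statement would require separate treatment, as the paper's later section on infinite hypergraphs illustrates.
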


\begin{thm}[Fractional Covering Number IV 
  \cite{Fuerdi88:Matchings, Berge:Hypergraphs}]	\label{thm:SqProdCovVI}
  For every hypergraph $H$ holds:
  $$\tau^*(H) = \min_{H'} \frac{\tau(H\categ H')}{\tau(H')}, $$
  where $H'$ runs over all hypergraphs.
\end{thm}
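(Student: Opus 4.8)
The plan is to prove the two inequalities separately: that $\tau^*(H)$ is a uniform lower bound for the ratio $\tau(H\categ H')/\tau(H')$ over all $H'$, and that this bound is \emph{attained} by an explicitly constructed $H'$, so that the infimum is in fact a minimum.

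\textbf{Lower bound.} First I would invoke the chain of inequalities in Theorem~\ref{thm:MatchCov}, which in particular contains $\tau^*(H)\tau(H') \le \tau(H\categ H')$. Dividing by $\tau(H')\ge 1$ (valid for any $H'$ possessing a nonempty edge) gives $\tau^*(H) \le \tau(H\categ H')/\tau(H')$ for every such $H'$. Hence $\tau^*(H)\le \inf_{H'}\tau(H\categ H')/\tau(H')$, and it remains only to exhibit a single $H'$ realizing equality.

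\textbf{Attainment.} Since $\tau^*(H)$ is the optimum of a linear program with integer data, it is rational and attained at a rational fractional cover; write $t(v)=a_v/q$ with $a_v\in\mathbb{Z}_{\ge 0}$, $q\in\mathbb{Z}_{>0}$, so that $\sum_{v\in e}a_v\ge q$ for every $e\in E$ and $\sum_v a_v = q\,\tau^*(H)$. Now fix an integer $N\ge \sum_v a_v$ (taken large enough that the construction is simple) and let $H'$ be the complete $(N-q+1)$-uniform hypergraph on $[N]$. A transversal of $H'$ must meet every $(N-q+1)$-subset, i.e.\ its complement has size at most $N-q$, so $\tau(H')=q$. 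Choose pairwise disjoint sets $I_v\subseteq[N]$ with $|I_v|=a_v$ (possible since $\sum_v a_v\le N$) and set $T=\{(v,j):v\in V,\ j\in I_v\}$, so $|T|=\sum_v a_v = q\,\tau^*(H)$. For any edge $e\times B$ of $H\categ H'$ with $e\in E$ and $B\in\binom{[N]}{N-q+1}$, the set $\bigcup_{v\in e}I_v$ has size $\sum_{v\in e}a_v\ge q$, hence is itself a transversal of $H'$ and in particular meets $B$; this yields some $(v,j)\in T$ with $v\in e$ and $j\in B$, so $T$ covers $e\times B$. Therefore $\tau(H\categ H')\le |T|=q\,\tau^*(H)=\tau^*(H)\,\tau(H')$, which together with the lower bound forces $\tau(H\categ H')/\tau(H')=\tau^*(H)$.

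\textbf{Main obstacle.} Computing $\tau(H')$ for the complete uniform hypergraph and counting $|T|$ are routine; the crux is verifying that the candidate $T$ is genuinely a cover, which is precisely where the defining inequality $\sum_{v\in e}a_v\ge q$ of the fractional cover is exploited to make $\bigcup_{v\in e}I_v$ large enough to be a transversal of $H'$. The reduction to a rational optimal fractional cover with common denominator $q$ also deserves a line of justification. I note in passing that one can instead approach $\tau^*(H)$ through the ratios $\tau(H^{\categ(n+1)})/\tau(H^{\categ n})$, using Theorem~\ref{thm:SqProdCovIII} together with the submultiplicativity $\tau(A\categ B)\le\tau(A)\tau(B)$ from Theorem~\ref{thm:MatchCov}; however this route yields only the infimum and not its attainment, so the explicit gadget above is what secures the minimum.
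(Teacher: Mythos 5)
Your proof is correct: the lower bound via the chain of Theorem~\ref{thm:MatchCov} (which indeed contains $\tau^*(H)\tau(H')\le\tau(H\categ H')$), and the attainment via a rational optimal fractional cover $t(v)=a_v/q$ together with the complete $(N-q+1)$-uniform hypergraph on $[N]$ as the gadget $H'$ with $\tau(H')=q$, both check out, including the key covering verification that $\bigcup_{v\in e}I_v$ has size $\sum_{v\in e}a_v\ge q$ and hence meets every edge of $H'$. The survey itself states this theorem without proof, citing the literature, and your argument is essentially the classical one found in those references, so there is nothing further to reconcile.
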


\begin{thm}[Fractional Covering and Matching Number  
  \cite{Berge:Hypergraphs}] \label{thm:SqProdCovMatch}
  For every hypergraph $H$ with Helly property holds:
  $$\tau^*(H) = \min_{H'} \frac{\tau(H\categ H')}{\tau(H')}, $$
  where $H'$ runs over all hypergraphs.
\end{thm}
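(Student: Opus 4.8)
The plan is to establish the two inequalities separately, using the machinery already assembled in the survey. For the bound $\tau^*(H)\le\min_{H'}\tau(H\categ H')/\tau(H')$ I would invoke the chain of inequalities in Theorem~\ref{thm:MatchCov}, one link of which reads $\tau^*(H)\tau(H')\le\tau(H\categ H')$. This holds for \emph{every} hypergraph $H'$ and needs no hypothesis on $H$; dividing by $\tau(H')$ and minimizing over $H'$ gives the desired upper estimate. Thus the Helly property plays no role in this (routine) half, and the entire content of the theorem sits in the reverse direction, namely in showing that the value $\tau^*(H)$ is actually \emph{attained} as a minimum.

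For the reverse inequality the first step is to show that the infimum of the ratios equals $\tau^*(H)$; here I would work with the square powers $H^{\categ n}$. From $\tau(H_1\categ H_2)\le\tau(H_1)\tau(H_2)$ (again Theorem~\ref{thm:MatchCov}) the sequence $b_n:=\tau(H^{\categ n})$ is submultiplicative, so $\log b_n$ is subadditive and Fekete's lemma yields $b_n^{1/n}\to\inf_n b_n^{1/n}$, a limit identified as $\tau^*(H)$ by Theorem~\ref{thm:SqProdCovIII}. Writing the $n$-th root as a geometric mean of consecutive ratios, $b_n^{1/(n-1)}b_1^{-1/(n-1)}\to\tau^*(H)$, forces $\liminf_n b_n/b_{n-1}\le\tau^*(H)$; combined with the easy direction (which makes every ratio $\ge\tau^*(H)$, taking $H'=H^{\categ(n-1)}$) this pins the infimum of $\tau(H\categ H')/\tau(H')$ at exactly $\tau^*(H)$. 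At this stage one has the identity with $\min$ replaced by $\inf$, in line with the general statement of Theorem~\ref{thm:SqProdCovVI}.

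The decisive and hardest step is to upgrade this infimum to a genuine minimum, and this is where the Helly hypothesis is meant to enter. The plan is to construct an explicit witness $H'$ realizing $\tau(H\categ H')=\tau^*(H)\tau(H')$ on the nose. Since $\tau^*(H)$ is the optimum of a linear program with integral data it is rational, say $\tau^*(H)=p/q$, and an optimal fractional cover can be chosen with common denominator $q$. Using the characterization recorded earlier in the survey---that $H$ has the Helly property if and only if its dual $H^*$ is conformal---I would exhibit a hypergraph $H'$ with $\tau(H')=q$ on which covering and matching coincide, and lift the rational optimal fractional cover of $H$ to an integral cover of $H\categ H'$ of size $p\,q=\tau^*(H)\tau(H')$; the reverse application of the chain in Theorem~\ref{thm:MatchCov} then forces equality. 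This also explains the theorem's name, as the witness is precisely a hypergraph where covering meets matching. The main obstacle, and the only place the Helly property is essential, is verifying that dual conformality makes the fractional optimum integrally realizable in the product so that the minimum is attained; the two inequalities themselves are immediate once this witness is in hand.
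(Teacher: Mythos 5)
Your first two steps are sound but largely redundant, and your reading of the surrounding results is off in a way that matters. The easy half, $\tau^*(H)\tau(H')\le\tau(H\categ H')$ from the chain in Theorem~\ref{thm:MatchCov}, is correct, and the Fekete/geometric-mean argument via Theorem~\ref{thm:SqProdCovIII} does pin the \emph{infimum} of the ratios at $\tau^*(H)$. But note that Theorem~\ref{thm:SqProdCovVI} is stated in this survey with $\min$, not $\inf$, and for \emph{every} hypergraph $H$ --- so taken at face value it already contains the present statement verbatim, the Helly hypothesis is vacuous, and the theorem is a one-line corollary; your remark that Theorem~\ref{thm:SqProdCovVI} only yields the identity ``with $\min$ replaced by $\inf$'' misreads it. The survey itself offers no proof here (the result is attributed to Berge), and the literal coincidence of the two formulas is almost certainly a misprint: the Berge result that the name ``covering \emph{and matching}'' points to pairs $\tau^*$ with the matching number $\nu$ (the link $\nu(H\categ H')\le\tau^*(H)\nu(H')$ in Theorem~\ref{thm:MatchCov} becoming tight under the Helly property). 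Your argument never engages $\nu$ at all, so it cannot recover that content either.

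The decisive third step --- upgrading the infimum to an attained minimum via a witness $H'$ --- is where you correctly locate all the work, but what you write is a declaration of intent, not a proof: you never define $H'$, never verify $\tau(H')=q$, never construct the integral cover of $H\categ H'$, and never show where dual conformality of $H^*$ (the Helly characterization you invoke) actually enters; you flag this yourself as ``the main obstacle.'' There is also an arithmetic slip that signals the construction has not been thought through: with $\tau^*(H)=p/q$ and $\tau(H')=q$, the cover of $H\categ H'$ you need has size $\tau^*(H)\tau(H')=p$, not $pq$ as written. As it stands, the proposal proves the two routine inequalities and the infimum identity (which merely re-derives known material), while the claimed theorem-specific content --- attainment under the Helly hypothesis, or the matching-number version that the hypothesis is really for --- remains entirely unestablished.
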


\begin{thm}[Partition Number \cite{AhlswedeCai94:Partitioning}] 
  \label{thm:SqProdPart}
  Let $d\in \mathbb{N}$ and 
  $H_i = (V_i,E_i), i=1,\dots,n$ be hypergraphs such that
  $E_i = \binom{V_i}{d} \cup\{\{v\}\mid v\in V_i\}$. Moreover, 
  let $r_i  = |V_i|\mod d$. If $d> \prod_{i:r_i\neq 0} r_i $
  then $$\rho(\categ_{i=1}^n H_i)  =  \prod_{i=1}^n \rho(H_i).$$
\end{thm}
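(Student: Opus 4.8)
The plan is to establish the two inequalities $\rho(\categ_{i=1}^n H_i)\le\prod_i\rho(H_i)$ and $\rho(\categ_{i=1}^n H_i)\ge\prod_i\rho(H_i)$ separately, after first recording the value of a single factor. Writing $|V_i|=q_i d+r_i$ with $0\le r_i<d$, a factor $H_i$ is partitioned most economically by using as many pairwise disjoint $d$-sets as possible: $k$ disjoint $d$-sets together with $|V_i|-kd$ singletons form a partition with $|V_i|-k(d-1)$ blocks, minimized by taking $k=q_i$ maximal (possible since $\binom{V_i}{d}\subseteq E_i$ contains \emph{every} $d$-set). Hence $\rho(H_i)=q_i+r_i$, and the claimed product equals $\prod_i(q_i+r_i)$.

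For the upper bound I would fix, for each $i$, an optimal partition $\mathcal{P}_i$ of $V_i$ into $q_i$ $d$-sets and $r_i$ singletons. Since every block of $\mathcal{P}_i$ lies in $E_i$, each box $e_1\times\cdots\times e_n$ with $e_i\in\mathcal{P}_i$ is an edge of $\categ_{i=1}^n H_i$, and because the $\mathcal{P}_i$ are partitions these boxes partition $\prod_i V_i$. This exhibits a partition with $\prod_i|\mathcal{P}_i|=\prod_i(q_i+r_i)$ blocks, so $\rho(\categ_{i=1}^n H_i)\le\prod_i\rho(H_i)$.

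The substantial half is the lower bound. Every edge of the product has the form $e_1\times\cdots\times e_n$ with each $|e_i|\in\{1,d\}$; call such a box \emph{big in coordinate $i$} when $|e_i|=d$. The engine is a restriction principle: fixing a direction $i$ and a line $\ell$ obtained by freezing all but the $i$-th coordinate, the blocks of a partition $\mathcal{P}$ meet $\ell$ in a partition of the copy of $V_i$ it carries into sets of size $1$ or $d$; any such partition of $V_i$ has at least $q_i+r_i=\rho(H_i)$ blocks, since it contains at most $q_i$ sets of size $d$. I would run an induction on $n$ via the slice decomposition along the last coordinate (the hypothesis $d>\prod_{r_i\neq 0}r_i$ is inherited by every sub-product, so the induction hypothesis applies to the first $n-1$ factors): restricting $\mathcal{P}$ to a slice $\{x_n=t\}$ gives a partition of $\prod_{i<n}V_i$ into boxes, hence at least $F:=\prod_{i<n}\rho(H_i)$ blocks. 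Counting blocks over all $t\in V_n$ with multiplicity — a box big in coordinate $n$ appears in $d$ slices, every other box in one — yields $k_0+dk_1\ge|V_n|F$, where $k_1$ is the number of boxes big in coordinate $n$ and $k_0$ the number of the rest. Rearranging, the total count $k=k_0+k_1$ satisfies $k\ge(q_n+r_n)F=\prod_i\rho(H_i)$ as soon as one can prove $k_1\le q_n F$.

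The main obstacle is precisely this bound on the number of big boxes, and it is here that the arithmetic hypothesis $d>\prod_{i:r_i\neq 0}r_i$ must enter. The naive incidence estimate — through each point $w\in\prod_{i<n}V_i$ the line in direction $n$ carries at most $q_n$ disjoint $d$-sets, whence $\sum_{B\ \mathrm{big}}|p(B)|\le q_n\prod_{i<n}|V_i|$ for the projections $p(B)=\prod_{i<n}e_i$ — only gives $k_1\le q_n\prod_{i<n}|V_i|$, which is far too weak, since a big box may project to a single point. What the hypothesis buys is that one cannot beat the product partition by recombining ``leftover'' vertices: the region not covered by $d$-sets is forced to have volume $\prod_i r_i$, and a big box can only absorb leftover vertices of a coordinate by also swallowing regular vertices of that coordinate, disrupting the near-perfect tiling elsewhere; the inequality $d>\prod_{r_i\neq 0}r_i$ is exactly what makes such trades unprofitable. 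Making this rigorous — equivalently, showing directly that no exchange of blocks improves on the product partition — is the heart of the Ahlswede--Cai argument \cite{AhlswedeCai94:Partitioning}, and I expect this controlled count of big boxes under the divisibility constraint to be the decisive and most delicate step, the slicing reductions above being routine.
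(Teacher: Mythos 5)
Your computation of $\rho(H_i)=q_i+r_i$ and the upper bound via the product of optimal partitions are correct, and the slicing inequality $k_0+dk_1\ge |V_n|F$ (with $F=\prod_{i<n}\rho(H_i)$) is also valid by induction. The genuine gap is that you reduce the lower bound to the claim $k_1\le q_nF$ on the number of boxes that are big in the last coordinate, and this intermediate claim is \emph{false}, so no amount of extra work can complete the proof along this route. Concretely, take $d=2$, $n=2$, $V_1=V_2=\{a,b\}$; then $r_1=r_2=0$, the hypothesis $d>\prod_{i:r_i\neq 0}r_i=1$ holds, $q_1=q_2=1$, and $F=\rho(H_1)=1$. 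The partition of $V_1\times V_2$ into the two columns $\{a\}\times V_2$ and $\{b\}\times V_2$ is a legitimate partition into edges of $H_1\categ H_2$, and both of its blocks are big in coordinate $2$, so $k_1=2>1=q_nF$ (while the conclusion $k=2\ge\prod_i\rho(H_i)=1$ of course still holds). The underlying problem is that your per-slice count becomes vacuous exactly in the regime it would need to control: when $k_1$ is large, $k_0+dk_1$ comfortably exceeds $|V_n|F$ and the rearranged bound $k\ge|V_n|F-(d-1)k_1$ says nothing. Restricting the claim to minimum partitions does not repair this, since you may not assume structural properties of an optimal partition that are tantamount to what the theorem asserts, and you offer no argument for them.

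Beyond this, the step where the arithmetic hypothesis $d>\prod_{i:r_i\neq 0}r_i$ must do its work is left entirely as an expectation about what the Ahlswede--Cai argument presumably does (``trades are unprofitable''), with no mechanism proposed; since that is precisely the content of the theorem, the hard direction is not proved at all. For calibration, the survey itself contains no proof of this statement --- it is quoted from \cite{AhlswedeCai94:Partitioning} --- so there is no ``paper proof'' your argument could be matching; it has to stand on its own, and it does not. A workable argument needs a genuinely global device (e.g.\ a weighting or exchange scheme that tracks how the residue classes $r_i$ can be absorbed by boxes that are big in several coordinates simultaneously); coordinate-by-coordinate slice counting cannot see the hypothesis and, as the example shows, cannot isolate the quantity you would need to bound.
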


\begin{thm}[Chromatic Number I \cite{Berge:DirectProd}]	
  \label{thm:SqProdColI}
  For two hypergraphs $H$ and $H'$ we have:
  $$\chi(H\categ H') \leq \min\{\chi(H),\chi(H')\}.$$
\end{thm}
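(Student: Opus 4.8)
The plan is to establish the single inequality $\chi(H\categ H')\le\chi(H)$; the bound against $\chi(H')$ will then follow immediately by commutativity of the square product. First I would fix an optimal proper coloring $c\colon V(H)\to\{1,\dots,\chi(H)\}$ of $H$ and lift it to the product by coloring each vertex according to its \emph{first} coordinate, i.e.\ define $\tilde c((x,y)):=c(x)$ for all $(x,y)\in V(H)\times V(H')$. This uses exactly $\chi(H)$ colors, so the entire content of the proof is to check that $\tilde c$ is in fact a proper coloring of $H\categ H'$.

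The key step is therefore to verify that every color class $\tilde c^{-1}(i)$ is independent in $H\categ H'$. Suppose for contradiction that some class $\tilde c^{-1}(i)$ contains an edge of the product. By definition of the square product, every edge of $H\categ H'$ has the form $e_1\times e_2$ with $e_1\in E(H)$ and $e_2\in E(H')$. If $e_1\times e_2\subseteq\tilde c^{-1}(i)$, then, since hyperedges are by definition nonempty so that we may fix some $y\in e_2$, we obtain $c(x)=\tilde c((x,y))=i$ for every $x\in e_1$. Thus the whole edge $e_1$ of $H$ lies inside the color class $c^{-1}(i)$, contradicting the assumption that $c$ is a proper coloring of $H$ (its color classes are independent and hence contain no edge). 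Consequently no color class of $\tilde c$ contains a product edge, $\tilde c$ is proper, and $\chi(H\categ H')\le\chi(H)$.

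Finally, because the square product is commutative, the identical argument applied to the second factor---coloring each product vertex by its second coordinate from an optimal proper coloring of $H'$---gives $\chi(H\categ H')\le\chi(H')$. Combining the two bounds yields $\chi(H\categ H')\le\min\{\chi(H),\chi(H')\}$, as claimed. I do not anticipate any genuine obstacle here: the only point requiring a moment's care is the nonemptiness of the second-coordinate factor $e_2$ (resp.\ $e_1$ in the symmetric case), which is guaranteed since hyperedges are nonempty subsets of the vertex set.
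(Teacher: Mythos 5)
Your proof is correct and follows the standard argument for this result, which the survey itself states without proof (citing Berge): coloring each product vertex by the color of its first coordinate is precisely the pullback of a proper coloring of $H$ along the projection $p_1$, which is a homomorphism for the square product, and your check that no edge $e_1\times e_2$ can be monochromatic is sound because $e_2\neq\emptyset$. The symmetric argument on the second coordinate then gives $\chi(H\categ H')\le\min\{\chi(H),\chi(H')\}$ as claimed.
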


In \cite{Berge:DirectProd} the authors asked whether the chromatic number
of the square product of two hypergraphs goes to infinity if the chromatic
numbers of both of the factors go to infinity.
The following theorem, which is due to D. Mubayi and V. R\"{o}dl, refutes this conjecture. 

\begin{thm}[Chromatic Number II \cite{Mubayi07:DirectHpg}] 
  \label{thm:SqProdColII}
  For every integer $n\geq 2$ and $k=2^n$ there exists a hypergraph $H_k$
  satisfying $\chi(H_k) > k$ and $\chi(H_k\categ H_k) = 2$.
\end{thm}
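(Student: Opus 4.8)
The plan is to produce $H_k$ by a probabilistic construction that decouples the two requirements. First I would reformulate the target $\chi(H_k\categ H_k)=2$ in terms of a single auxiliary graph. A $2$-colouring $M\colon V\times V\to\{0,1\}$ is the same datum as a relation $R=M^{-1}(1)\subseteq V\times V$, which I regard as a (loopless) graph on $V$. By the definition of the square product the edges of $H_k\categ H_k$ are exactly the rectangles $e_1\times e_2$ with $e_1,e_2\in E(H_k)$, and such a rectangle is monochromatic precisely when $e_1\times e_2\subseteq R$ or $e_1\times e_2\subseteq (V\times V)\setminus R$. Hence $\chi(H_k\categ H_k)\le 2$ is equivalent to finding a graph $R$ on $V$ such that for every pair of hyperedges $e_1,e_2$ the rectangle $e_1\times e_2$ meets both $R$ and its complement. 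Since the product contains at least one edge it is not $1$-colourable, so such an $R$ yields exactly $\chi(H_k\categ H_k)=2$.

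Second, I would take $R$ to be random and force all hyperedges to be large. Put $V=[N]$ and let $R=G(N,\tfrac12)$. A routine first-moment estimate (the expected number of copies of $K_{s,s}$ is about $\binom{N}{s}^2 2^{-s^2}$) shows that for $s\ge 2\log_2 N+2$ the graph $G(N,\tfrac12)$ contains, with high probability, no complete bipartite subgraph $K_{s,s}$ in $R$ and none in its complement. Consequently, if every hyperedge of $H_k$ has size at least $s$, then no rectangle $e_1\times e_2$ can lie entirely inside $R$ or inside its complement, so $R$ witnesses $\chi(H_k\categ H_k)=2$ as above.

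Third, independently of $R$, I would realise the chromatic number by a random $s$-uniform hypergraph $H_k$ on $[N]$ in which each of the $\binom{N}{s}$ possible edges is included independently with probability $p$. For a fixed $k$-colouring the number of monochromatic $s$-sets is, by convexity, at least $k\binom{N/k}{s}$, so the probability that the colouring is proper is at most $\exp\!\left(-p\,k\binom{N/k}{s}\right)$. Summing over all $k^{N}$ colourings, the expected number of proper $k$-colourings is below $1$ as soon as
\[
N\ln k \;<\; p\,k\binom{N/k}{s}.
\]
Choosing $p$ and $N$ to satisfy this inequality gives $\chi(H_k)>k$ with probability bounded away from $0$. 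Since $R$ and $H_k$ live on the same vertex set but are built from independent randomness, and each desired event has probability close to $1$ for suitable parameters, both hold for some instance, which is the required $H_k$ (it works for every $k$, in particular $k=2^n$).

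The hard part is the simultaneous feasibility of the two parameter demands. Biclique-freeness of $R$ forces the uniformity to grow like $s\approx 2\log_2 N$, whereas $\chi(H_k)>k$ requires roughly $M\approx N(\ln k)\,k^{s-1}$ edges; one must check $M\le\binom{N}{s}$, which holds comfortably because $\binom{N}{s}=N^{\Theta(\log N)}$ dwarfs $N^{\Theta(\log k)}$ once $N$ is large (recall $k$ is fixed). A secondary technicality to dispatch is the diagonal of $V\times V$ together with overlapping hyperedges: because $R=G(N,\tfrac12)$ is loopless, rectangles $e_1\times e_2$ with $e_1\cap e_2\neq\emptyset$ must be treated by observing that two $s$-subsets of size $s\ge 2\log_2N+2$ still contain disjoint subsets across which $G(N,\tfrac12)$ exhibits both an edge and a non-edge, so the mixing condition survives in all cases.
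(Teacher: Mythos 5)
Your proposal addresses a statement that the survey itself does not prove: Theorem \ref{thm:SqProdColII} is quoted from Mubayi and R\"odl \cite{Mubayi07:DirectHpg} without any argument, so there is no in-paper proof to compare against. Judged on its own merits, your probabilistic argument is correct, and it in fact establishes the stronger assertion that a suitable $H_k$ exists for \emph{every} integer $k\geq 2$ (the restriction $k=2^n$ in the statement reflects the particular construction of \cite{Mubayi07:DirectHpg}, which your argument does not need). The three ingredients are all sound: the reformulation of $\chi(H_k\categ H_k)\leq 2$ as a bipartite-Ramsey property of a single auxiliary graph $R$; the first-moment bound showing $G(N,\tfrac12)$ has, with high probability, no monochromatic complete bipartite pattern between disjoint sets of logarithmic size; and the union bound forcing $\chi(H_k)>k$ for a random $s$-uniform hypergraph. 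Since $R$ and $H_k$ are independent and each succeeds with probability close to $1$, both properties hold simultaneously for some instance.

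Two pieces of bookkeeping should be made explicit. First, there is a factor-of-two mismatch between your second and fourth paragraphs: the biclique-freeness of $R$ applies only to \emph{disjoint} vertex sets, and when $e_1\cap e_2\neq\emptyset$ (in particular when $e_1=e_2$) you can only extract disjoint subsets $A\subseteq e_1$, $B\subseteq e_2$ of size $\lfloor s/2\rfloor$ (splitting $e_1\cap e_2$ between the two sides). Hence the uniformity must satisfy $\lfloor s/2\rfloor\geq 2\log_2 N+2$, i.e.\ roughly $s\geq 4\log_2 N+6$, not merely $s\geq 2\log_2 N+2$ as asserted in your second paragraph; as written, the claim there that no rectangle lies entirely inside $R$ or its complement is justified only for disjoint $e_1,e_2$. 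You flag exactly this repair in your closing paragraph, and it costs only a constant, so feasibility is unaffected. (Note also that for overlapping edges, containment in $R$ is impossible anyway because $R$ is loopless and the rectangle meets the diagonal; only the direction ``the rectangle contains an edge of $R$'' needs the disjoint-subset argument.) Second, the randomness in step three is unnecessary: once $s$ is fixed, the complete $s$-uniform hypergraph on $N$ vertices has chromatic number $\left\lceil N/(s-1)\right\rceil>k$ as soon as $N>k(s-1)$, so only the graph $R$ requires a probabilistic (or explicit Hadamard-type) construction. That observation would collapse your third paragraph and the feasibility discussion to one line.
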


Moreover, in \cite{Mubayi07:DirectHpg} the authors conjectured that for every $r\geq 2$
there is a $c\geq 2$ such that for every positive integer $k$ there exists $r$-uniform
hypergraphs $G_k$ and $H_k$ for which $\chi(G_k) > k$, $\chi(H_k) > k$ and $\chi(G_k\categ H_k) \leq c$,
and that this also true for the special case $c=r=2$.

Other results concerning the chromatic number of a special case of square products, 
i.e., square products of complete graphs, are due to
Sterboul and can be found in \cite{Sterboul74:ChromDirect}. 

\begin{thm}[Discrepancy  \cite{doerr:discrepancy}] \label{thm:SqProdColIII}
  For any $k\in \mathbb{N}$ and any two hypergraphs $H,H'$ it holds:
  $$ \disc(H \categ H',k) \leq k\cdot \disc(H,k)\cdot \disc(H',k).$$
\end{thm}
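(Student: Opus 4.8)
The plan is to exhibit an explicit coloring of the product realizing the bound, rather than to argue existentially. I would fix optimal colorings $c:V\to\{1,\dots,k\}$ of $H$ with $\disc(H,c)=\disc(H,k)=:D$ and $c':V'\to\{1,\dots,k\}$ of $H'$ with $\disc(H',c')=\disc(H',k)=:D'$. Identifying the color set $\{1,\dots,k\}$ with the cyclic group $\mathbb{Z}_k$, the central idea is to use the \emph{additive modular} coloring $C:V\times V'\to\mathbb{Z}_k$ defined by $C(v,v')=c(v)+c'(v')\pmod k$. Since every edge of $H\categ H'$ has the form $e_1\times e_2$ with $e_i\in E_i$, the whole argument then reduces to bounding the color balance of $C$ on a single such product edge.

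First I would record the per-color deviations in the factors. For $e_1\in E(H)$ write $n_i=|c^{-1}(i)\cap e_1|$ and $a_i=n_i-|e_1|/k$, so that $|a_i|\le D$ and, crucially, $\sum_{i}a_i=0$ because $\sum_i n_i=|e_1|$. Analogously, for $e_2\in E(H')$ set $m_j=|(c')^{-1}(j)\cap e_2|$ and $b_j=m_j-|e_2|/k$, giving $|b_j|\le D'$ and $\sum_j b_j=0$. Since $C(v,v')=\ell$ exactly when $c(v)=i$ and $c'(v')=\ell-i$ for some $i\in\mathbb{Z}_k$, the number of vertices of color $\ell$ inside $e_1\times e_2$ is the convolution $N_\ell=\sum_{i\in\mathbb{Z}_k}n_i\,m_{\ell-i}$.

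The key step is the expansion of this convolution. Substituting $n_i=|e_1|/k+a_i$ and $m_{\ell-i}=|e_2|/k+b_{\ell-i}$ and multiplying out, the constant term contributes $k\cdot(|e_1||e_2|/k^2)=|e_1||e_2|/k$, while the two mixed terms $(|e_1|/k)\sum_i b_{\ell-i}$ and $(|e_2|/k)\sum_i a_i$ both vanish by the zero-sum property. Hence
\[
N_\ell-\frac{|e_1||e_2|}{k}=\sum_{i\in\mathbb{Z}_k}a_i\,b_{\ell-i},
\]
and by the triangle inequality this is at most $\sum_{i\in\mathbb{Z}_k}|a_i|\,|b_{\ell-i}|\le k\cdot D\cdot D'$, the sum ranging over exactly $k$ indices. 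Taking the maximum over all colors $\ell$ and all product edges yields $\disc(H\categ H',C)\le k\,D\,D'$, and since $\disc(H\categ H',k)\le\disc(H\categ H',C)$, the theorem follows.

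There is no genuine obstacle once the coloring is chosen; the only point requiring care is the bookkeeping that makes the two cross terms cancel. The real content lies precisely in the choice of $C(v,v')=c(v)+c'(v')\pmod k$: this is what turns the color counts into a convolution and forces the mixed terms to disappear, leaving only the product of deviations summed over the $k$ residues. A naive pairing of colors would not produce this cancellation, so it is this structural choice, rather than any delicate estimate, that drives the factor $k$ in the bound.
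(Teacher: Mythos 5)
Your proof is correct: the modular-sum coloring $C(v,v')=c(v)+c'(v')\pmod k$ does turn the color count on a product edge $e_1\times e_2$ into the convolution $N_\ell=\sum_i n_i m_{\ell-i}$, the two cross terms vanish by the zero-sum property of the deviations, and the remaining term $\sum_i a_i b_{\ell-i}$ is bounded by $k\cdot\disc(H,k)\cdot\disc(H',k)$ in absolute value, which gives the claim. Note that the survey itself contains no proof of this statement --- it is quoted with a citation to Doerr's work on discrepancy of hypergraph products --- and your argument is essentially the one used in that source: the additive coloring modulo $k$ is precisely the construction behind the bound, so your proposal reconstructs the standard proof rather than offering an alternative route.
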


A special partial hypergraph of the square product $H^{\categ d}$ 
of a hypergraph $H =(V,E)$ has found particular attention
and is also called the \emph{$d$-fold symmetric product}, defined as
the subgraph $\{V^d, \{e^d, e\in E\}\}$, where $M^d$ denotes the
usual $d$-fold Cartesian set product of the set $M$.
In \cite{DoerrGnewuch06:DiscrSym, DoerrGnewuchHebb05:Discrepancy} 
the authors gave several upper and lower bounds for the discrepancy 
w.r.t.\ this product.

\section{The Categorial Product}

The following hypergraph product was motivated by the investigation of a
category of hypergraphs \cite{Doerfler80:CategoriesHpg}.  It is categorical
in the category of hypergraphs. It has rarely been studied since its
introduction, however; to our knowledge, the only systematic account is a
contribution by X.\ Zhu \cite{Zhu92:DirectProd}.

\subsection{Definition and Basic Properties}

\begin{defn}[Hypergraph Product \cite{Doerfler80:CategoriesHpg}]
  \label{defi:CatProd}
  Let $H_i=(V_i,E_i)$, $i=1,2$ be two hypergraphs.
  Then their \emph{product}
  $H=H_1\timesH H_2$ has edge set $V(H)=V_1\times V_2$ and vertex set
  \begin{align*}
    E(H)=\left\{e\subseteq V(H)\mid p_i(e)\in E_i,\; i=1,2\right\}
  \end{align*}
\end{defn}

The categorial hypergraph product is associative, commutative, distributive
w.r.t.\ the disjoint union and has the single vertex graph with loop
$\mathscr{L}K_1$ as unit element.  The projections into the factors are, by
definition, homomorphisms.  However, the product of two simple hypergraphs
is not a simple hypergraph and the product of two non-trivial uniform
hypergraphs does not result in a uniform hypergraph.  For the rank and
anti-rank, respectively, of a hypergraph product holds:
\begin{align*}
  r(H_1\timesH H_2) &=r(H_1)r(H_2)\\
  s(H_1\timesH H_2) &=\max\{s(H_1),s(H_2)\}
\end{align*}

We have the following relations with other hypergraph products:
\begin{itemize}
\item $E(H_1\weitdacherl{\times} H_2) = E'\subseteq E(H_1\timesH H_2)$ with \\
  $E':= \{e\in  E(H_1\timesH H_2) \mid   \nexists e'\in E(H_1\timesH H_2) 
  \text{ s.t. } e'  \subset e \}$
  for simple hypergraphs $H_1$ and $H_2$ \cite{Zhu92:DirectProd}.
\item $E(H_1\categ H_2)\subseteq E(H_1\timesH H_2)$
\end{itemize}

\subsection{Relationships with Graph Products}
The restriction of this product to graphs does not coincide with any known
graph product.  Moreover, the product $G_1\timesH G_2$ of two graphs $G_1$
and $G_2$ is no graph anymore, but a hypergraph of rank $4$.

\begin{lem}
  \label{lem:2sectionCateg}
  The $2$-section of the product 
  $H= H' \timesH H''$ is the strong product 
  of the $2$-section of $H'$ and the $2$-section of $H''$, more formally:
  $$[H' \timesH H'']_2 =[H']_2 \boxtimes [H'']_2.$$
\end{lem}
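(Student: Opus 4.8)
The plan is to follow exactly the template established for all the other $2$-section identities in this survey, most closely the proof of Lemma~\ref{lem:2sectionSquare}, since the defining condition ``$p_i(e)\in E_i$'' for the categorial product $\timesH$ makes the computation transparent. First I would note that, by the definitions of the $2$-section and of $\timesH$, both $[H]_2$ and $[H']_2\boxtimes[H'']_2$ carry the vertex set $V_1\times V_2$, so it suffices to prove that the identity map on vertices is a graph isomorphism, i.e.\ that $\{x,y\}\in E([H]_2)$ if and only if $\{x,y\}\in E([H']_2\boxtimes[H'']_2)$ for every pair $x\neq y$. Writing $p_1=p_{H'}$ and $p_2=p_{H''}$, the key observation is that $\{x,y\}\in E([H]_2)$ means precisely that there is an edge $e\in E(H)$ with $\{x,y\}\subseteq e$, and by definition of $\timesH$ this edge satisfies $p_1(e)\in E_1$ and $p_2(e)\in E_2$, whence $\{p_i(x),p_i(y)\}\subseteq p_i(e)\in E_i$ for $i=1,2$.

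From here I would run the same case distinction as in Lemma~\ref{lem:2sectionSquare}, organized by which projections of $x$ and $y$ coincide. Since $x\neq y$, at least one coordinate differs, and whenever $p_i(x)\neq p_i(y)$ the containment $\{p_i(x),p_i(y)\}\subseteq p_i(e)$ in an edge of the corresponding factor forces $\{p_i(x),p_i(y)\}$ to be an edge of that factor's $2$-section. The three surviving cases (namely $p_2(x)=p_2(y)$ with $\{p_1(x),p_1(y)\}\in E([H']_2)$; $p_1(x)=p_1(y)$ with $\{p_2(x),p_2(y)\}\in E([H'']_2)$; and both projections distinct and edge-forming) are exactly the adjacency conditions (ii), (i), (iii) of the strong graph product $[H']_2\boxtimes[H'']_2$. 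This yields the inclusion $[H]_2\subseteq[H']_2\boxtimes[H'']_2$ cleanly and with no side conditions, so I would establish that direction by the direct argument.

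For the reverse inclusion the slickest route is to invoke the containment recorded just above the lemma, $E(H_1\categ H_2)\subseteq E(H_1\timesH H_2)$: adjoining edges can only enlarge a $2$-section, so $E([H_1\categ H_2]_2)\subseteq E([H_1\timesH H_2]_2)$, and Lemma~\ref{lem:2sectionSquare} identifies the left-hand side with $E([H']_2\boxtimes[H'']_2)$. Combining the two inclusions gives the claimed equality. The delicate point, and the place where a purely direct case analysis is less clean than it first appears, is the ``equal coordinate'' case: to realize, say, a condition-(i) edge $(a,x_2)\sim(a,y_2)$ as a genuine hyperedge of $H'\timesH H''$ one must exhibit $e$ with $a\in p_1(e)\in E_1$, which fails when $a$ is an isolated vertex of $H'$. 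This is precisely the isolated-vertex caveat already implicit in Lemma~\ref{lem:2sectionSquare}, so routing the $\supseteq$ direction through that lemma rather than re-deriving it keeps the present proof consistent with the rest of the survey and confines the subtlety to a single place.
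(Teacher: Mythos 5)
Your proof is correct, but it follows a genuinely different route from the paper's, whose entire argument is the single sentence that the result ``can be proved analogously to the proof of Lemma~\ref{lem:2sectionSquare}'', i.e.\ a symmetric re-run of the biconditional case analysis with $\timesH$-edges in place of $\categ$-edges. You instead treat the two inclusions asymmetrically: the forward inclusion $E([H'\timesH H'']_2)\subseteq E([H']_2\boxtimes[H'']_2)$ by the same case analysis (which is in fact cleaner here than for the square product, since ``$p_i(e)\in E_i$'' is the literal defining condition of $\timesH$), and the reverse inclusion by monotonicity of the $2$-section under edge-set containment together with the relation $E(H'\categ H'')\subseteq E(H'\timesH H'')$ recorded just before the lemma, quoting Lemma~\ref{lem:2sectionSquare} to identify $[H'\categ H'']_2$ with $[H']_2\boxtimes[H'']_2$. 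This buys two things over the paper's approach. First, it avoids duplicating the only constructive step --- exhibiting, for each strong-product adjacency, a hyperedge of the product that witnesses it. Second, it correctly localizes a hypothesis that both lemmas silently need: if $a$ is an isolated vertex of $H'$ and $H''$ has an edge of size at least two, then $(a,x_2)$ and $(a,y_2)$ are adjacent in $[H']_2\boxtimes[H'']_2$, yet no edge of $H'\timesH H''$ (nor of $H'\categ H''$) contains any vertex with first coordinate $a$, so both $2$-section identities fail; the survey never makes this restriction explicit, and its proof of Lemma~\ref{lem:2sectionSquare} quietly uses the existence of an edge $e_1\ni a$ in the backward direction. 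Your observation of this caveat, and your choice to confine it to the one lemma that is quoted as a black box, is a genuine gain in precision over the paper's treatment; the only cost is that your proof is no longer self-contained, depending on Lemma~\ref{lem:2sectionSquare} rather than standing parallel to it.
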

This result can be proved  analogously to the proof of 
Lemma \ref{lem:2sectionSquare}.
  
\begin{defn}[The Categorial Product of L2-sections]
  Let $\Gamma_i=(V_i,E'_i,\mathcal{L}_i)$ be the $L2$-section of the
  hypergraphs $H_i=(V_i,E_i)$, $i=1,2$.  The categorial product of the
  $L2$-sections $\Gamma_1\timesH \Gamma_2 = (V,E',\mathcal{L})$ consists of
  the graph $(V,E')=(V_1,E'_1)\boxtimes(V_2,E'_2)$ and a labeling function
  $$\mathcal{L}=\mathcal{L}_1\timesH \mathcal{L}_2: E'\rightarrow 
  \mathbb{P}(E(H_1\timesH H_2))$$
  assigning to each edge $e' = \{(x_1,y_1),(x_2,y_2)\} \in E'$ a label
  with 
  $$\mathcal{L}\left(e'\right) =
  \begin{cases}
    \bigcup_{e_1\in \mathcal{L}(e'_1), e_2\in E_2:y_1\in e_2} 
    \mathcal{E}(e',e_1,e_2), 
    &\text{ if } y_1=y_2 \\
    \bigcup_{e_2\in \mathcal{L}(e'_2), e_1\in E_1:x_1\in e_1} 
    \mathcal{E}(e',e_1,e_2), 
    &\text{ if } x_1 =  x_2\\
    \bigcup_{e_1\in \mathcal{L}(e'_1), e_2\in \mathcal{L}(e'_2)} 
    \mathcal{E}(e',e_1,e_2), 
    &\text{ else, i.e., } x_1\neq x_2, y_1\neq y_2 
  \end{cases}	
  $$
  where $$\mathcal{E}(e',e_1,e_2) = \left\{ f\subseteq e_1\times e_2 \mid 
    e'\subseteq f, p_i(f)=e_i, i=1,2\right\}.$$
\end{defn}

The identity $[H\timesH H']_{L_2}=[H]_{L2}\timesH [H']_{L2}$ holds for all
simple hypergraphs $H,H'$.

\begin{lem}[Distance Formula]\label{lem:distCateg}
  For all hypergraphs $H,H'$ without loops we have:
  $$d_{H\timesH H'}((x,a),(y,b)) =   
  \max\{d_{H}(x,y), d_{H'}(a,b)\}$$
\end{lem}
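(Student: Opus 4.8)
The plan is to follow exactly the same template that has been used for all the previous distance-formula lemmas in this survey (Lemmas~\ref{lem:distCart}, \ref{lem:distNormal}, \ref{lem:distStrong}, and \ref{lem:distSquare}), namely to factor the computation through the $2$-section and invoke the corresponding distance formula for graph products. First I would recall the Distance Formula for $2$-sections (Lemma~\ref{lem:distH2}), which guarantees that distances in any hypergraph $H$ coincide with distances in its $2$-section $[H]_2$; applied to the product hypergraph this gives $d_{H\timesH H'}((x,a),(y,b)) = d_{[H\timesH H']_2}((x,a),(y,b))$.

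Next I would invoke Lemma~\ref{lem:2sectionCateg}, which asserts that the $2$-section of the categorial product is the strong graph product of the $2$-sections of the factors, i.e. $[H\timesH H']_2 = [H]_2 \boxtimes [H']_2$. Substituting this into the previous equation reduces the problem to computing a distance in a strong graph product. Since $H$ and $H'$ are assumed loopless, their $2$-sections are ordinary simple graphs, so the well-known Distance Formula for the strong graph product applies, giving
$$
d_{[H]_2 \boxtimes [H']_2}((x,a),(y,b)) = \max\{d_{[H]_2}(x,y),\, d_{[H']_2}(a,b)\}.
$$
Finally, applying Lemma~\ref{lem:distH2} once more to each factor, $d_{[H]_2}(x,y) = d_H(x,y)$ and $d_{[H']_2}(a,b) = d_{H'}(a,b)$, yields precisely the claimed identity.

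Concretely, the whole proof collapses to one sentence in the style the authors already use throughout: ``Combining the results of Lemma~\ref{lem:distH2}, Lemma~\ref{lem:2sectionCateg} and the well-known Distance Formula for the strong graph product (Corollary 5.5 in \cite{Hammack:2011a}) yields the result.'' There is essentially no obstacle here, since all the structural work has been done in the preceding lemmas; the only point requiring a moment's care is the loopless hypothesis, which is exactly what is needed so that the $2$-sections are genuine simple graphs and the graph-product distance formula can be quoted verbatim. I would therefore be careful to note (or simply rely on the hypothesis ``without loops'' in the statement) that this assumption is what makes the reduction to the graph case legitimate, just as in the analogous Lemma~\ref{lem:distStrong} and Lemma~\ref{lem:distSquare}.
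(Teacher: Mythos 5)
Your proposal is correct and takes exactly the same route as the paper: the paper's proof is literally the one-sentence combination of Lemma~\ref{lem:distH2}, Lemma~\ref{lem:2sectionCateg}, and the Distance Formula for the strong graph product (Corollary 5.5 in \cite{Hammack:2011a}) that you describe. No gaps; nothing further is needed.
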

\begin{proof}
  Combining the results of Lemma \ref{lem:distH2}, Lemma
  \ref{lem:2sectionCateg} and the well-known Distance
  Formula for the strong graph product (Corollary 5.5 in \cite{Hammack:2011a})
  yields to the result.
\end{proof}

\subsection{Invariants}

Only the chromatic number of the categorial product $H_1\timesH H_2$ has
been investigated in some detail \cite{Zhu92:DirectProd}.

\begin{thm}[Chromatic Number I \cite{Zhu92:DirectProd}]
\label{thm:ColCatI}
Let $H_1$ and $H_2$ be two hypergraphs such that $\chi(H_i) = n+1$.
Moreover, let $H_i$, $i=1,2$ contain a partial hypergraph $H_i' =
\{V_i',\{e\subseteq V_i'\mid |e|\geq 2\}\}$ with $|V_i'|=n$ and a
vertex-critical $n+1$ chromatic partial hypergraph $H_i'' = (V_i'',E_i'')$,
i.e., for any vertex $x\in V_i''$ the hypergraph induced by $V_i''-\{x\}$
is $n$-colorable. Furthermore, let $V_i'\cap V_i'' \neq\emptyset$, $i=1,2$.
Then $\chi(H_1 \timesH H_2)=n+1$.
\end{thm}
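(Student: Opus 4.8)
The plan is to prove $\chi(H_1\timesH H_2)\le n+1$ by a colouring that ignores the second factor. Fix a proper $(n+1)$-colouring $c_1\colon V_1\to\{0,\dots,n\}$ of $H_1$ and set $c(x,y):=c_1(x)$. If an edge $e\in E(H_1\timesH H_2)$ were monochromatic, then $c_1$ would be constant on $p_1(e)$; but $p_1(e)\in E_1$ by Definition~\ref{defi:CatProd}, contradicting that $c_1$ is proper. Hence $c$ is proper and $\chi(H_1\timesH H_2)\le\chi(H_1)=n+1$. This half uses neither the complete parts $H_i'$ nor the critical parts $H_i''$; their whole role is in the lower bound.

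\textbf{A criterion and rigidity on the complete block.} For the reverse inequality I would assume a proper $n$-colouring $c\colon V_1\times V_2\to[n]$ and seek a contradiction. The basic tool is the observation that a colour class $c^{-1}(j)$ contains an edge with projections $f\in E_1$ and $g\in E_2$ \emph{iff}, inside the block $f\times g$, colour $j$ meets every row indexed by $f$ and every column indexed by $g$ (one may then take the edge $c^{-1}(j)\cap(f\times g)$, whose projections are exactly $f$ and $g$). The key step is then to analyse the $n\times n$ block $V_1'\times V_2'$, where \emph{both} factors are complete (assume $n\ge2$; the case $n=1$ is immediate, as $H_1''\timesH H_2''$ already contains an edge). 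Every pair of rows and every pair of columns spans an edge of $H_1'$ resp.\ $H_2'$, so if some colour occurred in two cells $(a,y),(a',y')$ with $a\ne a'$ and $y\ne y'$, the two‑element ``diagonal'' $\{(a,y),(a',y')\}$ would be a monochromatic edge. Thus no colour repeats in general position, so each colour class lies in a single row or a single column; since each line holds only $n$ cells and the $n$ colours fill all $n^2$ cells, every colour fills an \emph{entire} line, and $n$ disjoint full lines are necessarily all rows or all columns. By the symmetry of the hypotheses I may assume the \emph{row case}: there is a bijection $\sigma\colon V_1'\to[n]$ with $c(a,y)=\sigma(a)$ on $V_1'\times V_2'$, so each row $a\in V_1'$ is monochromatic of colour $\sigma(a)$ across all of $V_2'$, and the shared row $z_1\in V_1'\cap V_1''$ carries the single colour $j_0:=\sigma(z_1)$.

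\textbf{Bridging to the critical factor.} Now I would read off $\gamma\colon V_1''\to[n]$, $\gamma(x):=c(x,z_2)$, along the shared column $z_2\in V_2'\cap V_2''$, aiming to contradict $\chi(H_1'')=n+1$. If an edge $f\in E_1''$ were $\gamma$-monochromatic of a colour $j$ and contained some $x_0\in f\cap V_1'$ with $\sigma(x_0)=j$, then, taking $g:=V_2'$ in the criterion, colour $j$ would occur in every row of $f$ (via the column $z_2$) and in every column of $V_2'$ (via the all‑$j$ row $x_0$), producing a monochromatic edge of $H_1\timesH H_2$ — impossible. Applied to $x_0=z_1$, this shows that \emph{no edge of $H_1''$ through $z_1$ is $\gamma$-monochromatic}. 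To turn this into a genuine contradiction I would feed it into the vertex‑criticality of $H_1''$: since $H_1''-z_1$ is $n$-colourable while $H_1''$ is not, in every $n$-colouring of $H_1''-z_1$ the vertex $z_1$ is forced to ``see'' all $n$ colours along edges through it, and it is the clash between this forced rainbow at $z_1$ and the absence of $\gamma$-monochromatic edges through $z_1$ — reinforced by the symmetric information extracted from $H_2''$ through the column bijection — that must yield the missing monochromatic product edge.

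\textbf{Main obstacle.} The delicate part, and the heart of the proof, is precisely this last bridge. Rigidity pins the colouring down only on the complete block $V_1'\times V_2'$, whereas the contradiction must be produced on edges with $p_1\in E_1''$ and $p_2\in E_2''$, where $c$ is a priori uncontrolled; and in the loopless setting a single monochromatic line is never a product edge. That both critical factors are genuinely indispensable is visible from the fact that $H_1''\timesH H_2'$ (critical rows against complete columns) is already $n$-colourable, via the column‑rainbow colouring $c(x,y)=\psi(y)$ with $\psi$ a bijection on $V_2'$. Hence the entire strength of the hypotheses — completeness to promote diagonals to edges, vertex‑criticality to forbid deleting the shared vertex, and the shared vertices $z_i\in V_i'\cap V_i''$ to couple the two regions — has to be orchestrated simultaneously to convert a one‑line colour coincidence into a two‑line monochromatic edge, and carrying this out is where the real work lies.
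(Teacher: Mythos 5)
The survey states this theorem without proof (it is quoted from Zhu's paper), so your proposal has to stand on its own merits --- and, as you yourself concede in your final paragraph, it does not. What you have is correct as far as it goes: the upper bound $\chi(H_1\timesH H_2)\le n+1$ via the first-factor colouring is right, and the rigidity step on $V_1'\times V_2'$ is right (two cells in general position form a two-element product edge because all pairs are edges of $H_1'$ and $H_2'$, so each colour class is confined to a line, and counting plus disjointness forces all-rows or all-columns, WLOG $c(a,y)=\sigma(a)$ with $\sigma$ a bijection). Your ``criterion'' for monochromatic product edges is also correct. These are the natural opening moves of any proof of the lower bound.

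The gap is exactly where you place it, but it is worth making precise why the bridge you sketch cannot close it. What your criterion argument actually establishes is: a $\gamma$-monochromatic edge $f\in E_1''$ of colour $j$ (where $\gamma(x)=c(x,z_2)$) cannot contain the vertex $\sigma^{-1}(j)\in V_1'$; in particular, since $\gamma(z_1)=\sigma(z_1)$, no $\gamma$-monochromatic edge passes through $z_1$. But $\chi(H_1'')=n+1$ \emph{guarantees} that the $n$-colouring $\gamma$ of $V_1''$ has some monochromatic edge $f$, which by your exclusion must avoid $z_1$ (and avoid $\sigma^{-1}(j)$), and your argument gives no purchase on such an $f$: outside the block $V_1'\times V_2'$ the colouring $c$ is completely uncontrolled, so no monochromatic product edge can be manufactured from $f$ alone. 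Consequently $\gamma$ restricted to $V_1''\setminus\{z_1\}$ need not be proper, and this is fatal to your intended use of vertex-criticality: the ``forced rainbow at $z_1$'' phenomenon (for every colour $j$ there is an edge $f\ni z_1$ with $f\setminus\{z_1\}$ monochromatic of colour $j$) is a statement about \emph{proper} $n$-colourings of the hypergraph induced by $V_1''-\{z_1\}$, which you do not have. Worse, even if you did, there would be no clash with what you proved: an edge $f\ni z_1$ whose other vertices all have colour $j\neq\gamma(z_1)$ is \emph{not} $\gamma$-monochromatic, so a rainbow at $z_1$ coexists peacefully with the absence of monochromatic edges through $z_1$. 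So the contradiction you describe does not follow from the two facts you have; the decisive argument --- controlling the colouring on edges with both projections in the critical parts, using both $H_1''$ and $H_2''$ and both shared vertices simultaneously --- is the actual content of the theorem and is missing from the proposal.
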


\begin{thm}[Chromatic Number II \cite{Zhu92:DirectProd}]
\label{thm:ColCatII}
Let $H=(V,E)$ be a hypergraph with $\chi(H)= n+1$ such that any $v\in V$ is
contained in a partial hypergraph $H' = (V',E')$ of $H$ with $|V'|=n$ and
$E' = \{e\subseteq V'\mid |e|\geq 2\}$.  Then for any hypergraph $G$ with
$\chi(G)=n+1$ holds $\chi(H \timesH G)=n+1$.
\end{thm}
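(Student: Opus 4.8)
The plan is to establish the two bounds $\chi(H\timesH G)\le n+1$ and $\chi(H\timesH G)\ge n+1$ separately, the second being the real content. For the upper bound I would use that the projection $p_2\colon H\timesH G\to G$ is a homomorphism: given a proper $(n+1)$-colouring $\kappa$ of $G$, the composite $\kappa\circ p_2$ sends every edge $e$ of the product to $\kappa(p_2(e))$ with $p_2(e)\in E(G)$, so this image has at least two colours and $\kappa\circ p_2$ is proper. Hence $\chi(H\timesH G)\le\chi(G)=n+1$, and it remains to rule out a proper colouring with only $n$ colours.

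So suppose $c\colon V(H)\times V(G)\to\{1,\dots,n\}$ is a proper colouring of $H\timesH G$; I will derive a contradiction. The first handle is that, since $\chi(H)=n+1>n$, for every $w\in V(G)$ the column colouring $c_w:=c(\cdot,w)$ is not a proper colouring of $H$, so $H$ has a monochromatic edge in column $w$; call its colour $\mu(w)$. As $\mu\colon V(G)\to\{1,\dots,n\}$ uses only $n$ colours and $\chi(G)=n+1$, the map $\mu$ is itself improper, so there is an edge $f\in E(G)$ with $\mu(b)=t$ for all $b\in f$ and a fixed colour $t$. The aim is to turn the per-column monochromatic $H$-edges over $f$ into a single monochromatic edge of the product whose second projection is $f$, contradicting properness of $c$.

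The second handle, where the local-completeness hypothesis enters, is a rigidity lemma. Fix one complete copy $H'=(V',E')$, $|V'|=n$, $E'=\{e\subseteq V'\mid|e|\ge2\}$, and an edge $f\in E(G)$; write $\gamma_b=c(\cdot,b)|_{V'}$ for $b\in f$. For every non-constant map $\varphi\colon f\to V'$ the set $\{(\varphi(b),b)\mid b\in f\}$ is a product edge, since its first projection $\varphi(f)$ has at least two elements and hence lies in $E'\subseteq E(H)$, while its second projection is $f$; properness forbids it to be monochromatic. From this I would extract the dichotomy that for each colour $s$, either some column $b\in f$ omits $s$, or all columns realize $s$ at one common vertex $v_s\in V'$. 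In particular, if every $\gamma_b$ is surjective (hence bijective, as $|V'|=n$), then all the $\gamma_b$, $b\in f$, are one and the same bijection $V'\to\{1,\dots,n\}$.

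Combining the two, the endgame splits. If, after propagating this rigidity across the overlapping complete copies that cover $V(H)$ and along the edges of $G$, all columns restrict to a single fixed bijection $\Sigma$ on each copy, then $c(u,w)=\Sigma(u)$ is independent of $w$; but an $n$-colouring $\Sigma$ of $H$ must, because $\chi(H)=n+1$, have a monochromatic edge $e^{\ast}\in E(H)$, and then $e^{\ast}\times f$ is a monochromatic product edge for any $f\in E(G)$, a contradiction. Otherwise rigidity fails on some copy and column, and I would use the resulting omitted colour to place, over the edge $f$ with $\mu|_f\equiv t$, a vertex of colour $t$ inside a common copy in every column of $f$, assembling the desired monochromatic product edge. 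I expect the main obstacle to be exactly this globalisation: a single copy only yields $\chi(H'\timesH G)\le n$ and no contradiction, so one must glue the local ``omitted/common-vertex'' information across all copies through $V(H)$ and across the components of $G$ in Greenwell--Lov\'asz fashion. The naive attempt to join two unrelated per-column monochromatic $H$-edges by a single $H$-edge can fail, so the coordination must be routed through the complete copies, and the bookkeeping when the copies overlap only partially or when $G$ is disconnected is the delicate part.
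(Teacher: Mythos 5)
Your upper bound, your rigidity lemma (for each colour $s$: either some column of $f$ omits $s$ on the copy, or all columns realize $s$ at one common vertex), and your handling of the fully rigid case are sound; since the survey states this theorem without proof (it cites Zhu), the question is whether your sketch can be completed, and it cannot in its present form: the non-rigid half of your endgame rests on a step that is false. You propose to assemble a monochromatic product edge \emph{over the edge $f$ produced by the impropriety of $\mu$}, where $\mu(w)$ is the colour of an arbitrary monochromatic $H$-edge in column $w$. No such assembly exists in general. Take $n=2$ and let $H$ be the disjoint union of two triangles $T_1,T_2$, so that $\chi(H)=3$ and every vertex lies in a complete $2$-vertex copy (an edge); let $f=\{a,b\}$, colour $c(\cdot,a)$ red on $T_1$ and blue on $T_2$, and colour $c(\cdot,b)$ blue on $T_1$ and red on $T_2$. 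Every product edge $S$ with $p_2(S)=f$ has $p_1(S)$ inside a single triangle, hence $S$ meets a red column and a blue column and is not monochromatic; yet both columns contain red-monochromatic $H$-edges, so choosing $\mu(a)$ and $\mu(b)$ red is legitimate and $f$ is $\mu$-monochromatic with $t=$ red. Thus properness over $f$, together with any amount of rigidity information (rigidity indeed fails here: column $a$ is constant on every copy inside $T_1$), does not force a monochromatic product edge over $f$: the per-column monochromatic edges can live in different components, exactly the failure you flagged, and this is not a bookkeeping obstacle but an impossibility. The contradiction in the non-rigid case cannot be localized at an edge extracted from the naive $\mu$.

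What is missing is the coordination of the colouring of $G$ with the copies, so that $\chi(G)=n+1$ is applied to a colouring whose monochromatic edges clash with your rigidity lemma \emph{on one common copy}. Concretely: enumerate the complete copies $K_1,\dots,K_m$; write $\gamma_{K,w}$ for $c(\cdot,w)$ restricted to $V(K)$; for $w\in V(G)$ let $i(w)$ be the least $i$ such that $\gamma_{K_i,w}$ is not a bijection, and set $\phi(w)$ equal to a colour repeated by $\gamma_{K_{i(w)},w}$, with $\phi(w)=1$ (say) if $w$ is bijective on every copy. By $\chi(G)=n+1$ some $f\in E(G)$ is $\phi$-monochromatic, say of colour $t$. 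If every column of $f$ is bijective on every copy, then your rigidity lemma (applied copy by copy to $f$) makes all columns of $f$ agree on every copy, hence agree as a single $n$-colouring $\sigma$ of $V(H)$; since $\chi(H)=n+1$, $\sigma$ has a monochromatic edge $e^*$, and $e^*\times f$ is a monochromatic product edge --- contradiction. Otherwise choose $b_0\in f$ with $i(b_0)$ minimal and put $K=K_{i(b_0)}$: the colour $t$ is repeated by $\gamma_{K,b_0}$, so by the rigidity lemma some $b_1\in f$ \emph{omits} $t$ on $K$; then $b_1$ is not bijective on $K$, so $i(b_1)=i(b_0)$ by minimality, so $\phi(b_1)=t$ is a colour repeated by $\gamma_{K,b_1}$ --- contradicting that $b_1$ omits $t$ on $K$. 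This same-copy clash between ``repeated'' and ``omitted'', enforced by the ordering and the minimality of $i(b_0)$, is the idea your sketch lacks. Note also that it needs neither connectivity of $G$ nor any propagation of bijectivity along edges of $G$; such propagation in fact fails (on a $2$-vertex copy, an edge $f=\{a,b,c\}$ can carry columns $(R,B)$, $(R,R)$, $(B,B)$ with no monochromatic product edge), so your Case 1 should likewise be run edge by edge rather than by global propagation across copies and components.
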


\part{Beyond Finite and Undirected Hypergraphs}
\label{part:further aspects}

\section{Infinite Hypergraphs}

Only finite hypergraphs and products of finitely many factors have been
treated so far.  It is possible to extend the definitions of the products
to infinitely many finite and to infinite hypergraphs. For this purpose we
need the following definition.  For an arbitrary family of (vertex) sets
$V_i$, $i\in I$, their Cartesian set product $V=\bigtimes_{i\in I}V_i$
consists of the set of all functions $x: i\mapsto x_i$, $x_i\in V_i$ of $I$
into $\bigcup_{i\in I} V_i$.  Notice, that the Cartesian set product of an
arbitrary family of sets is usually denoted by $\prod_{i\in I}V_i$, but to
emphasize the relation to the finite case, we use the term $\bigtimes_{i\in
  I}V_i$ instead.  In this case, the \emph{projection} $p_j:V\to V_j$ is
defined by $v \mapsto v_j$ whenever $v: j\mapsto v_j$.  As before, we will
call $v_j$ the \emph{$j$-th coordinate} of the vertex $v\in V$.

Several of the hypergraph products discussed in the previous section are
connected provided each of the the finitely many finite factors are
connected. A corresponding result can be established for finitely many
connected factors of infinite size using the Distance Formula for the
respective product. In contrast, connectedness results do not necessarily
carry over to products of infinitely many hypergraphs. As an example
consider the Cartesian product of infinitely many factors.  There are
vertices that differ in infinitely many coordinates and thus, by the
Distance Formula cannot be connected by a path of finite length,
\cite{OstHellmStad11:CartProd, Imrich70:SchwachKartProd}. This in turn leads
to difficulties concerning the prime factor decomposition.  Again, consider
the Cartesian product.  An infinite connected hypergraph can have
infinitely many prime factors.  In this case it cannot be the Cartesian
product of these factors, since the product is not connected, but a
connected component of this product.  For this purpose, the \emph{weak
  Cartesian product} is presented which was first introduced by Sabidussi
\cite{Sabidussi60:GraphMult} for graphs and later generalized by Imrich
\cite{Imrich70:SchwachKartProd}.

Let $\{H_i\mid i\in I\}$ be a family of hypergraphs and let $a_i\in V(H_i)$
for $i\in I$.  The weak Cartesian product $H=\Box_{i\in I}(H_i,a_i)$ of the
rooted hypergraphs $(H_i,a_i)$ is defined by
\begin{align*}
  V(H)&=\{v\in\bigtimes_{i\in I}V(H_i)\mid p_i(v)\neq a_i
  \text{ for at most finitely many }i\in I\}\\
  E(H)&=\{e\subseteq V(H)\mid p_j(e)\in E(H_j)
  \text{ for exactly one }j\in I, 
  |p_i(e)|=1\text{ for }i\neq j\}.
\end{align*}

Note the weak Cartesian product does not depend on the ``reference
coordinates'' $a_i$; furthermore, it reduced to the ordinary Cartesian
product if $I$ is finite.  The weak Cartesian product is associative,
commutative, distributive w.r.t.\ the disjoint union and has trivially $K_1$
as unit. Furthermore, the weak Cartesian product of connected hypergraphs
is connected \cite{Imrich70:SchwachKartProd}.

\begin{thm}[UPFD \cite{OstHellmStad11:CartProd}]
  Every simple connected finite or infinite hypergraph 
  with finitely or infinitely many factors 
  has a unique PFD w.r.t.\ the weak Cartesian product.
\end{thm}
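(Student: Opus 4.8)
The plan is to construct the factorization from an intrinsic equivalence relation on the hyperedges of $H$, built from the distance function, and to lift the finite result (Theorem~\ref{thm:UPFDCart}) to the weak and infinite setting by reducing every local question to finitely many coordinates. The engine throughout is the distance formula: Lemma~\ref{lem:distCart} generalizes to $d_H(x,y)=\sum_{i\in I} d_{H_i}(p_i(x),p_i(y))$, a finite sum because, in the weak Cartesian product rooted at $a=(a_i)$, any two vertices differ in only finitely many coordinates. Lemma~\ref{lem:distH2} keeps these hypergraph distances in step with the distances used on the sections.

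First I would fix a root $a\in V(H)$ and define a \emph{Cartesian coloring} of the hyperedges. In any product $\Box_{i\in I}(H_i,a_i)$ each hyperedge $e$ has $|p_i(e)|=1$ for all but exactly one index $j$, so it may be colored by $j$; the task is to recover this coloring from $H$ alone, by an analogue of the Djokovi\'{c}--Winkler relation on the edges of the $2$-section, refined by the labeling $\mathcal{L}$ of the $L2$-section, with the distance formula guaranteeing the colors are independent of any chosen factorization. It is essential to carry the full $L2$-section rather than the bare $2$-section: the $2$-section does \emph{not} determine the factorization, since a prime hypergraph can have a $2$-section that is a nontrivial Cartesian product of graphs. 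For instance, a prism-shaped $2$-section $K_2\Box K_3$ arises from a prime hypergraph in which one triangular layer carries a single rank-$3$ hyperedge while the other carries three rank-$2$ edges; the two layers are non-isomorphic, so $H$ does not factor although $[H]_2$ does. By contrast, the faithful correspondence $H\cong H'\iff[H]_{L2}\cong[H']_{L2}$ together with $[H\Box H']_{L2}=[H]_{L2}\Box[H']_{L2}$ makes the $L2$-section a product-preserving bridge on which the coloring lives.

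With the coloring in hand, existence follows by taking the $j$-th factor to be the sub-hypergraph induced on the color-$j$ layer through the root and verifying, via the distance formula, that $H$ is the weak Cartesian product of these layers (the counterexample above is exactly a case where this verification fails and forces a single color class); primeness is then forced by minimality of the classes. Because the coloring and the verification rest only on distances and on the labeling $\mathcal{L}$, and never on finiteness of $V$, the same argument yields existence and uniqueness for finitely many factors even when the hypergraphs are infinite --- this is the extension, to infinite hypergraphs, of the method underlying Theorem~\ref{thm:UPFDCart}. For infinitely many factors I would exploit finite support: any two vertices, and any single hyperedge, meet only finitely many non-root coordinates, hence lie in a finite sub-product $\Box_{i\in J}(H_i,a_i)$ with $J\subseteq I$ finite, which is an ordinary Cartesian product. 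Applying the finite-factor result on every such sub-product pins down the color classes, which then glue to a global factorization indexed by $I$; uniqueness propagates because two factorizations agreeing on every finite sub-product agree on each hyperedge, hence everywhere.

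The main obstacle is the gluing across finite sub-products. One must show that the intrinsic color classes computed in $\Box_{i\in J}$ are restrictions of genuine global classes and are compatible as $J$ grows, so that no factor is spuriously split or merged in the limit and the assembled object truly satisfies the finite-support condition defining the weak Cartesian product. The distance formula keeps the colors consistent under enlarging $J$, and passing to the \emph{weak} product is precisely what preserves connectedness in the limit, but controlling the transitive closure of the defining relation across the whole exhaustion --- so that the colors stabilize and the layers assemble into honest prime factors --- is the delicate point on which the argument turns.
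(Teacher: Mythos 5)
The survey itself offers no proof of this theorem: it is quoted from the cited work on the Cartesian product of hypergraphs (which, as the survey remarks, uses an approach that generalizes the finite result, \cref{thm:UPFDCart}, to infinite and directed hypergraphs), so your proposal has to stand on its own. Several of your preparatory observations are sound and relevant: the distance formula does extend to the weak Cartesian product as a finite sum over coordinates; the finite-support property is the right vehicle for reducing questions about infinitely many factors to finite sub-boxes through the root; and your prism example is correct and instructive --- the hypergraph on $\{1,2\}\times\{a,b,c\}$ with one rank-$3$ hyperedge forming the first triangle, three rank-$2$ edges forming the second, and three matching edges is prime (any factorization with $|V_1|=2$ would force two isomorphic triangle layers, but only one layer carries a rank-$3$ edge), while its $2$-section is $K_2\Box K_3$. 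This correctly shows that no argument can work through the $2$-section alone and motivates the $L2$-section.

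The gap is that the mathematical core of the argument is missing, in three places. First, the object on which everything rests --- the ``analogue of the Djokovi\'{c}--Winkler relation \ldots refined by the labeling $\mathcal{L}$'' --- is never defined. For graphs, defining this relation and proving that its transitive-closure classes satisfy the square (grid) property \emph{is} essentially the entire content of the unique-factorization theorem; for hyperedges no such relation is written down, and ``refined by $\mathcal{L}$'' is not given any concrete meaning. There is also a circularity hazard in how you propose to use it: you invoke the product distance formula to argue that the coloring is independent of the chosen factorization, but that formula presupposes a factorization; an intrinsic relation must be defined from $H$ alone and only then compared against an arbitrary factorization. Second, you explicitly concede the decisive step --- that the transitive closure of the relation stabilizes across the exhaustion by finite sub-boxes, so that color classes neither split nor merge in the limit and the resulting layers assemble into an honest weak product --- calling it ``the delicate point on which the argument turns.'' That is exactly where infinite prime-factorization theorems are won or lost (typically via convexity of layers or a compactness-type argument), so leaving it open means the theorem is not proved. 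Third, even granting the coloring, your existence step (``take the color-$j$ layer through the root and verify via the distance formula'') silently requires the square property to reconstruct every product hyperedge from the layers; nothing in the proposal addresses why all hyperedges of $H$ lie in single layers and why all edges demanded by the product structure are present. As it stands, the proposal is a plausible research plan whose correct ingredients (distance formula, finite support, $L2$-sections) are assembled around a hole where the proof should be.
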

We suspect that similar constructions can be used to define infinite
versions of the other hypergraph products treated in this contribution.

Other results for infinite hypergraphs are known e.g. about the chromatic
number of square products:
\begin{thm}[\cite{Mubayi07:DirectHpg}]
  \label{thm:chromInfty}
  Let $H$ and $H'$ be two hypergraph whose edges have finite size. Suppose
  that $\chi(H) = \chi(H') = \infty$. Then $\chi(H\categ H') = \infty$.
\end{thm}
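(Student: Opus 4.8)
The plan is to argue by contraposition. Assuming that $H\categ H'$ admits a proper coloring with some finite number $k$ of colors, I will manufacture a proper coloring of $H'$ with finitely many colors, contradicting $\chi(H')=\infty$. The only structural fact I need is the definition of $\categ$: the edges of $H\categ H'$ are exactly the ``rectangles'' $e\times e'$ with $e\in E(H)$ and $e'\in E(H')$. Hence a coloring $c\colon V(H)\times V(H')\to\{1,\dots,k\}$ is proper if and only if no rectangle $e\times e'$ is monochromatic, and it is this reformulation that drives everything.

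First I would set up the row colorings induced by $c$. For $w\in V(H')$ let $d_w\colon V(H)\to\{1,\dots,k\}$ be defined by $d_w(v)=c(v,w)$. For an edge $e'=\{w_1,\dots,w_m\}\in E(H')$ introduce its \emph{agreement set} $T(e')=\{v\in V(H)\mid d_{w_1}(v)=\cdots=d_{w_m}(v)\}$, together with the common value $\delta_{e'}\colon T(e')\to\{1,\dots,k\}$. The central observation is that $\delta_{e'}$ is a proper coloring of the induced subhypergraph $\langle T(e')\rangle$ of $H$: if some edge $e\subseteq T(e')$ were $\delta_{e'}$-monochromatic of color $i$, then $c(v,w_j)=i$ for all $v\in e$ and all $j$, so the rectangle $e\times e'$ would be monochromatic, contradicting properness. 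Thus $\chi(\langle T(e')\rangle)\le k$ for every $e'\in E(H')$.

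The decisive step, and the one I expect to be the main obstacle, is to convert the hypothesis $\chi(H)=\infty$ into a \emph{finite} witness. Since every edge of $H$ has finite size, the De~Bruijn--Erd\H{o}s compactness theorem applies and yields a \emph{finite} vertex set $W\subseteq V(H)$ with $\chi(\langle W\rangle)>k$. By monotonicity of the chromatic number under induced subhypergraphs, no edge $e'\in E(H')$ can satisfy $W\subseteq T(e')$: otherwise $\langle W\rangle\subseteq\langle T(e')\rangle$ would force $\chi(\langle W\rangle)\le\chi(\langle T(e')\rangle)\le k$. Equivalently, for every edge $e'$ there is at least one coordinate $v\in W$ whose column $w\mapsto c(v,w)$ is non-constant on $e'$.

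Finally I would assemble these finitely many columns into a single coloring of $H'$. Define $\gamma\colon V(H')\to\{1,\dots,k\}^{W}$ by $\gamma(w)=\bigl(c(v,w)\bigr)_{v\in W}$, a coloring with palette of size at most $k^{|W|}<\infty$. For an edge $e'$, the map $\gamma$ is constant on $e'$ precisely when every column $v\in W$ is constant on $e'$, i.e.\ precisely when $W\subseteq T(e')$, which we have just excluded; hence $\gamma$ leaves no edge of $H'$ monochromatic and is proper. Therefore $\chi(H')\le k^{|W|}<\infty$, the desired contradiction, and $\chi(H\categ H')=\infty$. I would stress that the whole argument hinges on the finite-edge hypothesis entering exactly at the compactness step that produces $W$; this is also what explains why the finite counterexamples of Theorem~\ref{thm:SqProdColII} (where the relevant edge sizes grow without bound) cannot be promoted to the infinite setting.
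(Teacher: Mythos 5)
The survey states this theorem without proof (it is quoted from Mubayi and R\"odl), so judged on its own your argument stands: it is correct and complete, and it is essentially the argument of the cited source, namely De~Bruijn--Erd\H{o}s compactness to extract a finite $W\subseteq V(H)$ with $\chi(\langle W\rangle)>k$, followed by the vector coloring $w\mapsto (c(v,w))_{v\in W}$ bounding $\chi(H')\le k^{|W|}$, a contradiction. Your agreement sets $T(e')$ are a mild repackaging of that argument, and you correctly locate the only place where the finite-edge hypothesis enters (the compactness step), which is also exactly what separates this theorem from the infinite-edge counterexample discussed immediately after it in the survey.
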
   

Surprisingly, Theorem \ref{thm:chromInfty} does not hold if both hypergraphs
have edges that are all of infinite size. Let $V(H) = \{1,2,\dots \}$ and
$E(H)$ comprising all infinite subsets of $V(H)$. Clearly, 
$\chi(H) = \infty$, since any coloring with finitely many colors results
in an edge colored with only one color. 
As mentioned in \cite{Mubayi07:DirectHpg}, in $H\categ H$ there exists
a proper $2$-coloring assigning each vertex $(i,j)\in V(H\categ H)$
one color if $i\leq j$ and the other color else.  

\begin{thm}[\cite{Mubayi07:DirectHpg}]
  Let $H$ and $H'$ be two hypergraph whose edges are all of infinite size. 
  Suppose that $\chi(H) = \chi(H') = \infty$. 
  Then $\chi(H\categ H') = 2$.
\end{thm}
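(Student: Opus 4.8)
The plan is to exhibit an explicit proper $2$-coloring of $H\categ H'$ and to observe that no proper $1$-coloring can exist, so that $\chi(H\categ H')=2$ follows. The lower bound is immediate: since $\chi(H)=\chi(H')=\infty$, both $H$ and $H'$ possess (infinite, hence at least two-element) edges, so by the definition $E(H\categ H')=\{e_1\times e_2\mid e_i\in E_i\}$ the product has an edge; a single color would make that edge monochromatic, whence $\chi(H\categ H')\ge 2$. All the work is therefore in the upper bound $\chi(H\categ H')\le 2$, i.e.\ in finding a coloring $c\colon V\times V'\to\{0,1\}$ for which no edge $e_1\times e_2$ with $e_1\in E(H)$, $e_2\in E(H')$ is monochromatic.

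My approach generalizes the two-line coloring used in the example preceding the theorem. First I would place $V$ and $V'$ inside one linearly ordered set via order-embeddings with disjoint ranges $\iota\colon V\to L$ and $\iota'\colon V'\to L$, and set $c(v,v')=0$ if $\iota(v)\preceq\iota'(v')$ and $c(v,v')=1$ otherwise. To see that an edge $e_1\times e_2$ then carries both colors I would argue exactly as in the example: choosing $a\in e_1$ and using that $e_2$ is cofinal in $L$ produces a pair with $\iota(a)\preceq\iota'(b)$ (color $0$), while choosing $b\in e_2$ and using that $e_1$ is cofinal produces a pair with $\iota'(b)\prec\iota(a)$ (color $1$); the ranges being disjoint guarantees the comparisons are strict when needed. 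Thus the whole scheme reduces to a single combinatorial requirement on the order: every edge of $H$ and every edge of $H'$ should be cofinal in $L$ (or, more weakly, every $H$-edge and every $H'$-edge should interleave, with neither lying entirely below the other).

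The clean case is when $V$ and $V'$ are countable: there I would take $L$ to be $V\sqcup V'$ enumerated in order type $\omega$ and use the elementary fact that every infinite subset of an $\omega$-ordering is unbounded. Since every edge is infinite by hypothesis, every edge is automatically cofinal, and the verification above goes through verbatim, already covering the motivating example. The key step, and the one I expect to be the main obstacle, is the general case in which the vertex sets may be uncountable and the edges may have different infinite cardinalities. Here no single order can make every infinite edge cofinal, since an element with infinitely many predecessors would bound an infinite set, forcing the order type down to $\omega$; so the coloring cannot be read off from one global comparison, and the infinitude of the edges must be exploited more carefully, e.g.\ by using well-orderings of $V$ and $V'$ to build a linear order on $V\sqcup V'$ in which each $H$-edge genuinely interleaves with each $H'$-edge, or by replacing the single comparison with a coordinatewise/diagonal coloring defined from well-orderings of the individual edges.

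Finally I would record the role of the hypothesis $\chi(H)=\chi(H')=\infty$: beyond guaranteeing that edges exist (so that the lower bound holds and the statement is non-vacuous), it is what makes the conclusion striking, since it stands in exact contrast to Theorem~\ref{thm:chromInfty}, where edges of finite size force $\chi(H\categ H')=\infty$. It is the \emph{infinitude of the edges}, not the value of the chromatic number, that the $2$-coloring actually uses.
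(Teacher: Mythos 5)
Your countable case is correct, and it is exactly the argument the paper has in mind: the survey does not actually prove this theorem (it is quoted from Mubayi and R\"{o}dl), and the only argument it records is the comparison coloring $c(i,j)=0$ iff $i\le j$ for the motivating example $V(H)=\{1,2,\dots\}$ with all infinite subsets as edges. Your order-embedding formulation, together with the observation that every infinite subset of an order of type $\omega$ is cofinal, is that same proof, stated cleanly for arbitrary countable vertex sets; the lower bound $\chi\ge 2$ is also fine.

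The genuine gap is the step you yourself flag as ``the main obstacle'': it cannot be closed, by interleaving orders, diagonal colorings, or anything else, because the statement read over arbitrary vertex sets is false. Take $V=V'=(2^{\aleph_0})^{+}$ and let $E=E'$ consist of all infinite subsets of the vertex set; then every edge is infinite and $\chi(H)=\chi(H')=\infty$, since any finite coloring of an infinite set has an infinite color class, which is itself an edge. Yet $\chi(H\categ H')>2$: given any $2$-coloring $c$ of $V\times V'$, fix a countably infinite $A_0\subseteq V$; each $v'\in V'$ determines a restriction $c(\cdot,v')|_{A_0}\in\{0,1\}^{A_0}$, and there are only $2^{\aleph_0}$ such functions, so by pigeonhole some $B\subseteq V'$ with $|B|=(2^{\aleph_0})^{+}$ produces one and the same restriction $f$; for some color $i$ the set $A=f^{-1}(i)\subseteq A_0$ is infinite, and then $A\times e'$ is a monochromatic edge of $H\categ H'$ for any countably infinite $e'\subseteq B$. (The same pigeonhole with $k^{\aleph_0}=2^{\aleph_0}$ values shows in fact $\chi(H\categ H')=\infty$.) This is a standard Erd\H{o}s--Rado type argument, and it also kills your weakened ``crossing'' requirement: no linear order on $V\sqcup V'$ can do what you ask of it. So your instinct that the single global comparison breaks down for uncountable vertex sets was right, but the conclusion to draw is not that a cleverer order is needed; it is that the theorem must be read with countable vertex sets, the setting of the example preceding it in the survey and of the cited source. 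In that reading your proposal is a complete proof, identical in substance to the paper's; for the statement as literally printed, no proof exists.
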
   

\section{Directed Hypergraphs}

Directed hypergraphs play a role e.g.\ as models of chemical reaction
networks and in transit and satisfiability problems, see
\cite{Gallo:98,Zeigarnik:00a,Ausiello:01} for reviews. Directed hypergraphs
can be defined in various ways.  Here, we refer to the most general
definition.  A \emph{directed hypergraph} $\oH=(V,\oE)$ consists of a
vertex set $V$ and a set of \emph{hyperarcs} $\oE$, where each hyperarc
$\oe\in\oE$ is an ordered pair of nonempty, not necessarily disjoint
subsets of $V$, $\oe=(t(e),h(e))$, the \emph{tail} and \emph{head} of
$\oe$, respectively.  We call a directed hypergraph \emph{simple}, if
$t(e)\subseteq t(e')$ and $h(e)\subseteq h(e')$ implies $\oe=\oe'$.  The
$2$-section is then a directed graph with arc $(x,y)$ if there is an edge
$(t(e),h(e))$ with $x\in t(e)$ and $y\in h(e)$.

Product structures have not been studied extensively in a directed setting,
even though there are some exceptions.  The lexicographic product of
directed graphs, for instance, appears in a general technique to amplify
lower bounds for index coding problems \cite{BKL11:NetworkCoding}.

The directed version of the Cartesian product was first introduced in
\cite{OstHellmStad11:CartProd}.  The Cartesian product $\oH=\oH_1\Box\oH_2$
of two directed hypergraphs $\oH_1=(V_1,\oE_1), \oH_2=(V_2,\oE_2)$ has
edge set
\begin{align*}
  \oE(\oH)=&\left\{\left(\{x\}\times t(f),\{x\}\times h(f)\right)\mid
    x\in V_1,\of\in\oE_2\right\}\\
  &\cup\left\{\left(t(e)\times\{y\},h(e)\times\{y\}\right)\mid
    \oe\in\oE_1,y\in V_2\right\}.
\end{align*} 
Basic properties of the Cartesian product of undirected hypergraphs can
immediately be transferred to the directed case. Moreover, uniqueness of
the PFD was shown in \cite{OstHellmStad11:CartProd}.

\begin{thm}[UPFD \cite{OstHellmStad11:CartProd}]
  Every connected (finite or infinite) directed hypergraph has a unique PFD
  w.r.t. the (weak) Cartesian product.
\end{thm}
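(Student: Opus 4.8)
The plan is to mirror the proof of the undirected weak Cartesian UPFD (the theorem for simple connected finite or infinite hypergraphs stated earlier in this part), transferring each ingredient to the directed setting. The only genuinely directed input needed up front is a Distance Formula for the directed Cartesian product. I would obtain it exactly as in Lemma~\ref{lem:distCart}: one first checks that the directed $2$-section of $\oH_1\Box\oH_2$ is the directed Cartesian product of the directed $2$-sections, since the arc rule ``$x\in t(e)$ and $y\in h(e)$'' is respected coordinate-wise; combining the distance formula for the directed Cartesian \emph{graph} product with the fact that directed distances in a hypergraph agree with those in its $2$-section then yields $d_{\oH_1\Box\oH_2}((x,a),(y,b)) = d_{\oH_1}(x,y)+d_{\oH_2}(a,b)$, with the convention that a missing directed path gives $\infty$.

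Next I would transfer the layer structure. By construction every hyperarc of $\oH=\Box_{i\in I}\oH_i$ lives in a single factor, i.e.\ it fixes all coordinates but one and acts there as a hyperarc of the corresponding factor; this induces a colouring of the hyperarc set by the index set $I$. The $\oH_j$-layer through $w$ is isomorphic to $\oH_j$, and the Distance Formula shows each layer is convex, since changing any coordinate other than $j$ along a path strictly increases the additive distance. The heart of the argument is to prove that this colouring is \emph{canonical}, that is, independent of the chosen factorization. I would establish this via the usual square/grid property: two hyperarcs receive the same colour precisely when they are linked by a chain of ``Cartesian squares'', a relation defined purely in terms of adjacency and distances in $\oH$ and hence intrinsic. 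Directedness is carried along coordinate-wise throughout and never obstructs the square property, because tails and heads project consistently onto each factor.

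With a canonical colouring in hand, existence and uniqueness follow the standard pattern. For existence, a maximal factorization exists by Zorn's lemma; in the infinite case one works with the weak Cartesian product, so every vertex differs from a fixed base vertex in only finitely many coordinates and all relevant distances are finite, which localizes each argument to a finite sub-product. For uniqueness, given two prime factorizations I would argue that both induce the same canonical hyperarc colouring; the connected components of each colour class, regarded as layers, recover the prime factors up to isomorphism, and matching colour classes supplies the reordering together with the factor isomorphisms.

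The step I expect to be the main obstacle is proving that the canonical colouring is genuinely factorization-independent in the directed hypergraph setting: one must verify the square property for hyperarcs of arbitrary size while tracking the tail/head decoration, and confirm that reconstructing a factor from a colour class yields a consistent orientation, with no two layers forcing incompatible tail/head assignments. The directedness itself is benign, as it simply rides along each coordinate, but the bookkeeping required to certify this, together with the finite-support coordinatization needed for the infinite weak product, is where the real work lies.
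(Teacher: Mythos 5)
You should note first that the survey itself contains no proof of this theorem: it is stated with a citation to \cite{OstHellmStad11:CartProd}, accompanied only by the remark that basic properties of the undirected Cartesian product transfer immediately to the directed case. Your proposal therefore has to stand on its own, and as written it has genuine gaps rather than being a complete argument.

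The decisive gap is the one you flag yourself: canonicity of the hyperarc colouring. Asserting that two hyperarcs receive the same colour ``precisely when they are linked by a chain of Cartesian squares'' and that this relation is intrinsic is not a proof --- for graphs this is exactly the Sabidussi/Vizing product-relation machinery and constitutes the entire technical content of the UPFD theorem, and for hypergraphs the unique-square property does not carry over verbatim (hyperedges of different cardinalities, several hyperedges through the same pair of vertices), so a hypergraph analogue must be formulated and proved, not invoked; your proposal assumes exactly what is to be shown. Second, ``a maximal factorization exists by Zorn's lemma'' is not routine in the infinite setting: one must show that a chain of successively finer factorizations has an upper bound that is again a factorization of the given hypergraph, which is where the proof for infinite graphs (and for the undirected hypergraph theorem quoted earlier in this part) does real work; nothing in your outline addresses this. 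Third, your Distance Formula is formulated with directed distances, but ``connected'' in the theorem is weak connectivity; in a weakly connected directed hypergraph most directed distances are infinite, so the additive formula cannot drive the strict-increase argument you use for convexity of layers (since $\infty+1=\infty$). The natural repair --- and, as far as one can tell from the survey's remarks, the route of the cited paper --- is to run the metric and convexity machinery on the underlying undirected hypergraph, obtain its (weak) Cartesian PFD there, and then verify that the tail/head decorations orient each undirected prime factor consistently; redoing that machinery with directed distances is the wrong tool.
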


The definition of the square product might be transferred to hypergraphs as
follows: The square product $\oH=\oH_1\categ\oH_2$ of two directed
hypergraphs $\oH_1=(V_1,\oE_1), \oH_2=(V_2,\oE_2)$ has edge set
\begin{align*}
  \oE(\oH)=\left\{\left(t(e)\times t(f),h(e)\times h(f)\right)\mid
    \oe\in \oE_1,\of \in\oE_2\right\}
\end{align*} 

In \cite{Imrich:06a} the authors introduce the square product of so-called
$\mathcal{N}$-systems, that is a special class of directed hypergraphs.
More precisely, $\oN=(V,\oE)$ is an $\mathcal{N}$-system if $|t(e)|=1$ and
$t(e)\subseteq h(e)$ holds for all $\oe\in\oE$ and
$\bigcup_{\oe\in\oE}t(e)=V$.  It is shown that the square product is
closed in the class of $\mathcal{N}$-systems, i.e., the square product of
two $\mathcal{N}$-systems is again an $\mathcal{N}$-system.

\begin{thm}[\cite{Imrich:06a}]
  Let $\oN$ be an $\mathcal{N}$-system.  If $[\oN]_2$ is thin, i.e., there
  are no two vertices $u,v\in V(\oN)$ with $(u,x)\in \oE([\oN]_2)
  \Leftrightarrow (v,x)\in\oE([\oN]_2)$ and $(x,u)\in \oE([\oN]_2)
  \Leftrightarrow (x,v)\in\oE([\oN]_2)$, and connected, then $\oN$ has a
  unique PFD with unique coordinatization.
\end{thm}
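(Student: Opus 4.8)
The plan is to reduce the factorization of an $\mathcal{N}$-system with respect to the square product $\categ$ to the factorization of its $2$-section with respect to the strong product $\boxtimes$, for which a uniqueness theorem for thin connected (directed) graphs is already available. The starting point is the directed analogue of Lemma~\ref{lem:2sectionSquare}, namely
\[
  [\oN_1\categ\oN_2]_2 = [\oN_1]_2\boxtimes[\oN_2]_2.
\]
Since the square product is closed in the class of $\mathcal{N}$-systems, as recorded above, both sides are directed graphs arising as $2$-sections of $\mathcal{N}$-systems. First I would verify this identity directly from the definitions of the directed square product and the directed $2$-section, exactly along the lines of the proof of Lemma~\ref{lem:2sectionSquare}, the only new ingredient being that tails and heads must be tracked separately; here the $\mathcal{N}$-system condition $|t(e)|=1$ keeps the tail of a product hyperarc a singleton, so closure is automatic.

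Next I would establish that the passage $\oN\mapsto[\oN]_2$ is faithful on $\mathcal{N}$-systems, i.e.\ that $\oN$ can be recovered up to isomorphism from $[\oN]_2$. This is the crucial point, and it is where the hypotheses $|t(e)|=1$, $t(e)\subseteq h(e)$ and $\bigcup_{\oe}t(e)=V$ are used: each hyperarc $(\{x\},h(e))$ contributes precisely the out-arcs $(x,y)$ with $y\in h(e)$ together with the loop $(x,x)$, so each head $h(e)$ appears as a directed clique of $[\oN]_2$ rooted at $x$. I would argue that the heads are exactly the appropriately maximal such cliques, so that the hyperarc set is determined by $[\oN]_2$; this is the directed, $\mathcal{N}$-system analogue of the conformality statement in Theorem~\ref{thm:ConfSquare}. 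I would then check that thinness and connectedness transfer between $\oN$ and $[\oN]_2$: connectedness follows from the Distance Formula for $\categ$ (Lemma~\ref{lem:distSquare}) together with the one for $\boxtimes$, and thinness is exactly the hypothesis imposed on $[\oN]_2$, which the faithful correspondence pulls back to $\oN$.

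With this correspondence in hand I would invoke the unique prime factor decomposition of thin connected directed graphs with respect to the strong product, which moreover yields a unique coordinatization of the vertex set. Writing $[\oN]_2 = G_1\boxtimes\cdots\boxtimes G_k$ with each $G_i$ prime and thin, faithfulness realizes each $G_i$ as $[\oN_i]_2$ for a uniquely determined $\mathcal{N}$-system $\oN_i$, and the product identity of the first paragraph gives $\oN = \oN_1\categ\cdots\categ\oN_k$. Each $\oN_i$ is prime, since a nontrivial square-product factorization of $\oN_i$ would, via the $2$-section, produce a nontrivial strong-product factorization of the prime graph $G_i$. Because $\oN$ and $[\oN]_2$ share the same vertex set $V=\bigtimes_i V(\oN_i)$, the coordinatization of $[\oN]_2$ restricts verbatim to $\oN$, giving unique coordinatization. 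I expect the faithfulness step to be the main obstacle: one must rule out the ambiguity in which a single closed out-neighborhood is split into several heads, and it is precisely there that the correct maximality/conformality condition inherent to $\mathcal{N}$-systems has to be pinned down so that the hyperarcs, and not merely the arcs of the $2$-section, are reconstructed.
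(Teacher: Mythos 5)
Your reduction founders on precisely the step you flag as ``the main obstacle'': faithfulness of the unlabeled $2$-section on $\mathcal{N}$-systems is false, and the justification you sketch for it is incorrect. A hyperarc $(\{x\},h(e))$ contributes to $[\oN]_2$ only the arcs $(x,y)$ with $y\in h(e)$; it creates no arcs among the vertices of $h(e)\setminus\{x\}$. So a head appears in $[\oN]_2$ as an out-star rooted at $x$, \emph{not} as a directed clique, and no maximality-of-cliques criterion can recover the heads. Concretely, on $V=\{x,a,b\}$ compare $\oN$ with hyperarcs $(\{x\},\{x,a,b\})$, $(\{a\},\{a\})$, $(\{b\},\{b\})$ against $\oN'$ with hyperarcs $(\{x\},\{x,a\})$, $(\{x\},\{x,b\})$, $(\{a\},\{a\})$, $(\{b\},\{b\})$: both are $\mathcal{N}$-systems, both have the same $2$-section (arc set $\{(x,x),(x,a),(x,b),(a,a),(b,b)\}$), and this $2$-section is thin and connected, yet $\oN\not\cong\oN'$. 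Hence there is no well-defined assignment $G_i\mapsto \oN_i$ of the kind your last paragraph invokes, and the transport of the strong-product PFD of $[\oN]_2$ back to a square-product PFD of $\oN$ is unjustified. Note that this example does not contradict the theorem itself (both $\oN$ and $\oN'$ may well be prime); it shows that the unlabeled $2$-section loses exactly the information the theorem is about.

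The structural point your proposal misses is that the PFD of $[\oN]_2$ with respect to $\boxtimes$ can be strictly finer than the PFD of $\oN$ with respect to $\categ$: a prime $\mathcal{N}$-system can have a decomposable $2$-section, because its hyperarcs may cut across the product structure of the arc set. This is the same phenomenon the survey describes for the Cartesian product, where after computing the PFD of $[H]_2$ one must compare against the labels of the $L2$-section and \emph{merge} factors that do not correspond to labeled (hypergraph) factors. Any correct argument along your lines must therefore carry a directed analogue of the $L2$-section (i.e., the assignment of arcs to hyperarcs) through the factorization and include such a check-and-merge step; the identity $[\oN_1\categ\oN_2]_2=[\oN_1]_2\boxtimes[\oN_2]_2$ and the closure of $\categ$ on $\mathcal{N}$-systems, which you do verify correctly, are only the easy half of the work. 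Since the survey merely quotes this theorem from \cite{Imrich:06a} without reproducing a proof, there is no in-paper argument to compare against; but as it stands your proposal has a genuine gap at its central step.
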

The authors conjectured, furthermore, that the condition of thinness can be
omitted as long as one is satisfied with a unique PFD without insisting on
a unique coordinatization.

\section{Summary}

Table~\ref{tab:properties} provides an overview of the properties of the
hypergraph products discussed in this survey. 
Table~\ref{tab:invariants} shows which hypergraph invariants can be transferred 
from factors to products at least under some additional conditions. 

We considered the following properties:
\begin{itemize}
\item[(P1)] Associativity.
\item[(P2)] Commutativity.
\item[(P3)] Distributivity with respect to the disjoint union.
\item[(P4)] Existence of a unit.
\item[(P5)] $H_1\circledast H_2$ is (co-)connected 
  $\Leftrightarrow $$H_1$ and $H_2$ are (co-)connected.
\item[(P6)] If $H_1$ and $H_2$ are simple then $H_1\circledast H_2$ is simple.
\item[(P7)] The projections $p_i:V(H_1\circledast H_2)\rightarrow V(H_i)$ 
  for $i\in \{1,2\}$ are (at least weak) homomorphisms.
\item[(P8)] The projections preserve adjacency.
\item[(P9)] The adjacency properties of a product depends on those 
  of its factors.
\item[(P10)] Unique prime factorization in special classes of hypergraphs.
\item[(P11)] Preserves uniformity
\item[(P12)] Preserves $r$-uniformity
\item[(P13)] The restriction of the product $\circledast$ on simple graphs 
             coincides with the respective graph product. 
\item[(P14)] The restriction of the product $\circledast$ on not necessarily 
  simple graphs is the corresponding graph product. 
\item[(P15)] The $2$-section of the product coincides with the graph 
  product of the $2$-section of the factors.
\end{itemize}

\begin{table}[H]
\small
	\centering
	\begin{tabular}{|l|c|c|c|c|c|c|c|c|c|c|}
		\hline
											&\large{$\Box$}	&\large{$\weithutzl{\times}$}	&\large{$\weitdacherl{\times}$}	&\large{$\widetilde{\times}$}
											&\large{$\weithutzl{\boxtimes}$}	&\large{$\weitdacherl{\boxtimes}$}	&\large{$\circ$}	&\large{$*$}
											&\large{$\categ$} &\large{$\circledcirc$}\\
		\hline
		P1																		&$\surd$	&$\surd$	&$\surd$	&$\surd$	&$\surd$	
																					&$\surd$	&$\surd$	&$\surd$	&$\surd$	&$\surd$\\
		\hline
		P2																		&$\surd$	&$\surd$	&$\surd$	&$\surd$	&$\surd$	
																					&$\surd$	&$-$	&$\surd$	&$\surd$	&$\surd$\\
		\hline
		P3																		&$\surd$	&$\surd$	&$\surd$	&$\surd$	&$\surd$	
																					&$\surd$	&$-$	&$-$	&$\surd$	&$\surd$\\
		\hline
		P4																		&$K_1$	&$-$	&$\mathscr{L}K_1$	&$-$	&$K_1$	
																					&$K_1$	&$K_1$	&$K_1$	&$\mathscr{L}K_1$	&$\mathscr{L}K_1$\\
		\hline
		P5																		&$\Leftrightarrow$	&$\Rightarrow$	&$\Rightarrow$	&$\Rightarrow$				
																					&$\Leftrightarrow$	&$\Leftrightarrow$	&iff $H_1$	&$\overset{co}{\Leftrightarrow}$	
																					&$\Leftrightarrow$	&$\Leftrightarrow$\\
		\hline
		P6																		&$\surd$	&$\surd$	&$\surd$	&$\surd$	&$\surd$	
																					&$\surd$	&$\surd$	&$-$	&$\surd$	&$-$\\
		\hline
		P7																		&w	&$-$	&$\surd$	&$\surd$	&$-$	
																					&w	&$p_1$, w	&$-$	&$\surd$	&$\surd$\\
		\hline
		P8																		&$\surd$	&$\surd$	&$\surd$	&$\surd$	&$\surd$	
																					&$\surd$	&$p_1$	&$-$	&$\surd$	&$\surd$\\
		\hline
		P9																		&$\surd$	&$\surd$	&$\surd$	&$\surd$	&$\surd$	
																					&$\surd$	&$\surd$	&$\surd$	&$\surd$	&$\surd$\\
		\hline
		P10																		&\cref{thm:UPFDCart},\ref{thm:UPFDCartAlg}	&?	&?	&?	&?	
																					&?	&\ref{thm:UPFDLex},\ref{cor:UPFDLex}	&\ref{thm:UPFDCo}
																					&\ref{thm:UPFDSquare}	&?\\
		\hline
		P11																		&$\surd$	&$\surd$	&$\surd$	&$\surd$	&$\surd$	
																					&$\surd$	&$\surd$	&$\surd$	&$\surd$	&$-$\\
		\hline
		P12																		&$\surd$	&$\surd$	&$\surd$	&$-$	&$\surd$	
																					&$\surd$	&$\surd$	&$\surd$	&$-$	&$-$\\
		\hline
		P13																		&$\surd$	&$\surd$	&$\surd$	&$\surd$	&$\surd$	
																					&$\surd$	&$\surd$	&$\surd$	&$-$	&$-$\\
		\hline
		P14																		&$\surd$	&$-$	&$\surd$	&$-$	&$-$	
																					&$\surd$	&$\surd$	&$\surd$	&$-$	&$-$\\
		\hline
		P15																		&$\Box$	&$\times$	&$-$	&$-$	&$\boxtimes$	
																					&$\boxtimes$	&$\circ$	&$*$	&$\boxtimes$	&$\boxtimes$\\
                                                                                                                                                                        \hline
\end{tabular}
\normalsize
\caption[Properties of the hypergraph products]{Properties of the 
  hypergraph products. }

\label{tab:properties}
\end{table}

We considered the following invariants:
\begin{itemize}
\item[(I1)] Automorphism group
\item[(I2)] $k$-fold covering
\item[(I3)] Independence, matching and covers
\item[(I4)] Coloring properties
\item[(I5)] Helly property
\item[(I6)] Hamiltonicity
\end{itemize}

\begin{table}[H]
	\centering
	\small

\begin{tabular}{|l|c|c|c|c|c|c|c|c|}
		\hline
											&\large{$\Box$}
&\large{$\weithutzl{\times}$},\large{$\weitdacherl{\times}$},\large{$\widetilde{\times}$},\large{$\weitdacherl{\boxtimes}$}	
											&\large{$\weithutzl{\boxtimes}$}	&\large{$\circ$}	&\large{$*$}
											&\large{$\categ$} &\large{$\circledcirc$}\\
		\hline
		I1																		&\ref{thm:AutCart}	&?	&?	
																					&\cite{DoerflerImrich70:Xsum, Hahn81:lexico}	&\cite{GasztImrich71:lexico}	
																					&\ref{thm:AutoDirect},\,\ref{cor:AutoDirect}	&?\\
		\hline
		I2																		&\ref{thm:k-covCart}	&?	&?	
																					&\ref{thm:DoubleCovLex}	&?	&\ref{thm:CovSquare}	&?\\
		\hline
		I3																		&?	&?	&?	
																					&?	&?	
																					&\cref	{thm:StabSquare,thm:MatchCov,thm:SqProdCovI,thm:SqProdCovII,thm:SqProdCovIII,thm:SqProdCovVI,thm:SqProdCovMatch,thm:SqProdPart}	&?\\
		\hline
		I4																		&\ref{thm:ColorICart},\,\ref{thm:ColorIICart}	&?	&?	
																					&?	&?	&\cref{thm:SqProdColI,thm:SqProdColII,thm:SqProdColIII}	
																					&\cref{thm:ColCatI},\,\ref{thm:ColCatII}\\
		\hline
		I5																		&\ref{thm:ConfCart},\,\ref{thm:HellyCart}	&?	&?	
																					&?	&?	&\ref{thm:ConfSquare},\,\ref{thm:HellySquare}	&?\\
		\hline
		I6																		&\ref{thm:HamICart},\,\ref{thm:HamIICart}	&?	
																					&\ref{thm:StrongProdHamI},\,\ref{thm:StrongProdHamII}	
																					&\ref{thm:HamILex},\,\ref{thm:HamIILex}	&\cite{Sonntag93:Disjunction}	
																					&? &?\\
		\hline
\end{tabular}

\normalsize
\caption[Invariants]{Invariants and Hypergraph Products}	
\label{tab:invariants}
\end{table}
\normalsize

 The two summary tables use symbol ``$\surd$'' to indicate that a
  condition is satisfied, while ``$-$'' means that the product does not
  have the property in question. The question mark ``?'' implies that it is
  unknown at present whether a particular statement is true.  Numbers in
  brackets refer to citations, while numbers without brackets refer to
  theorems listed in this survey that establish the property under certain
  additional preconditions or provides results on particular invariants.
  If (P7) holds only for weak homomorphisms we indicate this with ``w''.
  If (P7) and (P8) holds only for the projection onto the first factor we
  indicate this by ``$p_1$''.

\bibliographystyle{plain}
\bibliography{biblio}

\end{document}